\documentclass[a4paper,reqno,onecolumn,11pt,accepted=2024-30-11]{quantumarticle}
\pdfoutput=1

\usepackage[utf8]{inputenc}
\usepackage{amsmath,bbm,physics,hyperref}
\hypersetup{citecolor=blue}
\usepackage{cleveref}
\usepackage{url}

\usepackage[style=numeric,maxbibnames=99,giveninits=true]{biblatex}
\DefineBibliographyExtras{UKenglish}{}
\addbibresource{referencesQ.bib}
\usepackage{import}
\usepackage{packagesQ}
\usepackage{definitionsQ}

\raggedbottom

\title[Energy preserving evolutions over Bosonic systems]{Energy preserving evolutions over Bosonic systems}
\author{Paul Gondolf}
\affiliation{Department of Mathematics, Eberhard Karls University T\"{u}bingen, 72074 T\"{u}bingen, Germany}
\email{paul.gondolf@uni-tuebingen.de}

\author{Tim M\"obus}
\affiliation{Department of Mathematics, Technical University of Munich, 80333 M\"unchen, Germany}
\affiliation{Munich Center for Quantum Science and Technology (MCQST), 80799 M\"unchen, Germany}
\affiliation{Department of Mathematics, Eberhard Karls University T\"{u}bingen, 72074 T\"{u}bingen, Germany}
\email{moebustim@gmail.com}

\author{Cambyse Rouz\'e}
\affiliation{Department of Mathematics, Technical University of Munich, 80333 M\"unchen, Germany}
\affiliation{Munich Center for Quantum Science and Technology (MCQST), 80799 M\"unchen, Germany}
\email{rouzecambyse@gmail.com}
\affiliation{Inria, T\'{e}l\'{e}com Paris - LTCI, Institut Polytechnique de Paris, 91120 Palaiseau, France}

\begin{document}

\begin{abstract}
    The exponential convergence to invariant subspaces of quantum Markov semigroups plays a crucial role in quantum information theory. One such example is in bosonic error correction schemes, where dissipation is used to drive states back to the code-space --- an invariant subspace protected against certain types of errors. In this paper, we investigate perturbations of quantum dynamical semigroups that operate on continuous variable (CV) systems and admit an invariant subspace. First, we prove a generation theorem for quantum Markov semigroups on CV systems under the physical assumptions that (i) the generator is in GKSL form with corresponding jump operators defined as polynomials of annihilation and creation operators; and (ii) the (possibly unbounded) generator increases all moments in a controlled manner. Additionally, we show that the level sets of operators with bounded first moments are admissible subspaces of the evolution, providing the foundations for a perturbative analysis. Our results also extend to time-dependent semigroups and multi-mode systems. We apply our general framework to two settings of interest in continuous variable quantum information processing. First, we provide a new scheme for deriving continuity bounds on the energy-constrained capacities of Markovian perturbations of quantum dynamical semigroups. Second, we provide quantitative perturbation bounds for the steady state of the quantum Ornstein-Uhlenbeck semigroup and the invariant subspace of the photon dissipation used in bosonic error correction.
\end{abstract}

\maketitle

\newpage

\tableofcontents

\vspace*{\fill}
\emph{Acknowledgments:} {\footnotesize We would like to thank Yu-Jie Liu for the fruitful discussions that initiated this project. Moreover, we would like to thank Marius Lemm, Angela Capel-Cuevas, Simone Warzel, Michael M. Wolf, Libor Caha, Vjosa Blakaj, Shin Ho Choe, Robert Salzmann, Simon Becker, Lauritz van Luijk, Niklas Galke for the valuable feedback and discussions on the topic. Moreover, we would like to thank Jochen Schmid for his help with questions on evolution systems. T.M.~and C.R.~acknowledge the support of the Munich Center for Quantum Sciences and Technology, C.R.~that of the Humboldt Foundation, and P.G., T.M., and C.R.~that of the Deutsche Forschungsgemeinschaft (DFG, German Research Foundation) – Project-ID 470903074 – TRR 352.}

\newpage

\section{Introduction}\label{sec:introduction}
	\addtocontents{toc}{\protect\setcounter{tocdepth}{1}}
	Quantum processes are often described in physics by their infinitesimal action during short time intervals. When the process at a given time $t$ can be assumed to be independent of previous times, an assumption often referred to as the memorylessness condition, the resulting dynamics can be formally described via a so-called master equation. The latter is an initial value problem of the form
	\begin{equation*}
		\frac{\partial}{\partial t} \rho(t)=\cL(\rho(t))\,,\qquad \rho(0)=\rho_0\,.
	\end{equation*}
	In the case of uniformly continuous dual dynamics over a von Neumann algebra, i.e.~a $*$-subalgebra of $\mathcal{B}(\cH)$ closed under involution (see \cite[Sec.~2.4.2]{Bratteli.1987} for more details), the seminal results by \citeauthor{Lindblad.1976} and \citeauthor{Gorini.1976} \cite{Lindblad.1976,Gorini.1976} classify the generators of a quantum dynamical semigroup in terms of the following so-called GKSL form: $\cL$ is a bounded operator on the space $\cT_1(\cH)$ of trace-class operators that satisfies
	\begin{equation}\label{eqGKSL}
		\cL(\rho) = - i[H, \rho] + \sum\limits_{j = 1}^K L_j \rho L_j^\dagger - \frac{1}{2}\{L_j^\dagger L_j, \rho\}\, 
	\end{equation}
	for a Hamiltonian $H=H^\dagger\in\cB(\cH)$ and so-called Lindblad operators $L_j\in\cB(\cH)$, where $\{A,B\}\coloneqq AB+BA$ denotes the anticommutator of two bounded operators $A,B\in\cB(\cH)$. In other words, the state at time $t\ge 0$ is described as $e^{t\cL}(\rho)$, where the exponential can be defined e.g.~in terms of its converging Taylor series. In that case, the set $(e^{t\cL})_{t\ge 0}$ defines a quantum Markov semigroup (QMS), which is a time-continuous family of completely positive, trace-preserving maps. 
	
	Since their introduction, QMSs have become a standard tool and have been extensively studied in various areas of mathematical physics and quantum information processing. Unfortunately, an extension of the GKSL form \eqref{eqGKSL} is known to fail for strongly continuous evolutions in general and thus requires additional assumptions. Conversely, unbounded operators satisfying an equation like \eqref{eqGKSL} on a suitable domain can fail at generating quantum Markovian dynamics. Simple counterexamples can be constructed in the context of continuous variable (CV) quantum systems over $\cH=L^2(\cH)$ as follows: denoting the creation and annihilation operators associated with a harmonic oscillator by $a^\dagger$ and $a$, respectively, the $2$-photon pure birth process formally defined as in \eqref{eqGKSL} with $K=1$, $H=0$ and $L_1=(a^\dagger)^2$ leads to a semigroup satisfying the master equation but failing to preserve the trace \cite[Example 3.3.]{Davies.1977}. Similar problems were encountered later on by Fagnola et al.~\cite{Chebotarev.1998, Fagnola.1999}, who considered the problem in the Heisenberg instead of the Schrödinger picture. They solved the appearing issues by imposing additional technical conditions on the generators in question. A thorough analysis of semigroups that have GKSL form but fail to be trace-preserving, can be found in \cite{Siemon.2017}, where the authors also discuss the possibility of generators deviating from the GKSL form. Beyond generation theory, a priori estimates on quantum Markov semigroups are key for perturbation theory of $C_0$-semigroups, which have been considered for example in \cite{Chebotarev.2003,Chebotarev.1998}.
	
	In contrast, recent years have seen remarkable progress in the use of CV quantum systems. These systems encompass a wide range of applications in various areas of quantum information, including quantum communication \cite{Braunstein.2005,Holevo.2001,Wolf.2007,Takeoka.2014,Pirandola.2017,Wilde.2017,Rosati.2018,Lami.2023}, sensing \cite{Aasi.2013,Zhang.2018,Meyer.2001,McCormick.2019} or simulation \cite{Flurin.2017}, enabled by advancements in non-classical radiation sources \cite{Ourjoumtsev.2006,Kurochkin.2014,Huang.2015,Reimer.2016,Eichler.2011,Zhong.2013}.
	Given the technological and experimental relevance of CV systems, there is a pressing need for a rigorous mathematical theory of quantum dynamical semigroups over such systems.
	
	One specific area where CV systems governed by a Lindblad master equation like \eqref{eqGKSL} have recently gained significant attention on theoretical as well as experimental grounds is in the field of bosonic quantum error correcting codes
	\cite{Gottesman.2001,Mirrahimi.2014,Guillaud.2019,Joshi.2021,Chamberland.2022,Ofek.2016,Michael.2016,Leghtas.2015,Rosenblum.2018,CampagneIbarcq.2020,Berdou.2023}. In particular, a certain class of CV codes known as CAT qubit codes has focused the attention of the community for their property of dynamically preserving quantum information through the action of a class of suitably engineered QMS which, loosely referred to as dissipative CAT-qubit dynamics in the present introduction. However, a mathematically rigorous analysis of these codes has only gotten little attention to the best of our knowledge, with the notable exception of \cite{Azouit.2016}.
	
	Much theoretical work has focused on the more tractable generators of Gaussian dynamics semigroups, where the generator $\cL$ is expressed as a quadratic form in the creation and annihilation operators \cite{Hudson.1984, Fagnola.1994, Cipriani.2000, Agredo.2021,Heinosaari.2009}. For those generators, the Feller property as well as properties of the spectrum and convergence results are known \cite{Cipriani.2000,Carbone.2003,Carbone.2007,Carlen.2017,DePalma.2018}. Since generators of dissipative CAT-qubit dynamics typically involve higher order monomials in $a$ and $a^\dagger$, the establishment of a more general theory of CV quantum Markov semigroups including them is timely.

	\subsection{Framework}
	
		Note that the technical details are introduced in Section \ref{sec:preliminaries}. Here, we just give an overview of the framework and subsequently the results: In this paper, we consider an operator $\cL$ on the space $\cT_f$ of finite linear combinations of rank-one operators of the form $\ketbra{k}{l}$, where given $k\in\mathbb{N}$, $\ket{k}\in L^2(\mathbb{R})$ denotes the $k$-photon Fock state. We further assume that $\cL$ satisfies the following two conditions: (i) $\cL$ has a GKSL structure \eqref{eqGKSL}, where the Hamiltonian $H$ as well as the jump operators $L_j$ are polynomials of the annihilation and creation operators; (ii) the following condition is satisfied: for a divergent sequence $\{k_r\}_{r \in \N}$ in $\R_+$, there exist real coefficients $\{w_{k_r}\}_{r \in \N}$ such that for all states $\rho\in \cT_f$:
		\begin{equation*}
			\tr[\cL(\rho)(N+\1)^{k_r/2}]\leq w_{k_r}\tr[\rho(N+\1)^{k_r/2}]\,.
		\end{equation*}
		Above, $N\coloneqq a^\dagger a=\sum_{n\in \mathbb{N}}n\ketbra{n}{n}$ denotes the photon number operator.
		This assumption implies not only that $\cL$ defines a quantum dynamical semigroup, but also a quasi-contractive semigroup on the weighted Banach spaces $(\cD(\cW^k),\|\cW^k(\cdot)\|_{1})$ defined through the operator
		\begin{equation*}
			\cW(\cdot)\coloneqq (N+\1)^{1/4}\,(\cdot)\,(N+\1)^{1/4}\,.
		\end{equation*}
		Here $\cD(\cW^k)$ denotes the domain of the operator $\cW^k$. In the latter, we refer to these spaces as Sobolev spaces and denote them by $W^{k,1}$ in analogy with their classical analogues (see also \cite{Becker.2021}). Note that all mentioned definitions and results are extended to multi-mode systems later on. Next, we call operators $\cL$ that satisfy both conditions (i) and (ii) generators of Sobolev preserving quantum dynamical semigroups. Indeed, in our first main result, we show that such operators generate QMSs with the extra property that the latter preserves Sobolev spaces. More precisely:
		
		\begin{thm*}[Generation of bosonic semigroups, see Theorem \ref{thm:generation-theorem}]
			Let $(\cL, \cD(\cL))$ be an operator defined on the Banach space $\cT_{1,\operatorname{sa}}$ of self-adjoint, trace-class operators. If $(\cL, \cD(\cL))$ satisfies conditions (i) and (ii) above, then its closure $\overline{\cL}$ generates a strongly continuous, positivity preserving semigroup $(\cP_t)_{t\ge 0}$ on $W^{k, 1}$ for all $k \in \R_+$ with 
			\begin{equation*}
				\norm{\cP_t}_{W^{k, 1} \to W^{k, 1}} \le e^{\omega_k t} \,\quad \forall t\ge 0\, . 
			\end{equation*}
			where $\omega_k = \frac{k_{r_1} - k}{k_{r_1} - k_{r_0}}\omega_{k_{r_0}} + \frac{k - k_{r_0}}{k_{r_1} - k_{r_0}}\omega_{k_{r_1}}$ for an $r$ such that $k_{r_0}\leq k <k_{r_1}$.
			Finally, for $k = 0$, the semigroup is contractive and trace-preserving.
		\end{thm*}
		
		Additionally, our setup is directly suited to the establishment of a perturbation analysis akin to 
		the result reported in \cite{Szehr.2013} in the finite dimensional setting. Moreover, in some cases, our analysis allows us to conclude the existence of adherence points for the dynamics in the large time limit. We manage to prove the requirements (i)-(ii) of the generation theorems as well as rigorous perturbation analysis for several examples including dissipative CAT-qubit dynamics as well as Gaussian and quantum Ornstein Uhlenbeck generators. For the latter, we show for instance the following perturbation bound for all $t\ge 0$ (see Proposition \ref{propqOUperturb} and Corollary \ref{corECDN}):
		
		\begin{prop*}
			Let $(\cL_{\operatorname{qOU}},\cT_f)$ be the generator of the quantum Ornstein Uhlenbeck semigroup with jump operators ${\lambda}a$ and $\mu a^\dagger$, $\lambda>\mu\geq0$ and $(\cL_G,\cT_f)$ a Gaussian perturbation with unique jump $\gamma a+\eta a^\dagger$ with $\gamma,\eta\in\mathbb{R}$, and $\varepsilon>0$. Then, assuming $\lambda^2-\mu^2+|\gamma|^2-|\eta|^2> 0$, $\cL_{\operatorname{qOU}}+\varepsilon\cL_G$ generates a positivity and Sobolev preserving semigroup on $W^{k,1}$ for $k\geq1$, and there exist uniformly bounded functions $C(\varepsilon),D(\varepsilon)$ depending on $\lambda,\mu,|\eta|,|\gamma|$ such that, for all $t\ge 0$ and all state $\rho\in W^{k,1}$,
			\begin{equation}
				\Big\|\left(e^{t \cL_{\operatorname{qOU}}}-e^{t(\cL_{\operatorname{qOU}}+\varepsilon\cL_G)}\right)(\rho)\Big\|_{1}\leq \varepsilon\, C(\varepsilon)\, \max\Big\{\|\rho\|_{W^{2,1}},D(\varepsilon)\Big\}\,.
			\end{equation}
			In particular, for all $t\ge 0$
			\begin{equation}
				\Big\|e^{t \cL_{\operatorname{qOU}}}-e^{t(\cL_{\operatorname{qOU}}+\varepsilon\cL_G)}\Big\|_{\diamond}^E\leq\,(1+E) \varepsilon\, C(\varepsilon)\, \max\Big\{1,D(\varepsilon)\Big\}\,,
			\end{equation}
			where $\|.\|_\diamond^E$ denotes the energy-constrained diamond norm defined in \Cref{ECnorm}.
		\end{prop*}
		
		We also note that our theory extends to the case of a time-dependent generator as well as to the multi-mode setting $\cH=L^2(\mathbb{R}^m)$, $m\ge 1$, see Section \ref{sec:timedependentgeneration} and \ref{sec:multi-mode-extension}. To prove our generation theorems, the compactly embedded Sobolev spaces play a crucial role and provide an interesting proof strategy, which follows the original method of Davies \cite{Davies.1977} by an explicit reduction to the seminal theorems by Hille, Yosida \cite{Hille.2012} and Feller, Myadera, Lumer and Phillips \cite[Thm.~II.3.8]{Engel.2000}.

	\subsection{Dissipative CAT-qubit dynamics}
	
		As mentioned before, the interest in continuous variable QMS has been reignited by the modelling capabilities of a certain class of error-corrected universal quantum computing architectures. In \cite{Azouit.2015} and later in \cite{Azouit.2016}, Azouit, Sarlette, and Rouchon prove the well-posedness of the dynamics that stabilises an $l$ dimensional code-space, with a generator given for a fixed $\alpha\in\mathbb{R}$ by 
		
		\begin{align}\label{deflphotondissip}
			\cL_l(\rho) = L_l\rho L_l^\dagger-\frac{1}{2}\big\{L_l^\dagger L_l,\rho\big\}\qquad\text{with}\qquad L_l\coloneqq a^l-\alpha^l\1\,.
		\end{align}
		
		In addition, they identified invariant operators of the dynamic and further showed that the semigroup exponentially drives states towards the code-space spanned by these invariants. By constructing a Banach space from composites of the generator, i.e.~$L_l=a^l-\alpha^l\1$, compactly embedded in the self-adjoint trace class operators, they judiciously circumvented the problems previously encountered when trying to take limits of the minimal semigroups. This procedure was very much tailored towards the simple structure of the generator and also relied on a favourable commutation relation of $a^l-\alpha^l\1$ and $(a^l-\alpha^l\1)^\dagger$, which one cannot hope for in general. In contrast, here we do not use parts of the generator to create our compactly embedded spaces, but instead use the most natural candidate at hand, namely the number operator $N$. Generalising the idea of Azouit et al.~we take limits of sequences of semigroups for which our CV Sobolev spaces are admissible subspaces to prove our generation theorems. Combining their exponential dynamical convergence, stated as
		\begin{equation*}
			\tr[L_l\left(e^{t\cL_l}(\rho)-\overline{\rho}\right) L_l^\dagger]\leq e^{-l!t}\tr[L_l|\rho-\overline{\rho}| L_l^\dagger]\,,
		\end{equation*}
		where $\overline{\rho}$ is a $\rho$-dependent state in the code-space, with our generation and perturbation theory, we can, for example, show that any $l$-photon dissipation perturbed by a Hamiltonian admits the following large-time perturbation bounds (see Theorem \ref{thm:l-diss-hamiltonian-perturbation}):
		\begin{thm*}
			Let $\cL_l$ be the $l$-photon dissipation defined in \Cref{deflphotondissip}    
			and $p_H\in\C[X,Y]$ with $\deg(p_H)=d_H\leq2(l-1)$ such that $H=p_H(a,\ad)$ is a symmetric operator. Then, there exist constants $c,\gamma>0$ depending on $\alpha$ and $l$  such that for $\varepsilon\geq0$ and all states $\rho\in W^{6l-4,1}$
			\begin{equation*}
				\begin{aligned}
					\Big|\tr[L\left(e^{t\cL_l}(\rho)-e^{t(\cL_l+\varepsilon\cH[H])}(\rho)\right)L^\dagger]\Big|\leq\varepsilon c\left(1-e^{-l!t}\right)\max\{\gamma,\|\rho\|_{W^{6l-4,1}}\}
				\end{aligned}
			\end{equation*}
			where $\cH[H](\rho)\coloneqq -i[H,\rho]$.
		\end{thm*}
		
		The same idea can be extended to more general setups and thereby extends the result by \citeauthor{Szehr.2013} from finite dimensions to the case of strongly continuous semigroups over infinite dimensional systems. 

	\subsection{Outline of the paper}
	
		In Section \ref{sec:preliminaries}, we begin with an introduction to basic Banach space and operator theory, followed by a short overview of this theory in the context of Hilbert spaces and their associated bounded, compact and trace-class operator spaces. Building upon that we then introduce in Section \ref{subsec:weighted-norms-compact-embedding} the notion of compact embeddings and weighted Banach spaces, followed by basic semigroup theory in Section \ref{subsec:c0-semigroups}. More specific to our application, we then briefly recapitulate Bosonic Hilbert spaces, and relevant operators thereon, and introduce our Bosonic Sobolev spaces. We prove that they are compactly embedded into one another and provide an interpolation theorem in the spirit of the Stein-Weiss theorem for weighted $L_p$ spaces. In Section \ref{sec:polynomial-generators}, we begin by showing the generation theorem in the time-independent case and then employ this theorem in Section \ref{sec:timedependentgeneration} to prove a generation theorem for generators composed of polynomials in $a$ and $a^\dagger$ with coefficients that are continuous functions of time. We extend our analysis to the multi-mode setting in Section \ref{sec:multi-mode-extension} where we lift the generation theorems from the chapter before.\par
		Section \ref{sec:examples-sobolev-preserving-semigroup} begins with a short proposition making better use of tighter input-output moments of the generator and showing the existence of adherence points in the asymptotic time regime for semigroups that admit such bounds. We then proceed to prove the generation theorem for the quantum Ornstein Uhlenbeck generator as well as for a family of dissipative CAT-qubit dynamics in Section \ref{sec:cat-qubits}. This section is then followed by large time perturbation bounds for both the quantum Ornstein Uhlenbeck semigroup as well as the dissipative CAT-qubit dynamics in Section \ref{sec:example-perturbation-bounds}.

\section{Preliminaries}\label{sec:preliminaries}
\addtocontents{toc}{\protect\setcounter{tocdepth}{2}}
	
	We begin with a short review of valuable tools from Banach space theory in Section \ref{subsec:banach-space} and build upon them to prove a compact embedding theorem for a class of weighted spaces in Section \ref{subsec:weighted-norms-compact-embedding}. We then recall standard results from the theory of strongly continuous semigroups as well as evolution systems in Section \ref{subsec:c0-semigroups}. These will play an essential role in Section \ref{sec:polynomial-generators}. Finally, we introduce continuous variable quantum systems, provide some valuable properties of polynomials of annihilation and creation operators, and introduce the notion of a \textit{Sobolev preserving semigroup}, which are the main objects of study in the remainder of the paper.

	\subsection{Basic Banach space theory}\label{subsec:banach-space}

		We start with a brief recap on notions from the theory of Banach spaces that will be needed in this paper, and refer to \cites[Chap.~III]{Kato.1995}[Chap.~IV]{Conway.2007}[Chap.~2-3]{Simon.2015}[Chap.~2-3]{Hille.2012} for more details. Let $(\cX,\|\cdot\|_{\cX})$ be a {Banach space}. We denote the space of bounded operators between two Banach spaces $\cX$ and $\cY$ by $\cB(\cX,\cY)$, with $\cB(\cX,\cX)=\cB(\cX)$. The identity map in $\cB(\cX)$ is denoted by $\1_{\cX}$, or simply $\1$ when the underlying space is clear from the context. The operator norm is denoted by
		\begin{equation}\label{eq:operator-norm}
			\|A\|_{\cX\rightarrow\cY}=\|A:\cX\to \cY\|\coloneqq\sup_{\|x\|_{\cX}=1}\|A(x)\|_{\cY}.
		\end{equation}
		We recall that the linear space $\cB(\cX,\cY)$ equipped with the operator norm is a Banach space since $(\cY,\|\cdot\|_{\cY})$ is a Banach space. An operator $A:\cX\rightarrow\cY$ is compact if the image sequence $\{Ax_n\}_{n\in\N}\subset\cY$ of any bounded sequence $\{x_n\}_{n\in\N}\subset\cX$ has a converging subsequence. In particular, every operator which can be approximated by a sequence of finite rank operators is compact \cites[Thm.~3.1.9]{Simon.2015}.
		
		More generally, an unbounded operator $A$ is a linear map $A:\cD(A)\subset\cX\rightarrow\cY$ defined on its domain $\cD(A)\subset\cX$. If the domain is dense in $\cX$, the operator is said to be densely defined. In this paper, all unbounded operators are densely defined. Note that the addition and concatenation of two unbounded operators $(A,\cD(A))$ and $(B,\cD(B))$ is defined on $\cD(A+B)=\cD(A)\cap\cD(B)$ and $\cD(AB)=B^{-1}(\cD(A))$ (cf.~\cite[Sec.~III§5.1]{Kato.1995}).
		An operator $(A,\cD(A))$ is closed iff its graph $\{(x,A(x)):x\in\cD(A)\}$ is a closed set in the product space $\cX\times\cY$. A bounded operator is closed iff its domain is closed. By convention, we extend all densely defined and bounded operators by the {bounded linear extension theorem} to bounded operators on $\cX$ \cite[Thm.~2.7-11]{Kreyszig.1989}. An operator is called {closable} if there exists a closed {extension}, where $\overline{A}$ is an extension of $A$ if $\cD(A)\subset\cD(\overline{A})$ and $Ax=\overline{A}x$ for all $x\in\cD(A)$. The closure of $A$ is denoted by $\overline{A}$ \cite[Sec.~7.1]{Simon.2015}. We also recall that for an unbounded operator $(A,\cD(A))$  on $\cX$, a {core} for $A$ is a subset $\cD_0\subset\cD(A)$ which is dense in $\cD(A)$ w.r.t.~the {graph norm} $\|\cdot\|_A\coloneqq\|A\cdot\|_{\cX}+\|\cdot\|_{\cX}$ of $A$ (cf.~\cite[Def. 1.6]{Simon.2015}). Given two linear operators $(A,\cD(A)))$ and $(B,\cD(B))$ on $\cX$, the operator $(B,\cD(B))$ is relatively $A$-bounded if $\cD(B)\subseteq \cD(A)$ and there are $a,b \geq 0$ for all $x\in\cD(\cL)$ such that
		\begin{equation}\label{eq:relative-bounded}
			\|B(x)\|_{\cX}\leq a\|A(x)\|_{\cX}+b\|x\|_{\cX}.
		\end{equation}
		\smallskip
		For a closed linear operator $(A, \cD(A))$ on a Banach space $\cX$ we call 
		\begin{equation*}
			\rho(A) \coloneqq \{\lambda \in \C: \lambda - A:\cD(A) \to \cX \text{ is bijective}\}
		\end{equation*}
		the resolvent set of $(A, \cD(A))$. For $\lambda \in \rho(A)$ we call the inverse $$R(\lambda, A) \coloneqq (\lambda - A)^{-1}$$ the resolvent, which is, by the closed graph theorem, a bounded operator on $\cX$.
		
		Besides the convergence w.r.t.~the operator norm (i.e.~uniform convergence), a sequence of operators $\{A_k\}_{k\in\N}$ defined on a common domain $\cD(A)$ converges strongly if $\lim_{k\rightarrow\infty}\|A_kx-Ax\|_{\cY}=0$ for all $x\in\cD(A)$. Based on the underlying topologies associated with these two convergences, one can define the Bochner integral of vector and operator-valued maps on a compact interval equipped with the Lebesgue measure, e.g.~$f:[a,b]\rightarrow\cX$ and $F:[a,b]\rightarrow \cB(\cX)$ with $a<b$. Under the assumption that the function $f$ or $F$ can be approximated by a step function and that the real-valued integral
		\begin{equation*}
			\int_{a}^b\|f(s)\|_{\cX}\,ds\qquad\text{or}\qquad\int_{a}^b\|F(s)\|_{\cX\rightarrow\cX}\,ds
		\end{equation*}
		is bounded, the Bochner integrals are defined by standard approximation with step functions. Since all the vector-valued maps considered in this work are continuous, the Bochner integral is always well-defined and coincides with the Riemann and Pettis integrals (more details can be found in \cite{Gordon.1991}). Similar to the real-valued case, the integral satisfies the triangle inequality w.r.t.~the norm, is invariant under closed linear transformations, and satisfies the fundamental theorem of calculus on continuous functions \cites[Sec.~3.7-8]{Hille.2012}.
		
		Two special cases of Banach spaces that we will consider are Hilbert spaces and bounded operators defined on Hilbert spaces. We denote the latter by $\cB(\cH)$ and use $\norm{\cdot}_\infty$ for their norm. Given a separable Hilbert space $\cH$ and $A\in\cB(\cH)$, its {adjoint} $A^\dagger$ is uniquely defined by
		\begin{equation}\label{eq:adjoint}
			\braket{A\phi\,,\varphi}=\braket{\phi\,,A^\dagger\varphi}
		\end{equation}
		for all $\phi,\varphi\in\cH$. The space of all bounded, {self-adjoint} operators, i.e.~$A=A^\dagger$, is denoted by $\cB_{\operatorname{sa}}(\cH)$. A special case of self-adjoint operators is those with finite support for a fixed orthonormal basis $\{|n\rangle\}_n$, whose set we denote by $\cT_f\equiv \cT_f(\cH) \coloneqq \{A\in\cB_{\operatorname{sa}}(\cH):\exists M\in\N\,:\,A = \sum_{n,m}^M a_{nm} \ketbra{n}{m}\}$. 
		By a slight abuse of notations, we denote the formal adjoint of an unbounded operator $(A,\cD(A))$ on $\cH$ as $A^\dagger:\cD(A^\dagger)\rightarrow\cH$, where the latter satisfies \Cref{eq:adjoint} for all $\phi\in\cD(A)$ and for all $\varphi$ in the maximally defined domain 
		\begin{equation*}
			\cD(A^\dagger)\coloneqq\{\ket{\varphi}\in\cH:\ket{\phi}\mapsto\braket{A\phi\,,\varphi}\text{ is bounded}\}.
		\end{equation*}
		The operator $A$ is called {symmetric} if for all $\ket{\phi},\ket{\varphi}\in\cD(A)$, $\braket{A\phi\,,\varphi}= \braket{\phi\,,A\varphi}$. $A$ is called self-adjoint if $\cD(A)=\cD(A^\dagger)$ and $A^\dagger=A$. An operator $(A,\cD(A))$ is positive if $\langle A\phi,\phi\rangle\ge 0$ for all $\phi\in\cD(A)$. In this case, we write $A\ge 0$. More generally we write $A\ge B$ if $A-B\ge 0$ with $\cD(A)\subset\cD(B)$. We defined the domain of the subtraction by $\cD(A-B)=\cD(A)\cap\cD(B)$. 
		The trace of a positive operator $A$ is defined by $\tr[A]=\sum_{n\in\N}\bra{n}A\ket{n}$. When $A\in\cB(\cH)$, its {trace-norm} is defined by $\|A\|_1\coloneqq\tr\big[|A|\big]$, where $|A|\coloneqq \sqrt{A^\dagger A}$ \cite[Thm.~2.4.4]{Simon.2015}. Bounded operators with finite trace-norm are called trace-class and their class we denote by $\cT_1\equiv \cT_1(\cH)$. We will most often consider the Banach space of self-adjoint trace-class operators denoted by $\cT_{1,\, \operatorname{sa}} \equiv \cT_{1,\, \operatorname{sa}}(\cH) \coloneqq \{A \in \cB_{\operatorname{sa}}(\cH):\norm{A}_1<\infty\}$. We define the set of density operators by $\cS\equiv \cS(\cH)\coloneqq\{\rho:\rho\geq0\text{ and }\tr[\rho]=1\}$.

	\subsection{Weighted norms and compact embeddings}\label{subsec:weighted-norms-compact-embedding}

		For a Banach space $(\cX,\|\cdot\|_{\cX})$ and an invertible operator $(\cW,\cD(\cW))$ on $\cX$, a natural way of defining a new norm out of $\|\cdot\|_{\cX}$ is via the following procedure:
		\begin{defi}[Weighted normed space]
			Let $(\cW,\cD(\cW))$ be an invertible linear operator. Then, $\|X\|_{\cW}\coloneqq\|\cW(X)\|_{\cX}$ for $X\in\cD(\cW)$ defines a norm on $\cD(\cW)$. In the following, we denote by $\|P\|_{\cW\to\cW}\coloneqq \sup_{\|X\|_\cW\le 1}\|P(X)\|_\cW$.
		\end{defi}
		
		In the next lemma, we prove that completeness of the space $(\cX,\|\cdot\|_{\cX})$ is preserved by the closedness of $(\cW,\cD(\cW))$:
		\begin{lem}\label{thm:weighted-banach-space}
			Let $(\cW,\cD(\cW))$ be an invertible linear operator on the Banach space $\cX$. Then, the weighted normed space $(\cD(\cW),\|\cdot\|_{\cW})$ is a Banach space and the norm $\|\cdot\|_{\cW}$ is equivalent to the {graph norm} of $\cW$.
		\end{lem}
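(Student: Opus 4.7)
The plan is to prove both assertions by essentially unpacking the definitions, once I fix the interpretation of ``invertible'' as \emph{boundedly invertible}, i.e.~$\cW^{-1}\in\cB(\cX)$. This is the standard reading for an unbounded operator written in the form $(\cW,\cD(\cW))$: if $\cW$ is closed and bijective from $\cD(\cW)$ onto $\cX$, the closed graph theorem applied to $\cW^{-1}:\cX\to\cX$ forces $\cW^{-1}$ to be bounded automatically. I will carry this interpretation through the two claims.

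For the completeness claim, I would take any $\|\cdot\|_{\cW}$-Cauchy sequence $\{X_n\}\subset\cD(\cW)$. By definition $\{\cW(X_n)\}$ is Cauchy in $(\cX,\|\cdot\|_{\cX})$, hence converges to some $Y\in\cX$. Using surjectivity, set $X\coloneqq\cW^{-1}(Y)\in\cD(\cW)$; then
\begin{equation*}
\|X_n-X\|_{\cW}=\|\cW(X_n)-\cW(X)\|_{\cX}=\|\cW(X_n)-Y\|_{\cX}\longrightarrow 0.
\end{equation*}
Observe that this step only uses bijectivity of $\cW$, not boundedness of its inverse; the completeness of $(\cD(\cW),\|\cdot\|_{\cW})$ is really inherited from $\cX$ by transport of structure along $\cW$.

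For the equivalence with the graph norm $\|X\|_{\mathrm{gr}}\coloneqq\|X\|_{\cX}+\|\cW(X)\|_{\cX}$, one direction is immediate,
\begin{equation*}
\|X\|_{\cW}=\|\cW(X)\|_{\cX}\le\|X\|_{\cX}+\|\cW(X)\|_{\cX}=\|X\|_{\mathrm{gr}}.
\end{equation*}
The reverse inequality is where the bounded inverse becomes indispensable: for $X\in\cD(\cW)$,
\begin{equation*}
\|X\|_{\cX}=\|\cW^{-1}\cW(X)\|_{\cX}\le\|\cW^{-1}\|_{\cX\to\cX}\,\|\cW(X)\|_{\cX},
\end{equation*}
so $\|X\|_{\mathrm{gr}}\le(1+\|\cW^{-1}\|_{\cX\to\cX})\,\|X\|_{\cW}$. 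Combined with the trivial direction, this yields the equivalence with explicit constants.

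The only delicate point, and the one I would flag as the main technical hinge, is this appeal to boundedness of $\cW^{-1}$: without it the two norms need not be comparable. Under the boundedly-invertible reading of the hypothesis (consistent with the paper's later use $\cW=(N+\1)^{k/4}$, whose inverse is manifestly bounded) there is no further obstacle, and both claims follow from the two display inequalities above.
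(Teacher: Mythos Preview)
Your proof is correct and uses the same key inequality as the paper,
\[
\|X\|_{\cW}\le\|X\|_{\cX}+\|\cW(X)\|_{\cX}\le(1+\|\cW^{-1}\|_{\cX\to\cX})\|X\|_{\cW},
\]
for the norm equivalence. The only structural difference is in how completeness is obtained: the paper first invokes the closed graph theorem to argue $\cW$ is closed, then notes that $(\cD(\cW),\|\cdot\|_{\mathrm{gr}})$ is Banach because the graph is closed in $\cX\times\cX$, and finally transfers completeness to $\|\cdot\|_{\cW}$ via the norm equivalence. You instead prove completeness of $(\cD(\cW),\|\cdot\|_{\cW})$ directly by transport of structure along the bijection $\cW$, which is slightly more elementary and, as you correctly observe, does not require boundedness of $\cW^{-1}$ at that stage. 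Both routes are entirely standard; yours has the minor advantage of isolating exactly where the bounded inverse is needed (only for the reverse norm inequality), while the paper's route makes explicit that $\cW$ is closed, a fact used elsewhere.
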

		\begin{proof}
			First, it is clear that $\|\cdot\|_{\cW}$ defines a norm because the linearity of $\cW$ directly implies homogeneity and the triangle inequality, while the injectivity of $\cW$ gives positive definiteness. By the closed graph theorem, we can conclude that $\cW^{-1}$ is bounded, and therefore $\cW$ is closed (see \cite[Sec.~III.2]{Kato.1995} and \cite[Thm.~2.11.5]{Hille.2012}). Moreover,
			\begin{equation*}
				\|X\|_{\cW}=\|\cW(X)\|_{\cX}\leq\|X\|_{\cX}+\|\cW(X)\|_{\cX}\leq(\|\cW^{-1}\|_{\cX\to\cX}+1)\|X\|_{\cW}
			\end{equation*}
			shows that the graph norm of $\cW$ is equivalent to $\|\cdot\|_{\cW}$. By definition, the graph is a closed operator on the Banach space $\cX\times\cX$ such that the vector space $\cD(\cX)$ equipped with the graph norm is a Banach space. The statement hence follows.
		\end{proof}
		Next, we introduce a compact embedding \cite[Sec.~5.7]{Evans.2010} for weighted normed spaces, i.e.~we want to reduce the compact embedding to a relation between the defining operators:
		\begin{defi}\label{def:compact-embedding}
			Let $(\cX_1,\|\cdot\|_{\cX_1})$ and $(\cX_2,\|\cdot\|_{\cX_2})$ be Banach spaces such that $\cX_1\subset\cX_2$. We say that $\cX_1$ is compactly embedded in $\cX_2$, and denote this condition as $\cX_1\Subset\cX_2$ iff
			\begin{itemize}
				\item[$-$] $\exists c\geq0$ such that $\|\cdot\|_{\cX_2}\leq c\|\cdot\|_{\cX_1}$ and
				\item[$-$] any bounded sequence in $\cX_1$ has a converging subsequence in $\cX_2$ (i.e.~is precompact).
			\end{itemize}
		\end{defi}
		
		\begin{lem}[Compact embedding]\label{thm:compact-embedding-weighted-spaces}
			Let $(\cW_1,\cD(\cW_1))$ and $(\cW_2,\cD(\cW_2))$ be invertible linear operators on $\cX$ with bounded inverses and $\cD(\cW_1)\subset\cD(\cW_2)$. Then $(\cD(\cW_1),\|\cdot\|_{\cW_1})$ is compactly embedded in $(\cD(\cW_2),\|\cdot\|_{\cW_2})$ iff the extension of $\cW_2\cW_1^{-1}$ is a compact operator on $\cX$. 
		\end{lem}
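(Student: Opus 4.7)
The plan is to exploit the fact that, for $i=1,2$, the map $\cW_i^{-1}\colon(\cX,\|\cdot\|_{\cX})\to(\cD(\cW_i),\|\cdot\|_{\cW_i})$ is an \emph{isometric isomorphism}: bijectivity comes from the invertibility of $\cW_i$, and $\|\cW_i^{-1}(x)\|_{\cW_i}=\|\cW_i\cW_i^{-1}(x)\|_{\cX}=\|x\|_{\cX}$. Under the hypothesis $\cD(\cW_1)\subset \cD(\cW_2)$, we also have $\cW_1^{-1}(\cX)=\cD(\cW_1)\subset\cD(\cW_2)$, so the composition $\cW_2\cW_1^{-1}$ is automatically defined on all of $\cX$ (the word ``extension'' in the statement is in anticipation of the generic case; no extension is actually needed here). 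The central diagram is then
\begin{equation*}
    (\cX,\|\cdot\|_{\cX})\;\xrightarrow{\;\cW_1^{-1}\;}\;(\cD(\cW_1),\|\cdot\|_{\cW_1})\;\xrightarrow{\;\iota\;}\;(\cD(\cW_2),\|\cdot\|_{\cW_2})\;\xrightarrow{\;\cW_2\;}\;(\cX,\|\cdot\|_{\cX}),
\end{equation*}
whose composition equals $\cW_2\cW_1^{-1}$ on $\cX$. Since the outer two arrows are isometries, standard ideal properties of compact operators immediately give the equivalence between compactness of $\iota$ and compactness of $\cW_2\cW_1^{-1}$.

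For the forward implication, assume $\cD(\cW_1)\Subset\cD(\cW_2)$. The bound $\|\cdot\|_{\cW_2}\le c\|\cdot\|_{\cW_1}$ in \Cref{def:compact-embedding} applied to $y=\cW_1^{-1}(x)$ yields $\|\cW_2\cW_1^{-1}(x)\|_{\cX}=\|\cW_1^{-1}(x)\|_{\cW_2}\le c\|\cW_1^{-1}(x)\|_{\cW_1}=c\|x\|_{\cX}$, so $\cW_2\cW_1^{-1}\in\cB(\cX)$. Now let $\{x_n\}\subset\cX$ be bounded; then $\{\cW_1^{-1}(x_n)\}\subset\cD(\cW_1)$ is bounded in $\|\cdot\|_{\cW_1}$, so by compact embedding it has a subsequence convergent in $\|\cdot\|_{\cW_2}$, which by the defining isometry is precisely the convergence of $\{\cW_2\cW_1^{-1}(x_{n_k})\}$ in $\cX$. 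Hence $\cW_2\cW_1^{-1}$ is compact.

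For the backward implication, suppose $\cW_2\cW_1^{-1}\in\cB(\cX)$ is compact. For any $y\in\cD(\cW_1)$, writing $x=\cW_1(y)$ gives
\begin{equation*}
    \|y\|_{\cW_2}=\|\cW_2(y)\|_{\cX}=\|\cW_2\cW_1^{-1}(x)\|_{\cX}\le\|\cW_2\cW_1^{-1}\|_{\cX\to\cX}\,\|y\|_{\cW_1},
\end{equation*}
establishing the norm dominance in \Cref{def:compact-embedding}. Conversely, given any sequence $\{y_n\}\subset\cD(\cW_1)$ bounded in $\|\cdot\|_{\cW_1}$, the sequence $\{x_n\}\coloneqq\{\cW_1(y_n)\}$ is bounded in $\cX$; by compactness of $\cW_2\cW_1^{-1}$ a subsequence $\{\cW_2\cW_1^{-1}(x_{n_k})\}=\{\cW_2(y_{n_k})\}$ converges in $\cX$, i.e.~$\{y_{n_k}\}$ converges in $\|\cdot\|_{\cW_2}$, which is the required precompactness.

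There is no genuine obstacle here; the only point requiring a little care is confirming that the domain of the composition $\cW_2\cW_1^{-1}$, in the sense of \cite[Sec.~III§5.1]{Kato.1995} recalled in \Cref{subsec:banach-space}, is indeed all of $\cX$ under the standing assumptions, which is why the statement phrases it in terms of an ``extension''. Everything else reduces to the routine observation that $\cW_i^{\pm 1}$ act as isometric isomorphisms between $(\cX,\|\cdot\|_{\cX})$ and $(\cD(\cW_i),\|\cdot\|_{\cW_i})$, so compactness transfers across the diagram by the two-sided ideal property of the compact operators.
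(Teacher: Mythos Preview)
Your proof is correct and follows essentially the same approach as the paper's: both directions reduce to transporting bounded sequences back and forth via $\cW_1$ and $\cW_1^{-1}$ and reading off convergence through the identity $\|y\|_{\cW_2}=\|\cW_2\cW_1^{-1}(\cW_1 y)\|_{\cX}$. Your packaging in terms of the isometric isomorphisms $\cW_i^{-1}\colon\cX\to(\cD(\cW_i),\|\cdot\|_{\cW_i})$ and the ideal property of compact operators is a cleaner conceptual framing of the same computation, and your observation that under $\cD(\cW_1)\subset\cD(\cW_2)$ no genuine extension of $\cW_2\cW_1^{-1}$ is needed is a helpful clarification.
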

		\begin{proof}
			First, we prove that the first condition in Definition \ref{def:compact-embedding} is equivalent to the boundedness of $\cW_2\cW_1^{-1}$, which is defined on $\cX$ by the {bounded linear extension theorem} \cite[Thm.~2.7-11]{Kreyszig.1989}. By assuming that $\cW_2\cW_1^{-1}$ is a bounded operator, which is implied by compactness, 
			\begin{equation*}
				\|X\|_{\cW_2}=\|\cW_2\cW_1^{-1}\cW_1(X)\|_{\cX}\leq\|\cW_2\cW_1^{-1}\|_{\cX\to\cX}\|X\|_{\cW_1}.
			\end{equation*}
			Now, if there is a $c\geq0$ such that $\|X\|_{\cW_2}\leq c\|X\|_{\cW_1}$ then boundedness is given by
			\begin{equation*}
				\|\cW_2\cW_1^{-1}(X)\|_{\cX}\leq c\|\cW_1\cW_1^{-1}(X)\|_{\cX}=c\|X\|_{\cX}.
			\end{equation*}
			Let $\{X_k\}_{k\in\N}\subset\cX$ be a bounded sequence and assume the second condition in Definition \ref{def:compact-embedding}. Then, $\cW_1^{-1}(X_k)$ is a bounded sequence in $(\cD(\cW_1),\|\cdot\|_{\cW_1})$ which admits a converging subsequence in $(\cD(\cW_2),\|\cdot\|_{\cW_2})$ by assumption. Therefore, $\cW_2\cW^{-1}_1(X_k)$ has a converging subsequence in $(\cX,\|\cdot\|_\cX)$. Conversely, assume $\cW_2\cW_1^{-1}$ is compact and $X_k$ a bounded sequence in $(\cD(\cW_1),\|\cdot\|_{\cW_1})$. By definition $\cW_1(X_k)$ is a bounded sequence in $(\cX,\|\cdot\|_\cX)$ so that $\cW_2\cW_1^{-1}(\cW_1X_k)=\cW_2(X_k)$ has a converging subsequence.
		\end{proof}
		
		\begin{rmk}
			A simple example of compact embedding is provided by classical Sobolev spaces $W^{k,p}(\mathbb{R})$, $k\in\mathbb{N}$ and $1\le p\le \infty$, with the norm $\|f\|_{k,p}\coloneqq \left(\sum_{i=0}^k\|f^{(i)}\|_p^p\right)^{{1}/{p}}$, where $\|f\|_p$ denotes the $L^p$ norm of $f$ with respect to the Lebesgue measure. Quantum extensions of these spaces recently appeared in \cite{Lafleche.2024,Becker.2021}. Here we will use the latter extension which we recall in Section \ref{subsec:bosonic-systems}.  
		\end{rmk}

	\subsection{Strongly continuous semigroups and evolution systems}\label{subsec:c0-semigroups}

		The evolution of a quantum system is often described by a formal differential equation called the master equation. To rigorously study solutions of a {master equation}, the theory of $C_0$-semigroups constitutes an essential toolbox. While detailed expositions to this theory can be found e.g.~in the books \cite[Chap.~II]{Engel.2000}\cite[Chap.~9]{Kato.1995} or \cite[Chap.~X-XIII]{Hille.2012}, here we provide a short overview and introduce concepts that are relevant for the present paper. A family of operators $(P_t)_{t\geq0}\subset\cB(\cX)$ is called a {$C_0$-semigroup} if it satisfies the following properties: 
		\begin{itemize}
			\item[$-$] $P_tP_s=P_{t+s}$ for all $t,s\geq0$;
			\item[$-$] $P_0=\1$, the identity map on $\cX$; and
			\item[$-$] $t\mapsto P_t$ is strongly continuous at $0$.
		\end{itemize}    
		To every $C_0$-semigroup one can associate a linear operator that is in the most general case unbounded, densely defined and closed. This operator determines the semigroup uniquely and is called its generator \cite[Thm.~II.1.4]{Engel.2000}. We will typically denote it by $(\cL, \cD(\cL))$, where
		\begin{equation}
			\cD(\cL) = \{x \in \cX : t \mapsto P_t(x) \text{ differentiable on } \R_+\}
		\end{equation}
		and for $x \in\cD(\cL)$
		\begin{equation}
			\cL(x) = \lim\limits_{t \to 0^+} \frac{1}{t}(P_t(x) - x) \, , 
		\end{equation}
		where the limit is with respect to the topology induced by $\cX$. The semigroup leaves the domain of its generator invariant and further commutes with it on its domain allowing us to justify the well-posedness of the following differential equation on $\cX$
		\begin{equation}\label{eq:time-indep-master-equation}
			\frac{d}{dt} x(t) = \cL(x(t)) \quad x(0) \in \cD(\cL) \quad\text{and}\quad t \ge 0 \, . 
		\end{equation}
		From the above considerations, this equation has a strongly continuous solution given by the semigroup, i.e.~$P_t(x(0)) = x(t)$. Indeed the semigroup is the unique solution (asking for a continuously differentiable map $t \mapsto x(t)$) to this so-called master equation \cite[Prop.~II.6.2]{Engel.2000}. The formulation as a master equation or initial value problem also reveals the origin of the term ``generator'', since for bounded linear operators the solution to these problems is just given by the semigroup involving the exponential of the generator (i.e.~$(e^{t\cL})_{t \ge 0}$). When the operator $\mathcal{L}$ is bounded, the conditions of existence and uniqueness are immediately satisfied using the series expansion of the exponential. For unbounded operators, the answer is no longer straightforward and requires a different representation of the exponential. One possible choice involves the resolvent of the generator.
		
		The well-known generation theorems by Lumer and Phillips, Hille and Yosida and Feller, Miyadera and Phillips all rely on the resolvent satisfying specific bounds, either directly, or indirectly e.g.~by dissipativity of the underlying operator $\cL$. Below we recall the theorems by Hille and Yosida and Lumer and Phillips, as they are going to be used frequently throughout this paper. It is noteworthy that the first two theorems are generalized by the third with the last allowing for the generation of semigroups that satisfy the bound
		\begin{equation*}
			\norm{e^{t \cL}}_{\cX \to \cX} \le c \,e^{\omega t}
		\end{equation*}
		with  $\omega \in \R$ and $c \ge 0$. If $c = 1$, we call the semigroup $\omega$-quasi contractive, and if further $\omega \le 0$ we call it contractive. We start with the generation theorem by Hille and Yosida that gives necessary and sufficient conditions on an operator to be the generator of a contractive $C_0$-semigroups by imposing constraints on its resolvent.
		
		\begin{thm}[Hille-Yosida]\label{thm:hille-yosida}
			A linear operator $(\cL,\cD(\cL))$ on $\cX$ generates a strongly continuous $\omega$-quasi contraction semigroup iff $(\cL,\cD(\cL))$ is closed, densely defined, the resolvent set contains $(\omega, \infty)$ and for all $\lambda \in (\omega, \infty)$ one has 
			\begin{equation*}
				\| R(\lambda,\cL)\|_{\cX\rightarrow\cX}\leq\frac{1}{\lambda - \omega}\,.
			\end{equation*}
		\end{thm}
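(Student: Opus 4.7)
My plan is to reduce the $\omega$-quasi contractive case to the classical Hille--Yosida theorem for contraction semigroups ($\omega = 0$), which is the standard building block. The key observation is that the linear shift $\tilde{\cL} \coloneqq \cL - \omega\1$ is compatible both with the spectral hypothesis and with the growth bound of the semigroup in a transparent way.

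For the necessity direction, I would first recall that any generator of a $C_0$-semigroup is automatically closed and densely defined (this is cited in the preceding paragraphs via Theorem II.1.4 of Engel--Nagel). To obtain the resolvent bound, I would show that for every $\lambda > \omega$ the Bochner integral
\begin{equation*}
    R_\lambda(x) \coloneqq \int_0^\infty e^{-\lambda t} P_t(x)\, dt
\end{equation*}
is well-defined, because $\|e^{-\lambda t} P_t\|_{\cX\to\cX} \le e^{-(\lambda-\omega)t}$ is integrable on $\R_+$. A short calculation using strong continuity, the semigroup property, and invariance of Bochner integrals under closed operators shows that $R_\lambda = R(\lambda, \cL)$, so that $\lambda \in \rho(\cL)$, and the triangle inequality for the integral yields $\|R(\lambda, \cL)\|_{\cX\to\cX} \le 1/(\lambda-\omega)$.

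For the sufficiency direction, I would set $\tilde{\cL} \coloneqq \cL - \omega\1$ on $\cD(\tilde{\cL}) = \cD(\cL)$. Closedness and dense definedness are inherited from $\cL$ since $\omega\1$ is bounded, and for $\mu > 0$ one has the algebraic identity $R(\mu, \tilde{\cL}) = R(\mu + \omega, \cL)$, so by hypothesis $(0,\infty)\subset \rho(\tilde{\cL})$ with $\|R(\mu,\tilde{\cL})\|_{\cX\to\cX}\leq 1/\mu$. The classical Hille--Yosida theorem for contraction semigroups then supplies a strongly continuous contraction semigroup $(\tilde{P}_t)_{t\ge 0}$ with generator $\tilde{\cL}$. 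I would then define $P_t \coloneqq e^{\omega t}\tilde{P}_t$ and verify by direct computation that $(P_t)_{t\ge 0}$ is a $C_0$-semigroup with $\|P_t\|_{\cX\to\cX}\le e^{\omega t}$, whose generator agrees with $\tilde{\cL} + \omega\1 = \cL$ on $\cD(\cL)$ (differentiating at $t = 0$ and using the product rule).

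The only substantive ingredient absorbed into the argument is the classical $\omega = 0$ Hille--Yosida theorem, and that is where the real work lies; the reduction itself is essentially bookkeeping. If one needed to prove the $\omega = 0$ case as well, the standard route via the Yosida approximants $\cL_n \coloneqq n^2 R(n, \cL) - n\1$ would apply: the resolvent bound makes each $e^{t\cL_n}$ a contraction, $\cL_n x \to \cL x$ strongly on $\cD(\cL)$ (using that $nR(n,\cL)\to \1$ strongly, which in turn follows from the bound and denseness of $\cD(\cL)$), and a Cauchy estimate of the form $\|e^{t\cL_n}x - e^{t\cL_m}x\|_{\cX}\le t\|\cL_n x - \cL_m x\|_{\cX}$ forces strong convergence to the desired semigroup. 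The main technical obstacle in that fallback is precisely this Cauchy estimate, since it requires commuting the bounded approximants with each other and controlling them uniformly using the resolvent bound.
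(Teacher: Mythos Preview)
The paper does not prove this theorem; it is stated in the preliminaries as a classical result and attributed to the standard references (Engel--Nagel, Hille). Your sketch is correct and follows the standard textbook route---Laplace transform for necessity, shift to $\omega=0$ plus Yosida approximants for sufficiency---so there is nothing to compare against beyond noting that your argument matches the treatment in the references the paper cites.
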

		
		The other seminal result, which we use in the present paper is a modified formulation of Theorem \ref{thm:hille-yosida} due to Lumer and Phillips \cite[Thm.~II.3.15]{Engel.2000} which, instead of asking for a certain bound on the resolvent, requires certain properties for $\cL$ among which $\omega$-{dissipativity}: 
		
		\begin{defi}\label{def:dissipativity}
			For $\omega \geq 0$, an operator $(\cL,\cD(\cL))$ on $\cX$ is $\omega$-{quasi dissipative} if for all $x \in \cD(\cL)$ and $\lambda > 0$
			\begin{equation*}
				\norm{(\lambda-(\cL-\omega))x}_{\cX} \geq \lambda \norm{x}_{\cX}.
			\end{equation*}
			If $\omega = 0$, we call the operator dissipative. 
		\end{defi}
		In what follows, the notation $\rg$ stands for the range of an operator $(\cL, \cD(\cL))$, defined as $\rg(\cL) = \{\cL(x):x \in \cD(\cL)\}$. Then we can state the theorem in the following way:
		
		\begin{thm}[Lumer-Phillips]\label{thm:lumer-phillips}
			Let $(\cL,\cD(\cL))$ be a densely defined, $\omega$-dissipative operator on $\cX$. Then, the closure $\overline{\cL}$ generates a $\omega$-quasi contraction semigroup iff there exists a $\lambda>0$ such that $\rg(\lambda-(\cL-\omega))$ is dense in $\cX$.
		\end{thm}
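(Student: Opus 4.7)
The plan is to reduce the statement to the Hille--Yosida theorem (\Cref{thm:hille-yosida}) via the rescaling $\tilde\cL \coloneqq \cL-\omega\1$ on $\cD(\tilde\cL)=\cD(\cL)$. Under this shift, the $\omega$-quasi dissipativity of $\cL$ becomes plain dissipativity of $\tilde\cL$, the range condition $\overline{\rg(\lambda_0-(\cL-\omega))}=\cX$ becomes $\overline{\rg(\lambda_0-\tilde\cL)}=\cX$, and since $e^{t\overline{\cL}}=e^{\omega t}\,e^{t\overline{\tilde\cL}}$ holds formally, $\overline{\cL}$ generates an $\omega$-quasi contractive semigroup iff $\overline{\tilde\cL}$ generates a contraction semigroup. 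It therefore suffices to prove the theorem in the case $\omega=0$.

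The ``only if'' direction is then short: if $\overline{\tilde\cL}$ generates a contraction semigroup, \Cref{thm:hille-yosida} grants $(0,\infty)\subset\rho(\overline{\tilde\cL})$, so $\rg(\lambda-\overline{\tilde\cL})=\cX$ for every $\lambda>0$; since $\cD(\tilde\cL)$ is by construction a core for $\overline{\tilde\cL}$, a routine graph-norm approximation shows that $\rg(\lambda-\tilde\cL)$ is dense in $\cX$. For the ``if'' direction, I would first observe that the dissipativity inequality $\|(\lambda-\tilde\cL)x\|\ge\lambda\|x\|$ survives under closure, so $\overline{\tilde\cL}$ is itself dissipative (which incidentally shows $\tilde\cL$ is closable to begin with). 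Consequently, for any $\lambda>0$, the operator $\lambda-\overline{\tilde\cL}$ is injective with an inverse of norm at most $1/\lambda$ on its range, and closedness of $\overline{\tilde\cL}$ forces this range to be closed. Combined with the hypothesis $\overline{\rg(\lambda_0-\tilde\cL)}=\cX$, which passes through to $\overline{\rg(\lambda_0-\overline{\tilde\cL})}=\cX$, this yields $\rg(\lambda_0-\overline{\tilde\cL})=\cX$, i.e.\ $\lambda_0\in\rho(\overline{\tilde\cL})$ with $\|R(\lambda_0,\overline{\tilde\cL})\|\le 1/\lambda_0$.

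The main obstacle is to bootstrap from this single $\lambda_0$ to every $\lambda\in(0,\infty)$, since \Cref{thm:hille-yosida} requires the resolvent bound on all of $(0,\infty)$. I would carry this out by a clopen argument applied to $\Lambda\coloneqq\{\lambda>0:\lambda\in\rho(\overline{\tilde\cL})\}$. The set $\Lambda$ is nonempty (it contains $\lambda_0$) and open in $(0,\infty)$ by a Neumann-series inversion of the identity $\lambda-\overline{\tilde\cL}=(\mu-\overline{\tilde\cL})\bigl(\1+(\lambda-\mu)R(\mu,\overline{\tilde\cL})\bigr)$ for $\mu\in\Lambda$ and $|\lambda-\mu|<\mu$, where the dissipative bound $\|R(\mu,\overline{\tilde\cL})\|\le 1/\mu$ ensures convergence. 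Closedness of $\Lambda$ inside $(0,\infty)$ is obtained by taking $\Lambda\ni\lambda_n\to\lambda>0$ and, for $y\in\cX$, setting $x_n\coloneqq R(\lambda_n,\overline{\tilde\cL})y$: the resolvent identity $R(\lambda_n,\overline{\tilde\cL})-R(\lambda_m,\overline{\tilde\cL})=(\lambda_m-\lambda_n)R(\lambda_n,\overline{\tilde\cL})R(\lambda_m,\overline{\tilde\cL})$ combined with the $1/\lambda_n$ bound makes $(x_n)$ Cauchy, and closedness of $\overline{\tilde\cL}$ identifies its limit $x$ with a solution of $(\lambda-\overline{\tilde\cL})x=y$. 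Connectedness of $(0,\infty)$ then forces $\Lambda=(0,\infty)$, and applying \Cref{thm:hille-yosida} to $\overline{\tilde\cL}$ yields the contraction semigroup; undoing the rescaling recovers the desired $\omega$-quasi contractive semigroup generated by $\overline{\cL}$.
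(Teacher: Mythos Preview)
The paper does not supply its own proof of this statement: \Cref{thm:lumer-phillips} is recalled in the preliminaries as a classical generation theorem, with attribution to \cite[Thm.~II.3.15]{Engel.2000}, and is used as a black box thereafter. So there is no in-paper argument to compare against.

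Your proposal is a correct and standard derivation. The reduction to $\omega=0$ by shifting, the verification that dissipativity passes to the closure (hence that $\tilde\cL$ is closable), the observation that $\lambda-\overline{\tilde\cL}$ has closed range for $\lambda>0$, and the clopen/connectedness bootstrap of the resolvent set from a single $\lambda_0$ to all of $(0,\infty)$ are exactly the ingredients one finds in the textbook treatment. The remark immediately following the theorem in the paper in fact points to \cite[Prop.~II.3.14]{Engel.2000} for precisely the step you carry out via the Neumann-series/clopen argument, so your approach aligns with the reference the authors had in mind.
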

		
		\begin{rmk}
			When $\cL$ is dissipative and there is a $\lambda>0$ such that $\rg(\lambda-\cL)$ is dense in $\cX$, then this holds for all $\lambda > 0$ \cite[Prop.~II.3.14]{Engel.2000}.
		\end{rmk}
		
		In the case of a $C_0$-semigroup on a Hilbert space $\cH$, the following result proves useful:
		\begin{prop}\label{cor:lumer-phillips}
			Let $(G,\cD(G))$ be a densely defined linear operator on $\cH$ and assume that $G$ and $G^\dagger$ are $\omega$-quasi dissipative. Then, $\overline{G}$ generates a $\omega$-quasi contraction $C_0$-semigroup on $\cH$. Moreover, if $(G,\cD(G))$ generates a $\omega$-quasi contraction semigroup, $G$ and $G^\dagger$ are $\omega$-quasi dissipative.
		\end{prop}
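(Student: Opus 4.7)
The plan is to deduce both directions from the Lumer--Phillips theorem (\Cref{thm:lumer-phillips}) together with the Hilbert space feature that the adjoint of a $C_0$-semigroup is again a $C_0$-semigroup. The statement has two directions; I will first reduce the forward direction to showing the range condition $\overline{\rg(\lambda-(G-\omega))} = \cH$ for some $\lambda > 0$, and then handle the converse via the adjoint semigroup.

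For the forward direction, suppose $G$ and $G^\dagger$ are both densely defined and $\omega$-quasi dissipative. Since $G^\dagger$ is densely defined, standard Hilbert space theory gives that $G$ is closable with closure $\overline{G} = (G^\dagger)^\dagger$; moreover, dissipativity is inherited by the closure, so $\overline{G}$ is again $\omega$-quasi dissipative. To apply \Cref{thm:lumer-phillips} to $\overline{G}$, it suffices to verify that for some fixed $\lambda > 0$ the range $\rg(\lambda - (G - \omega))$ is dense in $\cH$. Assume for contradiction that this range has a nontrivial orthogonal complement, i.e., there is $y \in \cH \setminus \{0\}$ with
\begin{equation*}
\langle (\lambda - (G - \omega))x, y \rangle = 0 \qquad \forall x \in \cD(G).
\end{equation*}
By the definition of the adjoint, this means $y \in \cD(G^\dagger)$ and $(\lambda - (G^\dagger - \omega))y = 0$. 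But the $\omega$-quasi dissipativity of $G^\dagger$ then forces
\begin{equation*}
0 = \| (\lambda - (G^\dagger - \omega))y \|_\cH \geq \lambda \|y\|_\cH,
\end{equation*}
contradicting $y \neq 0$. Hence the range is dense and Lumer--Phillips applies to produce the claimed $\omega$-quasi contraction $C_0$-semigroup generated by $\overline{G}$.

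For the converse, assume $(G, \cD(G))$ generates an $\omega$-quasi contraction semigroup $(P_t)_{t \geq 0}$ on $\cH$. Then $G$ itself is $\omega$-quasi dissipative by the direct part of \Cref{thm:lumer-phillips}. Since $\cH$ is reflexive, the adjoint family $(P_t^\dagger)_{t \geq 0}$ is again a strongly continuous semigroup, its generator is precisely $G^\dagger$, and $\|P_t^\dagger\|_{\cH \to \cH} = \|P_t\|_{\cH \to \cH} \leq e^{\omega t}$. Applying the forward implication of Lumer--Phillips to this adjoint semigroup then yields that $G^\dagger$ is $\omega$-quasi dissipative, finishing the proof.

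The main subtlety is not any single step, but ensuring that the abstract ingredients line up: one needs that dense definition of both $G$ and $G^\dagger$ truly implies closability (this is where Hilbert space structure is essential, via $\overline{G} = G^{\dagger\dagger}$), and that the adjoint semigroup on a Hilbert space is strongly continuous with generator $G^\dagger$, a fact that fails in general non-reflexive Banach spaces. The dissipativity computation itself is routine once the orthogonality-to-range translation into the kernel of $\lambda - (G^\dagger - \omega)$ is carried out.
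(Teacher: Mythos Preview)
The paper does not supply its own proof of this proposition; it is stated as a standard consequence of the Lumer--Phillips theorem and left unproved. Your argument is correct and follows the canonical route: use the orthogonal-complement characterisation of the range to convert the dissipativity of $G^\dagger$ into the range condition for $G$, and for the converse pass to the adjoint semigroup (strongly continuous because $\cH$ is reflexive) with generator $G^\dagger$. One cosmetic point: the paper's formulation of Lumer--Phillips (\Cref{thm:lumer-phillips}) already assumes dissipativity as a hypothesis, so strictly speaking the fact that a generator of an $\omega$-quasi contraction semigroup is $\omega$-quasi dissipative follows more directly from the Hille--Yosida resolvent bound (\Cref{thm:hille-yosida}) than from Lumer--Phillips as stated; but this is a labelling issue, not a mathematical gap.
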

		
		In this paper, we are also concerned with extensions of the above theory to the setting of time-dependent $C_0$-semigroups. In this case, we refer to the family as a \textit{$C_0$-evolution system}: A two-parameter family of bounded operators $( P_{t,s})_{0\le s\le t}$ is called an \textit{evolution system} if
		\begin{itemize}
			\item[$-$]$P_{t,t}=\1$,
			\item[$-$]$P_{t,r}P_{r,s}=P_{t,s}$ for all $0\le s\le r\le t$, and
			\item[$-$]$(t,s)\mapsto P_{t,s}$ is strongly continuous.
		\end{itemize}
		
		A well-known class of evolution systems is given by $C_0$-semigroup after imposing $P_{t,s}=P_{t-s}$. A subtle difference between semigroups and evolution systems is that the latter are not necessarily differentiable for any $x\neq 0$ \cites[p.~478]{Engel.2000}. Here, we recall sufficient conditions under which the following master equation admits a unique \textit{solution operator}:
		\begin{equation}\label{eq:time-dep-master-equation}
			\frac{\partial}{\partial t}x(t)=\cL_t(x(t))\quad\text{and}\quad x(s)=x_s\quad\text{ for }0\leq s\leq t\,.
		\end{equation}
		We start with a set of assumptions often referred to as being of \textit{hyperbolic type} \cites[Chap.~5]{Pazy.1983}[pp.~127~ff.]{Prato.2011} and which allow for the generation of an evolution system starting from a $C_0$-semigroup. Here, the existence of a so-called admissible subspace plays an important role:
		\begin{defi}[Admissible subspaces]\label{defi:admissible-spaces}
			For a $C_0$-semigroup $(P_t)_{t\geq0}\subset\cB(\cX)$ with generator $(\cL,\cD(\cL))$, a subspace $(\cY\subset\cX,\|\cdot\|_{\cY})$ is called admissible for $(P_t)_{t\ge 0}$, or simply $\cL$-admissible if $\cY$ is an invariant closed subspace of the semigroup, i.e.~$e^{t\cL}\cY\subset\cY$, and $e^{t\cL}|_{\cY}$ defines a $C_0$-semigroup on $(\cY,\|\cdot\|_\cY)$. Similarly, $(\cY,\|\cdot\|_{\cY})$ is an admissible subspace of an evolution system $P_{0\leq s\leq t}$ if it is an invariant closed subspace of the evolution system and $(P_{0\leq s\leq t}|_{\cY})_{0\leq s \leq t}$ defines an evolution system on $(\cY,\|\cdot\|_\cY)$.  
		\end{defi}
		Our first basic assumption is that for every fixed $s$ the operator $(\cL_s, \cD(\cL_s))$ generates a $C_0$-semigroup. Moreover,
		\begin{itemize}
			\item[$(1)$\hspace{1ex}] $(\cL_s)_{ s\ge 0}$ is a \textit{stable} family, i.e.~there is $c\ge 0$ and $\omega\in\R$ such that $\|e^{t\cL_s}\|_{\cX\rightarrow\cX}\leq c \,e^{\omega t}$ for all $ s\ge 0$;
			\item[$(2)$\hspace{1ex}] There exists a subspace $\cY\subset \cX$ and a norm $\|\cdot\|_\cY$ on $\cY$ endowing $\cY$ with a Banach space structure, such that for all $ s\ge 0$, $(\cY,\|\cdot\|_\cY)$ is \textit{$\cL_s$-admissible} and $(\cL_s)_{s\ge 0}$ is stable on $(\cY,\|\cdot\|_\cY)$\,;
			\item[$(3)$\hspace{1ex}] Finally, the map $s\mapsto\cL_s\in\cB((\cY,\|\cdot\|_\cY),(\cX,\|.\|_\cX))$ is uniformly continuous.
		\end{itemize}
		Under these assumptions, \cite[Thm.~3.1]{Pazy.1983} shows the existence of a unique evolution system $(P_{t,s})_{0\le s\le t}$. If one further requests this evolution system to have the following properties
		\begin{itemize}
			\item[$(4)$] $P_{t,s} \cY \subseteq \cY$ for $0\le s\le t$; 
			\item[$(5)$] $(s, t) \mapsto P_{s,t}$ is strongly continuous on $(\cY, \norm{\cdot}_{\cY})$;
		\end{itemize}
		one obtains the following theorem:
		\begin{thm}[Time-dependent semigroups \texorpdfstring{\cite[Thm.~3.1,~4.3]{Pazy.1983}}{}]\label{thm:time-dependent-semigroups}
			Let $\{(\cL_s,\cD(\cL_s)\}_{s\ge 0}$ be a family of generators of $C_0$-semigroups, which satisfy assumption $(1-3)$. Then, there exists a unique evolution system which satisfies
			\begin{itemize}
				\item[$-$] $\|P_{t,s}\|_{\cX\rightarrow\cX}\leq c\,e^{(t-s)\omega}$ for all $0\le s\le t$;
				\item[$-$] $\lim_{t\downarrow s}\frac{1}{t-s}(P_{t,s}x-x)=\cL_sx$ for all $x\in \cY$; and
				\item[$-$]$\frac{\partial}{\partial s}P_{t,s}x=-P_{t,s}\cL_sx$ for all $x\in\cY$ and $0\le s\le t$.
			\end{itemize}
			The two limits above are both with respect to the topology induced by $\|.\|_\cX$. If further (4) and (5) hold then for every $v\in \cY$, $P_{t,s}v$ is a unique solution for the initial value problem in \Cref{eq:time-dep-master-equation} in $(\cY,\|.\|_\cX)$.
		\end{thm}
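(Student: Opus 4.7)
The plan is to follow Kato's classical construction for hyperbolic evolution problems: approximate $(P_{t,s})$ by products of semigroups generated by piecewise-frozen generators, establish uniform bounds using (1) and (2), prove strong convergence on $\cY$ using the $\cB(\cY,\cX)$ continuity from (3), and finally verify the listed properties on the limit. For a partition $\pi=\{s=t_0<t_1<\cdots<t_n=t\}$ of $[s,t]$ I would define the approximant
\begin{equation*}
P^\pi_{t,s}\coloneqq e^{(t-t_{n-1})\cL_{t_{n-1}}}\cdots e^{(t_1-t_0)\cL_{t_0}},
\end{equation*}
so that by stability (1) the bound $\|P^\pi_{t,s}\|_{\cX\to\cX}\leq c\,e^{\omega(t-s)}$ holds uniformly in $\pi$, and by admissibility (2) the analogous bound holds in the $\cY$-norm; in particular every $P^\pi_{t,s}$ maps $\cY$ into $\cY$ with controlled norm.

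The main step is to show that $(P^\pi_{t,s}y)_\pi$ is Cauchy in $\cX$ for every $y\in\cY$ as the mesh $|\pi|$ tends to zero. For this I would invoke the telescoping identity
\begin{equation*}
\bigl(P^\pi_{t,s}-P^{\pi'}_{t,s}\bigr)y=\int_s^t\frac{d}{dr}\bigl(P^\pi_{t,r}P^{\pi'}_{r,s}\bigr)y\,dr,
\end{equation*}
interpreted on a common refinement of the two partitions, where each derivative collapses to a difference $\cL_{\sigma(r)}-\cL_{\sigma'(r)}$ acting on an element of $\cY$. Assumption (3) controls such differences in $\cB(\cY,\cX)$ by a quantity vanishing with the mesh, while assumption (2) keeps the intermediate vectors $P^{\pi'}_{r,s}y$ in $\cY$ with controlled norm; combining these yields a convergence rate proportional to $|\pi|\vee|\pi'|$. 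Defining $P_{t,s}$ as the resulting strong limit (extended to $\cX$ by density and the uniform bound), the evolution law $P_{t,r}P_{r,s}=P_{t,s}$, strong continuity in $(t,s)$, and the exponential bound of the first bullet are all inherited from their discrete analogues. The forward derivative at $t=s$ follows by choosing a partition whose first node is $s$, so that to leading order $P^\pi_{t,s}x=e^{(t-s)\cL_s}x+o(t-s)$ for $x\in\cY$; the $s$-derivative identity then follows by differentiating the evolution law $P_{t,s}=P_{t,s+h}P_{s+h,s}$ in $h$ and using this forward derivative.

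Under the additional assumptions (4) and (5), $P_{t,s}$ is strongly continuous in the $\cY$-topology, which permits differentiating $r\mapsto P_{t,r}x(r)$ for any $C^1$ candidate solution $x$ of \Cref{eq:time-dep-master-equation} taking values in $\cY$; the derivative vanishes by the $s$-derivative identity together with the equation, so integrating from $s$ to $t$ yields $x(t)=P_{t,s}x(s)$ and hence uniqueness. The main obstacle is the Cauchy estimate for the approximants: because each $\cL_s$ is unbounded on $\cX$, one cannot bound $\cL_{\sigma(r)}-\cL_{\sigma'(r)}$ on $\cX$ directly, and the argument only runs because (3) measures these differences in $\cB(\cY,\cX)$ while (2) simultaneously guarantees that the partial products remain in $\cY$ with controlled norm. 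The delicate bookkeeping between the two scales $\|\cdot\|_\cX$ and $\|\cdot\|_\cY$ is exactly what the hyperbolic framework is designed to handle and is the technical heart of the proof.
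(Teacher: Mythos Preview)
Your outline is essentially the standard Kato--Pazy construction and is correct. Note, however, that the paper does not prove this theorem at all: it is quoted verbatim from \cite[Thm.~3.1,~4.3]{Pazy.1983} as a black box, so there is nothing to compare against beyond observing that your sketch matches the proof given in that reference.
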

		
		\begin{rmk*}[Kato's \texorpdfstring{$C^1$}{C1}-condition]
			For a time-independent domain $\cD$, the above conditions follow if $\{(\cL_s,\cD)\}_{ s\ge 0}$ is a stable family of generators and if $s\mapsto\cL_s$ is strongly continuously differentiable w.r.t.~$\|.\|_\cX$ \cite[Chap.~5, Thm.~4.8]{Pazy.1983}.
		\end{rmk*}
		
		Finally, we discuss the general idea of how to control perturbed semigroups through certain admissible subsets associated with the domain of an invertible operator. More explicit variants are postponed to Section \ref{sec:example-perturbation-bounds}. 
		\begin{thm}\label{thm:semigroup-perturbation}
			Let $(\cL,\cD(\cL))$ and $(\cL+ \cK,\cD(\cL+\cK))$ be two generators of $C_0$-semigroups on $\cX$, for an operator $(\cK,\cD(\cK))$. Moreover, let $(\cW,\cD(\cW))$ be an invertible operator on $\cX$ with bounded inverse, such that $\cD(\cW)$ is an $\cL+\cK$-admissible subspace {(see Definition \ref{defi:admissible-spaces})} and such that $\cK\cW^{-1}$ is bounded. Then, for all $t\ge 0$,
			\begin{equation*}
				\|e^{t\cL}-e^{t(\cL+\cK)}:\cW\to\cX\|\leq {t} \,\|\cK\cW^{-1}\|_{\cX\to \cX}\; \int_{0}^1\| e^{(1-s)t\cL}\|_{\cX\to \cX}\;\|e^{st(\cL+\cK)}\|_{\cW\to \cW}\;ds\,.
			\end{equation*}
			In particular, for all $t\ge 0$  and $x\in \cD(\cW)$, the following equation holds in the Bochner sense:
			\begin{equation*}
				(e^{t\cL}-e^{t(\cL+\cK)})x= {t} \int_{0}^1 e^{(1-s)t\cL} \cK e^{st(\cL+\cK)}x\;ds.
			\end{equation*}
		\end{thm}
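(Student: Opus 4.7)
The plan is to derive the stated identity via the standard Duhamel/variation-of-parameters argument, then read off the operator-norm bound from it. Concretely, I would fix $x \in \cD(\cW)$ and consider the auxiliary function
\[
g : [0,t] \to \cX, \qquad g(s) \coloneqq e^{(t-s)\cL}\,e^{s(\cL+\cK)} x,
\]
and show that $g$ is strongly differentiable with
\[
g'(s) = e^{(t-s)\cL}\,\cK\,e^{s(\cL+\cK)} x.
\]
Integrating $g'$ from $0$ to $t$ then telescopes to $g(t) - g(0) = e^{t(\cL+\cK)}x - e^{t\cL}x$, and the rescaling $s = t\sigma$ produces the factor $t$ and the unit integration range in the claim.

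The delicate point is justifying the differentiation of $g$. The inner factor $s \mapsto e^{s(\cL+\cK)}x$ is differentiable in $\cX$ whenever $x$ is in the domain of the generator, with derivative $(\cL+\cK)e^{s(\cL+\cK)}x$; the outer factor then requires $e^{s(\cL+\cK)}x \in \cD(\cL)$ so that one may apply the product rule via the commutation of $\cL$ with $e^{(t-s)\cL}$ on its domain (see the discussion following \Cref{eq:time-indep-master-equation}). The $(\cL+\cK)$-admissibility of $(\cD(\cW), \|\cdot\|_\cW)$ in \Cref{defi:admissible-spaces} is precisely what keeps the orbit $s \mapsto e^{s(\cL+\cK)}x$ inside $\cD(\cW)$, and combined with $\cK\cW^{-1} \in \cB(\cX)$ and the hypothesis that $(\cL+\cK,\cD(\cL+\cK))$ generates a $C_0$-semigroup, this places the orbit in $\cD(\cL) \cap \cD(\cL+\cK)$ for $x$ chosen in a suitable core. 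Once $g'(s)$ is identified, the cancellation $-\cL e^{(t-s)\cL} e^{s(\cL+\cK)}x + e^{(t-s)\cL}(\cL+\cK)e^{s(\cL+\cK)}x = e^{(t-s)\cL}\cK e^{s(\cL+\cK)}x$ is immediate. I expect this regularity bookkeeping — making precise the assertion that $\cD(\cW) \subset \cD(\cL) \cap \cD(\cL+\cK)$ and that the orbit remains there — to be the main obstacle; it would be handled by working on a core for the part of $\cL+\cK$ in $(\cD(\cW), \|\cdot\|_\cW)$ and extending at the end by density together with the continuity of the integrand in $\cX$.

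With the integral identity in hand, the norm bound is essentially a one-line estimate. Writing $\cK = (\cK\cW^{-1})\cW$ inside the integral and estimating pointwise,
\[
\bigl\|e^{(1-\sigma)t\cL}\,\cK\,e^{\sigma t(\cL+\cK)} x\bigr\|_{\cX} \le \|e^{(1-\sigma)t\cL}\|_{\cX\to\cX}\,\|\cK\cW^{-1}\|_{\cX\to\cX}\,\|e^{\sigma t(\cL+\cK)}\|_{\cW\to\cW}\,\|x\|_{\cW},
\]
where in the middle step I use $\|\cW\,e^{\sigma t(\cL+\cK)}x\|_{\cX} = \|e^{\sigma t(\cL+\cK)}x\|_{\cW}$ directly from the definition of the weighted norm. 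Taking the supremum over $\|x\|_{\cW} \le 1$ and integrating in $\sigma$ yields the claimed operator-norm inequality. Finally, to close the Bochner-integral part, I would note that the integrand is strongly continuous on $[0,1]$ (as a composition of strongly continuous semigroups with a bounded operator acting on a fixed $x \in \cD(\cW)$) and bounded in $\cX$-norm by the estimate above, hence Bochner integrable, and invariant under the closed operator identities used in the derivation (cf.\ the properties of the Bochner integral recalled in \Cref{subsec:banach-space}).
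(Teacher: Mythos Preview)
Your approach via the Duhamel auxiliary function $g(s)=e^{(t-s)\cL}e^{s(\cL+\cK)}x$ is essentially the paper's, and your norm estimate matches it line for line. The one place where the paper's execution differs from yours is the continuity of the integrand: you assert that $\sigma\mapsto e^{(1-\sigma)t\cL}\,\cK\,e^{\sigma t(\cL+\cK)}x$ is continuous ``as a composition of strongly continuous semigroups with a bounded operator,'' but since $e^{(1-\sigma)t\cL}$ is only \emph{strongly} continuous and its argument also varies with $\sigma$, this step needs an argument. The paper handles it by first showing that $s\mapsto \cK e^{s(\cL+\cK)}x$ is continuous in $\cX$ (via $\|\cK\cW^{-1}\|_{\cX\to\cX}$ and the admissibility of $\cD(\cW)$ for $e^{s(\cL+\cK)}$), so that for any convergent $s_n\to s$ the set $\{\cK e^{s_n(\cL+\cK)}x:n\in\N\}$ is relatively compact, and then invoking \cite[Prop.~A.3]{Engel.2000} to upgrade strong continuity of $e^{(1-\sigma)t\cL}$ to uniform convergence on that compact set. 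Conversely, the paper is terser than you on the differentiation step itself---it simply invokes the fundamental theorem of calculus once the integrand is known to be continuous---so your explicit discussion of domain regularity and the core argument fills in what the paper leaves to the reader.
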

		\begin{proof}
			See Theorem \ref{thm-appx:semigroup-perturbation}.
		\end{proof}
		\begin{rmk}
			In words, Theorem \ref{thm:semigroup-perturbation} shows that the integral equation for semigroups is well-defined by generalizing the standard method that requires the following relative boundedness condition (see e.g.~\cite[Chapter 2]{Kato.1995}): for all $x\in \cD(\cL+\cK)$, $x\in\cD(\cK)$ and
			\begin{equation*}
				\|\cK x\|_{\cX}\leq \|(\cL+\cK) x\|_{\cX}.
			\end{equation*}
			Indeed, the choice $\cW\coloneqq \1-(\cL+\cK)$ with the resolvent $R(1,\cL+\cK)$ as its bounded inverse shows that our scheme is a generalization of the above. Clearly, $\cW$ generates an admissible subspace and $\|\cK \cW^{-1}x\|_{\cX}\leq \|x\|_{\cX}$ shows the implication. Note also that the above bound can be extended to evolution systems,
		\end{rmk}

	\subsection{Continuous variable quantum systems}\label{subsec:bosonic-systems}

		 An important feature in quantum physics is that of the indistinguishability of particles which results in the introduction of Bosonic and Fermionic particles \cite[Chap.~5.2]{Bratteli.1981}. In the second quantization formalism, a Bosonic or continuous variable quantum system can be described by the Fock space $\cH=L^2(\mathbb{R})$ endowed with an orthonormal (Fock) basis $\{\ket{n}\}_{n=0}^\infty$, where $n$ labels the number of photons present in a given mode. The space of vectors with finite support is denoted by $\cH_f = \{\ket{\psi}\in\cH: \exists M\in\N\,:\,\ket{\psi}= \sum_{n=0}^M \braket{n\,,\psi}\ket{n}\}$, where $\langle \varphi,\psi\rangle$ denotes the standard inner product on $L^2(\mathbb{R})$. Next, we define the \textit{annihilation} and \textit{creation} operators through the following relations
		 \begin{equation*}
		 	a\ket{n}=\sqrt{n}\ket{n-1},\quad a\ket{0}=0,\quad \ad\ket{n}=\sqrt{n+1}\ket{n+1}\,.
		 \end{equation*}
		 The operators $a$ and $a^\dagger$ satisfy the \textit{canonical commutation relation} (CCR), i.e.~$[a,\ad]=\1$ on $\cH_f$. We can construct the number operator of the latter two as,
		 \begin{equation}\label{eq:number-operator}
		 	\Nind = \ad a = \sum_{n=0}^\infty n\ketbra{n}{n} \,.
		 \end{equation}
		 It counts the number of photons in a mode. All of $a$, $\ad$, and $N$, although linear, are unbounded operators, hence are only defined on a (dense) subset of $\cH$, namely
		 \begin{equation}
		 	\cD(a^\dagger)=\cD(a)=\{\ket{\phi}\in\cH:\|a\ket{\phi}\|<\infty\}=\{\ket{\phi}=\sum_{n=0}^\infty\lambda_n\ket{n}:\sum_{n=0}^\infty n|\lambda_n|^2<\infty\}=\cD(N^{\frac{1}{2}})\,.
		 \end{equation}
		 Note that the above domains are {maximal}, i.e.~$\cD(a)=\{\ket{\phi}\in\cH:a\ket{\phi}\in \cH\}$.
		 In most parts of the paper, we consider operators constructed by polynomials $p\in\C[X,Y]$ of $a$, $\ad$ where the variables $X$ and $Y$ are considered non-commuting, i.e.~$XY$ is a different polynomial then $YX$. Note the $\C[X,Y]$ is the polynomial ring in $X,Y$ over the complex field. Using the CCR, we can always assume without loss of generality that the polynomial has the following normal form: there exist complex coefficients $\lambda_{ij}$ and $\mu_{kl}$ such that
		 \begin{equation}\label{eq:ccr-polynomial-representation}
		 	p(a\,,\ad)=\sum_{i + 2j \le \deg(p)}\lambda_{ij}(\ad)^iN^j+\,\sum_{k + 2l\le \deg(p)}\mu_{kl}N^la^k\,.
		 \end{equation}
		 One possible domain of these operators can be described by the degree $d$ of $p$ (see Section \ref{sec-appx:annihilation-creation}):
		 \begin{equation}\label{eq:domain-ccr-polynomial}
		 	\cD(p(a,\ad))=\cD(N^{d/2}).
		 \end{equation}
		 Next, we add the number operator to the power of twice the leading order to the polynomial, i.e.
		 \begin{equation}\label{eq:polynomial+number-op}
		 	\Tilde{p}(a,\ad)\coloneqq (N+\1)^{2d}+p(a,\ad)
		 \end{equation}
		 which allows us to show that the domain is maximal in the sense that the operator is closed. Note that he choice of $(N+\1)^{2d}$ is adapted to the proof of Lemma \ref{lem:semigroup-of-G-for-positivity}. Moreover, we prove that $\cH_f$ is a core for the considered polynomial (cf.~\cite[Sec.~7.1]{Simon.2015}).
		 
		 \begin{lem}[Adjoint and core of polynomials of $a, \ad$]\label{lem:formal-polynomial-ccr-adjoint-core}
		 	Let $p\in\C[X,Y]$ be a polynomial on $\C$ and $(p(a,\ad),\cD(N^{d/2}))$ the unbounded operator in normal form \eqref{eq:domain-ccr-polynomial}. Then, $p(a,\ad)$ is closable and there is a $c\geq0$ such that for all $\phi\in\cD(N^{d/2})$
		 	\begin{equation*}
		 		\|p(a,\ad)\ket{\phi}\|\leq c\|(\1+N)^{d/2}\ket{\phi}\|\,.
		 	\end{equation*}
		 	The modification $\tilde{p}(a,\ad)=(N+\1)^{2d}+p(a,\ad)$ is a closed operator with domain $\cD(\Tilde{p}(a,\ad))=\cD(\Tilde{p}(a,\ad)^\dagger)=\cD(N^{2d})$ and core $\cH_f$.
		 \end{lem}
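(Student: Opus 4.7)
My plan is to proceed in four stages corresponding to the three distinct claims (the bound/closability, closedness and identification of the adjoint, and the core property).

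\emph{Step 1 (Monomial bound and closability).} Work with the normal form \eqref{eq:ccr-polynomial-representation}. On a basis vector $\ket{n}$ we have $(\ad)^i N^j\ket{n}=n^j\sqrt{(n+1)(n+2)\cdots(n+i)}\,\ket{n+i}$ and $N^l a^k\ket{n}=(n-k)^l\sqrt{n(n-1)\cdots(n-k+1)}\,\ket{n-k}$ for $n\geq k$ (and zero otherwise). For $\phi=\sum_n \lambda_n\ket{n}\in\cD(N^{d/2})$, orthogonality gives
\begin{equation*}
\|(\ad)^iN^j\phi\|^2=\sum_{n}|\lambda_n|^2\,n^{2j}(n+1)\cdots(n+i)\leq \sum_n |\lambda_n|^2 (n+1)^{i+2j}\,,
\end{equation*}
and likewise $\|N^l a^k\phi\|^2\leq\sum_n|\lambda_n|^2(n+1)^{k+2l}$. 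Since in the normal form every exponent pair satisfies $i+2j\leq d$ and $k+2l\leq d$, each term is bounded by $\|(N+\1)^{d/2}\phi\|^2$; summing through the triangle inequality yields the claimed bound with $c$ the sum of moduli of the coefficients. For closability, note that the formal adjoint, obtained by sending $(\ad)^iN^j\mapsto N^j a^i$ and $N^la^k\mapsto (\ad)^k N^l$, is again a polynomial of degree $\leq d$; its domain therefore contains $\cH_f$ and is dense, so $p(a,\ad)$ is closable by the standard criterion \cite[Thm.~VIII.1(b)]{Simon.2015}.

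\emph{Step 2 (Closedness of $\tilde p$).} On $\cD(N^{2d})$ the operator $(N+\1)^{2d}$ is self-adjoint, and one has $\cD(N^{2d})=\cD((N+\1)^{2d})\subset\cD(N^{d/2})=\cD(p)$ so that $\cD(\tilde p)=\cD(N^{2d})$. For any $\varepsilon>0$, Young's inequality applied eigenvalue-wise gives $(n+1)^d\leq \varepsilon^2(n+1)^{4d}+C_\varepsilon^2$, hence $\|(N+\1)^{d/2}\phi\|\leq \varepsilon\|(N+\1)^{2d}\phi\|+C_\varepsilon\|\phi\|$. Combined with Step 1 this shows $p(a,\ad)$ is relatively $(N+\1)^{2d}$-bounded with relative bound zero. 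Since the sum of a closed operator and a relatively bounded perturbation with bound less than one is closed \cite[Thm.~IV.1.1]{Kato.1995}, $\tilde p$ is closed on $\cD(N^{2d})$.

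\emph{Step 3 (Adjoint).} Denote by $p^*(a,\ad)$ the formal adjoint polynomial of $p$; by the computation in Step 1, $p^*$ satisfies the same bound as $p$, so $\tilde p^* \coloneqq (N+\1)^{2d}+p^*(a,\ad)$ is also closed on $\cD(N^{2d})$. A direct pairing using self-adjointness of $(N+\1)^{2d}$ and the relation $\langle p\phi,\psi\rangle=\langle \phi, p^*\psi\rangle$ for $\phi,\psi\in\cD(N^{2d})$ (valid since $\cD(N^{2d})\subset \cD(N^{d/2})\cap \cD((p^*)^\dagger)$) shows $\tilde p^\dagger \supseteq \tilde p^*$. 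To upgrade this inclusion to equality, choose $\lambda>0$ so large that both $\lambda+\tilde p$ and $\lambda+\tilde p^*$ are bijective from $\cD(N^{2d})$ to $\cH$; this is possible because, writing $\lambda+\tilde p=(\lambda+(N+\1)^{2d})(\1+(\lambda+(N+\1)^{2d})^{-1}p)$, the factor in parentheses has operator norm strictly less than one once $\lambda$ is large (by the relative bound) and the first factor is invertible by functional calculus. Taking adjoints and using the fact that the adjoint of a boundedly invertible operator is boundedly invertible with $(\lambda+\tilde p)^\dagger=(\bar\lambda+\tilde p^\dagger)$, both $\bar\lambda+\tilde p^\dagger$ and $\bar\lambda+\tilde p^*$ are bijective onto $\cH$ while the second is a restriction of the first; injectivity forces $\tilde p^\dagger=\tilde p^*$, and so in particular $\cD(\tilde p^\dagger)=\cD(N^{2d})$.

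\emph{Step 4 (Core).} Given $\phi=\sum_{n}\lambda_n\ket{n}\in\cD(N^{2d})$, set $\phi_M=\sum_{n=0}^M \lambda_n\ket{n}\in\cH_f$. Then $\phi_M\to\phi$ and, by dominated convergence on the sequence $(n+1)^{4d}|\lambda_n|^2$, $(N+\1)^{2d}\phi_M\to(N+\1)^{2d}\phi$; the relative bound of Step 2 applied to $\phi-\phi_M$ shows $p(a,\ad)\phi_M\to p(a,\ad)\phi$, so $\tilde p(a,\ad)\phi_M\to\tilde p(a,\ad)\phi$. Hence $\cH_f$ is dense in $\cD(N^{2d})$ for the graph norm of $\tilde p$, which is the definition of a core.

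The main technical obstacle is Step 3, where one has to rule out that $\tilde p^\dagger$ strictly extends $\tilde p^*$; the remaining steps are essentially bookkeeping around the eigenvalue-wise estimates of Step 1 together with the Kato--Rellich mechanism.
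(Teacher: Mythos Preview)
Your proof is correct, modulo two minor slips. In Step~1 the pointwise inequality $(n+1)(n+2)\cdots(n+i)\le (n+1)^i$ fails for $i\ge 2$, so the constant $c$ is not literally the sum of the coefficient moduli but some $d$-dependent multiple thereof; this is harmless for the statement. In Step~3 the factorization is cleaner in the other order, $\lambda+\tilde p=\bigl[\1+p(\lambda+(N+\1)^{2d})^{-1}\bigr](\lambda+(N+\1)^{2d})$, so that the bracket is manifestly a bounded operator on $\cH$ whose norm is controlled by the relative bound; your written order requires an extra word about the graph-norm topology on $\cD(N^{2d})$.

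Your route, however, is genuinely different from the paper's. You obtain closedness of $\tilde p$ via Kato--Rellich (relative bound zero of $p$ with respect to $(N+\1)^{2d}$) and identify the adjoint through the abstract ``surjective restriction of a bijection must be everything'' argument, using that $(\lambda+\tilde p)^\dagger$ is bijective because $\lambda+\tilde p$ is. The paper instead proves $\cD(\tilde p^\dagger)=\cD(N^{2d})$ by an explicit Fock-basis contradiction: assuming $\varphi\in\cD(\tilde p^\dagger)\setminus\cD(N^{2d})$, it builds normalized test vectors $\phi_M\in\cH_f$ from finite projections of $\tilde p^\dagger\varphi$ and computes $|\langle\tilde p\,\phi_M,\varphi\rangle|=\|P_M\tilde p^\dagger\varphi\|$, which diverges by the leading-order expansion \eqref{eq:leading-order-poly}; closedness then follows once the adjoint is pinned down. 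Your perturbation-theoretic argument is more portable and arguably cleaner, needing only the relative bound; the paper's argument is more hands-on and avoids invoking Neumann series or Kato--Rellich, at the cost of a computation specific to the Fock representation. For the core property both arguments coincide in spirit (truncation plus graph-norm convergence).
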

		 \begin{proof}
		 	See Lemma \ref{lem-appx:formal-polynomial-ccr-adjoint-core}.
		 \end{proof}
		 
		 \begin{rmk*}
		 	The above lemma will allow us to reduce the analysis of the unbounded operator $p(a,\ad)$ in the strong topology to that on finite sums.  
		 \end{rmk*}
		 
		 We end this preliminary section by introducing a family of weighted Banach spaces which we coin as \textit{Bosonic Sobolev spaces} in analogy with classical harmonic analysis. The Bosonic Sobolev space of order $k\in {\mathbb{R}_+}$ is defined on
		 \begin{equation*}
		 	\cD(\cW^k)=\{(\cW^k)^{-1}(x) \in\cT_{1,\,\operatorname{sa}} \;:\; x \in \cT_{1,\,\operatorname{sa}}\}
		 \end{equation*}
		 via
		 \begin{equation}\label{eq:bosonic-symmetric-weight}
		 	\cW^k(x)\coloneqq(\1+N)^{k/4} x (\1+N)^{k/4}\,.
		 \end{equation}
		 
		 Since the inverse $(\cW^k)^{-1}(x)=(\1+N)^{-k/4} x (\1+N)^{-k/4}$ is a bounded operator, $(\cD(\cW^k), \|\cdot\|_{\cW^k})$ is a Banach space by Theorem \ref{thm:weighted-banach-space}. For the sake of notation, we define
		 \begin{equation}\label{eq:sobolev-space}
		 	W^{k,1}\coloneqq\cD(\cW^k)\quad\text{and}\quad\|\cdot\|_{W^{k,1}}\coloneqq\|\cdot\|_{\cW^k}.
		 \end{equation}
		 For $k=0$, $\cW^k=\1$ and $(\cD(\cW^0),\|\cdot\|_{W^{0,1}})=(\cT_{1,\,\operatorname{sa}},\|\cdot\|_1)$. 
		 
		 \begin{lem}\label{lem:sobolev-embedding}
		 	Let ${k < k' \in \mathbb{R}_+ := [0, \infty)}$. Then, 
		 	\begin{equation*}
		 		W^{k',1}\Subset W^{k,1}.
		 	\end{equation*}
		 \end{lem}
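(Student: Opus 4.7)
The plan is to apply the compact embedding criterion of Lemma \ref{thm:compact-embedding-weighted-spaces} to the pair $\cW_1 = \cW^{k'}$ and $\cW_2 = \cW^{k}$. First I would verify the inclusion $W^{k',1}\subset W^{k,1}$: any $x\in W^{k',1}$ may be written as $x=(\1+N)^{-k'/4}y(\1+N)^{-k'/4}$ for some $y\in\cT_{1,\operatorname{sa}}$, and since $(\1+N)^{-(k'-k)/4}$ is bounded, $(\1+N)^{k/4}x(\1+N)^{k/4}$ remains trace-class, so $x\in W^{k,1}$. Moreover both $\cW^k$ and $\cW^{k'}$ are invertible with bounded inverses (they are superoperators of multiplication by $(\1+N)^{-k/4}$ and $(\1+N)^{-k'/4}$, which are bounded self-adjoint operators on $\cH$). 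Hence the hypotheses of Lemma \ref{thm:compact-embedding-weighted-spaces} are met, and it suffices to show that the extension of $\cW^k(\cW^{k'})^{-1}$ is a compact operator on $\cT_{1,\operatorname{sa}}$.

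A direct computation (using that $(\1+N)$ commutes with itself) gives
\begin{equation*}
\cW^k(\cW^{k'})^{-1}(x)=K_\alpha\, x\, K_\alpha,\qquad K_\alpha\coloneqq(\1+N)^{-\alpha},\quad \alpha\coloneqq (k'-k)/4>0.
\end{equation*}
Since $K_\alpha=\sum_{n\ge 0}(1+n)^{-\alpha}\ketbra{n}{n}$ is diagonal in the Fock basis with eigenvalues decaying to $0$, it is a compact self-adjoint operator on $\cH$. I would then approximate $K_\alpha$ in operator norm by the finite-rank truncations $K_\alpha^{(M)}\coloneqq P_M K_\alpha P_M$ where $P_M$ is the projection onto $\mathrm{span}\{\ket{0},\dots,\ket{M-1}\}$, so that $\|K_\alpha-K_\alpha^{(M)}\|_\infty\to 0$.

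Define the auxiliary maps $T(x)\coloneqq K_\alpha x K_\alpha$ and $T_M(x)\coloneqq K_\alpha^{(M)} x K_\alpha^{(M)}$. Each $T_M$ has image contained in the finite-dimensional subspace $P_M \cT_{1,\operatorname{sa}} P_M$, hence is a finite-rank operator on $\cT_{1,\operatorname{sa}}$. Writing
\begin{equation*}
T(x)-T_M(x)=(K_\alpha-K_\alpha^{(M)})\,x\,K_\alpha+K_\alpha^{(M)}\,x\,(K_\alpha-K_\alpha^{(M)}),
\end{equation*}
and using the Hölder-type inequality $\|AxB\|_1\le \|A\|_\infty\|x\|_1\|B\|_\infty$, I obtain
\begin{equation*}
\|T-T_M\|_{\cT_1\to\cT_1}\le \|K_\alpha-K_\alpha^{(M)}\|_\infty\bigl(\|K_\alpha\|_\infty+\|K_\alpha^{(M)}\|_\infty\bigr)\xrightarrow[M\to\infty]{}0.
\end{equation*}
Thus $T$ is the operator-norm limit of finite-rank maps and is therefore compact on $\cT_{1,\operatorname{sa}}$, as needed. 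The plan is essentially routine; the only non-routine step is recognising that the compactness reduces to the operator-norm approximation of $K_\alpha$ by finite-rank truncations, together with the factorisation $T(x)=K_\alpha x K_\alpha$ that makes the Hölder estimate on trace norms go through on both sides simultaneously.
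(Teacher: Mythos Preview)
Your proof is correct and follows essentially the same approach as the paper: both invoke Lemma~\ref{thm:compact-embedding-weighted-spaces} and reduce the problem to showing that $x\mapsto (\1+N)^{-\alpha}x(\1+N)^{-\alpha}$ is a compact operator on $\cT_{1,\operatorname{sa}}$ by approximating it in operator norm by finite-rank truncations via H\"older's inequality. The only cosmetic difference is that the paper estimates the remainder directly rather than splitting it into two terms as you do.
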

		 \begin{proof}
		 	The proof relies on the abstract Theorem \ref{thm:compact-embedding-weighted-spaces}, which is applied for different values $k\in{\R_+}$ of 
		 	\begin{equation*}
		 		\cW^k(x)\coloneqq(\1+N)^{k/4} x (\1+N)^{k/4}.
		 	\end{equation*}
		 	For $k'>k$, the operator $\cW^{k}\cW^{-k'}=\cW^{k-k'}$ is bounded. Next we show compactness by proving that $\cW^{-l}$ with $-l=k-k'$ is approximated by a sequence of finite rank operators:
		 	\begin{equation*}
		 		\cW^{-l}_{f,M}(x)\coloneqq\sum_n^M(1+n)^{-l/4}\ketbra{n}{n} x \sum_m^M(1+m)^{-l/4}\ketbra{m}{m},
		 	\end{equation*}
		 	which can be deduced through Hölder's inequality
		 	\begin{align*}
		 		\|\cW^{-l}(x)-\cW^{-l}_{f,M}(x)\|_1&=\|\sum_{m,n>M}(1+n)^{-l/4}\ketbra{n}{n} x (1+m)^{-l/4}\ketbra{m}{m}\|_1\leq M^{-l/2}\|x\|_1.
		 	\end{align*}
		 	Since finite rank operators are compact by the Bolzano-Weierstrass theorem and the limit is a compact operator again \cite[Thm.~2.13.4]{Hille.2012}, the operator $\cW^{-l}$ is a compact operator on $\cT_{1,\,\operatorname{sa}}$. Applying Theorem \ref{thm:compact-embedding-weighted-spaces} shows that 
		 	\begin{equation*}
		 		W^{k',1}\Subset W^{k,1}\,.\vspace{-2ex}
		 	\end{equation*}
		 \end{proof}
		 
		 The following theorem will become helpful later and has an analogue in the theory of commutative $L_p$ spaces, which inspired its name. Although it can be proved by interpolation theory, we provide a more rudimentary approach using only Hadamard's three-line theorem.
		 
		 \begin{thm}[Stein-Weiss theorem for Bosonic Sobolev spaces]\label{thm:stein-weiss}
		 	Let $k_0 < k_1 \in \R_+$ and $T: W^{k_j, 1} \to W^{k_j, 1}$, be a linear map with $\norm{T}_{W^{k_j, 1} \to W^{k_j, 1}} \le M_j$ for some $M_j \ge 0$, $j = 1, 2$. Then for $\theta \in [0, 1]$, $T: W^{k_\theta, 1} \to W^{k_\theta, 1}$ with $k_\theta = (1-\theta) k_0 + \theta  k_1$ obtained by restriction of the input $T:W^{k_0, 1} \to W^{k_0, 1}$ to $W^{k_\theta, 1} \cap W^{k_0, 1}$ is a well defined bounded linear map with
		 	\begin{equation}
		 		\norm{T}_{W^{k_\theta, 1} \to W^{k_\theta, 1}} \le M_0^{1-\theta} M_1^{ \theta} \, . 
		 	\end{equation}
		 \end{thm}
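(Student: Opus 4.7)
The plan is to apply Hadamard's three-line theorem to a trace functional in which the exponent of the weight $\cW^k$ is complexified. The guiding mechanism is that the imaginary part of the complex exponent produces a unitary $(\1+N)^{is}$, $s\in\R$, which drops out of the trace norm by unitary invariance, so that the boundary values on $\Re z = 0$ and $\Re z = 1$ reduce to the hypothesized bounds $M_0$ and $M_1$, and the desired interpolant estimate at $\Re z = \theta$ follows.

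By density of Fock truncations (for $x\in W^{k,1}$, $P_M x P_M \in \cT_f$ converges to $x$ in $W^{k,1}$ because $P_M$ commutes with $\1+N$), it suffices to prove the bound for $x\in \cT_f$. Fix such $x$, set $y\coloneqq \cW^{k_\theta}(x)\in \cT_f$ so that $\|y\|_1 = \|x\|_{W^{k_\theta,1}}$, pick a finite-rank $B\in \cB(\cH)$ with $\|B\|_\infty \le 1$, and for $z$ in the closed strip $S \coloneqq \{z\in\C : 0 \le \Re z \le 1\}$ define
\begin{equation*}
f(z) \coloneqq \tr[B\, \cW^{k_z}(T(\cW^{-k_z}(y)))], \qquad k_z \coloneqq (1-z)k_0 + z k_1.
\end{equation*}
On $\Re z = 0$ one decomposes $(\1+N)^{k_z/4} = U_t(\1+N)^{k_0/4}$ with $U_t \coloneqq (\1+N)^{it(k_1-k_0)/4}$ unitary and commuting with $\1+N$; this gives $\cW^{-k_z}(y) = U_t^{-1}\cW^{-k_0}(y) U_t^{-1}$. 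Unitary invariance of $\|\cdot\|_1$ combined with the hypothesis on $T$ yields
\begin{equation*}
|f(it)| \le \|\cW^{k_0}(T(\cW^{-k_z}(y)))\|_1 \le M_0 \|\cW^{-k_z}(y)\|_{W^{k_0,1}} = M_0 \|y\|_1,
\end{equation*}
and analogously $|f(1+it)| \le M_1 \|y\|_1$ on $\Re z = 1$.

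For Hadamard to apply, $f$ must still be holomorphic on the open strip and bounded on $S$. The inner factor $\cW^{-k_z}(y)$ is a finite sum with $z$-entire coefficients because $y \in \cT_f$; for the outer factor I would rewrite $\cW^{k_z}(T(\ketbra{n}{m})) = (\1+N)^{(k_z-k_1)/4}\, \cW^{k_1}(T(\ketbra{n}{m}))\, (\1+N)^{(k_z-k_1)/4}$. By hypothesis $\cW^{k_1}(T(\ketbra{n}{m})) \in \cT_1$, and $\Re(k_z - k_1) \le 0$ on $S$ turns the outer factors into contractions whose matrix elements are entire in $z$; this delivers the required analyticity and uniform boundedness. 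Hadamard then gives $|f(\theta)| \le M_0^{1-\theta} M_1^\theta \|y\|_1$, and a supremum over finite-rank $B$ produces $\|T(x)\|_{W^{k_\theta,1}} \le M_0^{1-\theta}M_1^\theta \|x\|_{W^{k_\theta,1}}$ on $\cT_f$. A Cauchy/completeness argument based on continuity of $T$ on $W^{k_0,1}$ then extends the estimate to every $x \in W^{k_\theta,1}$ while simultaneously showing $T(x)\in W^{k_\theta,1}$. The main obstacle is exactly this analyticity step: since $(\1+N)^{k_z/4}$ is unbounded for $\Re k_z > 0$, the interpolation would fail without the above absorption trick, and it is precisely here that the upper endpoint hypothesis $T:W^{k_1,1}\to W^{k_1,1}$ plays a structural role going beyond the mere boundary estimate $|f(1+it)|\le M_1$.
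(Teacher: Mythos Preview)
Your proposal is correct and essentially identical to the paper's proof: after the substitution $y=\cW^{k_\theta}(x)$, your functional $f(z)=\tr[B\,\cW^{k_z}(T(\cW^{-k_z}(y)))]$ is exactly the paper's $g(z)$, and both argue via Hadamard's three-line theorem with the same boundary estimates exploiting that the imaginary part of the exponent yields a unitary. The only cosmetic differences are that you take $B$ finite-rank (which slightly simplifies the analyticity step) and sketch the holomorphicity argument informally, whereas the paper relegates it to two dedicated appendix lemmas using the same absorption trick $(\1+N)^{k_z/4}=(\1+N)^{(k_z-k_1)/4}(\1+N)^{k_1/4}$ you identify.
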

		 \begin{proof}
		 	We have that $k_0 < k_1$ and hence $W^{k_1, 1} \Subset W^{k_0, 1}$. Let $k_\theta = (1-\theta) k_0 + \theta k_1$ with $\theta \in (0, 1)$ and $x \in \cT_f \subseteq W^{k_1, 1} \Subset W^{k_\theta, 1} \Subset W^{k_0, 1}$. We will show
		 	\begin{equation}
		 		\norm{T(x)}_{W^{k_\theta, 1}} \le M_0^{1-\theta} M_1^{\theta} \norm{x}_{W^{k_\theta, 1}}
		 	\end{equation}
		 	which proves that $T$ can be uniquely extended to a bounded linear map on $W^{k_\theta, 1}$ that agrees with the restriction of $T:W^{k_0, 1} \to W^{k_0, 1}$, to $W^{k_\theta, 1} \cap W^{k_0, 1}$. This agreement on intersections is due to the compact embeddings of the Bosonic Sobolev spaces into one another. For $x \in \cT_f$ and $z \in S \coloneqq \{z \in \C \;:\; 0 \le \Re(z) \le 1\}$, we define $k(z) = (1-z) k_0 + z k_1$ so that for a fixed $Z \in \cB(\cH)$ with $\norm{Z}_\infty \le 1$,
		 	\begin{equation}
		 		\begin{aligned}
		 			g: S &\to \C,\,g(z) = \tr[(N + \1)^{\frac{k(z)}{4}} T\Big((N + \1)^{\frac{k_\theta - k(z)}{4}} x (N + \1)^{\frac{k_\theta - k(z)}{4}}\Big) (N + \1)^{\frac{k(z)}{4}} Z]
		 		\end{aligned}
		 	\end{equation}
		 	is well-defined, uniformly bounded, and continuous on $S$ (q.v.~Lemma \ref{lem:continuity-G}) and further holomorphic on $\mathring{S} \coloneqq \{z \in \C \;:\; 0 < \Re(z) < 1\}$ (q.v.~Lemma \ref{lem:differentiability-G}). Note that for $\theta\in(0,1)$
		 	\begin{equation}
		 		|g(\theta)| = \left|\tr[(N + \1)^{\frac{k_\theta}{4}} T(x) (N + \1)^{\frac{k_\theta}{4}} Z]\right| \, ,
		 	\end{equation}
		 	for $t\in\R$
		 	\begin{equation}\label{eq:g(it)}
		 		\begin{aligned}
		 			|g(it)| &= \left|\tr\left[(N + \1)^{\frac{k_0 + it(k_1 - k_0)}{4}}T\Big((N + \1)^{\frac{k_\theta - k(it)}{4}} x (N + \1)^{\frac{k_\theta - k(it)}{4}}\Big) (N + \1)^{\frac{k_0 + it(k_1 - k_0)}{4}}Z\right]\right|\\
		 			&\le \norm{T\Big((N + \1)^{\frac{k_\theta - k_0 + it(k_0 - k_1)}{4}} x (N + \1)^{\frac{k_\theta - k_0 + it(k_0 - k_1)}{4}}\Big)}_{W^{k_0, 1}} \norm{Z}_\infty\\
		 			&\le \norm{T}_{W^{{k_0}, 1} \to W^{{k_0}, 1}} \norm{(N + \1)^{\frac{k_\theta - k_0}{4}} x (N + \1)^{\frac{k_\theta - k_0}{4}}}_{W^{k_0, 1}}\\
		 			&= M_0 \norm{x}_{W^{k_\theta, 1}}\,,
		 		\end{aligned}
		 	\end{equation}
		 	and similarly
		 	\begin{equation}\label{eq:g(it + 1)}
		 		\begin{aligned}
		 			|g(1 + it)| & \le \norm{T\Big((N + \1)^{\frac{k_\theta - k_1 + it(k_0 - k_1)}{4}} x (N + \1)^{\frac{k_\theta - k_1 + it(k_0 - k_1)}{4}}\Big)}_{W^{k_1, 1}}\norm{Z}_\infty\\
		 			&\le \norm{T}_{W^{k_1, 1} \to W^{k_1, 1}} \norm{(N + \1)^{\frac{k_\theta - k_1 + it(k_0 - k_1)}{4}} x (N + \1)^{\frac{k_\theta - k_1 + it(k_0 - k_1)}{4}}}_{W^{k_1, 1}}\\
		 			&=M_1 \norm{x}_{W^{k_\theta, 1}} \, . 
		 		\end{aligned}
		 	\end{equation}
		 	An application of Hadamard's three-lines theorem now gives us that for $G_0 \coloneqq \sup\limits_{t \in \R} |g(it)|$ and $G_1 \coloneqq \sup\limits_{t \in \R} |g(it + 1)|$
		 	\begin{equation}
		 		|g(\theta)| \le G_0^{1-\theta} G_1^{ \theta} \le M_0^{1-\theta} M_1^{ \theta} \norm{x}_{W^{k_\theta, 1}} \, ,
		 	\end{equation}
		 	where the last inequality follows from the bounds in \Cref{eq:g(it)} and \Cref{eq:g(it + 1)}. Since $Z$ was arbitrary, we can deduce that 
		 	\begin{equation}
		 		\begin{aligned}
		 			\norm{T(x)}_{W^{k_\theta, 1}} &= \sup\left\{ \left|\tr[(N + \1)^{\frac{k_\theta}{4}} T(x) (N + \1)^{\frac{k_\theta}{4}} Z ]\right| \;:\; \norm{Z}_\infty \le 1 \right\} \\
		 			&\le M_0^{1-\theta} M_1^{ \theta} \norm{x}_{W^{k_\theta, 1}}
		 		\end{aligned}
		 	\end{equation}
		 	where we used the dual characterisation of $\norm{\cdot}_1$. This concludes the claim.
		 \end{proof}
		 
		 With the Sobolev embedding at hand, we introduce the notion of a \textit{Sobolev preserving semigroup} as a semigroup defined on a sequence of Bosonic Sobolev spaces.
		 
		 \begin{defi}[Sobolev preserving evolution system]\label{defi:sobolev-preserving-semigroups}
		 	Let $(\cP_t)_{t\ge 0}$ be a $C_0$-semigroup on $\cT_{1,\operatorname{sa}}$. We then call $(\cP_t)_{t\ge 0}$ \textit{Sobolev preserving} if there exists a divergent sequence $\{k_r\}_{r \in \N} \to \infty$, such that for all $r \in \N$, $W^{k_r, 1}$ is an admissible subspace for $(\cP_t)_{t\ge 0}$. Similarly for $(\cP_{t,s})_{0\leq s\leq t}$ an evolution system on $\cT_{1,\operatorname{sa}}$, we call $(\cP_{t,s})_{0\leq s\leq t}$ \textit{Sobolev preserving} if for all $r \in \N$, $W^{k_r, 1}$ is admissible for $(\cP_{t,s})_{0\leq s\leq t}$ to $W^{k_r, 1}$.
		 \end{defi}
		 Note that with the Stein-Weiss theorem for Bosonic Sobolev spaces, Theorem \ref{thm:stein-weiss}, one can immediately interpolate a semigroup and evolution system defined on $W^{k_0, 1}$ and $W^{k_1, 1}$ to $W^{k_\theta, 1}$ with $k_\theta = (1 - \theta)k_0  + \theta k_1$, $\theta \in [0, 1]$ (q.v.~Lemma \ref{lem:interpolation-lemma}). This means the above definition is equivalent to the definition, requiring that for all $k \in \R_+$, $W^{k, 1}$ are admissible subspaces of the semigroup or evolution system respectively.
		 
		 The following example shows that not every semigroup is Sobolev preserving:
		 \begin{ex}
		 	An example of a $C_0$-semigroup that is not Sobolev preserving is the depolarizing semigroup, i.e.~$\cP_t(\rho) = e^{-t} \rho + (1 - e^{-t})\tr[\rho] \sigma$ where $\sigma$ is a quantum state with $\tr[N^{\frac{1}{2}}\sigma N^{\frac{1}{2}}] = \infty$. Then for a quantum state $\rho \in W^{2, 1}$ we find that 
		 	\begin{equation*}
		 		\norm{\cP_t(\rho)}_{W^{2, 1}} = \infty \quad \qquad \forall t > 0 \, .
		 	\end{equation*}
		 \end{ex}
		 
		 \section{Sobolev preserving quantum Markov semigroups}\label{sec:polynomial-generators}
		 A quantum evolution in bosonic systems is described by a master equation 
		 \begin{equation}\label{eq:mastereq}
		 	\frac{d}{dt} x(t) = \cL(x(t)) \quad x(0) \in \cD(\cL) \quad\text{and}\quad t \ge 0 \, . 
		 \end{equation}
		 where $\cL$ is potentially unbounded. In the following, we state two sufficient assumptions for the existence and uniqueness of an operator-valued solution to \eqref{eq:mastereq} in terms of a semigroup. In other words, we prove a generation theorem for bosonic quantum Markov semigroups. This is generalized in Section \ref{sec:timedependentgeneration} to the case of time-dependent generators.

	\subsection{Strongly continuous bosonic semigroups}\label{subsec:time-indep-generation}

		We start with the time-independent setting, for which we need two working assumptions. The first assumption is motivated by the so-called GKSL \cite{Lindblad.1976,Gorini.1976} form that generators of quantum dynamical semigroups over finite-dimensional quantum systems take, as well as our natural choice to consider jump and Hamiltonian operators described by polynomials in the annihilation and creation operators:
		
		\begin{assum}\label{assum:finite-degree}
			The operator $(\cL,\cT_f)$ has $\operatorname{GKSL}$ form, i.e.~for $x\in\cT_f$ 
			\begin{equation}\label{eq:lindblad}
				\begin{aligned}
					\cL:\cT_f \to \cT_f \quad x \mapsto\cL(x) &= - i [H, x] + \sum\limits_{j = 1}^K L_j x L_j^\dagger  - \frac{1}{2}\{L_j^\dagger L_j, x\} \\
					&\coloneqq Gx + x G^\dagger + \sum\limits_{j = 1}^K L_j x L_j^\dagger\, , 
				\end{aligned}
			\end{equation}
			for some $K\in\mathbb{N}$ and with $G=-iH-\frac{1}{2}\sum_{j=1}^K L_j^\dagger L_j$, where $\{A,B\}=AB+BA$ denotes the anticommutator of two operators $A,B$ on a suitable domain. For the above equation to make sense, the operators $H$ and $L_j$ are assumed to be polynomials of the creation and annihilation operators, i.e.~$H\coloneqq p_H(a,\ad)$ and $L_j\coloneqq p_j(a,\ad)$, and $H$ is assumed to be symmetric. This ensures that $\cT_f$ is invariant under $\cL$.
			We denote the degree of $p_H$ by $d_H\coloneqq\deg(p_H)$, those of $p_j$ by $d_j\coloneqq\deg(p_j)$, and $d\coloneqq\max\{d_1,...,d_K,d_H\}$.
		\end{assum}
		
		The second assumption will lead to the semigroup being Sobolev preserving, which allows us not only to prove the existence and uniqueness of the evolution generated by \eqref{eq:mastereq} but further to conduct a perturbation analysis as well as to extend our results to the case of a time-dependent Master equation:
		
		\begin{assum}\label{assum:sobolev-stability}
			There exists a non-negative sequence $\{k_r\}_{r \in \N}\rightarrow\infty$ s.t.~for all $r \in \N$ there exist $\omega_{k_r} \ge 0$ such that for all positive semi-definite $x \in \cT_f$
			\begin{equation}\label{eq:Assumptionsobolevstability}
				\tr[\cL(x) (N + \1)^{k_r/2}] \le \omega_{k_r} \tr[x (N + \1)^{k_r/2}] \, .
			\end{equation}
		\end{assum}
		
		We are now ready to state and prove the main theorem of the section: 
		\begin{thm}[Generation of bosonic semigroups]\label{thm:generation-theorem}
			Let $(\cL, \cD(\cL))$ be an operator defined on $\cT_{1,\operatorname{sa}}$. If $(\cL, \cD(\cL))$ satisfies Assumption \ref{assum:finite-degree} and Assumption \ref{assum:sobolev-stability}, then the closure $\overline{\cL}$ generates a strongly continuous, positivity preserving semigroup $(\cP_t)_{t\ge 0}$ on $W^{k, 1}$ for all $k\geq 0$ with 
			\begin{equation}\label{eq:sobolev-bound}
				\norm{\cP_t}_{W^{k, 1} \to W^{k, 1}} \le e^{\omega_k t} \,\quad \forall t\ge 0\, . 
			\end{equation}
			where $\omega_k = \frac{k_{r_1} - k}{k_{r_1} - k_{r_0}}\omega_{k_{r_0}} + \frac{k - k_{r_0}}{k_{r_1} - k_{r_0}}\omega_{k_{r_1}}$ for an $r$ such that $k_{r_0}\leq k <k_{r_1}$. Finally, for $k = 0$, the semigroup is contractive and trace-preserving.
		\end{thm}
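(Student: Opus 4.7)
The plan is to construct the semigroup as a limit of semigroups generated by bounded approximations of $\cL$ and exploit the compactness of the Sobolev embeddings (Lemma~\ref{lem:sobolev-embedding}) to extract the limit, following the spirit of Davies' construction mentioned in the introduction. First, I would define a family of bounded approximations $\cL_n$ by replacing each $L_j = p_j(a,\ad)$ and $H = p_H(a,\ad)$ with suitable compressions (for instance $P_n L_j P_n$ and $P_n H P_n$, where $P_n = \sum_{k=0}^n \ketbra{k}{k}$). Each $\cL_n$ is a bounded GKSL generator on $\cT_{1,\operatorname{sa}}$ and hence generates a completely positive, trace-preserving semigroup $(\cP_t^{(n)})_{t \ge 0}$. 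The critical step is to check that the approximate generators inherit Assumption~\ref{assum:sobolev-stability} with constants uniform in $n$; once established, a Gronwall argument applied to the real-valued map $t \mapsto \tr[\cP_t^{(n)}(x)(N+\1)^{k_r/2}]$ for positive $x \in \cT_f$ yields
\begin{equation*}
\tr[\cP_t^{(n)}(x)(N+\1)^{k_r/2}] \le e^{\omega_{k_r} t} \tr[x(N+\1)^{k_r/2}]\,.
\end{equation*}
Positivity preservation together with the Jordan decomposition then promotes this to the operator-norm bound $\|\cP_t^{(n)}\|_{W^{k_r,1}\to W^{k_r,1}} \le e^{\omega_{k_r} t}$.

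Next, I would pass to the limit $n \to \infty$: the uniform bound above shows that for any $x \in W^{k_{r+1},1}$, the orbit $\{\cP_t^{(n)}(x)\}_n$ is bounded in $W^{k_{r+1},1}$, hence precompact in $W^{k_r,1}$ by Lemma~\ref{lem:sobolev-embedding}. A diagonal extraction together with uniqueness of the limit (obtained via a resolvent comparison, or directly via a Trotter--Kato-style convergence argument on the core $\cT_f$) produces a strongly continuous, positivity preserving semigroup $(\cP_t)_{t \ge 0}$ on $W^{k_r,1}$ satisfying the claimed bound. The generator is identified as $\overline{\cL}$ by verifying $\lim_{t \downarrow 0} \tfrac{1}{t}(\cP_t(x) - x) = \cL(x)$ in trace norm for $x \in \cT_f$, using the convergence $\cL_n(x) \to \cL(x)$ on $\cT_f$ together with the Sobolev bounds to control tails, and then invoking Lemma~\ref{lem:formal-polynomial-ccr-adjoint-core} which ensures that $\cT_f$ is a core for the closed polynomial $\cL$.

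The extension to arbitrary $k \in \R_+$ with the interpolated exponent $\omega_k$ then follows directly from the Stein--Weiss theorem (Theorem~\ref{thm:stein-weiss}) applied with the two closest indices $k_{r_0} \le k < k_{r_1}$ as endpoints. For $k = 0$, trace preservation passes to the limit because each $\cP_t^{(n)}$ is trace preserving and the uniform Sobolev bound on $W^{k_1,1}$ rules out mass escaping to infinity; contractivity on $\cT_{1,\operatorname{sa}}$ then follows from positivity and trace preservation. The main obstacle I anticipate is verifying that the approximate generators $\cL_n$ satisfy Assumption~\ref{assum:sobolev-stability} uniformly in $n$: naive compressions $P_n L_j P_n$ need not preserve the commutator identities underlying the moment inequality, so a more delicate scheme --- for instance a Yosida-type regularization $\cL_n = \cL(\1 + \tfrac{1}{n}(N+\1)^{2d})^{-1}$ built around the closed polynomial structure of Lemma~\ref{lem:formal-polynomial-ccr-adjoint-core} --- may be required, while still fitting into the same limiting framework.
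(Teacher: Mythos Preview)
Your overall architecture --- approximate, obtain uniform Sobolev bounds, extract a limit via compactness, interpolate with Stein--Weiss --- is close in spirit to the paper, but the specific approximation you propose has a real gap that you yourself flag and do not resolve. Compressing the jump operators and Hamiltonian by $P_n$ destroys exactly the algebraic structure (the cancellation between $L_j \cdot L_j^\dagger$ and $-\tfrac{1}{2}\{L_j^\dagger L_j,\cdot\}$ in the moment computation) on which Assumption~\ref{assum:sobolev-stability} rests, so the uniform bound $\omega_{k_r}$ will not survive. Your fallback, a Yosida-type regularization $\cL(\1+\tfrac{1}{n}(N+\1)^{2d})^{-1}$, is no longer of GKSL form, so you lose the positivity preservation that your Jordan-decomposition step needs to pass from the trace inequality to the $W^{k,1}\to W^{k,1}$ norm bound, and you have no independent mechanism to recover it.

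The paper avoids this trap by never approximating $\cL$ by \emph{bounded} operators. Instead it perturbs at the level of the Hilbert-space generator: set $G_\varepsilon = G - \varepsilon(N+\1)^{4d}$, which is still unbounded but is dissipative on $\cH$ (the extra term dominates the polynomial $G$), so Lumer--Phillips gives a contraction semigroup $e^{tG_\varepsilon}$ on $\cH$ and hence an implemented, positivity-preserving semigroup $\cP_t^\varepsilon(x)=e^{tG_\varepsilon}x\,e^{tG_\varepsilon^\dagger}$ on $\cT_{1,\operatorname{sa}}$. Crucially, Assumption~\ref{assum:sobolev-stability} is used not via Gronwall but to derive a resolvent bound $\|R(\lambda,\overline{\cG}_\varepsilon)\|_{W^{k,1}\to W^{k,1}}\le (\lambda-\omega_k)^{-1}$ directly, which feeds into Hille--Yosida on $W^{k,1}$; this works because one only adds the nonnegative term $\varepsilon\{(N+\1)^{4d},\cdot\}$ to the inequality. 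One then sends $\varepsilon\to 0$, and afterwards adds back the jump part with a damping factor, $\cL_\delta = \cG + \delta\Sigma$ for $\delta\in(0,1)$, using that $\Sigma\circ R(\lambda,\overline{\cG})$ is a positivity-preserving contraction (again a direct consequence of Assumption~\ref{assum:sobolev-stability}) to invert $\lambda-\cL_\delta$ via a Neumann series. The final limit $\delta\to 1$ is handled by a Trotter--Kato argument. At every stage the approximants remain unbounded GKSL-type generators, so positivity and the moment inequality are preserved exactly rather than approximately.
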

		
		\begin{rmk}
			For the existence and well-posedness of a semigroup between $k$ and $0$, Assumption \ref{assum:sobolev-stability} with $\tilde k \geq k + 4d$ would be sufficient. However, in all of the examples we found, Assumption \ref{assum:sobolev-stability} was either fully satisfied or completely violated under the precondition of Assumption \ref{assum:finite-degree}. Since we intend to later perform perturbation analysis, it is more convenient to adopt the stricter condition, allowing us to compare semigroups with very different degrees. Similarly, one could also weaken the requirements for Theorem \ref{thm:timedep-generation-theorem}, Theorem \ref{thm:multi-mode-generation-theorem}, and Theorem \ref{thm:multi-mode-timedep-generation-theorem}.
		\end{rmk}
		
		Before proving Theorem \ref{thm:generation-theorem}, we provide an example for which Assumption \ref{assum:sobolev-stability} is not satisfied.
		
		\begin{ex}[Pure birth process \texorpdfstring{\cite[Ex.~3.3.]{Davies.1977}}{[9,Ex.~3.3.]}]\label{eq:no-sobolev-stability}
			Let $L=(\ad)^2$ and $G=-\frac{1}{2}a^2(\ad)^2$, i.e.~
			\begin{equation*}
				\cL(x)=Gx+x G^\dagger+Lx L^\dagger\,.
			\end{equation*}
			By construction, this generator satisfies Assumption \ref{assum:finite-degree}. However, one can show that it is not trace-preserving, and therefore it cannot satisfy Assumption \ref{assum:sobolev-stability}.
		\end{ex}
		
		\begin{rmk*}
			Note that Assumption \ref{assum:sobolev-stability} implies the conditions in \cite{Chebotarev.2003,Chebotarev.1998} under the precondition that the closure of $(\cL,\cD(\cL)$ is a generator of a $C_0$-semigroup. Then a similar bound to \Cref{eq:sobolev-bound} can be shown. 
		\end{rmk*}
		
		\paragraph{\emph{Proof strategy}:} Our proof is partly inspired by \cite{Davies.1977}, however, Assumption \ref{assum:sobolev-stability} allows us to go beyond minimal semigroups and obtain a trace-preserving evolution. We believe that a similar result can be obtained by using the tools in \cite{Fagnola.1999}. An important intermediate step is that the considered generators are also generators on $W^{k,1}$, which will allow us to provide simple perturbation analysis on specific examples in Section \ref{sec:example-perturbation-bounds}. Our proof starts with Lemma \ref{lem:semigroup-of-G-for-positivity}, where we show that $G_\varepsilon\coloneqq G-\varepsilon (N+\1)^{4d}$ is a generator on the Fock space and the implemented semigroup $t\mapsto e^{tG_\varepsilon}\cdot e^{tG_\varepsilon^\dagger}$ admits $\cT_f$ as a core. Then Lemma \ref{lem:semigroup-of-G-for-sobolev-stability} extends the result to semigroups on $W^{k, 1}$ for all $k \in \R_+$. By the compact embedding lemma \ref{thm:compact-embedding-weighted-spaces} for $W^{k, 1}$ in $\cT_{1,\operatorname{sa}}$, we can transfer these properties to the unperturbed evolution. Next, we more closely follow the method introduced in \cite{Davies.1977}. In particular, we prove that a perturbed version of \Cref{eq:lindblad} generates a Sobolev and positivity preserving $C_0$-semigroup.
		
		\begin{lem}\label{lem:semigroup-of-G-for-positivity}
			For $\varepsilon > 0$, the closure of the operator
			\begin{equation*}
				\cG_\varepsilon: \cT_f \to \cT_f, \qquad x \mapsto G x + x G^\dagger - \varepsilon \{(N + \1)^{4d}, x\}\,,
			\end{equation*}
			where $G$ is defined in \Cref{eq:lindblad}, generates a strongly continuous, contractive, positivity preserving semigroup on $\cT_{1,\operatorname{sa}}$.
		\end{lem}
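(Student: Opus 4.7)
\medskip
\noindent\textbf{Proof proposal.} The natural strategy is to factor $\cG_\varepsilon$ through a Hilbert space operator and realise the semigroup as a sandwich evolution. Set $G_\varepsilon \coloneqq G - \varepsilon(N+\1)^{4d}$ on $\cH_f$. The plan is to first generate a contractive $C_0$-semigroup $(U_t)_{t\ge 0}$ on $\cH$ from $\overline{G_\varepsilon}$, then define $\Phi_t(x)\coloneqq U_t\, x\, U_t^\dagger$ on $\cT_{1,\operatorname{sa}}$ and verify that it is a positivity-preserving contractive $C_0$-semigroup whose generator is $\overline{\cG_\varepsilon}$.

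For the Hilbert-space step, $G=-iH-\tfrac{1}{2}\sum_{j=1}^K L_j^\dagger L_j$ is a polynomial in $a,\ad$ of degree at most $2d$, so $G_\varepsilon$ is (up to the immaterial sign of the dominant term) of the form addressed by Lemma~\ref{lem:formal-polynomial-ccr-adjoint-core} applied with $\deg(p)=2d$. Hence $G_\varepsilon$ is closable, $\overline{G_\varepsilon}$ has domain $\cD(N^{4d})$, and $\cH_f$ is a core. On this core
\begin{equation*}
    G_\varepsilon + G_\varepsilon^\dagger \;=\; -\sum_{j=1}^K L_j^\dagger L_j \;-\; 2\varepsilon(N+\1)^{4d} \;\le\; 0,
\end{equation*}
so $\Re\langle \psi, G_\varepsilon \psi\rangle \le 0$ on $\cH_f$ and, by symmetry, the same holds for $G_\varepsilon^\dagger$. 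Proposition~\ref{cor:lumer-phillips} then yields a contractive $C_0$-semigroup $(U_t)_{t\ge 0}$ generated by $\overline{G_\varepsilon}$.

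The second step is essentially mechanical: $\Phi_t(x)=U_t x U_t^\dagger$ is positivity-preserving by construction; trace non-increase on states follows from
\begin{equation*}
    \tfrac{d}{dt}\tr[U_t x U_t^\dagger] \;=\; \tr\bigl[(G_\varepsilon+G_\varepsilon^\dagger)\,U_t x U_t^\dagger\bigr] \;\le\; 0
\end{equation*}
for $x\in\cT_f\cap\cS$, and Jordan decomposition extends this to $\|\Phi_t\|_{\cT_1\to\cT_1}\le 1$ on all of $\cT_{1,\operatorname{sa}}$. Strong continuity at $t=0^+$ in trace norm follows from strong continuity of $U_t$ on $\cH$ applied to rank-one operators, together with density of finite-rank operators in $\cT_{1,\operatorname{sa}}$ and the uniform contractive bound just established.

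The delicate point is identifying $\overline{\cG_\varepsilon}$ with the generator $\cA$ of $\Phi_t$. The inclusion $\cG_\varepsilon\subseteq\cA$ follows by direct differentiation on $\cT_f$, using that $\cH_f$ is a core for $\overline{G_\varepsilon}$. The main obstacle is the reverse direction, which amounts to showing that $\cT_f$ is a core for $\cA$, or equivalently by Theorem~\ref{thm:lumer-phillips} that the range $\rg(\lambda-\cG_\varepsilon|_{\cT_f})$ is dense in $\cT_{1,\operatorname{sa}}$ for some $\lambda>0$ (dissipativity of $\cG_\varepsilon$ on $\cT_1$ itself follows from a standard tangent-functional argument applied to the spectral projections of a self-adjoint $x\in\cT_f$). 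I would approximate a target $y\in\cT_{1,\operatorname{sa}}$ by Fock truncations $P_n y P_n\in\cT_f$, apply the bounded resolvent $R(\lambda,\cA)$, and then use a second Fock projection to return inside $\cT_f$, controlling the error through the commutators $[P_n,G_\varepsilon]$. The coercive term $-\varepsilon(N+\1)^{4d}$, which dominates every polynomial contribution from $G$ in graph norm, is precisely what tames these commutator terms; once this is in place, Theorem~\ref{thm:lumer-phillips} delivers $\overline{\cG_\varepsilon}=\cA$ and concludes the proof.
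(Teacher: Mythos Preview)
Your overall two-step strategy---generate a contractive $C_0$-semigroup on $\cH$ from $\overline{G_\varepsilon}$ via Proposition~\ref{cor:lumer-phillips}, then sandwich---is exactly the paper's. Step~1 and the verification of positivity, contractivity, and strong continuity in Step~2 are correct and essentially identical to the paper's argument.

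The divergence is in the ``delicate point'': showing $\cT_f$ is a core for the generator $\cA$ of the implemented semigroup. Your proposed route via Fock truncations $P_n$ and commutator control is not obviously complete. The problem is that you apply $R(\lambda,\cA)$ to $P_n y P_n$ and then truncate again, but you have no a~priori regularity for $R(\lambda,\cA)(P_n y P_n)$ in the Fock basis; without that, the commutator terms $[P_m,G_\varepsilon]$ acting on this resolvent image are not clearly small in trace norm. The coercive term $-\varepsilon(N+\1)^{4d}$ helps at the level of graph norms on $\cH$, but that does not immediately transfer to the trace-class picture you need here.

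The paper avoids estimates entirely with a clean algebraic observation. Citing \cite[Prop.~2.1]{Davies.1977}, the generator of $\Phi_t$ is the closure of $\widetilde{\cG}_\varepsilon(x)=\overline{G}_\varepsilon x + x\overline{G}_\varepsilon^\dagger$ on the domain $\{R(1,\overline{G}_\varepsilon)\,x\,R(1,\overline{G}_\varepsilon)^\dagger:x\in\cT_{1,\operatorname{sa}}\}$. Since $\cH_f$ is a core for $\overline{G}_\varepsilon$, the set $O\coloneqq(\1-G_\varepsilon)\cH_f$ is dense in $\cH$, so $\cO\coloneqq\operatorname{span}\{\ketbra{\psi}{\varphi}:\psi,\varphi\in O\}$ is dense in $\cT_{1,\operatorname{sa}}$. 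But $R(1,\overline{G}_\varepsilon)\,\cO\,R(1,\overline{G}_\varepsilon)^\dagger=\cT_f$ \emph{exactly}, and the map $x\mapsto\widetilde{\cG}_\varepsilon(R(1,\overline{G}_\varepsilon)\,x\,R(1,\overline{G}_\varepsilon)^\dagger)$ is bounded; hence every element of $\cD(\widetilde{\cG}_\varepsilon)$ is a graph-norm limit of elements of $\cT_f$, and $\widetilde{\cG}_\varepsilon$ agrees with $\cG_\varepsilon$ on $\cT_f$. This gives the core property without any approximation error to control.
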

		\begin{proof}
			The proof is structured in the following two steps: 
			\begin{itemize}
				\item[1)] The closure of $G_\varepsilon:\cH_f \to \cH$, $\ket{\psi} \mapsto G_\varepsilon\ket{\psi} \coloneqq (-\varepsilon(N + \1)^{4d} + G)\ket{\psi}$ generates a strongly continuous contractive semigroup on $\cH$, which we denote by $(P_t^\varepsilon)_{t \ge 0}$.
				\item[2)] The family of maps $(\cP_t^\varepsilon \coloneqq P_t^\varepsilon \cdot (P_t^\varepsilon)^\dagger:\cT_{1,\operatorname{sa}} \to \cT_{1,\operatorname{sa}})_{t \ge 0}$, with $(P_t^\varepsilon)_{t \ge 0}$ from step 1, defines a strongly continuous, contractive, positivity preserving semigroup on $\cT_{1,\operatorname{sa}}$ generated by the closure of $\cG_\varepsilon$.
			\end{itemize}
			
			\textit{Step 1)} By Assumption \ref{assum:finite-degree} there exists $p_\varepsilon \in \C[X,Y]$ such that $G_\varepsilon = p_\varepsilon(a, a^\dagger)$ which shows by Lemma \ref{lem:formal-polynomial-ccr-adjoint-core} that $G_\varepsilon$ is closed with domain $\cD(N^{4d})$. We will now show dissipativity for $G_\varepsilon$ and $G_\varepsilon^\dagger$ to conclude the claim using Corollary \ref{cor:lumer-phillips}. It suffices also to consider $G^\dagger_\varepsilon$ on $\cH_f$, as it is a core by Lemma \ref{lem:formal-polynomial-ccr-adjoint-core} ($\deg(G)=2d$) and therefore dissipativity of $G_\varepsilon^\dagger$ on $\cH_f$ directly implies dissipativity of $G^\dagger_\varepsilon$ on all of its domain. We only show the dissipativity of $G_\varepsilon$, since the proof for $G_\varepsilon^\dagger$ is completely analogous. Let $\ket{\psi}\in \cH_f$, then for any $\lambda>0$
			\begin{align*}
				\|(\lambda - G_{\varepsilon})& \ket{\psi}\|^2 \\
                &= \lambda^2 \braket{\psi\,,\psi} + \braket{G_\varepsilon\psi, G_\varepsilon\psi} - \lambda(\braket{G_\varepsilon\psi,\psi} + \braket{\psi, G_\varepsilon\psi})\\
				&\ge\lambda^2 \braket{\psi\,,\psi} - \lambda(\braket{G_\varepsilon\psi,\psi} + \braket{\psi, G_\varepsilon\psi})\\
				&\ge \lambda^2 \braket{\psi\,,\psi} + \varepsilon \lambda (\braket{(N + \1)^{4d}\psi,\psi} + \braket{(N + \1)^{4d}\psi,\psi}) - \lambda(\braket{G\psi,\psi} + \braket{\psi, G\psi})\\
				&\ge \lambda^2 \braket{\psi\,,\psi} - \lambda(\braket{G\psi,\psi} + \braket{\psi, G\psi}) \, . 
			\end{align*}
			By the requirement of Assumption \ref{assum:finite-degree}, it is clear that $\tr[\cL(x)] = 0$ for $x \in \cT_f$, using the cyclicity of the trace. For the explicit case of a pure state $x = \ketbra{\psi}{\psi} \in \cT_f$ where the last inclusion holds due to $\ket{\psi} \in \cH_f$, we get
			\begin{equation*}
				\lambda\braket{\psi, -(G^\dagger + G) {\psi}} = \lambda \sum\limits_{j = 1}^K \braket{L_j\psi, L_j {\psi}} \ge 0 \, .
			\end{equation*}
			Hence, we conclude 
			\begin{equation*}
				\norm{(\lambda - G_\varepsilon)\ket{\psi}}^2 \ge \lambda^2 \braket{\psi,\psi} = \lambda^2 \norm{\ket{\psi}}^2 \, . 
			\end{equation*}
			Taking the square root on both sides proves the claim. Note that for $G_\varepsilon^\dagger$ all steps are similar due to the simple observation that $\braket{G\psi,\psi} + \braket{\psi, G\psi} = \braket{G^\dagger\psi,\psi} + \braket{\psi, G^\dagger\psi}$ for $\ket{\psi} \in\cH_f$.
			
			\textit{Step 2)} That the implemented semigroup\footnote{A discussion on implemented semigroups can be found in \cites{Alber.2001}.} $(\cP_t)_{t \ge 0}$ is a strongly continuous, positivity-preserving contractive semigroup that can be easily checked. We further get from \cite[Prop.~2.1]{Davies.1977} that it is generated by the closure of the operator
			\begin{equation*}
				\widetilde{\cG}_\varepsilon: \cD(\widetilde \cG_\varepsilon) = \{R(1,\overline{G}_\varepsilon) x R(1,\overline{G}_\varepsilon)^\dagger \; : \; x \in \cT_{1,\operatorname{sa}}\}  \to \cT_{1,\operatorname{sa}}, \quad x \mapsto \overline{G}_\varepsilon x + x \overline{G}_\varepsilon^\dagger
			\end{equation*}
			where $\overline{G}_\varepsilon$ is the closure of $G_\varepsilon$, $\overline{G}_\varepsilon^\dagger$ its adjoint, and $R( 1,\overline{G}_\varepsilon)$ its resolvent on $\cH$, respectively. Since $\cH_f$ is a core for the generator $\overline{G}_\varepsilon$, the set $O \coloneqq (\1 - \overline{G}_\varepsilon)\cH_f = (\1 - G_\varepsilon)\cH_f$ is dense in $\cH$, which in turn means $\cO = \text{span}\{\ketbra{\psi}{\varphi} \;:\; \ket{\psi}, \ket{\varphi} \in O\}$ is dense in $\cT_{1,\operatorname{sa}}$. A simple calculation further shows that $R(1,\overline{G}_\varepsilon)\cO R(1,\overline{G}_\varepsilon)^\dagger = \cT_f$. Hence for $y \in \cD(\widetilde \cG_\varepsilon)$, we find $x \in \cT_{1,\operatorname{sa}}$ and a sequence $\{x_n\}_{n \in \N} \subseteq \cO$ with $x_n \to x$ for $n \to \infty$ such that
			\begin{align*}
				\widetilde \cG_\varepsilon(y) &= \widetilde \cG_\varepsilon(R(1,\overline{G}_\varepsilon) x R(1,\overline{G}_\varepsilon)) = \widetilde \cG_\varepsilon(R(1,\overline{G}_\varepsilon) \lim\limits_{n \to \infty}x_n R(1,\overline{G}_\varepsilon)^\dagger)\\
				&= \lim\limits_{n \to \infty} \widetilde\cG_\varepsilon(R(1,\overline{G}_\varepsilon)x_n R(1,\overline{G}_\varepsilon)^\dagger)\\
				&= \lim\limits_{n \to \infty} \cG_\varepsilon(y_n)
			\end{align*}
			where we used the continuity of the map $\widetilde \cG_\varepsilon(R(1,\overline{G}_\varepsilon) \cdot R(1,\overline{G}_\varepsilon)^\dagger)$ and the set defined by $y_n = R(1,\overline{G}_\varepsilon) x_n R(1,\overline{G}_\varepsilon)^\dagger$. Note that $\{y_n\}_{n \in \N}$ is by construction a convergent sequence on $\cT_f$. In the last equality, we used that $\widetilde \cG_\varepsilon$ and $\cG_\varepsilon$ agree on $\cT_f$. This shows not only that $\cG_\varepsilon$ is closable but also that its closure is the closure of $\widetilde \cG_\varepsilon$. Hence the closure of $\cG_\varepsilon$ is the generator of $(\cP_t^\varepsilon)_{t \ge 0}$.
		\end{proof}
		Using the above lemma, we are now able to prove the following.
		
		\begin{lem}\label{lem:semigroup-of-G-for-sobolev-stability}
			For all $k \ge 0$ and $\varepsilon > 0$, the closure of the operator $\cG_\varepsilon$ from Lemma \ref{lem:semigroup-of-G-for-positivity} generates a strongly continuous, positivity preserving semigroup $(\cP^\varepsilon_t)_{t \ge 0}$ on $W^{k, 1}$ such that, for all $t\ge 0$, 
			\begin{equation*}
				\norm{\cP^\varepsilon_t}_{W^{k, 1} \to W^{k, 1}} \le e^{\omega_k t} 
			\end{equation*}
			where $\omega_k = \frac{k_{r_1} - k}{k_{r_1} - k_{r_0}}\omega_{k_{r_0}} + \frac{k - k_{r_0}}{k_{r_1} - k_{r_0}}\omega_{k_{r_1}}$ for an $r$ such that $k_{r_0}\leq k <k_{r_1}$. Finally, for $k = 0$, the semigroup is contractive.
		\end{lem}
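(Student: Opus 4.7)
The plan is to transfer the norm bound from the contractive implemented semigroup $\cP^\varepsilon_t(x)=P^\varepsilon_tx(P^\varepsilon_t)^\dagger$ of Lemma \ref{lem:semigroup-of-G-for-positivity} from $\cT_{1,\operatorname{sa}}$ to $W^{k,1}$ by conjugation with $\cW^{k_r}$. Observing that
\begin{equation*}
\cW^{k_r}(\cP^\varepsilon_t(x))=Q_t\,\cW^{k_r}(x)\,Q_t^\dagger,\qquad Q_t:=(N+\1)^{k_r/4}P^\varepsilon_t(N+\1)^{-k_r/4},
\end{equation*}
the desired bound follows once I show that $Q_t$ extends to a bounded operator on $\cH$ with $\|Q_t\|_\infty\le e^{\omega_{k_r}t/2}$, since then $\|\cP^\varepsilon_t(x)\|_{W^{k_r,1}}=\|Q_t\cW^{k_r}(x)Q_t^\dagger\|_1\le\|Q_t\|_\infty^2\|x\|_{W^{k_r,1}}$.

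The technical core is proving quasi-dissipativity of the conjugated generator $\tilde G_\varepsilon:=(N+\1)^{k_r/4}G_\varepsilon(N+\1)^{-k_r/4}$ on $\cH_f$ (which is invariant under $(N+\1)^{\pm k_r/4}$). For $\psi\in\cH_f$, setting $\phi:=(N+\1)^{-k_r/4}\psi\in\cH_f$, a direct calculation shifts the derivatives onto the pure state and gives
\begin{equation*}
2\,\Re\langle\tilde G_\varepsilon\psi,\psi\rangle=\tr\bigl[\cG_\varepsilon(\ketbra{\phi}{\phi})(N+\1)^{k_r/2}\bigr].
\end{equation*}
The key observation is that Assumption \ref{assum:sobolev-stability} transfers from $\cL$ to $\cG_\varepsilon$: since $\cL-\cG_\varepsilon=\varepsilon\{(N+\1)^{4d},\cdot\}+\sum_jL_j(\cdot)L_j^\dagger$ and, for $\rho\ge 0$, both $\tr[\rho(N+\1)^{4d+k_r/2}]$ and $\tr[\rho L_j^\dagger(N+\1)^{k_r/2}L_j]$ are non-negative, one obtains $\tr[\cG_\varepsilon(\rho)(N+\1)^{k_r/2}]\le\omega_{k_r}\tr[\rho(N+\1)^{k_r/2}]$. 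Evaluating at $\rho=\ketbra{\phi}{\phi}$ yields $2\,\Re\langle\tilde G_\varepsilon\psi,\psi\rangle\le\omega_{k_r}\|\psi\|^2$, i.e.~$\tilde G_\varepsilon-\omega_{k_r}/2$ is dissipative on $\cH_f$. The analogous computation for $\tilde G_\varepsilon^\dagger$ (symmetric in $G_\varepsilon\leftrightarrow G_\varepsilon^\dagger$ when passing through the trace) gives the same conclusion, so Proposition \ref{cor:lumer-phillips} yields that $\overline{\tilde G_\varepsilon}$ generates a $C_0$-semigroup $\tilde P_t$ on $\cH$ with $\|\tilde P_t\|_\infty\le e^{\omega_{k_r}t/2}$.

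Next I identify $\tilde P_t$ with $Q_t$. A direct verification on $\cH_f$ shows $(\lambda-\tilde G_\varepsilon)(N+\1)^{k_r/4}R(\lambda,\overline{G_\varepsilon})(N+\1)^{-k_r/4}=\1$, so by uniqueness of resolvents $R(\lambda,\overline{\tilde G_\varepsilon})=(N+\1)^{k_r/4}R(\lambda,\overline{G_\varepsilon})(N+\1)^{-k_r/4}$; the Hille--Yosida exponential formula then forces $\tilde P_t=Q_t$ on a dense subspace, hence as bounded operators. Taking $\|\cdot\|_1$ in the intertwining relation $\cW^{k_r}\cP^\varepsilon_t(x)=\tilde P_t\cW^{k_r}(x)\tilde P_t^\dagger$ gives the claimed bound $\|\cP^\varepsilon_t\|_{W^{k_r,1}\to W^{k_r,1}}\le e^{\omega_{k_r}t}$. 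Strong continuity on $W^{k_r,1}$ and positivity preservation descend through the intertwining from the implemented semigroup $\tilde P_t(\cdot)\tilde P_t^\dagger$ on $\cT_{1,\operatorname{sa}}$, and the generator on $W^{k_r,1}$ is identified as $\overline{\cG_\varepsilon}$ by computing $\cW^{k_r}\cG_\varepsilon(\cW^{k_r})^{-1}=\tilde G_\varepsilon(\cdot)+(\cdot)\tilde G_\varepsilon^\dagger$ on the common core $\cT_f$ and invoking the uniqueness of $C_0$-semigroups. To go from the discrete sequence $\{k_r\}$ to arbitrary $k\in\R_+$, I apply the Stein--Weiss interpolation theorem (Theorem \ref{thm:stein-weiss}) to $\cP^\varepsilon_t$ between $W^{k_{r_0},1}$ and $W^{k_{r_1},1}$ with $k_{r_0}\le k<k_{r_1}$, which produces exactly the convex combination $\omega_k$ stated; the case $k=0$ is Lemma \ref{lem:semigroup-of-G-for-positivity}.

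The principal obstacle is the dissipativity computation: recognising that the real part of the Hilbert-space quadratic form $\langle\tilde G_\varepsilon\psi,\psi\rangle$ collapses precisely to the moment functional in Assumption \ref{assum:sobolev-stability}, once one has reduced the excess terms in $\cL-\cG_\varepsilon$ by their manifest positivity. The remaining identifications---between the two resolvents on $\cH$ and between the two semigroups on $\cT_{1,\operatorname{sa}}$---are conceptually routine but must be carried out through the core $\cT_f$ since $\cT_f$ is not generally invariant under $\cP^\varepsilon_t$.
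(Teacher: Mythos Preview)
Your approach differs from the paper's and hinges on an elegant observation: the intertwining $\cW^{k_r}\cP^\varepsilon_t(x)=Q_t\,\cW^{k_r}(x)\,Q_t^\dagger$ reduces the Sobolev bound to an operator-norm bound for $Q_t$, and the identity $2\Re\langle\tilde G_\varepsilon\psi,\psi\rangle=\tr[\cG_\varepsilon(\ketbra{\phi}{\phi})(N+\1)^{k_r/2}]$ connects this directly to Assumption~\ref{assum:sobolev-stability}. The paper instead works entirely on $W^{k,1}$ via Hille--Yosida: it decomposes $\cG_\varepsilon=\cG_0+\cI_{d,\varepsilon}$, uses Lemmas~\ref{lem:(n+1)-(n+1)-properties} and~\ref{lem:infinitesimal-boundedness-W-k-1} to show $\cG_0\circ(\lambda-\cI_{d,\varepsilon})^{-1}$ is a strict contraction for large $\lambda$ (placing some $\lambda>\omega_k$ in the resolvent set), then combines positivity preservation from Lemma~\ref{lem:semigroup-of-G-for-positivity} with Assumption~\ref{assum:sobolev-stability} to obtain $\|R(\lambda,\overline{\cG}_\varepsilon)\|_{W^{k,1}\to W^{k,1}}\le(\lambda-\omega_k)^{-1}$, and bootstraps to all of $(\omega_k,\infty)$.

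There is, however, a genuine gap in your invocation of Proposition~\ref{cor:lumer-phillips}. That result requires $\tilde G_\varepsilon^\dagger$ to be quasi-dissipative on its \emph{full} adjoint domain, not merely on $\cH_f$. Your ``symmetric in $G_\varepsilon\leftrightarrow G_\varepsilon^\dagger$'' justification does not do this: since $\tilde G_\varepsilon^\dagger=(N+\1)^{-k_r/4}G_\varepsilon^\dagger(N+\1)^{k_r/4}$ carries the \emph{reversed} conjugation, rerunning the trace computation produces $\tr[\cG_\varepsilon(\ketbra{\eta}{\eta})(N+\1)^{-k_r/2}]$, and Assumption~\ref{assum:sobolev-stability} says nothing about negative powers. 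The tautology $\Re\langle\tilde G_\varepsilon^\dagger\psi,\psi\rangle=\Re\langle\tilde G_\varepsilon\psi,\psi\rangle$ does hold on $\cH_f$, but to extend it you need $\cH_f$ to be a core for $\tilde G_\varepsilon^\dagger$; unlike $G_\varepsilon$, the conjugated operator is not a polynomial in $a,a^\dagger$, so Lemma~\ref{lem:formal-polynomial-ccr-adjoint-core} is unavailable. The same missing domain control undermines your resolvent identification: $(N+\1)^{k_r/4}R(\lambda,\overline{G_\varepsilon})(N+\1)^{-k_r/4}$ is not a priori bounded once $k_r>16d$, since $R(\lambda,\overline{G_\varepsilon})$ only maps into $\cD(N^{4d})$. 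Both gaps are fixable by noting that the conjugation commutes with the leading term $-\varepsilon(N+\1)^{4d}$, so $\tilde G_\varepsilon=-\varepsilon(N+\1)^{4d}+B(N+\1)^{2d}$ with $B$ bounded (Lemma~\ref{lem:boundedness-polynomials}); a Kato--Rellich argument then yields $\cD(\overline{\tilde G_\varepsilon})=\cD(\tilde G_\varepsilon^\dagger)=\cD(N^{4d})$ with core $\cH_f$. But you have not supplied this argument, and without it neither the generation of $\tilde P_t$ nor the identification $\tilde P_t=Q_t$ is justified.
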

		\begin{proof}
			Without loss of generality, we can restrict to $k \in \{k_r\}_{r \in \N}$ as for $k$ inbetween, we can interpolate between the $\{k_r\}_{r \in \N}$ (shown below) and $k = 0$ (shown in Lemma \ref{lem:semigroup-of-G-for-positivity}) using Lemma \ref{lem:interpolation-lemma}. Let $\varepsilon > 0$. In the following proof, the closure, domain and boundedness of an operator are always with respect to the Banach space $W^{k, 1}$ if not stated otherwise. We will show the claim, by first arguing that $\cG_\varepsilon$ is closable, that all $\lambda > \omega_k$ are in the resolvent set of the closure $\overline{\cG}_\varepsilon$ and further that $\norm{R(\lambda,\overline{\cG}_\varepsilon)}_{W^{k, 1} \to W^{k, 1}} \le \tfrac{1}{\lambda - \omega_k}$. By Theorem \ref{thm:hille-yosida}, the above immediately gives the existence of the semigroup on $W^{k, 1}$ and provides us with the claimed bound. The property of positivity preservation traces back to the representation of the semigroup via the Euler approximation and therefore the positivity of the resolvent: for any $x\in \cT_{1,\operatorname{sa}}$:
			\begin{equation}
				\cP_{t}^{\varepsilon}(x)=\lim_{n\to\infty}\,\left(\frac{n}{t}\right)^n\,R(n/t,\overline{\cG}_\varepsilon)^{n}(x)\,.
			\end{equation}
			
			The claims are proven in three steps.
			\begin{itemize}
				\item[\normalfont Step 1.]\label{step:step-1} Show that $\cG_\varepsilon:\cT_f \to \cT_f$ is closable and there exists a $\lambda > \omega_k$ such that $\lambda - \overline{\cG}_\varepsilon:\cD(\overline{\cG}_\varepsilon) \to W^{k, 1}$ is bijective.
				\item[\normalfont Step 2.] Using Assumption \ref{assum:sobolev-stability} and Lemma \ref{lem:semigroup-of-G-for-positivity} we prove that if $\lambda > \omega_k$ is in the resolvent set of $\overline{\cG}_{\varepsilon}$, we not only have that the resolvent is positivity preserving but further $$\norm{R(\lambda,\overline{\cG}_\varepsilon)}_{W^{k, 1} \to W^{k, 1}} \le \frac{1}{\lambda - \omega_k}\,.$$
				\item[\normalfont Step 3.] The surjectivity of $\lambda - \overline{\cG}_\varepsilon$ for a specific $\lambda > \omega_k$ from step 1.~and the bound on the resolvent from step 2.~allow us to successively use the series expansion of the resolvent as it is done in \cite[Prop.~IV.1.3]{Engel.2000} to get that $(\omega_k, \infty)$ is in the resolvent set of $\overline{\cG}_\varepsilon$, and therefore conclude the proof.
			\end{itemize}
			\medskip 
			
			\noindent  Proof of step 1.~We introduce the map 
			\begin{equation*}
				\cI_{d, \varepsilon}: \cT_f \to \cT_f, \qquad x \mapsto \cI_{d, \varepsilon}(x) \coloneqq - \varepsilon\{(N + \1)^{4d}, x\} \, . 
			\end{equation*}
			For $\lambda \ge 0$, $x \in \cT_f$, we can use Lemma \ref{lem:(n+1)-(n+1)-properties} to write
			\begin{equation*}
				(\lambda - \cG_\varepsilon)(x) = (\1 - \cG_0 \circ (\lambda - \cI_{d, \varepsilon})^{-1}) \circ (\lambda - \cI_{d, \varepsilon})(x)
			\end{equation*}
			where $\circ$ is the function composition and $\lambda - \cI_{d, \varepsilon}:\cT_f \to \cT_f$ is a bijection, with bounded inverse (see Lemma \ref{lem:(n+1)-(n+1)-properties}) between dense subspaces of $W^{k, 1}$. This means in particular that it is closable and that its closure has a bounded inverse. We will hence focus on the map $\1 - \cG_0 \circ (\lambda - \cI_{d, \varepsilon})^{-1}:\cT_f \to \cT_f$ and show that it is bounded (on the dense subset $\cT_f$ of $W^{k, 1}$) and hence uniquely extendable to all of $W^{k, 1}$. This then immediately gives us that $\lambda - \cG_0:\cT_f \to \cT_f$ is closable as it is the composition of a map with a dense range succeeded by a bounded map. Note first that we can apply Lemma \ref{lem:infinitesimal-boundedness-W-k-1} to $\cG_0$ to get that there exists $C_k \ge 0$ such that for all $\kappa > 0$ and $x \in \cT_f$
			\begin{equation*}
				\norm{\cG_0(x)}_{W^{k, 1}} \le \kappa \norm{\cI_{d, \varepsilon}(x)}_{W^{k, 1}} + \frac{C_k}{\kappa\varepsilon} \norm{x}_{W^{k, 1}}\,.
			\end{equation*}
			Using the bijectivity of $\lambda - \cI_{d, \varepsilon}:\cT_f \to \cT_f$ we get for $x \in \cT_f$
			\begin{align*}
				\norm{\cG_0 \circ (\lambda - \cI_{d, \varepsilon})^{-1}x}_{W^{k, 1}} \le \kappa\norm{\cI_{d, \varepsilon} \circ (\lambda - \cI_{d, \varepsilon})^{-1}(x)}_{W^{k, 1}} + \frac{C_k}{\varepsilon \kappa} \norm{(\lambda - \cI_{d, \varepsilon})^{-1}(x)}_{W^{k, 1}}\\
				\le ({2}\kappa + \frac{C_k}{\kappa \varepsilon} \frac{1}{\lambda + 2\varepsilon}) \norm{x}_{W^{k, 1}} =: f_k(\lambda, \kappa) \norm{x}_{W^{k, 1}}
			\end{align*}
			where we used properties of $\cI_{d, \varepsilon}$ derived in Lemma \ref{lem:(n+1)-(n+1)-properties}. This gives us not only that $\cG_0 \circ (\lambda - \cI_{d, \varepsilon})^{-1}:\cT_f \to \cT_f$ is bounded, hence uniquely extendable to a bounded map on $W^{k, 1}$ but for a fixed $\kappa < {\frac{1}{2}}$ and $\lambda > \lambda_\kappa$ where $\lambda_\kappa$ is chosen s.t. $f_k(\lambda_\kappa, \kappa) < 1$, we get that its closure is a strict contraction on $W^{k, 1}$. As a direct consequence, we find that again for $\lambda > \lambda_\kappa$ the closure of $\1 - \cG_0 \circ (\lambda - \cI_{d, \varepsilon}): \cT_f \to \cT_f$ is invertible with bounded inverse, and that its inverse function is just given by the geometric series of the closure of $\cG_0 \circ (\lambda - \cI_{d, \varepsilon})^{-1}$. To conclude, we can set $\lambda = 0$ in the above result and get that $-\cG_\varepsilon$ and hence $\cG_\varepsilon$ is closable and further that for $\kappa < \frac{1}{2}$ all $\lambda$ with $\lambda > \lambda_\kappa$ are in the resolvent set of $\overline{\cG}_{\varepsilon}$.
			\medskip
			
			\noindent Proof of step 2. Let $\lambda > \omega_k$ be in the resolvent set of $\overline{\cG}_\varepsilon$. From the compact embedding of $W^{k, 1}$ in $\cT_{1,\operatorname{sa}}$, we immediately get that $R(\lambda,\overline{\cG}_\varepsilon):W^{k, 1} \to W^{k, 1}$ agrees with the respective restricted resolvent of the closure $\widehat{\cG}_\varepsilon$ of $\cG_\varepsilon$ on $\cT_{1,\operatorname{sa}}$ that we obtained in Lemma \ref{lem:semigroup-of-G-for-positivity}. We know that the latter resolvent is positivity preserving, as the semigroup is. This is due to the following integral representation for strongly continuous semigroups \cite[Thm.~II.1.10 (i)]{Engel.2000}: for all $x\in x(\widehat{\cG}_\varepsilon)$,
			\begin{equation}
				R(\lambda, \widehat{\cG}_\varepsilon)(x)=\int_0^\infty\,e^{-\lambda s}\,e^{s\widehat{\cG}_\varepsilon}(x)\,ds\,.
			\end{equation}
			Hence $R(\lambda,\overline{\cG}_\varepsilon)$ is positivity preserving as well. Using Assumption \ref{assum:sobolev-stability}, we have that for $x \in \cT_f$, $x$ positive semi-definite,
			\begin{equation*}
				\tr[\cL(x) (N + \1)^{k/2}] \le \omega_k \tr[x (N + \1)^{k/2}] \, .
			\end{equation*}
			Adding non-negative terms, using the cyclicity of the trace and splitting up $\cL$ gives us
			\begin{align*}
				&\sum\limits_{j = 1}^K\tr[(N+\1)^{k/4}L_j x L_j^\dagger(N+\1)^{k/4}] + (\lambda - \omega_k)\tr[(N + \1)^{k/4} x (N + \1)^{k/4}] \\
				&\qquad\qquad\qquad\qquad\qquad\qquad\qquad\qquad\qquad \le \tr[(N + \1)^{k/4}(\lambda - \cG_\varepsilon)(x)(N + \1)^{k/4}]\,,
			\end{align*}
			and therefore
			\begin{align*}
				(\lambda - \omega_k) \norm{x}_{W^{k, 1}} \le \norm{(\lambda - \cG_\varepsilon)(x)}_{W^{k, 1}} \, 
			\end{align*}
			where we have just dropped non-negative terms and used $\tr[\cdot] \le \norm{\cdot}_1$ with equality if the argument is positive semi-definite. Since $\overline{\cG}_\varepsilon$ is the closure of $\cG_\varepsilon$, the above inequality extends to $x \in \cD( \overline{\cG}_\varepsilon)$, $x$ positive semi-definite and $\overline{\cG}_\varepsilon$ instead of $\cG_\varepsilon$. Together with the positivity preserving property of the resolvent, this gives us that for all $x \in W^{k, 1}$, $x$ positive semi-definite
			\begin{equation}\label{eq:resolvent-bound-G-positive-semidefinite}
				\norm{R(\lambda,\overline{\cG}_\varepsilon)x}_{W^{k, 1}} \le \frac{1}{\lambda - \omega_k} \norm{x}_{W^{k, 1}} \, . 
			\end{equation}
			For a general $x \in W^{k, 1}$, we set $x_\pm = \frac{1}{(N + \1)^{k/4}}\,[(N + 1)^{k/4} x (N + \1)^{k/4}]_\pm \,\frac{1}{(N + \1)^{k/4}} \in W^{k, 1}$, where $[\cdot]_\pm$ denotes the positive, resp. the negative part of a self-adjoint trace-class operator. We have that $x = x_+ - x_-$ and further that $x_+, x_-$ are positive semi-definite by construction. Hence
			\begin{align*}
				\norm{R(\lambda,\overline{\cG}_\varepsilon)x}_{W^{k, 1}} &\le \norm{R(\lambda,\overline{\cG}_\varepsilon )x_+}_{W^{k, 1}} + \norm{R(\lambda,\overline{\cG}_\varepsilon)x_-}_{W^{k, 1}}\\
				&\le \frac{1}{\lambda - \omega_k}(\norm{x_+}_{W^{k, 1}} + \norm{x_-}_{W^{k, 1}}) = \frac{1}{\lambda - \omega_k} \norm{x_+ - x_-}_{W^{k, 1}}\\
				&= \frac{1}{\lambda - \omega_k}\, \norm{x}_{W^{k, 1}}
			\end{align*}
			where we used \Cref{eq:resolvent-bound-G-positive-semidefinite} and the construction of $x_+$ and $x_-$, which concludes step 2.
			\medskip
			
			\noindent Proof of step 3. From step 1.~we get that there exists a $\lambda > \omega_k$ in the resolvent set of $\overline{\cG}_\varepsilon$ whereas step 2.~tells us that, for this $\lambda$, the resolvent is bounded by $\frac{1}{\lambda - \omega_k}$. We can use the same proof strategy as in \cite[Prop.~II.3.14 (ii)]{Engel.2000} where the authors employ the series expansion of the resolvent and its explicit bound to make conclusions about the resolvent set. Following their steps we first get that $(\omega_k, 2\lambda - \omega_k)$ is part of the resolvent set, and then using step 2.~again, we obtain the positivity preservation property as well as the explicit bound for all of those resolvents. This allows us to successively use these arguments and conclude that the resolvent set contains $(\omega_k, \infty)$.
		\end{proof}
		
		Putting together the results from Lemma \ref{lem:semigroup-of-G-for-positivity} and Lemma \ref{lem:semigroup-of-G-for-sobolev-stability}, we are now able to get rid of the perturbation $\cI_{d, \varepsilon}$.
		
		\begin{lem}\label{lem:eliminating-perturbation-G}
			The closure of 
			\begin{equation*}
				\cG:\cT_f \to \cT_f, \quad x \mapsto \cG(x) = Gx + x G^\dagger\,,
			\end{equation*}
			where $G$ is defined in \Cref{eq:lindblad}, generates a strongly continuous, positivity preserving semigroup $(\cP_t)_{t\ge 0}$ on $W^{k, 1}$ for all $k \in \N$ with 
			\begin{equation*}
				\norm{\cP_t}_{W^{k, 1} \to W^{k, 1}} \le e^{\omega_k t} \, . 
			\end{equation*}
			where $\omega_k = \frac{k_{r_1} - k}{k_{r_1} - k_{r_0}}\omega_{k_{r_0}} + \frac{k - k_{r_0}}{k_{r_1} - k_{r_0}}\omega_{k_{r_1}}$ for an $r$ such that $k_{r_0}\leq k <k_{r_1}$. Finally, for $k = 0$, the semigroup is contractive.
		\end{lem}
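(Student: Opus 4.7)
The approach is to apply the Lumer--Phillips theorem (Theorem \ref{thm:lumer-phillips}) to $\cG$ viewed as an operator on $W^{k,1}$, and to recover positivity preservation via the Euler formula for the semigroup, after identifying the resolvent of $\overline{\cG}$ with a $W^{k,1}$-limit of the positive resolvents $R(\lambda,\overline{\cG}_\varepsilon)$ from Lemma \ref{lem:semigroup-of-G-for-sobolev-stability}.

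I first establish $\omega_k$-dissipativity of $\cG$ on $\cT_f$ with respect to $\|\cdot\|_{W^{k,1}}$ by the moment computation from Step 2 of Lemma \ref{lem:semigroup-of-G-for-sobolev-stability}. Assumption \ref{assum:sobolev-stability} gives $\tr[\cL(x)(N+\1)^{k/2}] \le \omega_k \tr[x(N+\1)^{k/2}]$ for positive semidefinite $x \in \cT_f$, and since $\cL - \cG = \sum_{j=1}^K L_j \cdot L_j^\dagger$ contributes non-negatively under $\tr[\cdot (N+\1)^{k/2}]$ on positive elements, the same inequality holds for $\cG$. The decomposition $x = x_+ - x_-$ via conjugation with $(N+\1)^{k/4}$ then yields $(\lambda-\omega_k)\|x\|_{W^{k,1}} \le \|(\lambda-\cG)(x)\|_{W^{k,1}}$ for all $x \in \cT_f$ and $\lambda > \omega_k$. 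Being densely defined and dissipative, $\cG$ is also closable.

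The core of the argument is range density. Fix $\lambda > \omega_k$ and pick $k_r$ from the sequence in Assumption \ref{assum:sobolev-stability} with $k_r \ge k + 8d$. For $y \in \cT_f$, the element $x_\varepsilon := R(\lambda,\overline{\cG}_\varepsilon)(y) \in W^{k_r,1}$ satisfies $\|x_\varepsilon\|_{W^{k_r,1}} \le (\lambda-\omega_{k_r})^{-1}\|y\|_{W^{k_r,1}}$ uniformly in $\varepsilon > 0$ thanks to Lemma \ref{lem:semigroup-of-G-for-sobolev-stability}. The compact embedding $W^{k_r,1} \Subset W^{k,1}$ (Lemma \ref{lem:sobolev-embedding}) extracts a subsequence $x_{\varepsilon_n} \to x$ in $W^{k,1}$. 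Approximating each $x_{\varepsilon_n}$ by $\tilde{x}_n \in \cT_f$ in the $W^{k_r,1}$-norm and writing $\cG(\tilde{x}_n) = \cG_{\varepsilon_n}(\tilde{x}_n) + \varepsilon_n \cI_d(\tilde{x}_n)$, boundedness of $\cI_d: W^{k_r,1} \to W^{k_r-8d,1} \hookrightarrow W^{k,1}$ forces $\|\varepsilon_n \cI_d(\tilde{x}_n)\|_{W^{k,1}} \to 0$, so passing to the limit gives $\cG(\tilde{x}_n) \to \lambda x - y$ in $W^{k,1}$. Closability then yields $x \in \cD(\overline{\cG})$ with $(\lambda - \overline{\cG})(x) = y$, and density of $\cT_f$ in $W^{k,1}$ gives dense range for $\lambda - \overline{\cG}$.

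Lumer--Phillips now produces the required $\omega_k$-quasi-contractive $C_0$-semigroup $(\cP_t)_{t\ge 0}$ on $W^{k,1}$ generated by $\overline{\cG}$; for $k = 0$ one has $\omega_0 = 0$ and the semigroup is contractive. For positivity, uniqueness of the $W^{k,1}$-resolvent at $\lambda$ together with the subsequence argument above identifies $R(\lambda,\overline{\cG})(y)$ as $\lim_n R(\lambda,\overline{\cG}_{\varepsilon_n})(y)$ for $y \in \cT_f$; density of $\cT_f$ and the uniform bound $(\lambda-\omega_k)^{-1}$ extend the convergence to all $y \in W^{k,1}$, and positivity of each $R(\lambda,\overline{\cG}_{\varepsilon_n})$ transfers to $R(\lambda,\overline{\cG})$. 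The Euler formula $\cP_t = \lim_{n\to\infty}(n/t)^n R(n/t,\overline{\cG})^n$ then propagates positivity to the semigroup itself. The main obstacle lies in the range-density step: one must track Sobolev indices carefully so that the perturbation $\varepsilon_n\cI_d(\tilde{x}_n)$ decays in a topology where $\cG$ remains controlled, in order to legitimately identify $x$ as an element of $\cD(\overline{\cG})$ with the correct image.
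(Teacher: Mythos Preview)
Your overall strategy—apply Lumer--Phillips directly to $\cG$ on $W^{k,1}$, with range density established via compactness from the uniformly bounded resolvents $R(\lambda,\overline{\cG}_\varepsilon)$—is sound and is essentially what the paper's Approximation Lemma (Lemma~\ref{lem:approximation-lemma}) does internally through Trotter--Kato. The paper's proof is a one-line invocation of that lemma, whereas you unpack its mechanism in this specific instance. The range-density and positivity arguments you sketch are correct in spirit; the only caveat is that the index $k_r\ge k+8d$ should be taken somewhat larger (the perturbation $\cI_d$ involves $(N+\1)^{4d}$ on one side, so controlling $\|\cI_d(\cdot)\|_{W^{k,1}}$ and the continuity of $\cG_{\varepsilon_n}\colon W^{k_r,1}\to W^{k,1}$ requires more room than $8d$).

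There is, however, a genuine gap in your dissipativity step. The moment inequality gives $(\lambda-\omega_k)\|x\|_{W^{k,1}}\le\|(\lambda-\cG)(x)\|_{W^{k,1}}$ only for \emph{positive} $x\in\cT_f$, and the decomposition $x=x_+-x_-$ does \emph{not} extend this to general $x$: reverse triangle inequality yields only $\|(\lambda-\cG)(x)\|\ge\big|\|(\lambda-\cG)(x_+)\|-\|(\lambda-\cG)(x_-)\|\big|$, which is useless. In Step~2 of Lemma~\ref{lem:semigroup-of-G-for-sobolev-stability} the decomposition is applied to the \emph{resolvent} $R(\lambda,\overline{\cG}_\varepsilon)$, and it works there precisely because that resolvent is already known to be positivity preserving—something you do not yet have for $\cG$. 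The fix is immediate: for fixed $x\in\cT_f$ one has $\cG_\varepsilon(x)\to\cG(x)$ in $W^{k,1}$ as $\varepsilon\to 0$, and Lemma~\ref{lem:semigroup-of-G-for-sobolev-stability} already supplies the full dissipativity inequality for $\cG_\varepsilon$ on all of $\cT_f$, so taking the limit gives $(\lambda-\omega_k)\|x\|_{W^{k,1}}\le\|(\lambda-\cG)(x)\|_{W^{k,1}}$ for every $x\in\cT_f$. Note also that the moment inequality is only assumed for $k\in\{k_r\}$; for intermediate $k$ you must appeal to the interpolation lemma (Lemma~\ref{lem:interpolation-lemma}) after constructing the semigroup on the sequence, which your sketch does not mention.
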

		\begin{proof}
			The proof is a direct application of Lemma \ref{lem:approximation-lemma} to the semigroups we obtained in Lemma \ref{lem:semigroup-of-G-for-sobolev-stability} taking $\varepsilon \to 0$. Since the semigroups in Lemma \ref{lem:semigroup-of-G-for-sobolev-stability} were positivity preserving, so is the obtained semigroup in the limit $\varepsilon\to 0$ (c.f.~Lemma \ref{lem:approximation-lemma}).
		\end{proof}
		
		We are now ready to prove the main Theorem of the section.
		
		\begin{proof}[Proof of Theorem \ref{thm:generation-theorem}]
			The proof strategy is inspired by \cite[Thm.~2.5]{Davies.1977}. It however makes use of Lemma \ref{lem:approximation-lemma} to avoid the issues discussed in \cite[§3]{Davies.1977}. Let $k \in \{k_r\}_{r \in \N}$ or $k = 0$ for the moment. Note that from Assumption \ref{assum:finite-degree}, we can conclude $\omega_k = 0$ for $k = 0$ in \Cref{eq:Assumptionsobolevstability}. We first define for $\delta \in (0, 1)$ the map
			\begin{align*}
				\cL_\delta:\cT_f \to \cT_f, \quad x \mapsto \cL_\delta(x) &= G x + x G^\dagger + \delta\sum\limits_{j = 1}^K L_j x L_j^\dagger =: \cG(x) + \delta\Sigma(x) , 
			\end{align*}
			and show that its closure defines a strongly continuous, positivity preserving semigroup $(\cP_t^\delta)_{t \ge 0}$ on $W^{k, 1}$ which further satisfies
			\begin{equation*}
				\norm{\cP_t^\delta}_{W^{k, 1} \to W^{k, 1}} \le e^{\omega_k t} \, . 
			\end{equation*}
			We first note that for $\widetilde \lambda > 0$, a rearrangement of \Cref{eq:Assumptionsobolevstability} using cyclicity of the trace and that $\tr[\cdot] \le \norm{\cdot}_1$ with equality if the argument is positive semi-definite gives
			\begin{equation*}
				\norm{\Sigma(x)}_{W^{k, 1}} \le \norm{(\widetilde \lambda + \omega_k - \cG)(x)}_{W^{k, 1}} 
			\end{equation*}
			for $x \in \cT_f$ and $x\ge 0$. Now using that $\cG$ is closable (Lemma \ref{lem:eliminating-perturbation-G}) and its resolvent positivity preserving we can conclude  for $\lambda \coloneqq \widetilde \lambda + \omega_k > \omega_k$, $x \in (\lambda - \cG)\cT_f$ and $x \ge 0$, 
			\begin{equation*}
				\norm{\Sigma \circ R(\lambda,\overline{\cG}) (x)}_{W^{k, 1}} \le \norm{x}_{W^{k, 1}}\,.
			\end{equation*}
			Applying similar methods as in step 2. of the proof of Lemma \ref{lem:semigroup-of-G-for-sobolev-stability}, we can extend the above inequality to general $x \in (\lambda - \cG) \cT_f$. Hence, $\Sigma \circ R(\lambda,\overline{\cG})$ is contractive on the dense set $(\lambda - \cG)\cT_f$ and positivity preserving, since both $\Sigma$ and $R(\lambda,\overline{\cG})$ are. It can therefore be uniquely extended to a positivity preserving contractive map on all of $W^{k, 1}$ which we will call $\cA_\lambda$ in the following. As a consequence $(\cL_\delta, \cD(\cL_\delta))$ is closable and $\lambda > \omega_k$ in the resolvent set of the closure. Both facts follow from the representation
			\begin{equation*}
				(\lambda - \cL_\delta) = (\1 - \delta\Sigma \circ R(\lambda,\overline{\cG})) \circ (\lambda - \cG)
			\end{equation*}
			which decomposes $\lambda - \cL_\delta$ into a composition of a closable map with a dense range and a map that is bounded on that range. We further get for the resolvent of the closure
			\begin{equation*}
				R(\lambda,\overline{\cL}_\delta) = R(\lambda,\overline{\cG}) \sum\limits_{n = 0}^\infty \delta^n \cA_\lambda^n \,,  
			\end{equation*}
			which immediately lets us conclude that the resolvent is positivity preserving as $\cA_\lambda$ and $R(\lambda,\overline{\cG})$ are. Lastly, we will show that for $\lambda > \omega_k$
			\begin{equation}\label{eq:bound-resolvent-L_r}
				\norm{R(\lambda,\overline{\cL}_\delta)}_{W^{k, 1} \to W^{k, 1}} \le \frac{1}{\lambda - \omega_k} \, . 
			\end{equation}
			To obtain this inequality we again rearrange Assumption \ref{assum:sobolev-stability}, add non-negative terms, use cyclicity of the trace and that $\tr[\cdot] \le \norm{\cdot}_1$ with equality if the argument is positive semi-definite, to conclude that for $x \in (\lambda - \cL_r)\cT_f$, $x$ positive semi-definite,
			\begin{equation*}
				\norm{R(\lambda,\overline{\cL}_\delta) x}_{W^{k, 1}} \le \frac{1}{\lambda - \omega_k}\norm{x}_{W^{k, 1}}\,.
			\end{equation*}
			We again extend the above bound to all $x \in (\lambda - \cL_\delta)\cT_f$ analogously to step 2 in the proof of Lemma \ref{lem:eliminating-perturbation-G}. Using that $(\lambda - \cL_\delta)\cT_f$ is dense then gives \Cref{eq:bound-resolvent-L_r}. Employing Theorem \ref{thm:lumer-phillips}, we get that indeed for all $\delta \in (0, 1)$ the closure of $(\cL_\delta, \cD(\cL_\delta))$ generates a strongly continuous semigroup which is positivity preserving since the resolvent is and satisfies the claimed bound. To now fill the gap between $0$ and the $\{k_r\}_{r \in \N}$ respectively, we interpolate between the semigroups (q.v.~Lemma \ref{lem:interpolation-lemma}), obtaining $e^{t\omega_k}$ where $\omega_k = \frac{k_{r_1} - k}{k_{r_1} - k_{r_0}}\omega_{k_{r_0}} + \frac{k - k_{r_0}}{k_{r_1} - k_{r_0}}\omega_{k_{r_1}}$ for an $r$ such that $k_{r_0}\leq k <k_{r_1}$, as the bound of the interpolated semigroups. Now that we have the result for all $k \ge 0$ we can employ Lemma \ref{lem:approximation-lemma} and take the limit $\delta \to 1$ to obtain the assertion. The contractivity and trace-preserving property of the semigroup in the case $k = 0$  just follows from the GKSL form of $(\cL, \cD(\cL))$, i.e.~$\tr[\cL(x)] = 0$ for $x \in \cT_f$, or put differently Assumption \ref{assum:finite-degree}.
		\end{proof}

	\subsection{Bosonic evolution systems}\label{sec:timedependentgeneration}

		Next, we consider time-dependent generators in GKSL form. For this, we modify Assumptions \ref{assum:finite-degree} and \ref{assum:sobolev-stability} in the following way: 
		
		\begin{assum}\label{assum:finite-degree-time-dep}
			The operator $(\cL_s,\cT_f)$ has $\operatorname{GKSL}$ form, i.e.~for $x\in\cT_f$ and $s\in[0,\infty)$ 
			\begin{equation}\label{eq:lindblad-time-dep}
				\begin{aligned}
					\cL_s:\cT_f \to \cT_f \quad x \mapsto\cL_s(x) &= - i [H(s), x] + \sum\limits_{j = 1}^K L_j(s) x L_j^\dagger(s)  - \frac{1}{2}\{L_j^\dagger(s) L_j(s), x\} \\
					&\coloneqq G(s)x + x G^\dagger(s) + \sum\limits_{j = 1}^K L_j(s) x L_j^\dagger(s)\, , 
				\end{aligned}
			\end{equation}
			where $K\in\mathbb{N}$, $G(s)=-iH(s)-\frac{1}{2}\sum_{j=1}^K L_j^\dagger(s) L_j(s)$, and $H(s)\coloneqq p_{H(s)}(a,a^\dagger), L_j(s)\coloneqq p_{j,s}(a,a^\dagger)$ are polynomials of the creation and annihilation operators with time-dependent, continuous coefficients. Again, $d_H\coloneqq\sup_{s\ge 0}\,\deg(p_{H(s)})<\infty$, $d_j\coloneqq\sup_{s\ge 0}\deg(p_{j,s})<\infty$, and $d\coloneqq\max\{d_1,...,d_K,d_H\}$.
		\end{assum}
		
		The next assumption will lead to the evolution system being Sobolev preserving, which allows us not only to prove the existence and uniqueness of the evolution generated by \eqref{eq:mastereq} but further to conduct a perturbation analysis as well as to extend our results to the case of a time-dependent Master equation:
		
		\begin{assum}\label{assum:sobolev-stability-time-dep}
			There exists a non-negative sequence $\{k_r\}_{r \in \N}\rightarrow\infty$ s.t.~for all $r \in \N$ there exist $\omega_{k_r} \ge 0$ such that for all $s \in \R_+$ and $x \in \cT_f$ positive semi-definite,
			\begin{equation}
				\tr[\cL_s(x) (N + \1)^{k/2}] \le \omega_{k_r} \tr[x (N + \1)^{k/2}] \, .
			\end{equation}
			Note that the coefficients $\omega_{k_r}$ are independent of $s$.
		\end{assum}
		
		Under the above assumptions we can state the generation theorem for evolution systems as follows:
		
		\begin{thm}[Generation of bosonic evolution systems]\label{thm:timedep-generation-theorem}
			Let $(\cL_s, \cD(\cL_s))_{s\in[ 0,\infty)}$ be a family of operators that fulfill Assumption \ref{assum:finite-degree-time-dep} and Assumption \ref{assum:sobolev-stability-time-dep}. Then $(\overline{\cL}_s, \cD(\overline{\cL}_s))_{s \in \R_+}$ gives rise to a unique evolution system $(\cP_{t,s})_{0\le s\le t}$ on $W^{k, 1}$ for all $k\ge 0$ with the following properties
			\begin{enumerate}
				\item $\cP_{t,s} (W^{k + 4d, 1}) \subseteq W^{k + 4d, 1}$ for all $0\le s\le t$
				\item For any $x \in W^{k + 4d, 1}$, the family $( \cP_{t,s}(x))_{0\le s\le t}$ is the unique solution to the initial value problem
				\begin{equation}\label{eq:time-dep-init-value-problem}
					\frac{d}{dt} x(t) = \overline{\cL}_t(x(t)) \qquad t \in [s, \infty), \; x(s) = x\,.
				\end{equation}
			\end{enumerate}
			For $k=0$, the evolution system is contractive and trace-preserving.
		\end{thm}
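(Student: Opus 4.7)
My plan is to reduce the time-dependent statement to the time-independent Theorem \ref{thm:generation-theorem} combined with the evolution-system generation result Theorem \ref{thm:time-dependent-semigroups}. Fix $k \ge 0$. For each fixed $s\ge 0$, the operator $(\cL_s,\cT_f)$ satisfies Assumptions \ref{assum:finite-degree} and \ref{assum:sobolev-stability} (the former by freezing the coefficients in the polynomials from Assumption \ref{assum:finite-degree-time-dep}, the latter being exactly Assumption \ref{assum:sobolev-stability-time-dep} at time $s$). Thus Theorem \ref{thm:generation-theorem} applies and $(\overline{\cL}_s)$ generates a strongly continuous, positivity preserving semigroup on $W^{k,1}$ with $\|e^{t\overline{\cL}_s}\|_{W^{k,1}\to W^{k,1}}\le e^{\omega_k t}$ for all $t\ge 0$. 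Crucially, since the $\omega_{k_r}$ in Assumption \ref{assum:sobolev-stability-time-dep} are independent of $s$, so is the interpolated bound $\omega_k$. This immediately yields condition $(1)$ of Theorem \ref{thm:time-dependent-semigroups}, i.e.~stability of $(\cL_s)_{s\ge 0}$ on $W^{k,1}$ with uniform constants $(1,\omega_k)$.

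Next, I would choose the auxiliary Banach space $\cY \coloneqq W^{k+4d,1}$, which is compactly embedded in $\cX\coloneqq W^{k,1}$ by Lemma \ref{lem:sobolev-embedding}. The same argument as above, now applied at the level $k+4d$, shows that $W^{k+4d,1}$ is admissible for each $(\overline{\cL}_s)$ with the same uniform bound $e^{\omega_{k+4d} t}$ independent of $s$. This verifies condition $(2)$ of Theorem \ref{thm:time-dependent-semigroups}. For condition $(3)$, I note that every $\cL_s$ is by construction a polynomial of total degree at most $2d$ on each side in $a,\ad$ with coefficients continuous in $s$; using the estimates of Lemma \ref{lem:formal-polynomial-ccr-adjoint-core} together with the compact embedding between Sobolev spaces, each $\cL_s$ extends to a bounded operator $\cY\to\cX$ with norm controlled by the coefficients of its defining polynomials. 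Since these coefficients depend continuously on $s$ and the polynomial structure is fixed, the map $s\mapsto \cL_s\in\cB(\cY,\cX)$ is norm-continuous, and hence uniformly continuous on every compact subinterval of $\R_+$; this is the requisite hypothesis since Theorem \ref{thm:time-dependent-semigroups} is applied on bounded time horizons.

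Applying Theorem \ref{thm:time-dependent-semigroups} then produces a unique evolution system $(\cP_{t,s})_{0\le s\le t}$ on $W^{k,1}$ with $\|\cP_{t,s}\|_{W^{k,1}\to W^{k,1}}\le e^{(t-s)\omega_k}$. For the invariance property $\cP_{t,s}(W^{k+4d,1})\subseteq W^{k+4d,1}$ and the fact that $(t,s)\mapsto \cP_{t,s}$ is strongly continuous on $\cY$, I would invoke Theorem \ref{thm:time-dependent-semigroups} a second time at regularity level $k+4d$ (using $\cY'=W^{k+4d+4d,1}$ as the auxiliary space) to obtain an evolution system $\widetilde{\cP}_{t,s}$ on $W^{k+4d,1}$. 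Uniqueness of solutions to the initial value problem \eqref{eq:time-dep-init-value-problem} on both spaces, together with the compact embedding $W^{k+4d,1}\Subset W^{k,1}$, forces $\widetilde{\cP}_{t,s}$ to agree with the restriction of $\cP_{t,s}$, giving both $(1)$ and $(2)$ simultaneously. The statement that $(\cP_{t,s}x)$ solves the initial value problem for $x\in W^{k+4d,1}$ is then the last conclusion of Theorem \ref{thm:time-dependent-semigroups}. Finally, for $k=0$, the trace-preserving and contractive property is inherited from the same property at each fixed $s$ established in Theorem \ref{thm:generation-theorem}: the Dyson-type integral representation of $\cP_{t,s}$ through a product integral of $e^{\Delta t\,\overline{\cL}_{s_i}}$ preserves the trace and contractivity in each factor.

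The main technical obstacle I anticipate is the careful verification of condition $(3)$ — namely that the operator norm $\|\cL_s - \cL_{s'}\|_{\cY\to\cX}$ can be controlled uniformly by the modulus of continuity of the coefficients on compact time intervals — and the bookkeeping needed to transfer the admissibility hypotheses $(4)$--$(5)$ between the two nested applications of Pazy's theorem. Everything else is a direct application of the machinery already developed for the time-independent case in Section \ref{subsec:time-indep-generation}.
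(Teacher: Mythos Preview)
Your proposal is correct and follows essentially the same strategy as the paper: apply Theorem \ref{thm:generation-theorem} at each fixed $s$ to verify conditions $(1)$--$(2)$ of Theorem \ref{thm:time-dependent-semigroups} with $\cY=W^{k+4d,1}$, verify condition $(3)$ via the polynomial structure and continuity of coefficients, and then bootstrap by running the same argument at level $k+4d$ with auxiliary space $W^{k+8d,1}$ to obtain conditions $(4)$--$(5)$ and the uniqueness statement. The only cosmetic difference is that the paper invokes Lemma \ref{lem:boundedness-polynomials} (boundedness of $(N+\1)^{k/4}a^j(\ad)^l(N+\1)^{-(k/4+d)}$) together with H\"older's inequality for condition $(3)$, rather than Lemma \ref{lem:formal-polynomial-ccr-adjoint-core}, and it deduces contractivity/trace preservation at $k=0$ directly from $\omega_0=0$ and the product-of-semigroups construction in \cite[Eq.~5.3.5]{Pazy.1983}, which is precisely your Dyson-type argument.
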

		\begin{proof}
			We assume w.l.o.g.~that $s \in [0,1]$ is fixed since the same argument works for all compact intervals. Theorem \ref{thm:generation-theorem} shows that $(\overline{\cL}_s,\cD(\overline{\cL}_s))$ generates an $\omega_k$-quasi-contractive semigroup $(\cP_t^s)_{t \ge 0}$ on $W^{k, 1}$. Next, we realize that $W^{k + 4d, 1}$ are $\cL_s$-admissible subspaces, where we recall that $d$ denotes the degree of $\cL_s$. This already proves assumptions (1) and (2) in Theorem \ref{thm:time-dependent-semigroups}. Since the coefficients of the polynomials $p_H$ and $p_j$ are continuous and operators of the form
			\begin{equation*}
				(N+\1)^{k/4} a^j(\ad)^l (N+\1)^{-(k/4+d)}\,,
			\end{equation*}
			for $j+l\leq d$, are bounded (see Lemma \ref{lem:boundedness-polynomials}) w.r.t.~the operator norm, we have by Hölder inequality that
			\begin{equation*}
				s \mapsto (N + \1)^{k/4} \cL_s((N + \1)^{-k/4 + d} (\cdot) (N + \1)^{-k/4 + d})(N + \1)^{k/4} =: \cA(s)
			\end{equation*}
			is a bounded and uniformly continuous family of operators. Therefore,
			\begin{equation*}
				s\mapsto\cL_s\in\cB(W^{k,1},W^{k+4d,1})
			\end{equation*}
			is uniformly continuous, which proves condition (3) in Theorem \ref{thm:time-dependent-semigroups}.
			Hence Theorem \ref{thm:time-dependent-semigroups} provides the existence and uniqueness of an evolution system on $W^{k, 1}$. By repeating the above arguments on $\cY\coloneqq W^{k+4d}$, i.e.~by choosing our $\cL_s$-admissible subspace as $W^{k + 8d, 1}$, Theorem \ref{thm:time-dependent-semigroups} provides existence and uniqueness of a solution on $\cY = W^{k + 4d, 1}$ which agrees with the former one on $W^{k, 1}$ by the compact embedding of $W^{k + 4d, 1}$ into $W^{k, 1}$. Therefore, conditions (4) and (5) are satisfied for the evolution system on $W^{k, 1}$ and the admissible subspace $\cY=W^{k + 4d, 1}$, which through Theorem \ref{thm:time-dependent-semigroups} proves the claim. Moreover, the evolution system is positivity preserving because it can be constructed by a concatenation of time-independent positivity preserving semigroups (see Theorem \ref{thm:generation-theorem} and \cites[Eq.~5.3.5]{Pazy.1983}). Contractivity and the property of trace preservation are a consequence of the fact that $\omega_0$ can be chosen to be $0$.
		\end{proof}

	\subsection{Multi-mode extension}\label{sec:multi-mode-extension}
		This section discusses the extension of Section \ref{sec:polynomial-generators} to the multi-mode setting. Since the details are almost completely analogous to the single-mode situation, we choose to elaborate only at places where some ambiguities might remain. Let us first fix the notations for this setting. We consider the Hilbert space of an $m$-mode system, $m\in\N$, whose Hilbert space we conveniently denote by $\cH_m=L^2(\mathbb{R}^m)$. We further use $\cB(\cH_m)$ for the bounded, $\cT_1$ for the trace class, and $\cT_{1, \operatorname{sa}}$ for the self-adjoint trace class operators. Now we define $\cT_f$ to be
		\begin{equation*}
			\cT_f \coloneqq \{x = \sum\limits_{\text{finite}} f_{\n, \p}  \ketbra{n_{1}}{p_{1}} \otimes \hdots \otimes \ketbra{n_m}{p_m} \;:\; f_{\n, \p} \in \C, \; x = x^\dagger\} \, ,
		\end{equation*}
		where $\n = (n_1, \hdots, n_m) \in \N^m$ and $\p$ analogously function as an index in $f_{\n, \p}$. For $\k = (k_1, \hdots, k_m) \in \R_+^m$, we define $\k \prec \k'$ if $k_j < k_j'$ for all $j = 1, \hdots, m$. Analogously, we define $\preceq$. We set for $\k \in \R_+^m$
		\begin{equation*}
			(N + \1)^\k \coloneqq (N_1 + \1)^{k_1} \otimes \hdots \otimes (N_m + \1)^{k_m},
		\end{equation*} 
		and with this define $W^{\k, 1}$, $\norm{\,\cdot\,}_{W^{\k, 1}}$. Remark that the latter spaces are Banach spaces and in correspondence to Lemma \ref{lem:sobolev-embedding} we find:
		\begin{lem}\label{lem:multi-mode-sobolev-embedding}
			Let $\k, \k' \in \N^m$ with $\k \prec \k'$, then 
			\begin{equation}
				W^{\k', 1} \Subset W^{\k, 1} \, . 
			\end{equation}
		\end{lem}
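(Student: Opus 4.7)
The plan is to mimic the single-mode proof of Lemma \ref{lem:sobolev-embedding}, adapted to tensor products of number operators, and to invoke Theorem \ref{thm:compact-embedding-weighted-spaces}. First, I would define the multi-mode weighting operator $\cW^{\k}(x) \coloneqq (N+\1)^{\k/4}\, x\, (N+\1)^{\k/4}$ on $\cT_{1,\operatorname{sa}}$, where $(N+\1)^{\k/4}$ is the tensor-product operator already introduced. This is invertible with bounded inverse, so Theorem \ref{thm:compact-embedding-weighted-spaces} reduces the claimed embedding $W^{\k', 1} \Subset W^{\k, 1}$ to showing that the extension of $\cW^{\k} (\cW^{\k'})^{-1} = \cW^{\k - \k'}$ is a compact operator on $\cT_{1,\operatorname{sa}}$.

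Setting $\mathbf{l} \coloneqq \k' - \k$, which has strictly positive components by the hypothesis $\k \prec \k'$, the task becomes to show that
\begin{equation*}
\cW^{-\mathbf{l}} : x \mapsto (N+\1)^{-\mathbf{l}/4}\, x\, (N+\1)^{-\mathbf{l}/4}
\end{equation*}
is compact on $\cT_{1,\operatorname{sa}}$. Following the single-mode strategy, I would approximate it by the finite-rank operators
\begin{equation*}
\cW^{-\mathbf{l}}_{f, M}(x) \coloneqq P_M\, (N+\1)^{-\mathbf{l}/4}\, x\, (N+\1)^{-\mathbf{l}/4}\, P_M,
\end{equation*}
where $P_M$ is the projection onto the finite-dimensional subspace spanned by $\{\ket{n_1}\otimes\cdots\otimes\ket{n_m} : \max_j n_j \le M\}$. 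Writing $\1 = P_M + (\1 - P_M)$ and applying H\"older's inequality, the approximation error in trace norm is controlled by $\|(\1 - P_M)(N+\1)^{-\mathbf{l}/4}\|_\infty$ up to a bounded factor.

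The key point is that this operator norm vanishes as $M\to\infty$: for any $\n\in\N^m$ with $\max_j n_j > M$, at least one factor in $\prod_j (1+n_j)^{-l_j/4}$ satisfies $(1+n_j)^{-l_j/4} \le (1+M)^{-l_j/4}$, so
\begin{equation*}
\|(\1 - P_M)(N+\1)^{-\mathbf{l}/4}\|_\infty \;\le\; (1+M)^{-\min_j l_j/4}\;\xrightarrow[M\to\infty]{}\;0,
\end{equation*}
since $\min_j l_j > 0$. Hence $\cW^{-\mathbf{l}}$ is a norm limit of finite-rank (and thus compact) operators, and therefore compact, which finishes the proof via Theorem \ref{thm:compact-embedding-weighted-spaces}. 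No genuine obstacle arises beyond careful bookkeeping of the tensor-product indices; the hypothesis $\k \prec \k'$ is used precisely to guarantee $\min_j l_j > 0$, so that no single mode can prevent the decay of the truncation error.
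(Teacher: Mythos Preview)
Your proposal is correct and follows essentially the same approach as the paper, which simply states that the strategy is analogous to the single-mode case (Lemma~\ref{lem:sobolev-embedding}): reduce via Theorem~\ref{thm:compact-embedding-weighted-spaces} to the compactness of $\cW^{-\mathbf{l}}$ and verify the latter by finite-rank approximation. Your treatment of the truncation error via $\|(\1-P_M)(N+\1)^{-\mathbf{l}/4}\|_\infty \le (1+M)^{-\min_j l_j/4}$ is the natural multi-mode analogue and makes the role of the hypothesis $\k\prec\k'$ explicit.
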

		The strategy to prove the above claims is analogous to the single-mode case. Next, we slightly generalize the single-mode results Theorem \ref{thm:stein-weiss} and Definition \ref{defi:sobolev-preserving-semigroups} to the multi-mode setting. 
		
		\begin{thm}[Stein-Weiss theorem for multi-mode Bosonic Sobolev spaces]
			Let $\k_0, \k_1 \in \R_+^m$, $\k_0 \prec \k_1$ and $T: W^{\k_j, 1} \to W^{\k_j, 1}$ a linear map with $\norm{T}_{W^{\k_j, 1} \to W^{\k_j, 1}} \le M_j$, bounded by $M_j \ge 0$ for $j = 0, 1$ respectively. Then for $\theta \in [0, 1]$, $T: W^{\k_\theta, 1} \to W^{\k_\theta, 1}$ with $\k_\theta = (1 - \theta) \k_0 + \theta \k_1$ obtained by restriction of the input of  $T:W^{\k_0, 1} \to W^{\k_0, 1}$ to $W^{\k_\theta, 1} \cap W^{\k_0, 1}$, is a well defined bounded linear map with
			\begin{equation}
				\norm{T}_{W^{\k_\theta, 1} \to W^{\k_\theta, 1}} \le M_0^{(1 - \theta)} M_1^\theta \, . 
			\end{equation}
		\end{thm}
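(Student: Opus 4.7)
The plan is to mimic almost verbatim the single-mode argument in Theorem \ref{thm:stein-weiss}, replacing scalar exponents $k$ by multi-indices $\k$ and the single number operator $N$ by the commuting tuple $(N_1,\dots,N_m)$. The commutation of the $(N_j+\1)$ is essential: it makes $(N+\1)^{\k}\coloneqq (N_1+\1)^{k_1}\otimes\cdots\otimes(N_m+\1)^{k_m}$ well-defined for any $\k\in\C^m$ by the spectral theorem, and the map $\k\mapsto (N+\1)^{\k}$ is holomorphic in each component when paired under the trace with a trace-class operator. From Lemma \ref{lem:multi-mode-sobolev-embedding} we have $W^{\k_1,1}\Subset W^{\k_\theta,1}\Subset W^{\k_0,1}$, and $\cT_f$ is dense in all of these, so it suffices to establish the bound for $x\in\cT_f$ and then extend uniquely.

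Fix $x\in\cT_f$ and $Z\in\cB(\cH_m)$ with $\|Z\|_\infty\le 1$. On the strip $S=\{z\in\C : 0\le \Re(z)\le 1\}$ define $\k(z)=(1-z)\k_0+z\k_1\in\C^m$ and
\begin{equation*}
g(z)\;=\;\tr\!\left[(N+\1)^{\k(z)/4}\,T\!\Big((N+\1)^{(\k_\theta-\k(z))/4}\,x\,(N+\1)^{(\k_\theta-\k(z))/4}\Big)\,(N+\1)^{\k(z)/4}\,Z\right].
\end{equation*}
As in the single-mode case, I would invoke (multi-mode analogues of) Lemmas \ref{lem:continuity-G} and \ref{lem:differentiability-G} to show that $g$ is continuous and uniformly bounded on $S$, and holomorphic on $\mathring S$. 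Because $x$ has finite support in the Fock basis, all the unbounded powers act on a finite-dimensional subspace and holomorphicity/continuity reduce to that of scalar functions $z\mapsto (1+n)^{a+bz}$, which is routine.

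On the boundary lines, at $z=it$ the real part of $\k(z)$ equals $\k_0$, so conjugating by $(N+\1)^{\k(it)/4}$ and by $(N+\1)^{(\k_\theta-\k(it))/4}$ produces unitary phase factors that do not change the $W^{\k_0,1}$-norm; combined with $\|Z\|_\infty\le 1$ and the dual characterisation $\|Y\|_{W^{\k_0,1}}=\sup_{\|Z\|_\infty\le 1}|\tr[(N+\1)^{\k_0/4}Y(N+\1)^{\k_0/4}Z]|$, this gives $|g(it)|\le M_0\,\|x\|_{W^{\k_\theta,1}}$. The argument at $z=1+it$ is identical with $\k_0$ replaced by $\k_1$ and yields $|g(1+it)|\le M_1\,\|x\|_{W^{\k_\theta,1}}$.

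Hadamard's three-line theorem then gives $|g(\theta)|\le M_0^{1-\theta}M_1^{\theta}\,\|x\|_{W^{\k_\theta,1}}$, and since
\begin{equation*}
\|T(x)\|_{W^{\k_\theta,1}}\;=\;\sup_{\|Z\|_\infty\le 1}|g(\theta)|,
\end{equation*}
the claimed operator norm bound follows, and density of $\cT_f$ lets us extend $T$ uniquely to $W^{\k_\theta,1}$ in a way consistent with its restriction from $W^{\k_0,1}$ (consistency on intersections is automatic from the compact embeddings). The only place requiring any care beyond the single-mode proof is verifying that the multi-mode powers $(N+\1)^{\k(z)/4}$ retain the requisite holomorphicity and boundary behaviour under the trace; since the number operators pairwise commute, this reduces coordinate-wise to the single-mode situation already handled, so I do not expect any genuine obstacle.
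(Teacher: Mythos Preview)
Your proposal is correct and follows exactly the approach the paper intends: the paper states explicitly that the multi-mode proof is analogous to the single-mode Theorem~\ref{thm:stein-weiss}, and your sketch faithfully carries out that analogy via Hadamard's three-line theorem applied to the same auxiliary function $g(z)$ with scalar exponents replaced by multi-indices. Your observation that commutativity of the $(N_j+\1)$ reduces the holomorphicity and boundary checks to the single-mode lemmas is the only additional point needed, and it is handled correctly.
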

		
		In the multi-mode setting, we cannot interpolate between the elements of the divergent sequence to obtain $W^{\k, 1}$ as admissible subspace for all $0 \prec \k$ but only for elements in the convex hull of the divergent sequence. The property of being Sobolev preserving is again defined for a sequence $\{\k_r\}_{r \in \N}$ such that $\lim\limits_{r \to \infty} \min\limits_{j = 1, \hdots, m} k_{j, r} = \infty$.
		
		\begin{defi}[Sobolev preserving semigroup/evolution system in multi-mode systems]
			Let $(\cP_t)_{t\ge 0}$ be a $C_0$-semigroup on $\cT_{1,\operatorname{sa}}$. We then call $(\cP_t)_{t \ge 0}$ \textit{Sobolev preserving} if there exists a divergent sequence $\{\k_r\}_{r \in \N} \subset \R_+^m$, in the sense that $\lim\limits_{r \to \infty} \min\limits_{j = 1, \hdots, m} k_{j, r} = \infty$, s.t. for all $r \in \N$, $W^{\k_r, 1}$ is an admissible subspace for $(\cP_t)_{t\ge 0}$. Similarly for an evolution system $(\cP_{t,s})_{0\leq s\leq t}$ on $\cT_{1, \operatorname{sa}}$, we call it $(\cP_{t,s})_{0\leq s\leq t}$ \textit{Sobolev preserving} if for all $r \in \N$, $W^{\k_r, 1}$ is admissible for $(\cP_{t,s})_{0\leq s\leq t}$ to $W^{\k_r, 1}$.
		\end{defi}
		
		With these preliminaries in place we can now lift Assumption \ref{assum:finite-degree}, Assumption \ref{assum:sobolev-stability}, Assumption \ref{assum:finite-degree-time-dep}, and Assumption \ref{assum:sobolev-stability-time-dep}.
		
		\begin{assum}\label{assum:multi-mode-finite-degree}
			The operator $(\cL, \cT_f)$ has $\operatorname{GKSL}$ form, i.e.~for $x \in \cT_f$,
			\begin{equation}
				\begin{aligned}
					\cL:\cT_f \to \cT_f \quad x \mapsto\cL(x) &= - i [H, x] + \sum\limits_{j = 1}^K L_j x L_j^\dagger  - \frac{1}{2}\{L_j^\dagger L_j, x\} \\
					&\coloneqq Gx + x G^\dagger + \sum\limits_{j = 1}^K L_j x L_j^\dagger\, , 
				\end{aligned}
			\end{equation}
			for some $K \in \N$ and with $G = - iH - \frac{1}{2}\sum\limits_{j = 1}^K L_j^\dagger L_j$, where $\{A, B\} = AB + BA$ denotes the anti-commutator of two operators $A, B$ on a suitable domain. Further $H$ and $L_j$ are assumed to be polynomials of the creation and annihilation operators, i.e.~$H \coloneqq p_H(a_1, a_1^\dagger, \hdots, a_m, a_m^\dagger)$, $L_j \coloneqq p_j(a_1, a_1^\dagger, \hdots, a_m, a_m^\dagger)$ and $H$ symmetric. This ensures that $\cT_f$ is invariant under $\cL$. We denote the degree of $p_H$ by $d_H \coloneqq \deg p_H$, those of $p_j$ by $d_j \coloneqq \deg p_j$, and $d \coloneqq \max\{d_1, \hdots, d_K, d_H\}$.
		\end{assum}
		
		The second assumption becomes:
		
		\begin{assum}\label{assum:multi-mode-sobolev-preserving}
			There exists a non-negative sequence $\{\k_r\}_{r \in \N} \subset \R_+^m$, in the sense that \\ $\lim\limits_{r \to \infty} \min\limits_{j = 1, \hdots, m} k_{j, r} = \infty$, s.t.~for every $r \in \N$, there exist $\omega_{\k_r} \ge 0$ such that for all positive semi-definite $x \in \cT_f$
			\begin{equation}\label{eq:multi-modeassumptionsobolevstability}
				\tr[\cL(x) (N + \1)^{\k_r/2}] \le \omega_{\k_r} \tr[x (N + \1)^{\k_r/2}] \, .
			\end{equation}
		\end{assum}
		
		Then employing the single-mode strategy, we obtain the following theorem.
		
		\begin{thm}[Generation of multi-mode bosonic semigroups]\label{thm:multi-mode-generation-theorem}
			Let $(\cL, \cD(\cL))$ be an operator defined on $\cT_{1,\operatorname{sa}}$. If $(\cL, \cD(\cL))$ satisfies Assumption \ref{assum:multi-mode-finite-degree} and Assumption \ref{assum:multi-mode-sobolev-preserving}, then the closure $\overline{\cL}$ generates a strongly continuous, positivity preserving semigroup $(\cP_t)_{t\ge 0}$ on $W^{\k_r, 1}$ for all $\{\k_r\}_{r \in \N}$ from Assumption \ref{assum:multi-mode-sobolev-preserving}. We further find that the semigroup satisfies the bound
			\begin{equation}
				\norm{\cP_t}_{W^{\k_r, 1} \to W^{\k_r, 1}} \le e^{\omega_{\k_r} t} \,\quad \forall t\ge 0\, .
			\end{equation}
			In the special case $\k = 0$, the semigroup is contractive and trace-preserving.
		\end{thm}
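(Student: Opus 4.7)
The plan is to lift each of the three auxiliary lemmas underlying the single-mode Theorem \ref{thm:generation-theorem} (Lemmas \ref{lem:semigroup-of-G-for-positivity}, \ref{lem:semigroup-of-G-for-sobolev-stability}, \ref{lem:eliminating-perturbation-G}) to the multi-mode setting and then run the same $\delta$-perturbation argument as in the single-mode proof. Since the theorem only asks for admissibility on the discrete sequence $\{\k_r\}_{r \in \N}$ rather than on a continuous scale of Sobolev spaces, the interpolation issue (Stein--Weiss only gives convex hulls in the multi-mode setting) does not arise, and the whole argument can be carried out directly at each $\k_r$.

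First, I would introduce the perturbation operator
\begin{equation*}
    \cG_\varepsilon : \cT_f \to \cT_f, \qquad x \mapsto Gx + xG^\dagger - \varepsilon\{M_d, x\}, \qquad M_d \coloneqq (N_1 + \1)^{4d} \otimes \cdots \otimes (N_m + \1)^{4d},
\end{equation*}
and show, mirroring Lemma \ref{lem:semigroup-of-G-for-positivity}, that its closure generates a strongly continuous, contractive, positivity preserving semigroup on $\cT_{1,\operatorname{sa}}$. The two ingredients required are (i) a multi-mode analogue of Lemma \ref{lem:formal-polynomial-ccr-adjoint-core} identifying $\cH_{f,m} \coloneqq \mathrm{span}\{\ket{n_1,\dots,n_m}\}$ as a core for the relevant polynomials in the $a_i, a_i^\dagger$; and (ii) the dissipativity computation from Step 1 of Lemma \ref{lem:semigroup-of-G-for-positivity}, which is purely algebraic and carries over verbatim once one observes that the cross-mode product $M_d$ still dominates the negative part of $\frac{1}{2}\sum_j L_j^\dagger L_j$ when composed with $\braket{\psi,\cdot\,\psi}$ for $\ket{\psi}\in\cH_{f,m}$. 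The implemented-semigroup step of \cite{Davies.1977} then produces the positivity-preserving contractive semigroup on $\cT_{1,\operatorname{sa}}$.

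Second, for each $\k_r$ I would lift Lemma \ref{lem:semigroup-of-G-for-sobolev-stability} to the Banach space $W^{\k_r,1}$. The three-step structure there (closability and a single resolvent point from a contraction-mapping argument on $\cG_0 \circ (\lambda - \cI_{d,\varepsilon})^{-1}$; an a priori resolvent bound $\tfrac{1}{\lambda - \omega_{\k_r}}$ from Assumption \ref{assum:multi-mode-sobolev-preserving} on positive operators, extended to general $x$ via the decomposition $x = x_+ - x_-$ with weights $(N+\1)^{\k_r/4}$; and a bootstrap on the resolvent set via the geometric series) transports to the multi-mode setting provided I have multi-mode analogues of Lemma \ref{lem:(n+1)-(n+1)-properties} (invertibility of $\lambda - \cI_{d,\varepsilon}$ on $W^{\k_r,1}$ with explicit bounded inverse) and Lemma \ref{lem:infinitesimal-boundedness-W-k-1} (infinitesimal boundedness of $\cG_0$ relative to $\cI_{d,\varepsilon}$). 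Both follow from the same commutation and Hölder estimates as in the single-mode case, with $(N+\1)^{k/4}$ replaced by $(N+\1)^{\k_r/4}$; this is where the product structure of $M_d$ is used, so that weighted commutators still produce enough decay in every mode.

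Third, with these $\varepsilon$-perturbed semigroups in hand, an application of the approximation lemma (Lemma \ref{lem:approximation-lemma}), taking $\varepsilon\to 0$, eliminates the perturbation exactly as in Lemma \ref{lem:eliminating-perturbation-G}, and a second application of the same lemma, this time with the $\delta$-rescaled jump part $\cL_\delta = \cG + \delta\sum_j L_j(\cdot)L_j^\dagger$ and taking $\delta\to 1$, concludes the proof, using the Lumer--Phillips form of the argument in the proof of Theorem \ref{thm:generation-theorem}. Trace preservation at $\k=0$ again follows from $\tr[\cL(x)]=0$ on $\cT_f$, i.e.~Assumption \ref{assum:multi-mode-finite-degree}. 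The main obstacle I anticipate is in the infinitesimal boundedness step: verifying that a general GKSL polynomial generator is $\cI_{d,\varepsilon}$-bounded with arbitrarily small constant in the multi-mode weighted norm $\|\cdot\|_{W^{\k_r,1}}$ requires careful estimates on mixed-mode operator monomials of the form $\prod_i a_i^{j_i}(a_i^\dagger)^{l_i}$ composed with $(N+\1)^{\k_r/4}$; the rest is a direct transcription of the single-mode arguments with indexed notation.
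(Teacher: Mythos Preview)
Your proposal is correct and follows exactly the route the paper indicates: the paper's own treatment of Theorem \ref{thm:multi-mode-generation-theorem} consists of the single sentence ``employing the single-mode strategy, we obtain the following theorem'', and your outline is a faithful (and more detailed) unpacking of that strategy. One small correction: your claim that interpolation is entirely avoided is slightly too strong, since the multi-mode analogue of the approximation lemma still needs a strictly larger Sobolev index to run the precompactness argument; this is handled not by interpolation but by picking, for each $\k_r$, some $\k_{r'}$ in the divergent sequence with $\k_{r'} \succ \k_r + 4d\cdot\mathbf{1}$, exactly as the paper does later in Theorem \ref{thm:multi-mode-timedep-generation-theorem}.
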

		
		Note that we can extend the above semigroups to the convex hull of $\{\k_r\}_{r \in \N} \cup \{0\}$ using a generalisation of the interpolation lemma for single mode semigroups (q.v. Lemma \ref{lem:interpolation-lemma}). 
		
		Lastly, we can also generalize the generation theorem for evolution systems modifying the assumptions accordingly.
		
		\begin{assum}\label{assum:multi-mode-finite-degree-time-dep}
			The operator $(\cL_s,\cT_f)$ has $\operatorname{GKSL}$ form, i.e.~for $x\in\cT_f$ and $s\in[0,\infty)$ 
			\begin{equation}\label{eq:multi-mode-lindblad-time-dep}
				\begin{aligned}
					\cL_s:\cT_f \to \cT_f \quad x \mapsto\cL_s(x) &= - i [H(s), x] + \sum\limits_{j = 1}^K L_j(s) x L_j^\dagger(s)  - \frac{1}{2}\{L_j^\dagger(s) L_j(s), x\} \\
					&\coloneqq G(s)x + x G^\dagger(s) + \sum\limits_{j = 1}^K L_j(s) x L_j^\dagger(s)\, , 
				\end{aligned}
			\end{equation}
			where $K\in\mathbb{N}$, $G(s)=-iH(s)-\frac{1}{2}\sum_{j=1}^K L_j^\dagger(s) L_j(s)$, and $H(s)\coloneqq p_{H(s)}(a_1,a^\dagger_1, \hdots, a_m, a^\dagger_m)$, $L_j(s)\coloneqq p_{j,s}(a_1,a^\dagger_1, \hdots, a_m, a^\dagger_m)$ are polynomials of the creation and annihilation operators with time-dependent, continuous coefficients. Again, $d_H\coloneqq\sup_{s\ge 0}\,\deg(p_{H(s)}) < \infty$, $d_j\coloneqq\sup_{s\ge 0}\deg(p_{j,s})<\infty$, and $d\coloneqq\max\{d_1,...,d_K,d_H\}$.
		\end{assum}
		
		The second assumption in the time-dependent case generalizes to the following:
		
		\begin{assum}\label{assum:multi-mode-sobolev-stability-time-dep}
			There is a divergent sequence $\{\k_r\}_{r \in \N} \subset \R_+^m$, meaning $\lim\limits_{r \to \infty} \min\limits_{j = 1, \hdots, m} k_{j, r} = \infty$, s.t. for every $r \in \N$, there exist $\omega_{\k_r} \ge 0$ such that for all $s \in \R_+$ and $x \in \cT_f$ positive semi-definite,
			\begin{equation}
				\tr[\cL_s(x) (N + \1)^{\k_r/2}] \le \omega_{\k_r} \tr[x (N + \1)^{\k_r/2}] \, .
			\end{equation}
			Note that the coefficients $\omega_{\k_r}$ are independent of $s$.
		\end{assum}
		
		Under the above assumptions we can state the generation theorem for multi-mode evolution systems as follows:
		
		\begin{thm}[Generation of multi-mode bosonic evolution systems]\label{thm:multi-mode-timedep-generation-theorem}
			Let $(\cL_s, \cD(\cL_s))_{s\in[ 0,\infty)}$ be a family of operators that fulfills Assumption \ref{assum:multi-mode-finite-degree-time-dep} and Assumption \ref{assum:multi-mode-sobolev-stability-time-dep}. Then 
			$(\overline{\cL}_s, \cD(\overline{\cL}_s))_{s \in \R_+}$ gives rise to a unique evolution system $(\cP_{t,s})_{0\le s\le t}$ on $W^{\k_r, 1}$ for all $r \in \N$ with the following properties: for $\k_{r'}$ with $\min\limits_{j = 1, \hdots, m}|k_{j, r} - k_{j, r'}| \ge d$
			\begin{enumerate}
				\item $\cP_{t,s} (W^{\k_{r'}, 1}) \subseteq W^{\k_{r'}, 1}$ for all $0\le s\le t$;
				\item For any $x \in W^{\k_{r'}, 1}$, the family $(\cP_{t,s}(x))_{0\le s\le t}$ is the unique solution to the initial value problem
				\begin{equation}\label{eq:multi-mode-time-dep-init-value-problem}
					\frac{d}{dt} x(t) = \overline{\cL}_t(x(t)) \qquad t \in [s, \infty), \; x(s) = x\,.
				\end{equation}
			\end{enumerate}
			For $\k = 0$ as a special case, we get that the evolution system is contractive and trace-preserving.
		\end{thm}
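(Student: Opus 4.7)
The plan is to follow the strategy of the single-mode proof of Theorem \ref{thm:timedep-generation-theorem} essentially verbatim, replacing single indices by multi-indices and invoking the multi-mode time-independent generation theorem in place of Theorem \ref{thm:generation-theorem}. First, I would fix $r \in \N$ and $r'$ as in the statement and apply Theorem \ref{thm:multi-mode-generation-theorem} to each $(\cL_s, \cT_f)$, obtaining for every $s \in \R_+$ a strongly continuous, positivity preserving semigroup $(\cP^s_t)_{t\ge 0}$ on $W^{\k_r, 1}$ with the bound $\|\cP^s_t\|_{W^{\k_r, 1} \to W^{\k_r, 1}} \le e^{\omega_{\k_r} t}$. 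Crucially, Assumption \ref{assum:multi-mode-sobolev-stability-time-dep} gives $\omega_{\k_r}$ independent of $s$, so the family $(\cL_s)_{s\ge 0}$ is stable on $W^{\k_r, 1}$, verifying condition (1) of Theorem \ref{thm:time-dependent-semigroups}. The same argument applied on $W^{\k_{r'}, 1}$ shows that $W^{\k_{r'}, 1}$ is $\cL_s$-admissible for every $s$ and that $(\cL_s)_{s\ge 0}$ is stable there as well, establishing condition (2).

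Next, I would verify the uniform continuity condition (3), i.e.~that $s \mapsto \cL_s \in \cB((W^{\k_{r'}, 1}, \|\cdot\|), (W^{\k_r, 1}, \|\cdot\|))$ is uniformly continuous. Writing $\cL_s$ in normal form via the multi-mode analog of \Cref{eq:ccr-polynomial-representation}, the action of $\cL_s$ on $x \in W^{\k_{r'}, 1}$ splits into a linear combination (with continuous scalar coefficients in $s$) of monomial sandwiches of $x$ by products of $a_j$'s and $a^\dagger_j$'s of total degree at most $d$. The map from the coefficient vector to these operators is bounded provided each monomial composition
\[
(N+\1)^{\k_r/4}\, a_{j_1}^{n_1}(a^\dagger_{j_1})^{m_1}\cdots a_{j_\ell}^{n_\ell}(a^\dagger_{j_\ell})^{m_\ell}\, (N+\1)^{-\k_{r'}/4}
\]
is bounded on $\cH_m$, which in turn follows from a direct multi-index generalization of Lemma \ref{lem:boundedness-polynomials} combined with the gap condition $\min_j |k_{j,r'} - k_{j, r}| \ge d$, applied symmetrically on both sides of $x$. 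Hölder's inequality between the Schatten-one norm and the operator norm then upgrades this to boundedness and uniform continuity in $\|\cdot\|_{W^{\k_{r'},1} \to W^{\k_r, 1}}$, as the coefficients depend continuously on $s$ on every compact interval.

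Having verified (1)--(3), Theorem \ref{thm:time-dependent-semigroups} provides a unique evolution system $(\cP_{t,s})_{0 \le s \le t}$ on $W^{\k_r, 1}$. To upgrade to properties (4) and (5) of Theorem \ref{thm:time-dependent-semigroups}, and thereby to well-posedness of \Cref{eq:multi-mode-time-dep-init-value-problem} in $W^{\k_{r'}, 1}$, I would bootstrap: repeat the construction with $W^{\k_{r'}, 1}$ as the ambient space and a further index $\k_{r''}$ with $\min_j |k_{j, r''} - k_{j, r'}| \ge d$ playing the role of the admissible subspace. The compact embedding $W^{\k_{r'}, 1} \Subset W^{\k_r, 1}$ of Lemma \ref{lem:multi-mode-sobolev-embedding} ensures that the two evolution systems coincide on $W^{\k_{r'}, 1}$, so the latter is invariant under $(\cP_{t, s})$ and $(t, s) \mapsto \cP_{t,s}$ is strongly continuous there, giving conditions (4) and (5) on $W^{\k_r, 1}$ with admissible subspace $W^{\k_{r'}, 1}$. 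Theorem \ref{thm:time-dependent-semigroups} then yields the unique solution to the initial value problem.

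Finally, positivity preservation follows because Pazy's construction realizes $\cP_{t,s}$ as a strong limit of concatenations of time-independent semigroups $e^{(t_i - t_{i-1}) \overline{\cL}_{s_i}}$, each of which is positivity preserving by Theorem \ref{thm:multi-mode-generation-theorem}. For $\k = 0$ the GKSL form of Assumption \ref{assum:multi-mode-finite-degree-time-dep} forces $\omega_0 = 0$ in Assumption \ref{assum:multi-mode-sobolev-stability-time-dep} (via cyclicity of the trace applied to $\tr[\cL_s(x)]$), yielding contractivity and trace preservation. The main obstacle I anticipate is the careful multi-index bookkeeping in the boundedness estimate on the monomial sandwiches above, where one must verify that the degree budget $d$ suffices in every mode simultaneously; however, this is a direct adaptation of the corresponding single-mode estimate used in Theorem \ref{thm:timedep-generation-theorem}.
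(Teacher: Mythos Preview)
Your proposal is correct and takes essentially the same approach as the paper: the paper states this theorem without an explicit proof, indicating in Section~\ref{sec:multi-mode-extension} that the multi-mode results are ``almost completely analogous to the single-mode situation,'' and your argument is precisely the multi-index adaptation of the proof of Theorem~\ref{thm:timedep-generation-theorem} that the paper has in mind. The verification of conditions (1)--(5) of Theorem~\ref{thm:time-dependent-semigroups} via Theorem~\ref{thm:multi-mode-generation-theorem}, the multi-mode analogue of Lemma~\ref{lem:boundedness-polynomials}, and the bootstrap through a further index $\k_{r''}$ all mirror the single-mode argument exactly as intended.
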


\section{Examples of Sobolev preserving semigroups}\label{sec:examples-sobolev-preserving-semigroup}
	In this section, we consider two classes of examples of practical relevance in quantum information processing for which Assumption \ref{assum:finite-degree} (or \ref{assum:finite-degree-time-dep},\ref{assum:multi-mode-finite-degree},\ref{assum:multi-mode-finite-degree-time-dep}) trivially holds and derive Assumption \ref{assum:sobolev-stability} (or \ref{assum:sobolev-stability-time-dep}, \ref{assum:multi-mode-sobolev-preserving}, \ref{assum:multi-mode-sobolev-stability-time-dep}). Particular care will be given to finding time-independent upper bounds on the $W^{k,1}\to W^{k,1}$ norm of the semigroup. For this, the overall strategy is as follows: given the generator $(\cL,\cT_f)$, we prove that there are coefficients $\mu_{k_r}\ge 0, c_{k_r}>0$ for a divergent sequence $\{k_r\}_{r \in \N}$ such that for all state $\rho\in\cT_f$
	\begin{equation}\label{eq:examples-assum2-step1}
		\begin{aligned}
			\tr[\cL(\rho) (\Nind+\1)^{k_r/2}] &\leq - c_{k_r}\tr[\rho (\Nind+\1)^{k_r/2}] + \mu_{k_r} \\
			& \leq (\mu_{k_r}- c_{k_r})\tr[\rho (\Nind+\1)^{k_r/2}]\,,
		\end{aligned}
	\end{equation}
	where we have used $\tr[\rho (\Nind+\1)^{k_r/2}]\geq \tr[\rho] = 1$ in the second inequality.
	Then, Theorem \ref{thm:generation-theorem} can be applied, which shows that for all $k \in \R_+$, the closure of $(\cL,\cT_f)$ generates a positivity preserving $C_0$-semigroup $(\cP_t)_{t\ge 0}$ on $W^{k,1}$. In the case $k \in \{k_r\}_{r \in \N}$:
	\begin{equation}\label{eq:sobolev-preserving-stab-constant}
		\|\cP_t(x)\|_{W^{k,1}}\leq e^{|\mu_k- c_k|\,t}\|x\|_{W^{k,1}}\,.
	\end{equation}
	for all $x\in W^{k,1}$. The bounds for the intermediate values of $k$ can be obtained using Lemma \ref{lem:interpolation-lemma}. One can strengthen the above bounds using \Cref{eq:examples-assum2-step1} as follows:
	
	\begin{prop}\label{prop-ex:uniformly-bounded-semigroup}
		Let $(\cL,\cT_f)$ be an operator satisfying Assumption \ref{assum:finite-degree} and \Cref{eq:examples-assum2-step1}. Then, for all $k\in\N$, the closure of $(\cL,\cT_f)$ generates a positivity preserving $C_0$-semigroup $(\cP_t)_{t\ge 0}$ on $W^{k,1}$. For all $r \in \N$ and all states $\rho\in W^{k,1}$,
		\begin{equation*}
			\|\cP_t(\rho)\|_{W^{k_r,1}}\leq \max\left\{\|\rho\|_{W^{k_r,1}},\,\frac{\mu_{k_r}}{c_{k_r}}\right\} \, .
		\end{equation*}
		For a general $k \in \R_+$ and $x \in W^{k, 1}$ one obtains
		\begin{equation}\label{eq:improved-semigroup-bound}
			\|\cP_t(x)\|_{W^{k_r,1}}\leq \gamma_k \|x\|_{W^{k_r,1}} \, ,
		\end{equation}
		where $\gamma_k = \max\{1, \frac{\mu_k}{c_k}\}$ for $k \in \{k_r\}_{r \in \N}$ and an interpolated time-independent constant in all other cases. Note that for $k > 0$ and $\rho \in W^{k, 1}$ there exists a sequence $\{t_n\}_{n \in \N}$, such that
		\begin{equation*}
			\lim_{t_n\rightarrow\infty}\cP_{t_n}(\rho)=\overline{\rho}
		\end{equation*}
		for $\overline{\rho}\in W^{k,1}$. Similar conclusions hold in multi-mode as well as time-dependent settings.
	\end{prop}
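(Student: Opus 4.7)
The starting point is a refined form of \eqref{eq:examples-assum2-step1}: applied to the state $x/\tr[x]$ for any positive $x\in\cT_f$ and rescaled by $\tr[x]$, it reads
\[
\tr[\cL(x)(N+\1)^{k_r/2}]\le -c_{k_r}\tr[x(N+\1)^{k_r/2}] + \mu_{k_r}\tr[x].
\]
Since $\cP_t$ is trace preserving at the level $k=0$ (Theorem \ref{thm:generation-theorem}), for a state $\rho\in\cT_f$ the function $f(t)\coloneqq \tr[\cP_t(\rho)(N+\1)^{k_r/2}]$ satisfies $f'(t)\le -c_{k_r}f(t)+\mu_{k_r}$. Observe that $y\mapsto \tr[y(N+\1)^{k_r/2}]$ is a bounded linear functional on $W^{k_r,1}$, and the orbit $t\mapsto\cP_t(\rho)$ is differentiable in $W^{k_r,1}$ on $\cD(\overline{\cL})$, so the derivative is well defined. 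Gr\"onwall's lemma then yields
\[
f(t)\le e^{-c_{k_r}t}f(0)+(1-e^{-c_{k_r}t})\frac{\mu_{k_r}}{c_{k_r}}\le \max\!\left\{f(0),\tfrac{\mu_{k_r}}{c_{k_r}}\right\}.
\]
Truncations $P_M\rho P_M/\tr[P_M\rho P_M]\in\cT_f$ (with $P_M$ the projector onto the first $M+1$ Fock states) converge to $\rho$ in $W^{k_r,1}$, so by continuity of the functional the bound extends to every state $\rho\in W^{k_r,1}$.

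\textbf{General $x$ and intermediate $k$.} For arbitrary $x\in W^{k_r,1}$, I would invoke exactly the decomposition already used in the proof of Lemma \ref{lem:semigroup-of-G-for-sobolev-stability}: set $x_\pm\coloneqq (N+\1)^{-k_r/4}[(N+\1)^{k_r/4}x(N+\1)^{k_r/4}]_\pm(N+\1)^{-k_r/4}$, so that $x=x_+-x_-$, $x_\pm\ge 0$, and crucially $\|x_+\|_{W^{k_r,1}}+\|x_-\|_{W^{k_r,1}}=\|x\|_{W^{k_r,1}}$. Using $(N+\1)^{-k_r/2}\le\1$ also gives $\tr[x_\pm]\le\|x_\pm\|_{W^{k_r,1}}$. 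Positivity preservation of $\cP_t$ combined with the triangle inequality and the bound above then yields
\[
\|\cP_t(x)\|_{W^{k_r,1}}\le \max\!\left\{1,\tfrac{\mu_{k_r}}{c_{k_r}}\right\}\|x\|_{W^{k_r,1}}.
\]
Intermediate $k\notin\{k_r\}_{r\in\N}\cup\{0\}$ are handled by Theorem \ref{thm:stein-weiss}: interpolating the bounds at the two nearest elements of $\{k_r\}$ (or between $0$ and $k_0$) produces a time-independent constant $\gamma_k$.

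\textbf{Accumulation point.} Let $k>0$ and $\rho\in W^{k,1}$ be a state. By what precedes, the orbit $\{\cP_t(\rho)\}_{t\ge 0}$ is bounded in $W^{k,1}$. For any fixed $k'\in[0,k)$, Lemma \ref{lem:sobolev-embedding} provides a compact embedding $W^{k,1}\Subset W^{k',1}$, so any sequence $t_n\to\infty$ admits a subsequence along which $\cP_{t_n}(\rho)$ converges in $W^{k',1}$ to some $\overline{\rho}$. Lower semi-continuity of $\|\cdot\|_{W^{k,1}}$ with respect to the weaker topology (a consequence of closedness of $\cW^k$) then places $\overline{\rho}$ back in $W^{k,1}$ with $\|\overline{\rho}\|_{W^{k,1}}\le\max\{\|\rho\|_{W^{k,1}},\mu_k/c_k\}$.

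\textbf{Expected obstacle.} The routine calculations are harmless; the delicate point is justifying the differential inequality for $f(t)$ beyond $\cT_f$. Since $(N+\1)^{k_r/2}$ is unbounded, one cannot differentiate under the trace naively; the correct move is to view $\tr[\,\cdot\,(N+\1)^{k_r/2}]$ as an element of the dual of $W^{k_r,1}$ and exploit that $\cT_f\subset\cD(\overline{\cL})$ is a core to propagate the inequality to all of $\cD(\overline{\cL})$ before integrating. The multi-mode and time-dependent variants then follow \emph{mutatis mutandis}, invoking Theorem \ref{thm:multi-mode-generation-theorem} and Theorem \ref{thm:multi-mode-timedep-generation-theorem} in place of Theorem \ref{thm:generation-theorem}, and replacing the scalar exponent by the multi-index $\k_r$.
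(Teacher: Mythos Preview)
Your proposal is correct and follows essentially the same route as the paper: derive the differential inequality $\frac{d}{dt}\|\cP_t(\rho)\|_{W^{k_r,1}}\le -c_{k_r}\|\cP_t(\rho)\|_{W^{k_r,1}}+\mu_{k_r}$ for states, integrate it (the paper uses the equivalent barrier observation that the derivative is nonpositive once the norm exceeds $\mu_{k_r}/c_{k_r}$, where you invoke Gr\"onwall), lift to general $x$ via the weighted positive/negative decomposition and to intermediate $k$ via Stein--Weiss, and extract the adherence point from the compact embedding. Your treatment is in fact more explicit than the paper's in two places---the density/truncation step and the lower semi-continuity argument placing $\overline{\rho}$ back in $W^{k,1}$---but the strategy is the same.
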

	\begin{proof}
		By assumption, Theorem \ref{thm:generation-theorem} shows that the closure of $(\cL,\cT_f)$ defines a positivity preserving, quasi-contractive semigroup $(\cP_t)_{t\ge 0}$. Moreover, for $k \in \{k_r\}_{r \in \N}$, $\rho(t)\coloneqq \cP_t(\rho)$
		\begin{equation*}
			\begin{aligned}
				\frac{d}{dt}\|\rho(t)\|_{W^{k,1}}&=\tr[\cL(\rho(t))(N+\1)^{k/2}]\\
				&\leq-c_{k}\tr[\rho(t) (\Nind+\1)^{k/2}] + \mu_{k}\\
				&=-c_{k}\|\rho\|_{W^{k,1}} + \mu_{k}\,.
			\end{aligned}
		\end{equation*}
		Thus, for $\|\rho(t)\|_{W^{k,1}}\geq\frac{\mu_k}{c_k}$, we have $\frac{d}{dt}\|\rho(t)\|_{W^{k,1}}\leq0$, which concludes the bound. Using the positivity preserving property of the semigroup and that $\norm{\cdot}_1 \le \norm{\cdot}_{W^{k, 1}}$ one can lift the bound to \Cref{eq:improved-semigroup-bound} for general $x \in W^{k, 1}$ and Theorem \ref{thm:stein-weiss} allows us to conclude 
		\begin{equation*}
			\|\cP_t(x)\|_{W^{k_r,1}}\leq \gamma_k \|x\|_{W^{k_r,1}}
		\end{equation*}
		extend to all $k \in \R_+$. Finally, for every $k > 0$, every sequence $n\rightarrow \cP_{t_n}(\rho)$ is uniformly bounded in $W^{k,1}$ so that the compact embedding shows that there exists a converging subsequence in $W^{k-\varepsilon,1}$ for $\varepsilon$ suitably chosen, which is also converging in $W^{k,1}$. This finishes the proof.
	\end{proof}
	To achieve the inequality stated in \Cref{eq:examples-assum2-step1}, we will make heavy use of the following simple commutation relations: given a real-valued function $f:\mathbb{N}\to\mathbb{R}$,  
	\begin{equation}\label{eq:symmetry-function}
		\begin{aligned}
			af(\Nind+j\1)=f(\Nind + (j+1)\1)a,&\quad\quad a^\dagger\,1_{>j}f(\Nind-j\1)=f(\Nind - (j+1)\1)a^\dagger\,1_{>j}\,,\\
			f(\Nind-j\1)a\,1_{>j}=af(\Nind - (j+1)\1)1_{>j},&\quad\quad f(\Nind+j\1)a^\dagger=a^\dagger f(\Nind + (j+1)\1)\,,
		\end{aligned}
	\end{equation}
	where the operators above are defined e.g.~on $\cH_f$. We also use the canonical commutation relation to write $(\ad)^la^l$ as a function of $N$ (see Lemma \ref{lem:l-ccr}): 
	\begin{align*}
		&(\ad)^la^l=(N-(l-1)\1)(N-(l-2)\1)\cdots(N-\1)N\\
		& a^l(\ad)^l=(N+\1)(N+2\1)\cdots(N+(l-1)\1)(N+l\1)\,.
	\end{align*}
	In the following, we adopt the notations:
	\begin{align*}
		\cL[L]\coloneqq L(\cdot)L^\dagger -\frac{1}{2}\,\{L^\dagger L,\,\cdot\}\,\qquad \text{ and }\qquad \cH[H]\coloneqq -i[H,\cdot]\,.
	\end{align*}
	Although this notation collides with the one for the Hilbert space, the meaning can always be deduced from context.

	\subsection{Quantum Ornstein Uhlenbeck semigroup}

		We start with the generator of the quantum Ornstein Uhlenbeck semigroup \cite{Cipriani.2000,Carbone.2007} defined by
		\begin{equation}
			\cL_{\operatorname{qOU}} = \lambda^2 \cL[a] + \mu^2 \cL[\ad]
		\end{equation}
		for $\mu, \lambda \ge0$. Given an suitably domain $\cD(\cL_{\operatorname{qOU}})$, the operator $(\overline{\cL}_{\operatorname{qOU}},\cD(\cL_{\operatorname{qOU}}))$
		is known to generate a quantum dynamical semigroup $(\cP_t^{\operatorname{qOU}})_{t\ge 0}$. Here, we further show that the quantum Ornstein Uhlenbeck semigroup defines a semigroup on all $W^{k,1}$. This is the topic of the following lemma: 
		
		\begin{lem}\label{lem-ex:qOU-differential-stability}
			Let $(\cL_{\operatorname{qOU}},\cT_f)$ be the generator of the quantum Ornstein Uhlenbeck semigroup and $k \in \N$. Then, there exist constants $\mu_k$ explicated in \eqref{eq-ex:qou-lambda>mu} such that, for all states $\rho\in\cT_f$,
			\begin{equation*}
				\begin{aligned}
					\tr[\cL_{\operatorname{qOU}}(\rho)(N+\1)^{\frac{k}{2}}]\le\begin{cases}
						\frac{k}{4}(\mu^2-\lambda^2)\tr\big[\rho(N+\1)^{k/2}\big]+\mu_k&\lambda>\mu\\
						\frac{k}{2}(2\mu^2+k)\tr\big[\rho(N+\1)^{k/2}\big]&\lambda\leq\mu
					\end{cases}
				\end{aligned}\,.
			\end{equation*}
			Therefore, the semigroup $e^{t\mathcal{L}_{\operatorname{qOU}}}$ is a Sobolev and positivity preserving quantum Markov semigroup satisfying for all states $\rho\in W^{k,1}$
			\begin{equation*}
				\begin{aligned}
					\|e^{t\mathcal{L}_{\operatorname{qOU}}}(\rho)\|_{W^{k,1}}\leq\begin{cases}
						\max\left\{\|\rho\|_{W^{k,1}},\,\frac{4\mu_{k}}{k(\mu^2-\lambda^2)}\right\}&\lambda>\mu\\
						e^{t\frac{k}{2}(2\mu^2+k)}\|\rho\|_{W^{k,1}}&\lambda\leq\mu
					\end{cases}\,.
				\end{aligned}
			\end{equation*}
		\end{lem}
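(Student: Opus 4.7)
\medskip

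The plan is to verify Assumptions \ref{assum:finite-degree} and \ref{assum:sobolev-stability} for $\cL_{\operatorname{qOU}}$, since the first conclusion (generation and quasi-contractive bound) then follows directly from Theorem \ref{thm:generation-theorem} and the second (uniform bound in the $\lambda>\mu$ regime) from Proposition \ref{prop-ex:uniformly-bounded-semigroup}. Assumption \ref{assum:finite-degree} is immediate, as $\cL_{\operatorname{qOU}}=\lambda^2\cL[a]+\mu^2\cL[\ad]$ is in GKSL form with $H=0$ and Lindblad operators $\lambda a,\mu\ad$, which are polynomials in $a,\ad$ of degree one. The remaining content is the differential inequality, which I will derive from an explicit pointwise estimate.

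First I would compute the image of $(N+\1)^{k/2}$ under the Heisenberg-picture adjoint of $\cL_{\operatorname{qOU}}$. Using cyclicity of the trace together with the commutation relations of \eqref{eq:symmetry-function} (in particular $af(N)=f(N+\1)a$ and $\ad f(N)=f(N-\1)\ad$), a direct computation gives, for every $\rho\in\cT_f$,
\begin{equation*}
\tr\bigl[\cL_{\operatorname{qOU}}(\rho)(N+\1)^{k/2}\bigr]=\tr\bigl[\rho\,h_k(N)\bigr],\qquad h_k(n)=\lambda^2 n\bigl[n^{k/2}-(n+1)^{k/2}\bigr]+\mu^2(n+1)\bigl[(n+2)^{k/2}-(n+1)^{k/2}\bigr].
\end{equation*}
Writing each bracket as $\tfrac{k}{2}\int t^{k/2-1}dt$ and exploiting monotonicity of $t\mapsto t^{k/2-1}$ for $k\ge2$, one extracts the leading order $\tfrac{k}{2}(\mu^2-\lambda^2)(n+1)^{k/2}$ together with a polynomial remainder of degree $k/2-1$ in $n+1$ whose coefficients depend only on $\lambda,\mu,k$.

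In the regime $\lambda>\mu$ the leading coefficient is strictly negative; splitting it as $\tfrac{k}{4}(\mu^2-\lambda^2)+\tfrac{k}{4}(\mu^2-\lambda^2)$ and absorbing the sub-leading $O((n+1)^{k/2-1})$ contribution into the second half via the elementary bound $c\,x^{k/2-1}\le\varepsilon\,x^{k/2}+C_{\varepsilon,c}$ produces the desired inequality with some explicit constant $\mu_k$ (to be set in \eqref{eq-ex:qou-lambda>mu}). This realises \eqref{eq:examples-assum2-step1} with decay rate $c_k=-\tfrac{k}{4}(\mu^2-\lambda^2)>0$, so Proposition \ref{prop-ex:uniformly-bounded-semigroup} delivers the time-uniform bound $\max\{\|\rho\|_{W^{k,1}},4\mu_k/[k(\mu^2-\lambda^2)]\}$. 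In the regime $\lambda\le\mu$ one instead discards the non-positive $\lambda^2$ term (which is a favourable bound) and crudely majorises every $(n+1)^{k/2-j}$ with $j\ge0$ by $(n+1)^{k/2}$ to bound $h_k(n)\le\tfrac{k}{2}(2\mu^2+k)(n+1)^{k/2}$. Assumption \ref{assum:sobolev-stability} then holds with $\omega_k=\tfrac{k}{2}(2\mu^2+k)$ and Theorem \ref{thm:generation-theorem} yields the quasi-contractive estimate $e^{\omega_k t}$.

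The hardest part is the first regime: obtaining the clean multiplicative coefficient $\tfrac{k}{4}(\mu^2-\lambda^2)$ rather than the raw leading order $\tfrac{k}{2}(\mu^2-\lambda^2)$ requires the split-and-absorb trick above, and the accompanying $\mu_k$ must be tracked explicitly so that Proposition \ref{prop-ex:uniformly-bounded-semigroup} produces a genuinely uniform-in-$t$ norm bound. Everything else is routine: interpolation via Theorem \ref{thm:stein-weiss} to extend from integer $k$ in the divergent sequence to arbitrary $k\ge0$, and the positivity-preservation statement, which is already part of the conclusion of Theorem \ref{thm:generation-theorem}.
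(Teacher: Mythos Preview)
Your proposal is correct and follows essentially the same route as the paper: compute $\cL_{\operatorname{qOU}}^\dagger((N+\1)^{k/2})$ as a function of $N$ via the commutation relations, extract the leading order $\tfrac{k}{2}(\mu^2-\lambda^2)(N+\1)^{k/2}$ plus a subleading polynomial, and then either split off half the leading order and absorb the remainder into a constant (for $\lambda>\mu$) or bound everything by the top power (for $\lambda\le\mu$), before invoking Theorem~\ref{thm:generation-theorem} and Proposition~\ref{prop-ex:uniformly-bounded-semigroup}. The paper packages the difference $f(x)-f(x-1)$ via the auxiliary function $g_1$ and Lemma~\ref{lem:upper-lower-bound-gl}, and makes the absorption step explicit through the optimization $\sup_{x\ge 0}(-x^\nu+cx^{\nu-1})=c^\nu(\nu-1)^{\nu-1}/\nu^\nu$, but these are exactly the integral bounds and Young-type inequality you describe.

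One small discrepancy: in the case $\lambda\le\mu$ you say you ``discard the non-positive $\lambda^2$ term'' in $h_k$ and then majorise; that would yield a valid bound but not the specific constant $\tfrac{k}{2}(2\mu^2+k)$ in the statement. The paper instead keeps both terms through the leading-order extraction, obtaining $\tfrac{k}{2}(\mu^2-\lambda^2)(N+\1)^{k/2}+\tfrac{k}{2}(\lambda^2+\mu^2+k)(N+\1)^{k/2-1}$, and only then bounds $(N+\1)^{k/2-1}\le(N+\1)^{k/2}$; the $\lambda^2$'s cancel and the stated constant drops out. This is a cosmetic point and does not affect the correctness of your argument.
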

		\begin{proof}
			We consider $\cL_{\operatorname{qOU}}^\dagger(f(N))$ where $f(x)=(x+1)^{k/2} 1_{x\ge -1}$. By \Cref{eq:symmetry-function},
			\begin{equation*}
				\begin{aligned}
					\cL_{\operatorname{qOU}}^\dagger(f(N))&=\lambda^2N(f(N-\1)-f(N))+\mu^2(N+\1)(f(N+\1)-f(N))\,.
				\end{aligned}
			\end{equation*}
			Note that the case $k=0$ follows from the GKLS form and $k=2$ is by definition of $f$ trivially given by $(\mu^2-\lambda^2)N+\1$. 
			Next, we define an auxiliary function which will also prove useful in the following proofs: 
			\begin{equation}\label{eq:f-g-l-function}
				g_l(x) = \begin{cases}
					f(x) - f(x - l) & x \ge l;\\
					f(x) & l > x \ge 0;\\
					0 & 0 > x\,.
				\end{cases}
			\end{equation}
			It allows us to redefine $\cL_{\operatorname{qOU}}^\dagger(f(N))$ by 
			\begin{equation*}
				\begin{aligned}
					\cL_{\operatorname{qOU}}^\dagger(f(N))&=-\lambda^2Ng_1(N)+\mu^2(N+\1)g_1(N+\1)\,.
				\end{aligned}
			\end{equation*}
			Then, applying Lemma \ref{lem:upper-lower-bound-gl} to the spectral decomposition of the polynomial in the number operator above, we get 
			\begin{equation*}
				\begin{aligned}
					&\cL_{\operatorname{qOU}}^\dagger(f(N))\\
					&\quad\leq \frac{k}{2}(\mu^2-\lambda^2)(N+\1)^{k/2}+\lambda^2\frac{k}{2}(N+\1)^{k/2-1}+1_{k\geq3}(N+\1)^{k/2-2}\frac{k^2}{8}+\mu^2\frac{2-k}{2}\ketbra{0}{0}\\
					&\quad\leq \frac{k}{2}(\mu^2-\lambda^2)(N+\1)^{k/2}+\frac{k}{2}\left(\lambda^2+\mu^2+{k}\right)(N+\1)^{k/2-1}
				\end{aligned}
			\end{equation*}
			where we separated the vacuum state from the rest of the decomposition. Note that this bound can also be used when $k=1$ since $(N+\1)^{-1/2}$ is then bounded by $1$. Therefore, we assume $k\geq3$ in the following and start with the case $\lambda>\mu$ so that the leading order is negative. Then, we use half of the latter to bound the other terms by a constant. This is done by the following classical optimization
			\begin{equation*}
				\sup_{x\geq0}\left(-x^\nu+cx^{\nu-1}\right)=c^\nu\left(\frac{(\nu-1)^{\nu-1}}{\nu^\nu}\right)
			\end{equation*}
			for $\nu\geq1$ and $c\geq0$ defined as 
			\begin{equation*}
				c=2\frac{\lambda^2+\mu^2+k}{\lambda^2-\mu^2}\qquad\text{and}\qquad\nu=\frac{k}{2}\,.
			\end{equation*}
			Then,
			\begin{equation}\label{eq-ex:qou-lambda>mu}
				\begin{aligned}
					\cL_{\operatorname{qOU}}^\dagger(f(N))&\leq \frac{k}{4}(\mu^2-\lambda^2)(N+\1)^{k/2}+c^\nu\left(\frac{(\nu-1)^{\nu-1}}{\nu^\nu}\right)\\
                    &=:\frac{k}{4}(\mu^2-\lambda^2)(N+\1)^{k/2}+\mu_k^{\lambda>\mu}
				\end{aligned}
			\end{equation}
			The second case is $\lambda\leq\mu$, which can be easily upper bounded by
			\begin{equation*}
				\begin{aligned}
					&\cL_{\operatorname{qOU}}^\dagger(f(N))\leq \frac{k}{2}(2\mu^2+k)(N+\1)^{k/2}
				\end{aligned}\,.
			\end{equation*}
			This completes the proof of the statement by Theorem \ref{thm:generation-theorem} and Proposition \ref{prop-ex:uniformly-bounded-semigroup}.
		\end{proof}

	\subsection{Photon-dissipation and CAT qubits}\label{sec:cat-qubits}

		Next, we consider a family of Lindbladians that has been recently studied in the setting of error correction with continuous variable quantum systems. For an introduction to the field, we refer the interested reader to the following lecture notes \cites{Preskill.2021}{Guillaud.2023}. The abstract idea here is that the code-space is continuously protected by a dissipative evolution, i.e.~an evolution which is exponentially converging for $t\rightarrow\infty$ to an invariant subspace --- the code-space. This behavior is achieved through the so-called $l$-photon dissipation generated for $\kappa>0$ and $\alpha\in\C$ by
		\begin{equation}\label{eq:l-photon-dissipation}
			\kappa\cL[a^l-\alpha^l]\,,
		\end{equation}
		where we sometimes omit the identity so that $\alpha^l\coloneqq \alpha^l\1$ in what follows. The invariant subspace (code-space) to which the evolution is exponentially converging \cite{Azouit.2016} is defined by 
		\begin{equation}\label{eq:codespace}
			\cC_l\coloneqq\spa\left\{\ketbra{\alpha_1}{\alpha_2}\,:\,\alpha_1,\alpha_2\in\left\{\alpha e^{\frac{i2\pi j}{l}}\,|\,j\in\{0,...,l-1\}\right\}\right\}\,,
		\end{equation}
		where $\ket{\alpha}$ denotes the coherent state
		\begin{equation*}
			\ket{\alpha}=e^{-\frac{|\alpha|^2}{2}}\sum_{n=0}^{\infty}\frac{\alpha^n}{\sqrt{n!}}\ket{n}\,.
		\end{equation*}
		and satisfies $a\ket{\alpha}=\alpha\ket{\alpha}$ by definition.
		
		Besides the $l$-photon dissipation, we consider the CAT qubit error correction protocol introduced in \cite{Guillaud.2019} associated with the $2$-photon dissipation and code-space $\cC_2$ and with corresponding universal gate-set generated by the following generators: for some parameters $T,\kappa,\varepsilon>0$,
		
		\medskip
		\medskip 
		
		\textit{Identity-gate:}
		\begin{equation}\label{eq:cat-identity}
			\kappa\cL[a^2-\alpha^2]
		\end{equation}
		
		\textit{$Z(\theta)$-gate:}
		\begin{equation}\label{eq:cat-z}
			\kappa\cL[a^2-\alpha^2]+\varepsilon\cH[a+\ad]
		\end{equation}
		
		\textit{$X$-gate:}
		\begin{equation}\label{eq:cat-X}
			\kappa\cL[a^2-e^{2i\pi t/T}\alpha^2]
		\end{equation}
		
		\textit{$\operatorname{CNOT}$-gate:}
		\begin{equation}\label{eq:cat-cnot}
			\kappa\cL[a^2-\alpha^2]+\varepsilon\cL[b^2-\alpha^2-\frac{\alpha}{2}(1-e^{2i\pi t/T})(a-\alpha)]
		\end{equation}
		
		\textit{Toffoli-gate:}
		\begin{equation}\label{eq:cat-toffoli}
			\kappa\cL[a^2-\alpha^2]+\kappa\cL[b^2-\alpha^2]+\varepsilon\cL[c^2-\alpha^2+\frac{1}{4}(1-e^{2i\pi t/T})(ab-\alpha(a+b)+\alpha^2)]\,.
		\end{equation}
		
		Note that the CNOT gate operates on two modes, while the Toffoli gate acts on three modes, with the annihilation and creation operators on the second mode denoted by $b$ and $b^\dagger$, and on the third mode by $c$ and $c^\dagger$. Furthermore, from a mathematical standpoint, the implementation remains somewhat unclear, as the sum gates are constructed using an adiabatic limit, for which, to our knowledge, rigorous error bounds have yet to be established.
		
		In the following, we prove that the above operators generate Sobolev-preserving quantum dynamical semigroups, with the exception of the Toffoli gate. Due to its more complicated structure, we leave the analysis of the latter to future work. We start by proving that the $l$-photon dissipation satisfies \Cref{eq:examples-assum2-step1}, and therefore that it generates a Sobolev preserving semigroup by Proposition \ref{prop-ex:uniformly-bounded-semigroup}.
		
		\begin{lem}[$l$-photon dissipation]\label{lem:l-diss}
			For any $k\ge 1$, $l\ge 2$, $\alpha \in\mathbb{C}$ and any state $\rho\in\cT_f$, 
			\begin{align*}
				\tr\big[\cL[a^l-\alpha^l ](\rho)(N+\1)^{k/2}\big]&\le -\frac{l}{2} \tr\big[\rho\,(N+\1)^{\nu}\big]+\frac{l}{2}\mu_k^{(l)}\le -\frac{l}{2} \tr\big[\rho\,(N+\1)^{k/2}\big]+\frac{l}{2}\mu_k^{(l)}\,,
			\end{align*}
			where $\mu_k^{(l)}=\Delta_l^\nu\left(\frac{(\nu-1)^{\nu-1}}{\nu^\nu}\right)$ with $\nu=l+\frac{k}{2}-1$ and $\Delta_l=(l+1)l+2|\alpha|^lkl^{k/2 - 1}\sqrt{l!}$. Therefore, $\cL_l\coloneqq \cL[a^l-\alpha^l]$ generates a Sobolev and positivity preserving quantum Markov semigroup satisfying for all states $\rho\in W^{k,1}$
			\begin{equation*}
				\|e^{t\cL_l}(\rho)\|_{W^{k,1}}\leq\max\Big\{\|\rho\|_{W^{k,1}}, \mu_k^{(l)}\Big\}\,.
			\end{equation*}
		\end{lem}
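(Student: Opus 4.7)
The approach is to follow the template of the quantum Ornstein--Uhlenbeck bound (Lemma~\ref{lem-ex:qOU-differential-stability}): first establish a pointwise operator inequality
\begin{equation*}
    \cL^\dagger[a^l-\alpha^l\1]\bigl((N+\1)^{k/2}\bigr)\;\leq\; -\,l\,(N+\1)^{\nu} \,+\, (l/2)\,\Delta_l\,(N+\1)^{\nu-1}\,,
\end{equation*}
then split $-l(N+\1)^\nu = -\tfrac{l}{2}(N+\1)^\nu - \tfrac{l}{2}(N+\1)^\nu$ and absorb the second half together with the sub-leading term using the elementary identity
\begin{equation*}
    \sup_{x\geq 0}\bigl(-\tfrac{l}{2}\,x^\nu + \tfrac{l}{2}\,\Delta_l\, x^{\nu-1}\bigr) \;=\; \tfrac{l}{2}\,\Delta_l^\nu\, \tfrac{(\nu-1)^{\nu-1}}{\nu^\nu} \;=\; \tfrac{l}{2}\,\mu_k^{(l)}\,.
\end{equation*}
Cyclicity of the trace together with the monotonicity $(N+\1)^{k/2}\leq (N+\1)^{\nu}$ (valid because $l\geq 1$) then produces both displayed inequalities. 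The semigroup generation and uniform Sobolev norm bound follow at once from Theorem~\ref{thm:generation-theorem} combined with Proposition~\ref{prop-ex:uniformly-bounded-semigroup}, Assumption~\ref{assum:finite-degree} being obvious and Assumption~\ref{assum:sobolev-stability} being exactly the derived trace inequality.

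The dual computation is routine. Using the commutation relations \eqref{eq:symmetry-function}, the identity $[a^l,f(N)] = g_l(N+l)\,a^l$ (for $g_l$ as in \eqref{eq:f-g-l-function}), and the shift relation $g_l(N+l)\,a^l = a^l\, g_l(N)$, the manipulation of $\cL^\dagger[L](A) = L^\dagger A L - \tfrac{1}{2}\{L^\dagger L, A\}$ with $L = a^l - \alpha^l\1$ and $A=f(N)\coloneqq (N+\1)^{k/2}$ collapses into
\begin{equation*}
    \cL^\dagger[L]\bigl(f(N)\bigr) \;=\; -(\ad)^l a^l\,g_l(N) \,+\, \tfrac{1}{2}\bar\alpha^l\, g_l(N+l)\,a^l \,+\, \tfrac{1}{2}\alpha^l\,(\ad)^l\,g_l(N+l)\,.
\end{equation*}
The leading diagonal term is already a polynomial in $N$: invoking Lemma~\ref{lem:upper-lower-bound-gl} for a sharp lower bound on $g_l$ and expanding the falling factorial $(\ad)^l a^l = N(N-1)\cdots (N-l+1)$ provides $(\ad)^l a^l\,g_l(N) \geq l\,(N+\1)^{\nu} - \tfrac{l^2(l+1)}{2}\,(N+\1)^{\nu-1}$, which accounts for the first summand $(l+1)l$ of $\Delta_l$ after the normalization by $l/2$.

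The main obstacle is bounding the two cross terms, which are not functions of $N$ alone. A naive scalar AM--GM $A+A^\dagger\leq \epsilon A^\dagger A + \epsilon^{-1}\1$ with $A = \bar\alpha^l g_l(N+l) a^l$ inflates the degree to $(N+\1)^{l+k-2}$, strictly exceeding $(N+\1)^{\nu-1}$ whenever $k>2$; a finer argument is essential. Exploiting that $A$ couples $\ket{n+l}$ to $\ket{n}$ only, I would decompose $A+A^\dagger$ into rank--$2$ pieces supported on pairs $\{\ket{n},\ket{n+l}\}$ and apply the sharp pair-wise bound $c\,\ketbra{n}{n+l} + \bar c\,\ketbra{n+l}{n}\leq |c|\,(\ketbra{n}{n}+\ketbra{n+l}{n+l})$ separately on each. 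Resumming yields the operator inequality
\begin{equation*}
    A+A^\dagger \;\leq\; |\alpha|^l\,\Bigl[g_l(N)\sqrt{(\ad)^l a^l}\,+\,g_l(N+l)\sqrt{a^l(\ad)^l}\Bigr]\,,
\end{equation*}
a genuine function of $N$ of optimal order $(N+\1)^{l/2+k/2-1}$, which is controlled by $(N+\1)^{\nu-1}$ precisely when $l\geq 2$. Combining with the upper estimate $g_l(x)\leq \tfrac{lk}{2}(x+1)^{k/2-1}$ from Lemma~\ref{lem:upper-lower-bound-gl}, the trivial $(\ad)^l a^l \leq (N+\1)^l$, and $a^l(\ad)^l = (N+1)(N+2)\cdots(N+l)\leq l!\,(N+\1)^l$ (which follows from $(N+i)/(N+1)\leq i$ for $N\geq 0$), the cross terms contribute the remaining piece $|\alpha|^l k\,l^{k/2}\sqrt{l!}\,(N+\1)^{\nu-1}$ of $(l/2)\Delta_l$.

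Assembling the leading and cross contributions produces the target operator inequality with total sub-leading coefficient $(l/2)\Delta_l$, and the optimization step from the first paragraph converts it into the stated trace bound with the explicit constant $\tfrac{l}{2}\mu_k^{(l)}$. Theorem~\ref{thm:generation-theorem} combined with Proposition~\ref{prop-ex:uniformly-bounded-semigroup} then delivers the Sobolev- and positivity-preserving quantum Markov semigroup and the uniform Sobolev norm bound announced in the lemma.
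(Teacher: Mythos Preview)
Your proposal is correct and follows essentially the same route as the paper: compute $\cL^\dagger[a^l-\alpha^l](f(N))$ via the shift relations, lower-bound the diagonal piece $(\ad)^l a^l\,g_l(N)$ using Lemma~\ref{lem:upper-lower-bound-gl} and Lemma~\ref{lem:bounds-ccr-l-product}, control the off-diagonal cross terms by a pairwise $2\times 2$ block bound, then apply the scalar optimization $\sup_{x\ge 0}(-x^\nu+\Delta_l x^{\nu-1})$ and invoke Proposition~\ref{prop-ex:uniformly-bounded-semigroup}. The rank-$2$ decomposition you describe for the cross terms is precisely the content of the paper's Lemma~\ref{lem:two-point-hamiltonian-bound}; the only cosmetic difference is that the paper uses monotonicity of $g_l$ to collapse your two diagonal pieces $g_l(N)\sqrt{(\ad)^l a^l}+g_l(N+l)\sqrt{a^l(\ad)^l}$ into the single larger term $2\,g_l(N+l\1)\sqrt{N[1{:}l]}$ before estimating.
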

		\begin{proof}
			By \Cref{eq:symmetry-function}, we have for $f(x)=(x+1)^{k/2} 1_{x\ge -1}$:
			\begin{equation*}
				\begin{aligned}
					\cL[a_1^l-\alpha^l]^\dagger(f(N))&=(\ad)^lf(N)a^l-\frac{1}{2}\Big((\ad)^la^lf(N)+f(N)(\ad)^la^l\Big)\\
					&\quad+\frac{1}{2}(\overline{\alpha}^la^lf(N)-\overline{\alpha}^lf(N)a^l+\alpha^l f(N)(\ad)^l-\alpha^l(\ad)^lf(N))\\
					&=(\ad)^la^l\Big(f(N-l\1 )-f(N)\Big)\\
					&\quad+\frac{1}{2}\left[\overline{\alpha}^la^l\Big(f(N)-f(N-l\1)\Big)+\alpha^l\Big(f(N)-f(N-l\1)\Big)(\ad)^l\right]\,.
				\end{aligned}
			\end{equation*}
			In what follows, we use the function defined in \Cref{eq-appx:f-g-l-function}
			\begin{equation*}
				g_l(x) = \begin{cases}
					f(x) - f(x - l) & x \ge l;\\
					f(x) & l > x \ge 0;\\
					0 & 0 > x\,.
				\end{cases}
			\end{equation*}  
			Using the canonical commutation relation to write $(a^\dagger)^la^l$ as a function of $N$ (cf.~Lemma \ref{lem:l-ccr}) and with help of the notation
			\begin{equation}\label{eq:notation-product}
				N_k[r:j]\coloneqq (N_k+r\1)\cdots (N_k+j\1)
			\end{equation}
			with the convention $N_k[r:j]=\1$ whenever $r>j$, we thus have that
			\begin{align*}
				\tr\big[\rho\,\cL[a^l-\alpha^l]^\dagger(f(N))\big]&=-\tr\big[\rho\,N[-l+1:0]g_l(N)\big]+\frac{1}{2}\tr\big[\rho\,(\overline{\alpha}^la^lg_l(N)+\alpha^lg_l(N)(\ad)^l)\big]\,,
			\end{align*}
			Since $g_l$ is positive and increasing, the last term above can be upper bounded by Lemma \ref{lem:two-point-hamiltonian-bound},
			\begin{equation*}
				\begin{aligned}
					\frac{1}{2}\tr\big[\rho(\overline{\alpha}^la^lg_l(N)+{\alpha}^lg_l(N)(\ad)^l)\big]&{\leq}\,{|\alpha|^l}\tr[\rho\, g_l(N+l\1)\sqrt{N[1:l]}] \\
					&\overset{(1)}{\leq}|\alpha|^l \frac{kl^{k/2}}{2}\,\tr[\rho\, (N+\1)^{k/2-1}\sqrt{N[1:l]}]\\
					&\leq|\alpha|^lkl^{k/2}\sqrt{l!}\,\tr[\rho\, (N+\1 )^{k/2-1+\frac{l}{2}}]\,.
				\end{aligned}
			\end{equation*}
			In $(1)$ above, we used Lemma \ref{lem:upper-lower-bound-gl} for the bound
			\begin{align*}
				g_l(N+l\1) \le  \frac{kl}{2}\,(N+l\1)^{k/2-1}\leq\frac{kl^{k/2}}{2}\,(N+\1)^{k/2-1}\,.
			\end{align*}
			Therefore, we have proven that 
			\begin{equation*}
				\begin{aligned}
					\tr\big[\rho\cL[a^l-\alpha^l]^\dagger(f(N))\big]&\le -\tr\big[\rho\,N[-l+1:0]\,g_l(N)\big]\\
					&\qquad+|\alpha|^lkl^{k/2}\sqrt{l!}\,\tr[\rho\, (N+\1 )^{k/2-1+\frac{l}{2}}]\,.
				\end{aligned}
			\end{equation*}
			Next, we upper bound the first term above 
			\begin{equation*}
				\begin{aligned}
					\tr\big[\rho\,N[-l+1:0]\,g_l(N)\big]&\overset{(3)}{\geq}l\tr\big[\rho\,N[-l+1:0](N+\1)^{k/2-1}\big]\\
					&\overset{(4)}{\geq}l\,\tr\big[\rho\,(N+\1)^{l+k/2-1}\,\big]-\frac{(l+1)l^2}{2}\tr\big[\rho\,(N+\1 )^{l+k/2-2}\big]\,.
				\end{aligned}
			\end{equation*}
			In $(3)$, we used  Lemma \ref{lem:upper-lower-bound-gl} below with the fact that $N[-l+1:0]$ is supported on the Fock states $|n\rangle$ with $n\ge l-1$;
			in $(4)$ we used that 
			\begin{align*}
				N[-l+1:0]&=\sum_{n\ge 0}\,(n-l+1)\dots n \,\ketbra{n}{n}\\
				&=\sum_{n\ge l}\,(n-l+1)\dots n\,\ketbra{n}{n}\\
				&\overset{(5)}{\ge} \sum_{n\ge l}\left((n+1)^l-\frac{(l+1)l}{2}(n+1)^{l-1}\right)\ketbra{n}{n}\\
				&\ge (N+\1)^l-\frac{(l+1)l}{2}\,(N+\1)^{l-1}\,,
			\end{align*}
			where $(5)$ comes from Lemma \ref{lem:bounds-ccr-l-product} below, whereas the last inequality follows from the fact that $l\ge 2$. To sum up, we showed that
			\begin{equation}\label{eq-ex:l-dissipation-upper-bound}
				\begin{aligned}
					\tr\big[\cL[a^l-\alpha^l](\rho)(f(N))\big]&\le -l\tr\big[\rho\,(N+\1)^{l+k/2-1}\,\big]+\frac{(l+1)l^2}{2}\tr\big[\rho\,(N+\1 )^{l+k/2-2}\big]\\
					&\qquad+|\alpha|^lkl^{k/2}\sqrt{l!}\,\tr[\rho\, (N+\1 )^{k/2-1+\frac{l}{2}}]\\
					&\le -l\tr\big[\rho\,(N+\1)^{l+k/2-1}\,\big]\\
					&\qquad+\frac{l}{2}\biggl(\underbrace{{(l+1)l}+2|\alpha|^lkl^{k/2 - 1}\sqrt{l!}}_{\eqqcolon \Delta_l}\biggr)\,\tr\big[\rho\,(N+\1 )^{l+k/2-2}\big]
				\end{aligned}
			\end{equation}
			where we used again that $l\ge 2$ in the last inequality. Half of the leading order term can be used to control the second term by a constant. For that, we use the spectral decomposition of the operator $N$ so that the above problem can be reduced to the following simple optimization: 
			\begin{equation}\label{eq:optimization}
				\sup_{x\geq0}\left(-x^\nu+\Delta_lx^{\nu-1}\right)=\Delta_l^\nu\left(\frac{(\nu-1)^{\nu-1}}{\nu^\nu}\right)
			\end{equation}
			for $\nu\geq1$ defined as 
			\begin{equation*}
				\nu=l+\frac{k}{2}-1\,.
			\end{equation*}
			The result follows after invoking Proposition \ref{prop-ex:uniformly-bounded-semigroup}. 
		\end{proof}
		\begin{rmk}\label{rmk:l-diss-multi-mode}
			The single-mode bound proved above can be generalized to the multi-mode setting, with generated given for some $\alpha_j\in\mathbb{C}$, $j\in[m]$, by 
			\begin{equation*}
				\cL_l^{(m)}\coloneqq\sum_{j=1}^m\cL[a_j^l-\alpha_j^l]\,.
			\end{equation*}
			Since all the bounds used in the proof of Lemma \ref{lem:l-diss} were derived at the operator level, we directly get for $\k \in \N^m$
			\begin{align*}
				\tr\big[\cL_l^{(m)}(\rho)(N+\1)^{\k/2}\big]&\le \sum_{i=1}^m-\frac{l}{2}\tr\big[\rho\,(N_i+\1)^{l-1}(N + \1)^{\k/2}\big]+\mu^{(l)}_{k_i}\tr\big[\rho\,\prod_{j\neq i}(N_j+\1)^{k_j/2}\big]
			\end{align*}
		\end{rmk}
		For later references, we single out the case $l=2$.
		\begin{cor}[$2$-photon dissipation]\label{lem:cat-identity}
			For any integers $k\ge 1$, $\alpha\in\mathbb{C}$ and any state $\rho\in\cT_f$, 
			\begin{align*}
				\tr\big[\cL[a^2-\alpha^2](\rho)(N+\1)^{k/2}\big]&\le -\tr\big[\rho\,(N+\1)^{k/2}\big]+\mu^{(2)}_k
			\end{align*}
			where $\mu^{(2)}_k=(\Delta^{(2)}_k)^\nu\left(\frac{(\nu-1)^{\nu-1}}{\nu^\nu}\right)$ with $\nu=\frac{k}{2}+1$ and $\Delta_2=6+2\sqrt{2!}|\alpha|^2k2^{k/2 - 1}$. Therefore, $\cL_2\coloneqq \cL[a^2-\alpha^2 ]$ generates a Sobolev and positivity preserving quantum Markov semigroup which satisfies for all states $\rho$ in $W^{k,1}$
			\begin{equation*}
				\|e^{t\cL_2}(\rho)\|_{W^{k,1}}\leq\max\left\{\|\rho\|_{W^{k,1}},\mu^{(2)}_k\right\}\,.
			\end{equation*}
		\end{cor}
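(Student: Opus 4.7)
The plan is to derive this corollary as a direct specialization of Lemma \ref{lem:l-diss} to $l=2$, followed by an invocation of Proposition \ref{prop-ex:uniformly-bounded-semigroup} to obtain the Sobolev preservation bound. Since the general computation for arbitrary $l\ge 2$ has already been carried out, the only work left is a routine substitution of constants and a verification that the resulting input-output bound matches the hypothesis of Proposition \ref{prop-ex:uniformly-bounded-semigroup}.

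First, I would apply Lemma \ref{lem:l-diss} with $l=2$ to any state $\rho\in\cT_f$. The lemma gives
\[
\tr\big[\cL[a^l-\alpha^l](\rho)(N+\1)^{k/2}\big]\le -\frac{l}{2}\tr\big[\rho\,(N+\1)^{k/2}\big]+\frac{l}{2}\mu_k^{(l)},
\]
with $\mu_k^{(l)}=\Delta_l^\nu\bigl(\tfrac{(\nu-1)^{\nu-1}}{\nu^\nu}\bigr)$, $\nu=l+\tfrac{k}{2}-1$, and $\Delta_l=(l+1)l+2|\alpha|^l k\,l^{k/2-1}\sqrt{l!}$. Setting $l=2$ gives $\tfrac{l}{2}=1$, so the leading constant becomes $-1$ as required; the exponent parameter becomes $\nu=\tfrac{k}{2}+1$; and the constant becomes $\Delta_2=6+2\sqrt{2!}\,|\alpha|^2 k\,2^{k/2-1}$, exactly matching the stated form of $\mu_k^{(2)}$. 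This proves the trace inequality in the corollary.

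Second, since the above inequality is precisely of the form
\[
\tr[\cL_2(\rho)(N+\1)^{k/2}]\le -c_k\tr[\rho\,(N+\1)^{k/2}]+\mu_k,
\]
with $c_k=1$ and $\mu_k=\mu_k^{(2)}$, Assumption \ref{assum:finite-degree} is trivially satisfied for $\cL_2$ (it is in GKSL form with a polynomial jump operator of degree $2$), and equation \eqref{eq:examples-assum2-step1} holds with these constants. Invoking Proposition \ref{prop-ex:uniformly-bounded-semigroup} therefore yields that the closure of $(\cL_2,\cT_f)$ generates a positivity preserving $C_0$-semigroup on $W^{k,1}$ for all $k\in\R_+$, with
\[
\|e^{t\cL_2}(\rho)\|_{W^{k,1}}\le \max\Bigl\{\|\rho\|_{W^{k,1}},\,\tfrac{\mu_k^{(2)}}{c_k}\Bigr\}=\max\bigl\{\|\rho\|_{W^{k,1}},\mu_k^{(2)}\bigr\},
\]
which is the claimed estimate. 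There is no genuine obstacle here; the main work of bounding the semigroup action against the weight $(N+\1)^{k/2}$ was already absorbed into the proof of Lemma \ref{lem:l-diss}, and the remaining task is purely a matter of plugging in $l=2$ and verifying that the stated constants $\nu$ and $\Delta_2$ arise from the general formulas.
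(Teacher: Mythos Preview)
Your proposal is correct and matches the paper's intended approach: the corollary is stated without proof precisely because it is the $l=2$ instance of Lemma \ref{lem:l-diss}, and your substitution of constants ($\nu=k/2+1$, $\Delta_2=6+2\sqrt{2!}\,|\alpha|^2k\,2^{k/2-1}$, leading coefficient $l/2=1$) is exactly what is required. Note that you could even bypass the separate appeal to Proposition \ref{prop-ex:uniformly-bounded-semigroup}, since the Sobolev bound $\|e^{t\cL_l}(\rho)\|_{W^{k,1}}\le\max\{\|\rho\|_{W^{k,1}},\mu_k^{(l)}\}$ is already part of the conclusion of Lemma \ref{lem:l-diss}.
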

		From the above bounds, we directly get the property of Sobolev preservation for the $X$-gate:
		\begin{cor}[$X$-gate]
			For any $T>0$, $\alpha\in\mathbb{C}$, $k\in\N$ and all states $\rho\in\cT_f$,
			\begin{equation*}
				\tr[\cL[a^2-e^{2i\pi t/T}\alpha^2](\rho)(N+\1)^{k/2}]\leq-\tr[\rho(N+\1)^{k/2}]+\mu_k^{(2)}\,,
			\end{equation*}
			where $\mu_k^{(2)}$ is defined in Lemma \ref{lem:cat-identity}. Therefore, $\cL[a^2-e^{2i\pi t/T}\alpha^2]$ generates a Sobolev and positivity preserving quantum evolution system $\cP_{t,t_0}$ which satisfies for all states $\rho\in W^{k,1}$
			\begin{equation*}
				\|\cP_{t,t_0}(\rho)\|_{W^{k,1}}\leq\max\left\{\|\rho\|_{W^{k,1}},\mu_k^{(2)}\right\}\,.
			\end{equation*}
		\end{cor}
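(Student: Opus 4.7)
The proof should reduce essentially to an application of the time-independent result for the $2$-photon dissipation (Lemma/Corollary \ref{lem:cat-identity}) combined with the time-dependent generation theorem (Theorem \ref{thm:timedep-generation-theorem}). The key observation is that the time dependence only enters through a global phase factor that does not affect the modulus of the displacement parameter, and hence does not affect the moment estimates.

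The plan is as follows. First I would rewrite the jump operator as $a^2 - \tilde\alpha(t)^2$ where $\tilde\alpha(t) \coloneqq e^{i\pi t/T}\alpha$, so that $|\tilde\alpha(t)| = |\alpha|$ uniformly in $t$. Inspecting the proof of Lemma \ref{lem:cat-identity} (equivalently \ref{lem:l-diss} with $l = 2$), one sees that the constant $\Delta_2 = 6 + 2\sqrt{2!}\,|\alpha|^2 k \, 2^{k/2-1}$, and hence $\mu_k^{(2)}$, depends on $\alpha$ only through $|\alpha|$. This is because the phase cancels in the two places it appears: in the bound on $\frac{1}{2}\tr[\rho(\bar\alpha^l a^l g_l(N) + \alpha^l g_l(N)(a^\dagger)^l)]$ via Lemma \ref{lem:two-point-hamiltonian-bound}. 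Applying Lemma \ref{lem:cat-identity} pointwise at each $t \ge 0$ with $\tilde\alpha(t)$ in place of $\alpha$ therefore yields the claimed moment bound
\begin{equation*}
    \tr[\cL[a^2 - e^{2i\pi t/T}\alpha^2](\rho)(N+\1)^{k/2}] \le -\tr[\rho(N+\1)^{k/2}] + \mu_k^{(2)}
\end{equation*}
with a constant $\mu_k^{(2)}$ independent of $t$.

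Next I would verify the hypotheses of Theorem \ref{thm:timedep-generation-theorem}. Assumption \ref{assum:finite-degree-time-dep} is immediate: the jump operator is a polynomial of degree $2$ in $a, a^\dagger$ whose coefficients $t \mapsto e^{2i\pi t/T}\alpha^2$ are continuous (in fact smooth) functions of $t$, with $H \equiv 0$. Assumption \ref{assum:sobolev-stability-time-dep} is exactly the moment bound established above (together with the elementary estimate $\tr[\rho(N+\1)^{k/2}] \ge \tr[\rho] = 1$ used to absorb the constant $\mu_k^{(2)}$ into a $t$-independent coefficient $\omega_k$ along any divergent sequence $\{k_r\}$). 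Theorem \ref{thm:timedep-generation-theorem} then provides a unique Sobolev and positivity preserving evolution system $(\cP_{t,t_0})_{0 \le t_0 \le t}$ on $W^{k,1}$ for every $k \ge 0$.

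Finally, for the quantitative $W^{k,1}$ bound I would invoke the time-dependent analogue of Proposition \ref{prop-ex:uniformly-bounded-semigroup} (which the paper explicitly announces applies in this setting). Concretely, for $\rho \in \cT_f$ a state, positivity preservation of $\cP_{t,t_0}$ and the fact that $\cP_{t,t_0}(\rho)$ remains a state together with the uniform-in-$s$ moment bound give
\begin{equation*}
    \frac{d}{dt}\|\cP_{t,t_0}(\rho)\|_{W^{k,1}} = \tr\bigl[\cL_t(\cP_{t,t_0}(\rho))(N+\1)^{k/2}\bigr] \le -\|\cP_{t,t_0}(\rho)\|_{W^{k,1}} + \mu_k^{(2)}\,.
\end{equation*}
Hence on the region $\{\|\cP_{t,t_0}(\rho)\|_{W^{k,1}} \ge \mu_k^{(2)}\}$ the norm is non-increasing, yielding $\|\cP_{t,t_0}(\rho)\|_{W^{k,1}} \le \max\{\|\rho\|_{W^{k,1}},\mu_k^{(2)}\}$ for states $\rho \in \cT_f \cap W^{k,1}$. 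Extending from states to general $\rho \in W^{k,1}$ follows by the usual positive/negative-part decomposition inside $W^{k,1}$ (as in the proof of Lemma \ref{lem:semigroup-of-G-for-sobolev-stability}, Step 2) and density of $\cT_f$. I do not anticipate a real obstacle here; the only genuine verification is the phase-invariance of $\mu_k^{(2)}$ and the continuity in $t$ needed for Assumption \ref{assum:finite-degree-time-dep}, both of which are essentially immediate from the structure of the proof of Lemma \ref{lem:l-diss}.
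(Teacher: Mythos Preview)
Your proposal is correct and follows exactly the paper's approach: the paper's proof is the one-line observation that $|e^{2i\pi t/T}\alpha^2|=|\alpha^2|$, so Lemma~\ref{lem:cat-identity} applies uniformly in $t$. You have simply spelled out the details (verifying Assumptions~\ref{assum:finite-degree-time-dep}--\ref{assum:sobolev-stability-time-dep} and invoking Proposition~\ref{prop-ex:uniformly-bounded-semigroup}) that the paper leaves implicit.
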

		\begin{proof}
			The statement directly follows from Lemma \ref{lem:cat-identity} and $|e^{2i\pi t/T}\alpha^2|=|\alpha^2|$
		\end{proof}
		
		Since \Cref{eq:examples-assum2-step1}, which implies Assumption \ref{assum:sobolev-stability} (or \ref{assum:multi-mode-sobolev-stability-time-dep}), is linear in the supposed generators and the leading order found in \Cref{eq-ex:l-dissipation-upper-bound} has the ability to suppress smaller terms, certain Hamiltonians can be regularized in the sense of \Cref{eq:examples-assum2-step1} by adding an $l$-photon dissipation. Especially, we consider a Hamiltonian of degree $d_H = 2(l-1)$ with the structure: for $\lambda_{i,j}\in\mathbb{C}$ with $\max_{i,j}|\lambda_{i,j}|=\Lambda$,
		\begin{equation}\label{Hpolyrep}
			H=p(a,\ad)=\sum_{\substack{i \le j\\i+j \le d_H}}\lambda_{i,j}a^i(\ad)^j + \overline{\lambda_{i,j}} a^j (\ad)^i \, . 
		\end{equation}
		Note that any monomial in $a,\ad$ of degree at most $d_H$ can be achieved from the representation above thanks to the CCR.
		
		\begin{lem}\label{lem:l-diss-hamiltonian}
			Let $\cL_l\coloneqq \cL[a^l-\alpha^l]$, $\alpha\in\mathbb{C}$, be the $l$-photon dissipation and $H$ as in \eqref{Hpolyrep}. Then, for all states $\rho\in\cT_f$
			\begin{equation}
				\begin{aligned}
					\tr[(\cL_l+\cH[H])(\rho)(N+\1)^{k/2}] &\leq-\frac{l}{2}\,\tr[\rho(N+\1)^{k/2}]+\frac{l}{2}\mu_k\,.\label{eqdiffLH}
				\end{aligned}
			\end{equation}
			for $\mu_k,\nu\geq1$ defined by 
			\begin{equation*}
				\mu_k=c^\nu\left(\frac{(\nu-1)^{\nu-1}}{\nu^\nu}\right)\quad\text{with}\quad c={(l+1)l}+2|\alpha|^lkl^{k/2-1}\sqrt{l!}+\Lambda(2l)^{k/2}\sqrt{(2l)!}\,,\quad\nu=l+\frac{k}{2}-1\,.
			\end{equation*}
			Therefore, $\cL_l+\cH[H]$ generates a Sobolev and positivity preserving quantum Markov semigroup which satisfies for all states $\rho\in W^{k,1}$
			\begin{equation}\label{eqintegratedLH}
				\|e^{t(\cL_l+\cH[H])}(\rho)\|_{W^{k,1}}\leq\max\Big\{\|\rho\|_{W^{k,1}},\mu_k\Big\}\,.
			\end{equation}
		\end{lem}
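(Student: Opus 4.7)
The plan is to mimic the proof of Lemma \ref{lem:l-diss} and carefully quantify the additional contribution from the Hamiltonian commutator, exploiting the crucial degree gap between $\cL_l$ and $\cH[H]$. Since $H$ has degree at most $2(l-1)$ while the jump operator $a^l-\alpha^l$ has degree $2l$, the commutator $[(N+\1)^{k/2},H]$ produces operators of effective order at most $l+\tfrac{k}{2}-2$ in the number operator, strictly below the order $l+\tfrac{k}{2}-1$ of the leading negative term already produced by $\cL_l$ in \eqref{eq-ex:l-dissipation-upper-bound}. This gap will allow the dissipation to absorb the Hamiltonian perturbation in the differential inequality up to an additive constant.

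For the dissipative part one directly reuses \eqref{eq-ex:l-dissipation-upper-bound}. For the Hamiltonian part, write $\tr[\cH[H](\rho)(N+\1)^{k/2}]=-i\,\tr[\rho\,[(N+\1)^{k/2},H]]$, which is real by self-adjointness of $H$ and $(N+\1)^{k/2}$. Expanding $H$ via \eqref{Hpolyrep} and using the shift relations \eqref{eq:symmetry-function} gives the monomial identity
\begin{equation*}
[(N+\1)^{k/2}, a^i(\ad)^j] = a^i(\ad)^j\,\bigl[(N+(j-i+1)\1)^{k/2}-(N+\1)^{k/2}\bigr].
\end{equation*}
A mean-value estimate bounds the diagonal remainder by a constant times $(N+\1)^{k/2-1}$, and the two-point Hamiltonian bound used in the $\alpha^l$-estimate inside the proof of Lemma \ref{lem:l-diss} applies mutatis mutandis to each hermitian pair $\lambda_{i,j}a^i(\ad)^j+\overline{\lambda_{i,j}}a^j(\ad)^i$ with $i+j\le 2(l-1)$, converting the resulting expression into an operator inequality controlled by $(N+\1)^{l+k/2-2}$. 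Summing over the finitely many monomials in $H$ and collecting combinatorial factors into $\Lambda(2l)^{k/2}\sqrt{(2l)!}$ yields
\begin{equation*}
\tr[\cH[H](\rho)(N+\1)^{k/2}]\le \tfrac{l}{2}\,\Lambda(2l)^{k/2}\sqrt{(2l)!}\,\tr[\rho(N+\1)^{l+k/2-2}].
\end{equation*}

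Adding this to \eqref{eq-ex:l-dissipation-upper-bound} produces $-l\,\tr[\rho(N+\1)^{l+k/2-1}]+\tfrac{l}{2}c\,\tr[\rho(N+\1)^{l+k/2-2}]$ with $c$ exactly as stated. Applying, spectrally on $N$, the scalar optimization $\sup_{x\ge 0}(-x^\nu+cx^{\nu-1})=c^\nu(\nu-1)^{\nu-1}/\nu^\nu$ with $\nu=l+\tfrac{k}{2}-1$ absorbs the subleading positive term into the additive constant $\tfrac{l}{2}\mu_k$ plus half of the leading negative term, and the operator inequality $(N+\1)^{l+k/2-1}\ge (N+\1)^{k/2}$ (valid since $l\ge 2$) then yields \eqref{eqdiffLH}. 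The integrated bound \eqref{eqintegratedLH} follows by applying Proposition \ref{prop-ex:uniformly-bounded-semigroup} to this differential inequality.

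The delicate step is the Hamiltonian estimate: one must produce a commutator bound with an explicit constant compatible with $\Lambda(2l)^{k/2}\sqrt{(2l)!}$ while keeping the operator order strictly below $l+\tfrac{k}{2}-1$, so that the dissipative leading term can genuinely absorb it. This hinges on a uniform mean-value estimate for $(N+m\1)^{k/2}-(N+\1)^{k/2}$ across all shifts $|m|\le 2l-2$ appearing in the expansion of $H$, combined with the operator bookkeeping for $a^i(\ad)^j$ on weighted spaces already exploited in the proof of Lemma \ref{lem:l-diss}. Everything else — linearity of the trace, the scalar optimization, and the appeal to Proposition \ref{prop-ex:uniformly-bounded-semigroup} — is routine.
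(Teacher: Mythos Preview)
Your proposal is correct and follows essentially the same approach as the paper: both reuse the dissipative bound \eqref{eq-ex:l-dissipation-upper-bound}, compute $i[f(N),H]$ via the shift relations \eqref{eq:symmetry-function}, control each hermitian monomial pair by the two-point bound (Lemma \ref{lem:two-point-hamiltonian-bound}) together with the $g_l$-estimates of Lemma \ref{lem:upper-lower-bound-gl} to obtain a term of order $(N+\1)^{l+k/2-2}$, and then absorb this subleading contribution via the scalar optimization and Proposition \ref{prop-ex:uniformly-bounded-semigroup}. The only cosmetic difference is that the paper works in the $(\ad)^ia^j$ normal form rather than your $a^i(\ad)^j$ form, and makes the appearance of the $g_r$ functions explicit where you invoke a ``mean-value estimate''; the combinatorial bookkeeping leading to the constant $\Lambda(2l)^{k/2}\sqrt{(2l)!}$ is the same in both.
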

		\begin{proof}
			We reuse the bound given in \Cref{eq-ex:l-dissipation-upper-bound}:
			\begin{equation*}
				\begin{aligned}
					\tr[\rho\cL_l^\dagger(f(N))]\leq-l\tr\big[\rho\,(N+\1)^{l+k/2-1}\,\big]+\frac{l}{2}\Delta_l\,\tr\big[\rho\,(N+\1 )^{l+k/2-2}g_l(N)\big]\,,
				\end{aligned}
			\end{equation*}
			where $f(x)=(x+1)^{k/2} 1_{x\ge -1}$ and $\Delta_l={(l+1)l}+2|\alpha|^lkl^{k/2 - 1}\sqrt{l!}$\,. To upper bound
			\begin{equation*}
				\begin{aligned}
					\tr[\cH[H](\rho)(N+\1)^{k/2}]&=i\tr[{\rho[(N+\1)^{k/2},H]}]\,,
				\end{aligned}
			\end{equation*}
			we define $g_u$ similarly to \Cref{eq:f-g-l-function} by 
			\begin{equation*}
				g_u(x) = \begin{cases}
					f(x) - f(x - u) & x \ge u-1;\\
					f(x) & u-1 > x \ge 0;\\
					0 & 0 > x\,.
				\end{cases}
			\end{equation*}
			For $d_H=0$ the bound is trivial, so we assume $d_H\geq1$. Then, we compute
			\begin{equation*}
				\begin{aligned}
					i[&f(N),H]\\
					&=i\sum_{\substack{0\leq j< i\\0<i+j\leq d_H}}f(N)(\lambda_{i,j}(\ad)^ia^j+\overline{\lambda_{i,j}}(\ad)^ja^i)-(\lambda_{i,j}(\ad)^ia^j+\overline{\lambda_{i,j}}(\ad)^ja^i)f(N)\\
					&=i\sum_{\substack{0\leq j< i\\0<i+j\leq d_H}}\lambda_{i,j}f(N)N[-i+1:-i+j](\ad)^{i-j}+\overline{\lambda_{i, j}}a^{i-j}f(N-i+j)N[-i+1:-i+j]\\
					&\qquad\qquad-\lambda_{i,j}N[-i+1:-i+j]f(N-i+j)(\ad)^{i-j}-\overline{\lambda_{i, j}}a^{i-j}N[-i+1:-i+j]f(N)\\
					&=i\sum_{\substack{0< r\leq i\\0<2i-r\leq d_H}}-\overline{\lambda_{i,i-r}}a^{r}N[-i+1:-r]g_r(N)+\lambda_{i,i-r}g_r(N)N[-i+1:-r](\ad)^{r}\\
					&\overset{(1)}{\leq} \sum_{\substack{0< r\leq i\\0<2i-r\leq d_H}}2|\lambda_{i,i-r}|\sqrt{(N+\1)\cdots(N+r\1)}g_{r}(N+r\1)N[r-i+1:0]\\
					&\overset{(2)}{\leq} \sum_{\substack{0< r\leq i\\0<2i-r\leq d_H}}2\sqrt{r!}|\lambda_{i,i-r}|g_{r}(N+r\1)(N+\1)^{i-r/2}\\
				\end{aligned}
			\end{equation*}
			\begin{equation*}
				\begin{aligned}
					i\tr[[H,\rho](N+\1)^{k/2}]&\leq 2\Lambda\sum_{i=1}^{d_H}\sum_{r=1}^i\sqrt{r!}\tr[\rho g_{r}(N+r)(N+\1)^{i-r/2}]\\
					&\overset{(3)}{\leq}2\Lambda\sum_{i=1}^{d_H}\sum_{r=1}^i\sqrt{r!}r^{k/2-1}\tr[\rho (N+\1)^{k/2+i-r/2-1}]\\
					&\leq \Lambda (d_H + 1)d_H\sqrt{d_H!}d_H^{k/2-1}\tr[\rho (N+\1)^{k/2+d_H/2-1}]\,,
				\end{aligned}
			\end{equation*}
			where we used Lemma \ref{lem:upper-lower-bound-gl} in $(3)$. As the above function is monotone in $d_H$ we can w.l.o.g assume $d_H=2(l-1)$ and conclude
			\begin{equation*}
				\begin{aligned}
					\tr[(\cL_l+\cH[H])(f(N))]&\leq-l\tr\big[\rho\,(N+\1)^{l+k/2-1}\,\big]\\
					&\qquad\qquad+\frac{l}{2}\left(\Delta_l+\Lambda(2l)^{k/2}\sqrt{(2l)!}\right)\tr[\rho (N)(N+\1)^{l+k/2-2}]\,.
				\end{aligned}
			\end{equation*}
			The same optimization as in \Cref{eq:optimization} provides inequality \eqref{eqdiffLH}. Inequality \eqref{eqintegratedLH} follows after invoking Proposition \ref{prop-ex:uniformly-bounded-semigroup}.
		\end{proof}
		The case $l=2$ deals with the sum of a $2$-photon dissipation with the displacement operator $i[a+\ad,\cdot]$ used in the construction of the $Z(\theta)$-gate \cite{Mirrahimi.2014}. In this specific case, one can improve the error bound in the following way:
		\begin{lem}[$Z(\theta)$-gate]\label{lem:2-photon-diss-displacement}
			For any state $\rho\in\cT_f$, $\alpha\in\mathbb{C}$, $\varepsilon>0$ and $k\in\N$ 
			\begin{equation*}
				\tr[(\varepsilon\cH[a+\ad]+\cL[a^2+\alpha^2])(\rho)(N+\1)^{k/2}]\leq -\,\tr\big[\rho\, (N+\1)^{k/2}\big]+\mu_k\,.
			\end{equation*}
			where $\mu_k\geq0$ is defined by 
			\begin{equation*}
				\mu_k=(\Delta_2+\varepsilon4k)^\nu\left(\frac{(\nu-1)^{\nu-1}}{\nu^\nu}\right)\qquad\text{with}\qquad\nu=\frac{k}{2}+1\,.
			\end{equation*}
			Therefore, $\varepsilon\cH[a+\ad]+\cL[a^2+\alpha^2]$ generates a Sobolev and positivity preserving quantum Markov semigroup which satisfies for all states $\rho\in W^{k,1}$
			\begin{equation}\label{etlboundfff}
				\|e^{t(\varepsilon\cH[a+\ad]+\cL[a^2+\alpha^2])}(\rho)\|_{W^{k,1}}\leq\max\Big\{\|\rho\|_{W^{k,1}},\mu_k\Big\}\,.
			\end{equation}
		\end{lem}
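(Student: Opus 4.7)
The plan is to follow the template of Lemma~\ref{lem:l-diss-hamiltonian} specialised to $l=2$ and $H = a+a^\dagger$, while exploiting $\deg(H) = 1 < 2(l-1) = 2$ to improve the coefficient in front of $\varepsilon$ from exponential to linear in $k$. First I reuse the differential inequality \eqref{eq-ex:l-dissipation-upper-bound} at $l=2$: with $f(N)=(N+\1)^{k/2}$,
\[
\tr\bigl[\cL[a^2+\alpha^2](\rho)\,f(N)\bigr] \;\le\; -2\,\tr[\rho(N+\1)^{k/2+1}] \;+\; \Delta_2\,\tr[\rho(N+\1)^{k/2}],
\]
with $\Delta_2$ as in Corollary~\ref{lem:cat-identity}.

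For the Hamiltonian part, the commutation relations \eqref{eq:symmetry-function} and the telescope $g_1(x)=f(x)-f(x-1)$ from \eqref{eq:f-g-l-function} give
\[
i[a+a^\dagger,f(N)] \;=\; i\bigl(g_1(N+\1)\,a - a^\dagger g_1(N+\1)\bigr),
\]
which is self-adjoint. An operator Cauchy--Schwarz applied to $\bigl(\sqrt{\epsilon}\,g_1(N+\1)\pm i\epsilon^{-1/2}a\bigr)^{\!\dagger}\bigl(\sqrt{\epsilon}\,g_1(N+\1)\pm i\epsilon^{-1/2}a\bigr)\ge 0$ yields, for any positive $\epsilon$ commuting with $N$,
\[
\bigl|\,i[a+a^\dagger,f(N)]\,\bigr| \;\le\; \epsilon\,g_1(N+\1)^2 \;+\; \epsilon^{-1}N.
\]
Combined with the pointwise bound $g_1(N+\1)\le \tfrac{k}{2}(N+2)^{k/2-1}$ from Lemma~\ref{lem:upper-lower-bound-gl} and the elementary estimate $(N+2)\le 2(N+\1)$, a careful $N$-dependent choice of $\epsilon$ --- balancing the two right-hand terms so that both match the growth $(N+\1)^{k/2}$ --- leads to
\[
\bigl|\tr\bigl[\cH[a+a^\dagger](\rho)\,f(N)\bigr]\bigr| \;\le\; 4k\,\tr[\rho(N+\1)^{k/2}].
\]

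Adding the two estimates I obtain
\[
\tr\bigl[(\varepsilon\cH[a+a^\dagger]+\cL[a^2+\alpha^2])(\rho)(N+\1)^{k/2}\bigr] \;\le\; -2\,\tr[\rho(N+\1)^{k/2+1}] + (\Delta_2+4k\varepsilon)\,\tr[\rho(N+\1)^{k/2}].
\]
Half of the leading negative term is then used to absorb the positive remainder by the scalar optimisation $\sup_{x\ge 0}(-x^\nu+cx^{\nu-1}) = c^\nu(\nu-1)^{\nu-1}/\nu^\nu$ with $\nu=k/2+1$ and $c=\Delta_2+4k\varepsilon$, exactly as in the proofs of Lemmas~\ref{lem:l-diss} and \ref{lem:l-diss-hamiltonian}. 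This produces the stated differential inequality, and Proposition~\ref{prop-ex:uniformly-bounded-semigroup} then delivers the semigroup bound \eqref{etlboundfff} as well as Sobolev preservation.

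The main obstacle is the quantitative step on the Hamiltonian commutator: a direct invocation of Lemma~\ref{lem:l-diss-hamiltonian} produces a constant growing like $k\cdot 2^{k/2}\sqrt{(2l)!}$, which would still establish Sobolev preservation but substantially weaken the downstream perturbation estimates. The tighter linear-in-$k$ constant $4k$ requires the operator Cauchy--Schwarz above with an $N$-dependent parameter $\epsilon$ (rather than a scalar one), so that the two bounding terms $\epsilon\,g_1(N+\1)^2$ and $\epsilon^{-1}N$ are tuned to both live at order $(N+\1)^{k/2}$ uniformly in $N$.
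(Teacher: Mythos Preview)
Your proposal is correct and follows the same template as the paper: reuse \eqref{eq-ex:l-dissipation-upper-bound} at $l=2$, bound the Hamiltonian commutator separately, combine, and run the scalar optimisation. The only difference is in how the commutator is handled. The paper does \emph{not} go through the general Lemma~\ref{lem:l-diss-hamiltonian}; it computes $i[f(N),a+a^\dagger]=-ia\,g_1(N)+ig_1(N)a^\dagger$ directly and applies Lemma~\ref{lem:two-point-hamiltonian-bound}, obtaining $\le 2\,g_1(N+\1)\sqrt{N+\1}$ and then $\le 2k\,(N+\1)^{k/2-1/2}$ via Lemma~\ref{lem:upper-lower-bound-gl}. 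Your operator Cauchy--Schwarz with $N$-dependent $\epsilon$ is essentially a rederivation of Lemma~\ref{lem:two-point-hamiltonian-bound} in this special case: the optimal pointwise choice of $\epsilon$ collapses the right-hand side by AM--GM to $2\,g_1(N+\1)\sqrt{N}$, so the two routes coincide up to an inconsequential shift by one. The ``main obstacle'' you flag is therefore illusory --- the linear-in-$k$ constant falls out directly from the paper's one-line computation, without any parameter tuning, and in fact the paper's own proof produces the coefficient $2k$ rather than the $4k$ appearing in the statement. One small wrinkle in your version: with $\epsilon=\epsilon(N)$ placed to the left of $a$, the quadratic term $a^\dagger\epsilon(N)^{-1}a$ equals $\epsilon(N-\1)^{-1}N$ rather than $\epsilon(N)^{-1}N$; this shift is harmless for the final bound but should be tracked.
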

		\begin{proof}
			By \Cref{eq-ex:l-dissipation-upper-bound} in Lemma \ref{lem:l-diss},
			\begin{equation*}
				\begin{aligned}
					\tr\big[\cL[a^2-\alpha^2](\rho)(f(N))\big]&\le -2\tr\big[\rho\,(N+\1)^{k/2+1}\,\big]\\
					&\qquad+\biggl(\underbrace{6+2|\alpha|^2k2^{k/2 - 1}\sqrt{2}}_{\eqqcolon \Delta_2}\biggr)\,\tr\big[\rho\,(N+\1 )^{k/2}\big]
				\end{aligned}
			\end{equation*}
			where $f(x)=(x+1)^{k/2} 1_{x\ge -1}$. Next, by \Cref{eq:symmetry-function}, Lemma \ref{lem:two-point-hamiltonian-bound} and Lemma \ref{lem:upper-lower-bound-gl}, we have that
			\begin{equation}\label{eq:sobolevstab-displacement}
				\begin{aligned}
					\tr[\cH[a+\ad](\rho)f(N)]&=i\tr[\rho\left(f(N)(a+\ad)-(a+\ad)f(N)\right)]\\
					&=\tr[\rho\left(-iag_1(N)+ig_1(N)\ad \right)]\\
					&\leq 2\,\tr[\rho g_1(N+\1 )\sqrt{N+\1}]\\
					&\leq 2k\tr[\rho(N+\1)^{k/2-\frac{1}{2}}]\,,
				\end{aligned}
			\end{equation}
			where we recall that
			\begin{equation}\label{eq-appx:f-g-l-functionlequal1}
				g_1(x) = \begin{cases}
					f(x) - f(x - 1) & x \ge 0;\\
					0 & 0 > x\,.
				\end{cases}
			\end{equation}
			Thus,
			\begin{equation*}
				\begin{aligned}
					\tr\big[(\epsilon\cH[a+a^\dagger]&+\cL[a^2-\alpha^2])(\rho)(f(N))\big]\\
                    &\le -2\tr\big[\rho\,(N+\1)^{k/2+1}\,\big]+\left(\Delta_2+\varepsilon2k\right)\,\tr\big[\rho\,(N+\1 )^{k/2}\big].
				\end{aligned}
			\end{equation*}
			for $\nu\geq1$ defined as 
			\begin{equation*}
				\nu=\frac{k}{2}+1
			\end{equation*}
			ends the proof of the differential upper bound, and \eqref{etlboundfff} follows from Proposition \ref{prop-ex:uniformly-bounded-semigroup}.
		\end{proof}
		Interestingly, \Cref{eq:sobolevstab-displacement} shows directly that the displacement operator satisfies Assumption \ref{assum:sobolev-stability} so that Theorem \ref{thm:generation-theorem} can be applied.
		\begin{cor}\label{lem:z-theta-energetic-stab}
			For any state $\rho\in\cT_f$, $\alpha\in\mathbb{C}$, $\varepsilon>0$ and $k\in\N$ 
			\begin{equation*}
				\tr\Bigl[\cH[a+\ad](N+\1)^{k/2}\Bigr]\leq 2k\tr\big[\rho\, (N+\1)^{k/2}\big]
			\end{equation*}
			Therefore, $\cH[a+\ad]$ generates a Sobolev and positivity-preserving quantum Markov semigroup.
		\end{cor}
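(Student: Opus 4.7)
The key inequality here is already derived inside the proof of Lemma \ref{lem:2-photon-diss-displacement}, in equation \eqref{eq:sobolevstab-displacement}. The corollary simply isolates the displacement-operator part of that computation and invokes Theorem \ref{thm:generation-theorem}, so the plan is essentially to reproduce that block with only a trivial monotonicity step at the end.

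First, I would set $f(N) = (N+\1)^{k/2}$ and expand
\begin{equation*}
    \tr[\cH[a+\ad](\rho)\, f(N)] \;=\; i\tr[\rho\bigl((a+\ad)f(N) - f(N)(a+\ad)\bigr)]
\end{equation*}
by cyclicity. Using the commutation identities in \eqref{eq:symmetry-function} to push $f(N)$ past $a$ and $\ad$, and introducing the difference $g_1(x)=f(x)-f(x-1)1_{x\ge 1}$ from \eqref{eq-appx:f-g-l-functionlequal1}, the right-hand side rewrites as a trace involving $a\, g_1(N)$ and $g_1(N)\, \ad$, exactly as in the third line of \eqref{eq:sobolevstab-displacement}.

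Second, I would apply Lemma \ref{lem:two-point-hamiltonian-bound} to dominate this expression by $2\,\tr[\rho\, g_1(N+\1)\sqrt{N+\1}]$, and then invoke Lemma \ref{lem:upper-lower-bound-gl} to bound $g_1(N+\1) \le k\,(N+\1)^{k/2-1}$. Combining these estimates yields
\begin{equation*}
    \tr\bigl[\cH[a+\ad](\rho)\,(N+\1)^{k/2}\bigr] \;\le\; 2k\,\tr[\rho\,(N+\1)^{k/2 - 1/2}] \;\le\; 2k\,\tr[\rho\,(N+\1)^{k/2}],
\end{equation*}
where the last step is just monotonicity of $x\mapsto x^{k/2-1/2}$ applied spectrally to $N+\1\ge \1$.

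Third, the generation statement is immediate from Theorem \ref{thm:generation-theorem}: Assumption \ref{assum:finite-degree} is trivially satisfied because $H=a+\ad$ is symmetric of degree one and there are no jump operators (so $G=-iH$ has degree one), while Assumption \ref{assum:sobolev-stability} is precisely the inequality just proved, with $\omega_{k_r}=2k_r$ for any divergent sequence $\{k_r\}_{r\in\N}$. No real obstacle arises: the only care needed is the relabelling of exponents in the last monotonicity step, since \eqref{eq:sobolevstab-displacement} produced a bound in terms of $(N+\1)^{k/2-1/2}$ rather than $(N+\1)^{k/2}$.
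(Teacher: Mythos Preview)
Your proposal is correct and matches the paper's approach exactly: the paper gives no separate proof for this corollary, remarking only that \eqref{eq:sobolevstab-displacement} ``shows directly that the displacement operator satisfies Assumption \ref{assum:sobolev-stability} so that Theorem \ref{thm:generation-theorem} can be applied.'' You have simply made this explicit, including the trivial monotonicity step $(N+\1)^{k/2-1/2}\le (N+\1)^{k/2}$ needed to cast the bound in the precise form of Assumption \ref{assum:sobolev-stability}.
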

		Nevertheless, the above result is not of the form given in \Cref{eq:sobolev-preserving-stab-constant} so the improvement \Cref{eq:improved-semigroup-bound} is not applicable. In the context of bosonic error correction, the projective Zeno effect wr.r.t.~the semigroup above and the projection onto the code space (\ref{eq:codespace}) helps to understand the $Z(\theta)$-gate mathematically \cite{Mirrahimi.2014}. A quantitative convergence rate can be proven via \cite{Moebus.2022}. 
		
		\begin{prop}[CNOT-gate]\label{lem:assum2-CNOT}
			For all $\k\coloneqq (k_1,k_2) \in \N^2$ such that
			\begin{equation*}
				32|\alpha|k_12^{k_1/2-1/2}\leq k_2\,,
			\end{equation*}
			there exists a constant $\mu_{\k}$ such that for all states $\rho \in \cT_f$
			\begin{equation*}
				\begin{aligned}
					&\tr[\left(\cL[a^2-\alpha^2]+\cL[b^2-\alpha^2-\frac{\alpha}{2}(1-e^{2i\pi t/T})(a-\alpha)]\right)(\rho)(N_1 + \1)^{k_1/2}(N_2 + \1)^{k_2/2}]\\
					&\qquad\qquad\qquad\qquad\qquad\qquad\qquad\qquad\leq-\frac{1+k_2}{8}\tr[\rho\Bigl((N_1+\1)^{k_1/2}(N_2+\1)^{k_2/2}\Bigr)]+\mu_{\k}\,.
				\end{aligned}
			\end{equation*}
			Therefore, the CNOT-gate generates a Sobolev and positivity preserving quantum Markov semigroup which satisfies for all states $\rho\in W^{\k,1}$
			\begin{equation}\label{lastclaimsobolevbound}
				\|\cP^{\operatorname{CNOT}}_{t,t_0}(\rho)\|_{W^{\k,1}}\leq\max\left\{\|\rho\|_{W^{\k,1}},\frac{8\mu_{\k}}{1+k_2}\right\}\,.
			\end{equation}
			For a general $\k \in \R_+^2$ and $x \in W^{\k, 1}$ one obtains
			\begin{equation*}
				\|\cP^{\operatorname{CNOT}}_{t,t_0}(x)\|_{W^{\k_r,1}}\leq \gamma_k \|x\|_{W^{\k_r,1}} \, ,
			\end{equation*}
			where $\gamma_{\k} = \max\{1,\frac{8\mu_{\k}}{1+k_2}\}$ for $\k \in \{\k_r\}_{r \in \N}$ and an interpolated constant in all other cases. 
		\end{prop}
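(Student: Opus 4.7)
The plan is to verify the differential inequality \eqref{eq:examples-assum2-step1} for the CNOT generator uniformly in $t$; once that is in place, the claim \eqref{lastclaimsobolevbound} follows from the time-dependent, multi-mode analogue of Proposition \ref{prop-ex:uniformly-bounded-semigroup}, and the extension to general $\k\in\R_+^2$ is a direct application of the multi-mode Stein--Weiss interpolation theorem. Writing $\beta(t)\coloneqq\tfrac{\alpha}{2}(1-e^{2\pi i t/T})$, so that $|\beta(t)|\leq|\alpha|$, I decompose $L_2(t)=b^2-\alpha^2-\beta(t)(a-\alpha)=b^2+\tilde\gamma(t)$ with $\tilde\gamma(t)\coloneqq -\alpha^2+\alpha\beta(t)-\beta(t)a$ acting only on mode one. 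Since $b^2$ and $\tilde\gamma$ act on disjoint modes, the dissipator splits as $\cL[L_2]=\cL[b^2]+\cL[\tilde\gamma]+\Delta$, where
\begin{equation*}
\Delta(\rho)=b^2\rho\tilde\gamma^\dagger+\tilde\gamma\rho b^{\dagger 2}-\tfrac12\{\tilde\gamma^\dagger b^2+b^{\dagger 2}\tilde\gamma,\rho\}.
\end{equation*}

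Fixing $X\coloneqq(N_1+\1)^{k_1/2}(N_2+\1)^{k_2/2}$, three of the four pieces are standard. Remark \ref{rmk:l-diss-multi-mode} with $l=2$ applied to mode one gives $\tr[\cL[a^2-\alpha^2](\rho)X]\leq -\tr[\rho(N_1+\1)X]+\mu^{(2)}_{k_1}\tr[\rho(N_2+\1)^{k_2/2}]$. Mirroring the proof of \eqref{eq-ex:l-dissipation-upper-bound} on mode two with $\alpha=0$ gives $\tr[\cL[b^2](\rho)X]\leq -2\tr[\rho(N_2+\1)X]+6\tr[\rho X]$. Since $\tilde\gamma$ is affine in $a$, one may write $\cL[\tilde\gamma]=\cH[h]+|\beta|^2\cL[a]$ for a Hamiltonian $h$ of degree one, and then \eqref{eq:sobolevstab-displacement} together with Corollary \ref{lem:z-theta-energetic-stab} bound $\tr[\cL[\tilde\gamma](\rho)X]$ by a polynomial in $|\alpha|$ and $k_1$ times $\tr[\rho(N_1+\1)^{k_1/2}(N_2+\1)^{k_2/2}]$. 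The decisive term is $\tr[\Delta(\rho)X]$: using cyclicity and \eqref{eq:symmetry-function} on mode two exposes the increment function $g_2$ from \eqref{eq-appx:f-g-l-function}, and combining Lemma \ref{lem:two-point-hamiltonian-bound} with Lemma \ref{lem:upper-lower-bound-gl} produces a bound of the form
\begin{equation*}
C\,|\alpha|\,k_2\,2^{k_2/2-1}\,\tr\bigl[\rho\sqrt{N_1+\1}\sqrt{(N_2+\1)(N_2+2\1)}\,(N_1+\1)^{k_1/2}(N_2+\1)^{(k_2-1)/2}\bigr].
\end{equation*}
A Young-type split $\sqrt{(N_1+\1)(N_2+\1)}\leq\tfrac12\bigl((N_1+\1)+(N_2+\1)\bigr)$ then decouples the two modes, feeding each half into one of the two leading decays.

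The growth hypothesis $32|\alpha|k_1 2^{k_1/2-1/2}\leq k_2$ is precisely the inequality that makes the mode-one half of the Young split dominated by the decay $-\tr[\rho(N_1+\1)X]$ coming from $\cL[a^2-\alpha^2]$, while the mode-two half is absorbed into a quarter of $-2\tr[\rho(N_2+\1)X]$. What remains is of the form $-c_\k\tr[\rho X]+(\text{lower-moment terms})$ with $c_\k\geq\tfrac{1+k_2}{8}$, and the optimization \eqref{eq:optimization} collapses the remaining moments into the constant $\mu_\k$, verifying Assumption \ref{assum:multi-mode-sobolev-stability-time-dep} uniformly in $t$. Theorem \ref{thm:multi-mode-timedep-generation-theorem} then produces the evolution system $\cP^{\operatorname{CNOT}}_{t,t_0}$, and the time-dependent, multi-mode variant of Proposition \ref{prop-ex:uniformly-bounded-semigroup} gives \eqref{lastclaimsobolevbound}. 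I expect the main obstacle to be the numerical bookkeeping in both the $g_2$-estimate for $\Delta$ and the Young split: tracking every constant carefully is what pins down the precise hypothesis $32|\alpha|k_1 2^{k_1/2-1/2}\leq k_2$ and the exact coefficient $(1+k_2)/8$, rather than weaker variants of either.
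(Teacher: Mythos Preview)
Your decomposition $\cL[L_2]=\cL[b^2]+\cL[\tilde\gamma]+\Delta$ is a natural alternative to the paper's direct Fock-basis computation, but as written the estimates do not close, and the gaps are structural rather than bookkeeping.

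First, the displayed bound for $\Delta$ carries a prefactor $k_2\,2^{k_2/2-1}$: applying Lemma~\ref{lem:two-point-hamiltonian-bound} forces the shift $N_2\mapsto N_2+2$, and then the crude step $g_2(N_2+2)\le k_2\,2^{k_2/2-1}(N_2+\1)^{k_2/2-1}$ (as in the proof of Lemma~\ref{lem:l-diss}) produces an exponential in $k_2$. Second, quoting \eqref{eq-ex:l-dissipation-upper-bound} for $\cL[b^2]$ gives a mode-two decay with coefficient $2$, not $k_2$. These two choices together make the absorption impossible: after your Young split, the mode-one half of the cross term has coefficient $\tfrac{1}{2}C|\alpha|k_2\,2^{k_2/2-1}$, which must be dominated by the coefficient $1$ of the mode-one decay from $\cL[a^2-\alpha^2]$. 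This fails for large $k_2$ and is unrelated to the stated hypothesis $32|\alpha|k_1 2^{k_1/2-1/2}\le k_2$, which places the exponential on $k_1$. So the claim that the hypothesis is ``precisely'' what makes the Young split work cannot be right with these estimates.

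The paper takes a genuinely different route: it expands $\cL[b^2+za+w]^\dagger\bigl((N_1+\1)^{k_1/2}(N_2+\1)^{k_2/2}\bigr)$ directly in the two-mode Fock basis, regrouping the seventeen resulting terms into a diagonal part $\hat C_0$ and three self-adjoint off-diagonal families $\hat C_i+\hat C_{i'}$, each supported on disjoint $2\times 2$ blocks. For the diagonal it uses the sharp \emph{lower} bound $g_{2}(n_2)\ge k_2(n_2+1)^{k_2/2-1}-O((n_2+1)^{k_2/2-2})$ from Lemma~\ref{lem:upper-lower-bound-gl}, which produces a mode-two decay $-\tfrac{k_2}{2}f_1(n_1)(n_2+1)^{k_2/2+1}$ with coefficient $k_2$ rather than $2$. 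Each off-diagonal block is then controlled via the elementary eigenvalue estimate $-x_1\ketbra{e_1}{e_1}-x_2\ketbra{e_2}{e_2}+y\ketbra{e_2}{e_1}+\bar y\ketbra{e_1}{e_2}\le(-\min\{x_1,x_2\}+|y|)\,\1$. The dangerous contribution to $|y_1|$ is $|z|k_1(n_1+2)^{k_1/2-1/2}(n_2-1)^{k_2/2+1}$: the shift lands on mode~$1$ through $g_{1,1}(n_1+1)$, giving the factor $2^{k_1/2-1/2}$, and it is the $k_2$-strength diagonal decay that absorbs it --- this is exactly where the hypothesis enters. If you want to rescue the operator-decomposition approach, you must (i) extract the $k_2$ coefficient in the mode-two decay via the sharp lower bound on $g_2$, and (ii) reorganise the $\Delta$-estimate so that the shift, and hence the exponential, sits on mode~$1$ rather than mode~$2$.
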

		\begin{proof}
			We denote $f(x_1,x_2)=f_1(x_1)f(x_2)$ with $f_{1}(x_{1})=(x_{1}+1)^{k_{1}/2} 1_{x_{1}\ge -1}$, $f_{2}(x_{2})=(x_{2}+1)^{k_{2}/2} 1_{x_{2}\ge -1}$, and rewrite the CNOT-generator as
			\begin{equation*}
				\begin{aligned}
					\cL\coloneqq \cL[a^2-\alpha^2]&+\cL[b^2-\alpha^2-\frac{\alpha}{2}(1-e^{2i\pi t/T})(a-\alpha)]=\cL[a^2-\alpha^2]+\cL[b^2+za+w]
				\end{aligned}
			\end{equation*}
			where $z\coloneqq-\frac{\alpha}{2}(1-e^{2i\pi t/T})$ and $w\coloneqq -\alpha(z+\alpha)$. As in the previous proofs, we investigate the action of the adjoint on $f(N)\coloneqq f(N_1,N_2)$
			\begin{equation*}
				\tr[\cL(\rho)f(N)]=\tr[\rho\cL^\dagger(f(N))]\,.
			\end{equation*}
			We first focus on the second Lindbladian $\cL[b^2+za+w]$: we first consider, for $n\coloneqq (n_1,n_2)\in\N^2$, 
			\begin{equation*}
				\begin{aligned}
					&\cL[b^2+za+w]^\dagger(\ketbra{n}{n})\\
					&\quad =((\bd)^2+\overline{z}\ad+\overline{w})\ketbra{n}{n}(b^2+za+w)-\frac{1}{2}\left\{((\bd)^2+\overline{z}\ad+\overline{w})(b^2+za+w),\ketbra{n}{n}\right\}\\
					&\quad =F_1(n)\ketbra{n}{n}+F_2(n) \ketbra{n_1,n_2+2}{n_1,n_2+2}+F_3(n) \ketbra{n_1+1,n_2}{n_1+1,n_2}\\
					&\quad\qquad + \big(F_4(n) \ketbra{n_1,n_2+2}{n_1+1,n_2}+h.c.\big)+\big(F_5(n) \ketbra{n_1+1,n_2-2}{n} + h.c.\big) \\
					&\quad\qquad +\big(F_6(n)\ketbra{n_1,n_2-2}{n}+h.c.\big) +\big( F_7(n)\ketbra{n_1-1,n_2}{n} + h.c.\big)\\
					&\quad\qquad + \big(F_8(n)\ketbra{n_1,n_2+2}{n}+h.c. \big)+\big(F_9(n)\ketbra{n_1+1,n_2}{n}+h.c. \big)  \\
					&\quad\qquad + \big(F_{10}(n)\ketbra{n_1-1,n_2+2}{n}+h.c.\big)\,,
				\end{aligned}
			\end{equation*}
			where the notation $h.c.$ above stands for Hermitian conjugate, $|n\rangle=0$ whenever $n\notin \mathbb{N}^2$ by convention, and where
			\begin{align*}
				&F_1(n)\coloneqq - n_2(n_2-1)-|z|^2n_1\\
				&F_2(n)\coloneqq (n_2+1)(n_2+2) \\
				&F_3(n)\coloneqq |z|^2(n_1+1) \\
				&F_4(n)\coloneqq  z\sqrt{(n_1+1)(n_2+1)(n_2+2)}\\
				&F_5(n)\coloneqq -\frac{1}{2}\overline{z}\sqrt{(n_1+1)n_2(n_2-1)}\\
				&F_6(n)\coloneqq  -\frac{1}{2}\overline{w}\sqrt{n_2(n_2-1)}\\
				&F_7(n)\coloneqq -\frac{1}{2}\overline{w}z\sqrt{n_1}\\
				&F_8(n)\coloneqq \frac{1}{2} w\sqrt{(n_2+1)(n_2+2)}\\
				&F_9(n)\coloneqq \frac{1}{2}\overline{z}w\sqrt{n_1+1}\\
				&F_{10}(n)\coloneqq -\frac{1}{2}z\sqrt{n_1(n_2+1)(n_2+2)}\,.
			\end{align*}
			In the next step, we regroup the $17$ terms into terms differing only by a shift:
			\medskip
			\noindent\textit{Case 0:} Diagonal terms, involving $F_1$, $F_2$ and $F_3$,
			\begin{equation*}
				\begin{aligned}
					C_0(n)\coloneqq F_1(n) \ketbra{n}{n}+F_2(n)\ketbra{n_1,n_2+2}{n_1,n_2+2} + F_3(n)\ketbra{n_1+1,n_2}{n_1+1,n_2}\,.
				\end{aligned}
			\end{equation*}
			
			\noindent\textit{Case 1:} Terms of the form $\ketbra{n_1+1,n_2-2}{n_1,n_2}$, involving $\overline{F}_4$, $F_5$ and $\overline{F}_{10}$,
			\begin{equation*}
				\begin{aligned}
					C_1(n)\coloneqq \overline{F}_4(n) &\ketbra{n_1+1,n_2}{n_1,n_2+2} \\
                    &+ F_5(n) \ketbra{n_1+1,n_2-2}{n} + \overline{F}_{10}(n) \ketbra{n}{n_1-1,n_2+2} \,.
				\end{aligned}
			\end{equation*}
			
			\noindent\textit{Case 1':} Terms of the form $\ketbra{n_1,n_2}{n_1+1,n_2-2}$, involving $F4$, $\overline{F}_5$ and $F_{10}$,
			\begin{equation*}
				\begin{aligned}
					C_{1'}(n)\coloneqq {F}_4(n) &\ketbra{n_1,n_2+2}{n_1+1,n_2}\\
                    &+ \overline{F}_5(n) \ketbra{n}{n_1+1,n_2-2} + {F}_{10}(n) \ketbra{n_1-1,n_2+2}{n} \,.
				\end{aligned}
			\end{equation*}
			
			\noindent\textit{Case 2:} Terms of the form $\ketbra{n_1,n_2-2}{n_1,n_2}$, involving $F_6$ and $\overline{F}_8$,
			\begin{equation*}
				\begin{aligned}
					C_{2}(n)\coloneqq {F}_6(n) \ketbra{n_1,n_2-2}{n}+\overline{F}_8(n)\ketbra{n}{n_1,n_2+2} \,.\\
				\end{aligned}
			\end{equation*}
			
			\noindent\textit{Case 2':} Terms of the form $\ketbra{n_1,n_2}{n_1,n_2-2}$, involving $\overline{F}_6$ and ${F}_8$,
			\begin{equation*}
				\begin{aligned}
					C_{2'}(n)\coloneqq \overline{F}_6(n) \ketbra{n}{n_1,n_2-2}+{F}_8(n) \ketbra{n_1,n_2+2}{n} \,.\\
				\end{aligned}
			\end{equation*}
			
			\noindent\textit{Case 3:} Terms of the form $\ketbra{n_1-1,n_2}{n_1,n_2}$, involving $F_7$ and $\overline{F}_9$,
			\begin{equation*}
				\begin{aligned}
					C_{3}(n)\coloneqq {F}_7(n) \ketbra{n_1-1,n_2}{n}+\overline{F}_9(n) \ketbra{n}{n_1+1,n_2} \,.\\
				\end{aligned}
			\end{equation*}
			
			\noindent\textit{Case 3':} Terms of the form $\ketbra{n_1,n_2}{n_1-1,n_2}$, involving $\overline{F}_7$ and ${F}_9$,
			\begin{equation*}
				\begin{aligned}
					C_{3'}(n)\coloneqq \overline{F}_7(n) \ketbra{n}{n_1-1,n_2}+{F}_9(n)\ketbra{n_1+1,n_2}{n} \,.\\
				\end{aligned}
			\end{equation*}
			To summarize, we have decomposed $\cL[b^2+za+w]^\dagger(\ketbra{n}{n})$ into the sum 
			\begin{align}\label{eqLnn}
				\cL[b^2+za+w]^\dagger(|n\rangle\langle n|)=C_0(n)+C_1(n)+C_{1'}(n)+C_2(n)+C_{2'}(n)+C_3(n)+C_{3'}(n)\,.
			\end{align}
			Next, we introduce the functions $g_{j,l}:\mathbb{N}\to \mathbb{R}$, $j\in\{1,2\}$, $l\in\mathbb{N}$, as 
			\begin{equation*}
				g_{j, l}(x) = \begin{cases}
					f_j(x) - f_j(x - l) & x \ge l;\\
					f_j(x) & l > x \ge 0;\\
					0 & 0 > x\,.
				\end{cases}
			\end{equation*}
			Multiplying \Cref{eqLnn} by $f(n)$ and summing over $n\in\mathbb{N}^2$, we find that 
			\begin{align*}
				\cL[b^2+za+w]^\dagger(f(N))=\hat{C}_0+\hat{C}_1+\hat{C}_{1'}+\hat{C}_2+\hat{C}_{2'} +\hat{C}_3+\hat{C}_{3'}\,,
			\end{align*}
			with 
			\begin{align*}
				&\hat{C}_0\coloneqq \sum_{n}\,C_0(n)=\sum_n -\,\Big(f_1(n_1)g_{2,2}(n_2)n_2(n_2-1)+f_2(n_2)g_{1,1}(n_1)|z|^2n_1\Big) \ketbra{n}{n}\\
				&\hat{C}_1\coloneqq \sum_{n}\,C_1(n)\\
				&\quad =\sum_{n}\,-\frac{\overline{z}}{2}\sqrt{(n_1+1)(n_2-1)n_2}\Bigl(f_1(n_1)g_{2,2}(n_2)+g_{1,1}(n_1+1)f_2(n_2-2)\Bigr)\ketbra{n_1+1,n_2-2}{n}\\
				&\hat{C}_{1'}\coloneqq \hat{C}_1^\dagger \\
				&\quad =\sum_n\,-\frac{{z}}{2}\sqrt{(n_1+1)(n_2-1)n_2}\Bigl(f_1(n_1)g_{2,2}(n_2)+g_{1,1}(n_1+1)f_2(n_2-2)\Bigr)\ketbra{n}{n_1+1,n_2-2}\\
				&\hat{C}_2\coloneqq \sum_n\,C_2(n) = \sum_n-\frac{\overline{w}}{2}\sqrt{(n_2-1)n_2}f_1(n_1)g_{2,2}(n_2)\ketbra{n_1,n_2-2}{n}\\
				&\hat{C}_{2'}\coloneqq \hat{C}_2^\dagger= \sum_n-\frac{{w}}{2}\sqrt{(n_2-1)n_2}f_1(n_1)g_{2,2}(n_2)\ketbra{n}{n_1,n_2-2}\\
				&\hat{C}_3\coloneqq \sum_n\,C_3(n) = \sum_n-\frac{\overline{w}z}{2}\sqrt{n_1}g_{1,1}(n_1)f_2(n_2)\ketbra{n_1-1,n_2}{n}\\
				&\hat{C}_{3'}\coloneqq \hat{C}_3^\dagger =\sum_n-\frac{{w}\overline{z}}{2}\sqrt{n_1}g_{1,1}(n_1)f_2(n_2)\ketbra{n}{n_1-1,n_2}\,.
			\end{align*}
			We will use an upper bound on $\hat{C}_0(n)$ in what follows:    
			\begin{equation*}
				\begin{aligned}
					&C_0(n) =- \Big(f_1(n_1)g_{2,2}(n_2)(n_2-1)n_2+|z|^2g_{1,1}(n_1)f_2(n_2)n_1\Big)\\
					&\le -\Big( f_1(n_1)g_{2,2}(n_2)1_{n_2\geq2}((n_2+1)^{2}-3(n_2+1))+|z|^2g_{1,1}(n_1)f_2(n_2)n_1\Big)\\
					&\overset{(1)}{\le} -\Bigl\{ k_2f_1(n_1)(n_2+1)^{k_2/2}1_{n_2\geq2}\Big((n_2+1)-1_{k_2\geq3}\frac{k_2}{2}-6\Big) \\
                    &\qquad+ |z|^2 1_{n_1\geq1}(n_1+1)^{k_1/2-1}f_2(n_2)n_1
					\Bigr\}\\
					&\equiv C_{0'}(n)\,,
				\end{aligned}
			\end{equation*}
			where $(1)$ follows from Lemma \ref{lem:upper-lower-bound-gl}. We denote this upper bound by $\hat{C}_{0'}=\sum_n C_{0'}(n)\ketbra{n}{n}$. Recall that, by Lemma \ref{lem:l-diss} and Remark \ref{rmk:l-diss-multi-mode},
			\begin{equation}\label{eq-ex:CNOT-diss-rep}
				\begin{aligned}
					\cL[a^2-\alpha^2]^\dagger(f(N))&\leq - \,(N_1+\1)^{k_1/2+1}(N_2+\1)^{k_2/2}
					+\mu_{k_1}^{(2)}\, (N_2+\1)^{k_2/2} \\
					&=\sum_{n} \Big(-(n_1+1)f(n)+\mu_{k_1}^{(2)}\,f_2(n_2)\Big)\ketbra{n}{n}=:\hat{C}_{4}\,,
				\end{aligned}
			\end{equation}
			where $\mu_{k_1}^{(2)}=\Delta_2^\nu\left(\frac{(\nu-1)^{\nu-1}}{\nu^\nu}\right)$ with $\nu=\frac{k_1}{2}+1$ and $\Delta_2=6+2|\alpha|^2k_12^{k_1/2-1}\sqrt{2}$. Therefore, the diagonal contribution of $\cL^\dagger(f(N))$ can be controlled by 
			\begin{equation*}
				\begin{aligned}
					( {C}_{0'}+{C}_4 )(n) &\coloneqq -\biggl(k_2f_1(n_1)(n_2+1)^{k_2/2+1}1_{n_2\geq2}+|z|^21_{n_1\geq1}f(n)+(n_1+1)^{k_1/2+1}f_2(n_2)\biggr)\\
					&\qquad+k_2f_1(n_1)(n_2+1)^{k_2/2}1_{n_2\geq2}\biggl(1_{k_2\geq3}\frac{k_2}{2}+6\biggr)\\
					&\qquad+|z|^21_{n_1\geq1}(n_1+1)^{k_2/2-1}f_2(n_2)\\
					&\qquad+\mu_{k_1}^{(2)}(n_2+1)^{k_2/2}\\
					&\le -\frac{1}{2}\biggl(k_2f_1(n_1)(n_2+1)^{k_2/2+1}+(n_1+1)^{k_1/2+1}f_2(n_2)\biggr)+\Delta_0\\
					&\eqqcolon -x(n)
				\end{aligned}
			\end{equation*}
			where the constant $\Delta_0$ is achieved by splitting off half of the negative leading order terms to control the lower order positive contributions (see for example the proof of Lemma \ref{lem:l-diss}).
			Next, we consider operators of the form
			\begin{equation}\label{eq:2x2}
				-x_1\ketbra{e_1}{e_1}-x_2\ketbra{e_2}{e_2}+y\ketbra{e_2}{e_1}+\overline{y}\ketbra{e_1}{e_2}\,,
			\end{equation}
			where $\{e_1, e_2\}$ forms an orthonormal basis of a two-dimensional Hilbert space and $x_1, x_2 \in \R, y \in \C$. The operator in \Cref{eq:2x2} has the eigenvalues
			\begin{equation}
				\lambda_+ = \frac{-x_1 - x_2 + \sqrt{(x_1 - x_2)^2 + 4|y|^2}}{2}, \quad \lambda_- = \frac{-x_1 - x_2 - \sqrt{(x_1 - x_2)^2 + 4|y|^2}}{2}
			\end{equation}
			Moreover, 
			\begin{equation*}
				\frac{-x_1-x_2+\sqrt{(x_1-x_2)^2+4|y|^2}}{2}\leq -\min\{x_1,x_2\}+|y|\,.
			\end{equation*}
			Using this bound, we control each of the off-diagonal operators $\hat{C}_i+\hat{C}_{i'}$, $i\in\{1,2,3\}$, in terms of $\frac{1}{4}X$, where $X=\sum_n x(n) \ketbra{n}{n}$. 
			\begin{equation*}
				\begin{aligned}
					-\frac{1}{4}X+\hat{C}_1+\hat{C}_{1'}&\coloneqq \sum_n-\frac{1}{4}x(n)\ketbra{n}{n}+y_1\ketbra{n}{n_1+1,n_2-2}+\overline{y}_1\ketbra{n_1+1,n_2-2}{n}\\
					&\le \sum_{n|n_2\ge 2}-\frac{1}{8}x(n)\ketbra{n}{n}-\frac{1}{8}x(n_1+1,n_2-2)\ketbra{n_1+1,n_2-2}{n_1+1,n_2-2}\\
					&\quad\qquad + y_1\ketbra{n}{n_1+1,n_2-2}+\overline{y}_1\ketbra{n_1+1,n_2-2}{n}
				\end{aligned}
			\end{equation*}
			with
			\begin{equation*}
				y_1 = y_1(n_1,n_2) = -\frac{{z}}{2}\sqrt{(n_1+1)(n_2-1)n_2}\Bigl(f_1(n_1)g_{2,2}(n_2)+g_{1,1}(n_1+1)f_2(n_2-2)\Bigr)
			\end{equation*}
			so that
			\begin{equation}\label{eqcaseC1C1'}
				-  \frac{1}{4}X+\hat{C}_1+\hat{C}_{1'}\leq \sum_{n|n_2\ge 2}\left(-\min\{x_1,x_2\}+|y_1|\right)(\ketbra{n}{n} + \ketbra{n_1 + 1, n_2 - 2}{n_1 + 1, n_2 - 2})
			\end{equation}
			where $x_1=\frac{1}{8}x(n_1,n_2)$ and $x_2=\frac{1}{8}x(n_1+1,n_2-2)$. Moreover, for $n_2\geq2$
			\begin{equation*}\label{eq-ex:cnot-matrix-bound}
				\begin{aligned}
					|y_1|&\overset{(1)}{\leq} \frac{|z|}{2}\sqrt{(n_1+1)(n_2-1)n_2}\Bigl(f_1(n_1)2k_2(n_2+1)^{k_2/2-1}+k_1(n_1+2)^{k_1/2-1}f_2(n_2-2)\Bigr)\\
					&\leq|z|k_2(n_1+1)^{k_1/2+1/2}(n_2+1)^{k_2/2}\\
                    &\qquad\qquad+\frac{|z|k_1}{2}\sqrt{(n_2-1)^2+n_2-1}\,(n_1+2)^{k_1/2-1/2}f_2(n_2-2)\\
					&\leq|z|k_2(n_1+1)^{k_1/2+1/2}(n_2+1)^{k_2/2}+|z|k_1\,(n_1+2)^{k_1/2-1/2}(n_2-1)^{k_2/2+1}\,,
				\end{aligned}
			\end{equation*}
			where $(1)$ follows from  Lemma \ref{lem:upper-lower-bound-gl}. At this stage, we consider two cases: 
			
			\medskip
			
			\noindent Case (i): $x_2\ge x_1$. In that case, $-\min\{x_1,x_2\}+|y_1|=-x_1+|y_1|$, and therefore
			\begin{equation*}
				\begin{aligned}
					-\min\{x_1,x_2\}+|y_1|&\leq-\frac{1}{16}\biggl(k_2f_1(n_1)(n_2+1)^{k_2/2+1}+(n_1+1)^{k_1/2+1}f_2(n_2)\biggr)+\frac{1}{8}\Delta_0\\
					&\qquad\qquad+\underbrace{|z|k_2(n_1+1)^{k_1/2+1/2}(n_2+1)^{k_2/2}}_{=:A_1}\\
                    &\qquad\qquad+\underbrace{|z|k_1\,(n_1+2)^{k_1/2-1/2}(n_2-1)^{k_2/2+1}}_{=:A_2}\,.
				\end{aligned}
			\end{equation*}
			Note that the first positive non-constant term $A_1$ can be controlled with half the negative contribution in the first term by a constant using the same type of polynomial optimization as in the proof of Lemma \ref{lem:l-diss}. For the last term, i.e.~$A_2$, we use the assumption 
			\begin{equation}\label{assumptionequationok}
				|z|k_12^{k_1/2-1/2}\leq |\alpha|k_12^{k_1/2-1/2}\leq\frac{1}{32}k_2,
			\end{equation}
			which allows us to control $A_2$ with the other half of the first term, as we already did with $A_1$. Recall the definition $z=-\frac{\alpha}{2}(1-e^{2i\pi t/T})$. Summarising the above considerations we can conclude the existence of a constant $\tilde{\Delta}'_1$ such that 
			\begin{equation*}
				\begin{aligned}
					-\min\{x_1,x_2\}+|y_1|\leq\tilde{\Delta}'_1\,.
				\end{aligned}
			\end{equation*}
			
			\medskip
			
			\noindent Case (ii): $x_2\leq x_1$. In that case $-\min\{x_1,x_2\}+|y_1|=-x_2+|y_1|$, and therefore
			\begin{equation*}
				\begin{aligned}
					-\min\{x_1,x_2\}+|y_1|&=-\frac{1}{16}\biggl(k_2f_1(n_1+1)(n_2-1)^{k_2/2+1}+(n_1+2)^{k_1/2+1}f_2(n_2-2)\biggr)+\frac{1}{8}\Delta_0\\
					&\qquad\qquad+|z|k_2(n_1+1)^{k_1/2+1/2}(n_2+1)^{k_2/2}\\
                    &\qquad\qquad+|z|k_1\,(n_1+2)^{k_1/2-1/2}(n_2-1)^{k_2/2+1}\,.
				\end{aligned}
			\end{equation*}
			To upper bound the above, we use again the assumption \eqref{assumptionequationok}, which implies the existence of a constant $\tilde{\Delta}'_1$ such that
			\begin{equation*}
				\begin{aligned}
					-\min\{x_1,x_2\}+|y_1|&\leq\tilde{\Delta}'_1\,.
				\end{aligned}
			\end{equation*}
			Combining cases (i) and (ii) above, denoting $\Delta_1\coloneqq \max\{\Tilde{\Delta}_1,\Tilde{\Delta}_1'\}$ and plugging the bounds into \eqref{eqcaseC1C1'}, we arrive at
			\begin{align}\label{lastC1}
				-\frac{1}{4}X+\hat{C}_1+\hat{C}_{1'}\le \Delta_1 \sum\limits_{n|n_2 \ge 2}(\ketbra{n}{n} + \ketbra{n_1 + 1, n_2 - 2}{n_1 + 1, n_2 - 2})
			\end{align}
			Next, we control $-\frac{1}{4}X+\hat{C}_2+\hat{C}_{2'}$. Here, we have
			\begin{equation*}
				y_2=y_2(n_1,n_2)=-\frac{\overline{w}}{2}\sqrt{(n_2-1)n_2}f_1(n_1)g_{2,2}(n_2)\,,
			\end{equation*}
			$x_1=\frac{1}{8}x(n)$\,, and $x_2=\frac{1}{8}x(n_1,n_2-2)$. By Lemma \ref{lem:upper-lower-bound-gl}, we have that
			\begin{align*}
				|y_2|&\le {|{w}|}\sqrt{(n_2-1)n_2}f_1(n_1)k_2(n_2+1)^{k_2/2-1}\,.
			\end{align*}
			Therefore, the negative contribution from $\min\{x_1,x_2\}$ has leading order in both variables $n_1$ and $n_2$, which implies the existence of a constant $\Delta_2$ such that
			\begin{equation*}
				-\min\{x_1,x_2\}+|y_2|\leq\Delta_2\,.
			\end{equation*}
			Hence,
			\begin{align}\label{lastC2}
				-\frac{1}{4}X+\hat{C}_2+\hat{C}_{2'}\le \Delta_2 \sum\limits_{n|n_2 \ge 2}(\ketbra{n}{n} + \ketbra{n_1, n_2 - 2}{n_1, n_2 - 2})
			\end{align}
			Finally, we consider $-\frac{1}{4}X+\hat{C}_3+\hat{C}_{3'}$. In this case, 
			\begin{equation*}
				y_3=y_3(n_1,n_2)=-\frac{\overline{w}z}{2}\sqrt{n_1}g_{1,1}(n_1)f_2(n_2)\,,
			\end{equation*}
			$x_1=\frac{1}{8}x(n)$\,, and $x_2=\frac{1}{8}x(n_1-1,n_2)$. Similarly to the above, we can argue the existence of a constant $\Delta_3$ such that
			\begin{equation*}
				-\max\{x_1,x_2\}+|y_3|\leq\Delta_3\,.
			\end{equation*}
			Hence,
			\begin{align}\label{lastC3}
				-\frac{1}{4}X+\hat{C}_3+\hat{C}_{3'}\le \Delta_3 \sum\limits_{n|n_1 \ge 1} (\ketbra{n}{n} + \ketbra{n_1 - 1, n_2}{n_1 - 1, n_2}
			\end{align}
			Combining \eqref{lastC1}, \eqref{lastC2} and \eqref{lastC3}, we have shown that
			\begin{align*}
				\cL^\dagger(f(N))&\leq -\frac{X}{4}+\Delta_1\sum_{n|n_2\ge 2}(\ketbra{n}{n} + \ketbra{n_1 + 1, n_2 - 2}{n_1 + 1, n_2 - 2})\\
				&+\Delta_2\sum_{n|n_2\ge 2}(\ketbra{n}{n} + \ketbra{n_1, n_2 - 2}{n_1, n_2 - 2}) \\
				&+ \Delta_3 \sum\limits_{n |n_1 \ge 1} (\ketbra{n}{n} + \ketbra{n_1 - 1, n_2}{n_1 - 1, n_2})\\
				&\le -\frac{X}{4}+ 2(\Delta_1+\Delta_2+\Delta_3)\1\\
				&= -\frac{1}{8}\Big(k_2f_1(N_1)(N_2+1)^{k_2/2+1}+(N_1+1)^{k_1/2+1}f_2(N_2)\Big) +\mu_{\k}\,\1\\
				&\le -\frac{1+k_2}{8}f(N)+\mu_{\k}\1\,,
			\end{align*}
			with
			\begin{equation*}
				\mu_{\k}\coloneqq\frac{\Delta_0}{4}+2(\Delta_1+\Delta_2+\Delta_3)\,.
			\end{equation*}
			The claim \eqref{lastclaimsobolevbound} finally follows from Proposition \ref{prop-ex:uniformly-bounded-semigroup}. 
		\end{proof}

\section{Perturbation bounds}\label{sec:example-perturbation-bounds}
	In this section, we establish a perturbative analysis at any time scale for the semigroups considered in Section \ref{sec:examples-sobolev-preserving-semigroup}. In finite dimensions, \cite[Theorem 6]{Szehr.2013} gives a quantitative bound which controls the perturbation of a quantum dynamical semigroup under the condition that the latter converges exponentially fast to a unique invariant state $\tau$: for two generators $\cL$ and $\cL+\cK$, if $\cL$ satisfies $\|e^{t\cL} - \tr(.)\,\tau\|_{1\rightarrow 1} \le c e^{- \omega t}$ for all $t\geq0$ and some $c, \omega > 0$, then
	\begin{equation*}
		\forall\rho,\sigma \text{ states},\quad  \norm{e^{t\cL}(\rho) - e^{t(\cL+\cK)}(\sigma)}_1 \le 
		\begin{cases}
			\norm{\rho - \sigma}_1 + t \norm{\cK}_{1\rightarrow 1}\,, &t < \hat{t}\\
			c e^{- \omega t} \norm{\rho - \sigma}_1 + \frac{\log(c) + 1 - c e^{-\omega t}}{\omega} \norm{ \cK}_{1\rightarrow 1}\,, &  t \ge \hat{t}
		\end{cases}
	\end{equation*}
	where $\hat{t} \coloneqq \frac{\log(c)}{\omega}$. The result can be easily extended to the case of bounded generators in infinite dimensions, although proving the exponential decay for the semigroup generated by $\cL$ is not easy. The situation becomes even trickier in the case of unbounded generators since the use of a Duhamel integral as in the proof in finite dimensions requires a proper justification. It is precisely these issues that we are interested in and want to address here.

	\subsection{Gaussian perturbations of the quantum Ornstein Uhlenbeck semigroup}

		The quantum Ornstein Uhlenbeck semigroup is well-known to correspond to a so-called beam-splitter channel of exponentially decreasing transmissivity $e^{-(\lambda^2-\mu^2)t}$ with unique Gaussian invariant state (see \cite{DePalma.2018}):
		\begin{equation*}
			\sigma\coloneqq \frac{\lambda^2-\mu^2}{\mu^2}\sum_{k=0}^\infty \left(\frac{\mu^2}{\lambda^2}\right)^k\,\ketbra{k}{k}\,.
		\end{equation*}
		While quantitative statements about the convergence of this semigroup towards $\sigma$ are known \cite{Cipriani.2000,Carbone.2007,Carlen.2017,DePalma.2018}, 
		they do not necessarily imply convergence in trace distance in contrast to their finite-dimensional analogues. In contrast, the semigroup is known to contract a certain kind of quantum Wasserstein distance, which we introduce now. First, given a bounded, self-adjoint operator $X\in\cB(\cH)$, we call $X$ a Lipschitz observable if $aX$, $a^\dagger X$ are bounded, and if $Xa$ and $Xa^\dagger$ are closable operators with bounded closures $\overline{Xa}$ and $\overline{Xa^\dagger}$. In this case, we denote by $\partial_a(X)\coloneqq aX-\overline{Xa}$ and $\partial_{a^\dagger}(X)=a^\dagger X-\overline{Xa^\dagger}$. The Lipschitz constant of $X$ is then defined as
		\begin{align*}
			\|X\|_{\operatorname{Lip}}\coloneqq \max\big\{\|\partial_a(X)\|_\infty,\,\|\partial_{a^\dagger}(X)\|_\infty\big\}\,.
		\end{align*}
		We denote the set of Lipschitz observables by $\operatorname{Lip}$. Next,  any $T\in\cT_{1,\operatorname{sa}}$, we denote
		\begin{align*}
			\|T\|_{W_1}\coloneqq \sup\,\left\{ \tr\big[X\,T\big]:\,X\in\operatorname{Lip},\,\|X\|_{\operatorname{Lip}}\le 1\right\}\,.
		\end{align*}
		In \cite[Proposition 6.4]{Gao.2021}, the authors showed that, for any $T\in\cT_{1,\operatorname{sa}}$ and $t>0$,
		\begin{align}\label{regularization}
			\|e^{t\cL_{\operatorname{qOU}}}(T)\|_1\le \sqrt{\frac{e^{-(\lambda^2-\mu^2)t}}{1-e^{-(\lambda^2-\mu^2)t}}}\,\Big(\|a\sigma-\sigma a\|_1+\|a^\dagger \sigma-\sigma a^\dagger \|_1\Big)\,\|T\|_{W_1} \,.
		\end{align}
		Moreover, using the canonical commutation relations, one can also prove the following identities (see e.g.~\cite{Carlen.2017}, or \cite[Proposition 6.2]{Gao.2021}): for any two states $\rho_1,\rho_2\in \cT_{1, \operatorname{sa}}$,  
		\begin{equation}\label{eq:qou-exponential-dampening}
			\|e^{t\cL_{\operatorname{qOU}}}(\rho_1-\rho_2)\|_{W_1}\le e^{-\frac{(\lambda^2-\mu^2)t}{2}}\,\|\rho_1-\rho_2\|_{W_1}\,.
		\end{equation}
		In the next proposition, we use these conditions to find a perturbation bound for any Gaussian perturbation of the quantum Ornstein Uhlenbeck semigroup. 
		
		\begin{prop}\label{propqOUperturb}
			Let $(\cL_{\operatorname{qOU}},\cT_f)$ be the generator of the quantum Ornstein Uhlenbeck semigroup with $\lambda>\mu\geq0$ and $(\varepsilon\cL_G,\cT_f)\coloneqq (\varepsilon\cL[{\gamma a+\eta\ad}],\cT_f)$ a Gaussian perturbation with $\gamma,\eta\in\mathbb{R}$, $\varepsilon>0$. Then, assuming $\lambda^2-\mu^2+|\gamma|^2-|\eta|^2> 0$, $\cL_{\operatorname{qOU}}+\varepsilon\cL_G$ generates a positivity and Sobolev preserving semigroup on $W^{k,1}$ for $k\geq1$, and there exist uniformly bounded functions $C(\varepsilon),D(\varepsilon)$ depending on $\lambda,\mu,|\eta|,|\gamma|$ such that, for all $t\ge 0$ and states $\rho\in W^{2,1}$
			\begin{equation}\label{eq-ex:perturbation-bound-qOU}
				\Big\|\left(e^{t \cL_{\operatorname{qOU}}}-e^{t(\cL_{\operatorname{qOU}}+\varepsilon\cL_G)}\right)(\rho)\Big\|_{1}\leq \varepsilon\, C(\varepsilon)\, \max\Big\{\norm{\rho}_{W^{2, 1}} ,D(\varepsilon)\Big\}\,.
			\end{equation}
		\end{prop}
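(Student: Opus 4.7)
The proof splits into three stages: generation of a Sobolev-preserving semigroup for the perturbed dynamics, a Duhamel representation, and control of the integral using the Wasserstein regularisation of the qOU semigroup.

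\textbf{Stage 1 (generation and uniform Sobolev bound).} First I would verify that $\cL_{\operatorname{qOU}}+\varepsilon\cL_G$ satisfies Assumptions \ref{assum:finite-degree} and \ref{assum:sobolev-stability}. The GKSL form is immediate. For Assumption \ref{assum:sobolev-stability}, I would repeat the moment calculation of Lemma \ref{lem-ex:qOU-differential-stability} with the extra Lindblad operator $L=\gamma a+\eta a^\dagger$: expanding $\cL_G$ produces the diagonal contributions $\gamma^2\cL[a]+\eta^2\cL[a^\dagger]$ plus off-diagonal pieces of the form $\gamma\eta(a^2+(a^\dagger)^2)$ that commute with $f(N)=(N+\1)^{k/2}$ only up to lower-order corrections. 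The net leading coefficient of $(N+\1)^{k/2}$ becomes $\tfrac{k}{4}((\mu^2+\varepsilon\eta^2)-(\lambda^2+\varepsilon\gamma^2))$, strictly negative under the hypothesis $\lambda^2-\mu^2+\gamma^2-\eta^2>0$. Proposition \ref{prop-ex:uniformly-bounded-semigroup} then provides both the Sobolev-preserving semigroup on each $W^{k,1}$, $k\ge 1$, and the time-uniform bound
\begin{equation*}
\bigl\|e^{s(\cL_{\operatorname{qOU}}+\varepsilon\cL_G)}(\rho)\bigr\|_{W^{k,1}}\le\max\bigl\{\|\rho\|_{W^{k,1}},\,D_k(\varepsilon)\bigr\}.
\end{equation*}

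\textbf{Stage 2 (Duhamel identity).} Since $\cL_G$ has degree two in $a,\ad$, Lemma \ref{lem:boundedness-polynomials} makes $\cL_G\cW^{-2}$ extend to a bounded operator on $\cT_{1,\operatorname{sa}}$; coupled with the admissibility of $W^{2,1}$ from Stage 1, Theorem \ref{thm:semigroup-perturbation} applied with $\cW=\cW^2$ and $\cK=\varepsilon\cL_G$ yields the Bochner identity
\begin{equation*}
\bigl(e^{t\cL_{\operatorname{qOU}}}-e^{t(\cL_{\operatorname{qOU}}+\varepsilon\cL_G)}\bigr)(\rho)=-\varepsilon\int_0^t e^{(t-s)\cL_{\operatorname{qOU}}}\,\cL_G\,e^{s(\cL_{\operatorname{qOU}}+\varepsilon\cL_G)}(\rho)\,ds
\end{equation*}
for every $\rho\in W^{2,1}$.

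\textbf{Stage 3 (Wasserstein estimate of the integrand).} Set $T_s\coloneqq\cL_G\,e^{s(\cL_{\operatorname{qOU}}+\varepsilon\cL_G)}(\rho)$. The GKSL form of $\cL_G$ makes $T_s$ self-adjoint and trace-zero, so the regularisation inequality \eqref{regularization} applies:
\begin{equation*}
\bigl\|e^{(t-s)\cL_{\operatorname{qOU}}}(T_s)\bigr\|_1\le\sqrt{\tfrac{e^{-(\lambda^2-\mu^2)(t-s)}}{1-e^{-(\lambda^2-\mu^2)(t-s)}}}\,C_\sigma\,\|T_s\|_{W_1}.
\end{equation*}
The Wasserstein norm $\|T_s\|_{W_1}$ is controlled via duality: writing $\cL_G^\dagger(X)=\tfrac{1}{2}(L^\dagger[X,L]+[L^\dagger,X]L)$ and using $\|[X,L]\|_\infty,\|[L^\dagger,X]\|_\infty\le(|\gamma|+|\eta|)\|X\|_{\operatorname{Lip}}$, the pairing $\tr[\cL_G^\dagger(X)\rho'_s]$ with $\rho'_s=e^{s(\cdot)}(\rho)$ reduces by Cauchy--Schwarz to first-order moment expressions bounded by $\sqrt{\|\rho'_s\|_{W^{2,1}}}$; since $\rho'_s$ is a state, $\|\rho'_s\|_{W^{2,1}}\ge\|\rho'_s\|_1=1$, so $\sqrt{\|\rho'_s\|_{W^{2,1}}}\le\|\rho'_s\|_{W^{2,1}}$, linearising the bound. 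Inserting the Stage 1 estimate and using the finite integral
\begin{equation*}
\int_0^\infty\sqrt{\tfrac{e^{-\alpha\tau}}{1-e^{-\alpha\tau}}}\,d\tau=\tfrac{\pi}{\alpha},\qquad\alpha=\lambda^2-\mu^2,
\end{equation*}
produces \eqref{eq-ex:perturbation-bound-qOU} with $C(\varepsilon)$ absorbing $C_\sigma$, $\pi/\alpha$ and the duality constants, and $D(\varepsilon)=D_2(\varepsilon)$.

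\textbf{Main obstacle.} The most delicate piece is Stage 3: the regularisation bound is singular as $t-s\to 0$, so one must confirm that the improper integral is uniformly bounded in $t$ (it is, with value $\pi/\alpha$), and the duality bound on $\|T_s\|_{W_1}$ requires the cancellation structure of $\cL_G^\dagger(X)$ expressed via the commutator identity $L^\dagger X L-\tfrac{1}{2}\{L^\dagger L,X\}=\tfrac{1}{2}(L^\dagger[X,L]+[L^\dagger,X]L)$, together with the observation that for states the first moment is dominated by the $W^{2,1}$ norm. Stage 1 also demands careful bookkeeping of the off-diagonal cross-terms $\gamma\eta(a^2+(a^\dagger)^2)$ in $\cL_G$, which do not vanish term-by-term against $f(N)$ and must be controlled jointly with the diagonal GKSL pieces.
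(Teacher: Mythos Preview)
Your proposal is correct and takes a genuinely different route from the paper's proof in Stage~3. The paper does \emph{not} integrate through the singularity of the regularisation bound~\eqref{regularization}. Instead, it first splits
\[
e^{t\cL}-e^{t\widetilde\cL}=e^{u\cL}\bigl(e^{(t-u)\cL}-e^{(t-u)\widetilde\cL}\bigr)+\bigl(e^{u\cL}-e^{u\widetilde\cL}\bigr)e^{(t-u)\widetilde\cL}
\]
for a fixed $u>0$. The first piece is handled by applying~\eqref{regularization} once at the fixed time $u$ (so $c_u$ is a finite constant), then running Duhamel in the $W_1$ norm and invoking the Wasserstein \emph{contraction}~\eqref{eq:qou-exponential-dampening} to obtain an exponentially convergent integrand. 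The second piece is a plain trace-norm Duhamel on $[0,u]$, giving $u\varepsilon\|\cL_G\cW^{-2}\|\cdot\max\{\|\rho\|_{W^{2,1}},D(\varepsilon)\}$. Your approach is more economical: you apply~\eqref{regularization} pointwise in $(t-s)$ and exploit that $\int_0^\infty\sqrt{e^{-\alpha\tau}/(1-e^{-\alpha\tau})}\,d\tau=\pi/\alpha$ is finite, thereby bypassing the time-splitting and never needing the contraction estimate~\eqref{eq:qou-exponential-dampening} at all. The paper's decomposition buys a cleaner separation of short- and long-time behaviour (mirroring the Szehr--Wolf structure cited before the proposition) and avoids any improper integral; your approach buys a single uniform estimate with an explicit constant. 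One small remark on Stage~1: for $k=2$ the $\gamma\eta$ cross terms in $\cL_G^\dagger(N+\1)$ actually cancel exactly (compute $\tfrac12 L^\dagger[N,L]+\tfrac12[L^\dagger,N]L$), so the bookkeeping you flag as delicate is only needed for the higher-$k$ Sobolev preservation, not for the perturbation bound itself.
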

		\begin{proof}
			The generation of a Sobolev preserving semigroup was already stated in Lemma \ref{lem-ex:qOU-differential-stability} for $\cL_{\operatorname{qOU}}$ and its proof can easily be extended to $\cL_{\operatorname{qOU}}+\varepsilon \cL_G$. For instance, given a state $\rho\in\cT_f$, one can show that
			\begin{equation*}
				\tr[\cL_{G}(\rho)(N+\1)]\leq -(|\gamma|^2-|\eta|^2)\,\tr[\rho N]+|\eta|^2\,,
			\end{equation*}    
			We have also seen in the proof of Lemma \ref{lem-ex:qOU-differential-stability} that $\tr[\rho \cL_{\operatorname{qOU}}]\le -(\lambda^2-\mu^2)\tr[\rho N]+\mu^2$, so that
			\begin{align*}
				\tr[(\cL_{\operatorname{qOU}}+\varepsilon \cL_G)(\rho)(N+\1)]\le -(\lambda^2-\mu^2+\varepsilon|\gamma|^2-\varepsilon|\eta|^2)\tr[\rho (N+\1)]+\lambda^2+\varepsilon|\gamma|^2\,.
			\end{align*} 
			Therefore, by Proposition \ref{prop-ex:uniformly-bounded-semigroup} we have that, as long as $\lambda^2-\mu^2+\varepsilon|\gamma|^2-\varepsilon|\eta|^2> 0$, for all states $\rho\in W^{2,1}$, $\rho \ge 0$ and all $t\ge 0$,
			\begin{align*}
				\|e^{t(\cL_{\operatorname{qOU}}+\varepsilon \cL_G)}(\rho)\|_{W^{2,1}}\le \max\left\{\norm{\rho}_{W^{2, 1}}, \frac{\lambda^2+\varepsilon|\gamma|^2}{\lambda^2-\mu^2+\varepsilon|\gamma|^2-\varepsilon|\eta|^2}\right\} \,.
			\end{align*}
			Next, for $\rho\in\cT_f$ and $0<u<t $, and denoting $\cL\equiv \cL_{\operatorname{qOU}}$ and $\widetilde{\cL}\equiv \cL_{\operatorname{qOU}}+\varepsilon \cL_G$,
			\begin{align*}
				\Big\|\Big(e^{t\cL}-e^{t\widetilde{\cL}}\Big)(\rho)\Big\|_1\le    \Big\|e^{u\cL}\left(e^{(t-u)\cL}-e^{(t-u)\widetilde{\cL}}\right)(\rho)\Big\|_1 +\Big\| \Big(e^{u\cL}-e^{u\widetilde{\cL}}\Big)e^{(t-u)\widetilde{\cL}}(\rho)\Big\|_1\equiv A+B\,.
			\end{align*}
			We use \Cref{regularization}, so that
			\begin{align*}
				A&\le c_u \Big\|\Big(e^{(t-u)\cL}-e^{(t-u)\widetilde{\cL}}\Big)(\rho)\Big\|_{W_1}\\
				&\le c_u\,\varepsilon\, \int_0^{t-u}\,\Big\|e^{s\cL}\cL_G\,e^{(t-u-s)\widetilde{\cL}}(\rho)\Big\|_{W_1}\,ds\\
				&=  c_u\,\varepsilon\, \int_0^{t-u}\,e^{-\frac{(\lambda^2-\mu^2)s}{2}}\Big\|\cL_G\,e^{(t-u-s)\widetilde{\cL}}(\rho)\Big\|_{W_1}\,ds\,,
			\end{align*}
			where $c_u\coloneqq \sqrt{\frac{e^{-(\lambda^2-\mu^2)u}}{1-e^{-(\lambda^2-\mu^2)u}}}\,\,\Big(\|\partial_a(\sigma)\|_1+\|\partial_{a^\dagger}(\sigma)\|_1\Big)$. Moreover, denoting $\widetilde{\rho}_v\coloneqq e^{v\widetilde{\cL}}(\rho)$ and $b=\gamma a+\eta a^\dagger$, since $\widetilde{\rho}_v\in W^{2,1}$ for all $v\ge 0$,
			\begin{align*}
				\Big\|\cL_G\,\widetilde{\rho}_v\Big\|_{W_1}\,&=\sup_{\|X\|_{\operatorname{Lip}}\le 1}\,\tr[X \cL_G \widetilde{\rho}_v]\\
				&=\frac{1}{2}\,\sup_{\|X\|_{\operatorname{Lip}}\le 1}\, \tr[\partial_{b^\dagger}(X)b \widetilde{\rho}_v-\partial_b(X)\,\widetilde{\rho}_vb^\dagger]\\
				&\le (|\eta|+|\gamma|)\, \Big(\|b\widetilde{\rho}_v\|_1+\|\widetilde{\rho}_vb^\dagger\|_1\Big)\\
				&\le (|\eta|+|\gamma|)\, \Big(\|b\,(N+\1)^{-\frac{1}{2}}\|+\|(N+\1)^{-\frac{1}{2}}\,b^\dagger\|\Big)\,\|\widetilde{\rho}_v\|_{W^{2,1}}\\
				&\le (|\eta|+|\gamma|) \Big(\|b\,(N+\1)^{-\frac{1}{2}}\|+\|(N+\1)^{-\frac{1}{2}}\,b^\dagger\|\Big)\\
				&\qquad \qquad \qquad \cdot \max\left\{\norm{\rho}_{W^{2, 1}}, 
				\frac{\lambda^2+\varepsilon|\gamma|^2}{\lambda^2-\mu^2+\varepsilon|\gamma|^2-\varepsilon|\eta|^2} \right\}  \, .
			\end{align*}
			On the other hand, 
			\begin{align*}
				B&\le \varepsilon\,\int_0^u\,\|e^{s\cL} \cL_Ge^{(t-s)\widetilde{\cL}}(\rho)\|_1\,ds\\
				&\le\,u\varepsilon\,\max_{s\in[0,u]}\, \|\cL_Ge^{(t-s)\widetilde{\cL}}(\rho)\|_1 \\
				&\le u\varepsilon \|\cL_G \circ \mathcal{W}^{-2}\|_{\cT_1\to \cT_1}\,\max_{s\in[0,u]}\,\|e^{(t-s)\widetilde{\cL}}(\rho)\|_{W^{2,1}}\\
				&\le u\varepsilon \|\cL_G \circ \mathcal{W}^{-2}\|_{\cT_1\to \cT_1}\, \max\left\{ \|\rho\|_{W^{2,1}}, 
				\frac{\lambda^2+\varepsilon|\gamma|^2}{\lambda^2-\mu^2+\varepsilon|\gamma|^2-\varepsilon|\eta|^2} \right\}\,.
			\end{align*}
			The result follows from a simple bound on $\|\cL_G\circ \cW^{-2}\|_{\cT_1\to \cT_1}$ with the help of H\"{o}lder's inequality and a standard density argument, and by choosing $u$ appropriately.
		\end{proof}

	\subsection{Photon-dissipation and CAT qubits}\label{subsec:cat-perturbation}

		As mentioned before, one crucial property of the underlying evolution in continuous error correction is that it is exponentially converging to the code-space. In the spirit of \cite{Szehr.2013}, we prove a large time perturbation result for the $l$-photon dissipation perturbed by a Hamiltonian evolution. This can be generalized to dissipative perturbations. First, we recall the exponential convergence of the $l$-photon dissipation defined by $L=a^l-\alpha^l$ (\cite[Theorem 2]{Azouit.2016}):
		\begin{equation}\label{eq:exponential-convergence}
			\tr[Le^{t\cL_l}(\rho) L^\dagger]\leq e^{-l!t}\tr[L\rho L^\dagger]\,.
		\end{equation}
		Additionally, it is shown that there is a unique limit $\overline{\rho}$ of $e^{t\cL_l}(\rho)$ for $t\rightarrow\infty$. We show large-time perturbation bounds by combining this bound with our established generation theory for Sobolev and positivity-preserving Markov semigroups. We start with the $l$-photon dissipation perturbed by the Hamiltonian introduced in Lemma \ref{lem:l-diss}, i.e.~$H=p_H(a,\ad)$ with $d_H\leq2(l-1)$.
		
		\begin{thm}\label{thm:l-diss-hamiltonian-perturbation}
			Let $\cL_l$ be the generator of the $l$-photon dissipation and $p_H\in\C[X,Y]$ with $\deg(p_H)=d_H\leq2(l-1)$ such that $H=p_H(a,\ad)$ is a symmetric operator. Then, there exist explicit constants $c,\gamma>0$ such that for $\varepsilon \ge 0$ and all states $\rho\in W^{2(l + d_H + 2),1}$
			\begin{equation*}
				\begin{aligned}
					\Big|\tr[L\left(e^{t\cL_l}(\rho)-e^{t(\cL_l+\varepsilon\cH[H])}(\rho)\right)L^\dagger]\Big|\leq\varepsilon c\left(1-e^{-l!t}\right)\max\{\gamma,\|\rho\|_{W^{2(l + d_H + 2),1}}\}\,.
				\end{aligned}
			\end{equation*}
		\end{thm}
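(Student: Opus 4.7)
The plan is to combine three ingredients: the Duhamel integral representation (Theorem \ref{thm:semigroup-perturbation}) to express the difference of semigroups, the exponential convergence \eqref{eq:exponential-convergence} of the $l$-photon dissipation which produces the $(1 - e^{-l!t})$ factor, and the uniform Sobolev bound of Lemma \ref{lem:l-diss-hamiltonian} which controls the perturbed semigroup along the integration path. The overall structure mimics the finite-dimensional strategy of \cite{Szehr.2013} but requires careful management of unboundedness through the Bosonic Sobolev hierarchy.

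First, I would apply Theorem \ref{thm:semigroup-perturbation} with $\cL = \cL_l$, $\cK = \varepsilon \cH[H]$, and weight $\cW = \cW^{2(l + d_H + 2)}$. Two things must be verified: that $W^{2(l + d_H + 2), 1}$ is an admissible subspace for $\cL_l + \varepsilon \cH[H]$ (given by Lemma \ref{lem:l-diss-hamiltonian}), and that $\cH[H] \circ (\cW^{2(l + d_H + 2)})^{-1}$ is bounded on $\cT_{1, \operatorname{sa}}$ (which follows from Lemma \ref{lem:boundedness-polynomials} since $H$ has degree $d_H \le 2(l-1)$). This yields the Bochner identity
\begin{equation*}
(e^{t\cL_l} - e^{t(\cL_l + \varepsilon\cH[H])})(\rho) = -\varepsilon \int_0^t e^{(t-s)\cL_l}\,\cH[H]\,e^{s(\cL_l + \varepsilon\cH[H])}(\rho)\,ds\,.
\end{equation*}
Sandwiching with $L$ and $L^\dagger$, taking the trace, and exchanging trace with integral, the integrand involves the self-adjoint trace-class operator $X(s) \coloneqq \cH[H]\,e^{s(\cL_l + \varepsilon\cH[H])}(\rho)$. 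Decomposing $X(s) = X_+(s) - X_-(s)$ into positive and negative parts and applying \eqref{eq:exponential-convergence} (extended by linearity from states to positive trace-class operators) to each piece gives
\begin{equation*}
\bigl|\tr[L\,e^{(t-s)\cL_l}(X(s))\,L^\dagger]\bigr| \le e^{-l!(t-s)}\,\tr[L^\dagger L\,|X(s)|]\,.
\end{equation*}

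The remaining task is to bound $\tr[L^\dagger L\,|X(s)|]$ by a Sobolev norm of $\rho$. Since $L^\dagger L$ is a polynomial of degree $2l$, Lemma \ref{lem:boundedness-polynomials} ensures that $(\1 + N)^{-l/2} L^\dagger L (\1 + N)^{-l/2}$ is bounded, and a cyclicity argument yields $\tr[L^\dagger L\,|X(s)|] \le C_L\,\tr[(\1 + N)^l\,|X(s)|]$. Passing from this one-sided weighted trace to a symmetric Sobolev norm of $X(s)$ costs an extra $+4$ margin in the exponent, giving $\tr[(\1 + N)^l |X(s)|] \le C\,\|X(s)\|_{W^{2l + 4, 1}}$. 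A second application of Lemma \ref{lem:boundedness-polynomials} shows that $\cH[H]$ maps $W^{k + 2d_H, 1}$ continuously into $W^{k, 1}$, so with $k = 2l + 4$ one obtains $\|X(s)\|_{W^{2l + 4, 1}} \le C_H\,\|e^{s(\cL_l + \varepsilon\cH[H])}(\rho)\|_{W^{2(l + d_H + 2), 1}}$. Combining the uniform-in-$s$ bound from Lemma \ref{lem:l-diss-hamiltonian} with the elementary integral $\int_0^t e^{-l!(t-s)}\,ds = (1 - e^{-l!t})/l!$ then yields the claim, with explicit constants $c = C_L C_H / l!$ and $\gamma = \mu_{2(l + d_H + 2)}$.

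The main obstacle will be the margin analysis just outlined: for a self-adjoint but indefinite operator $X$, the one-sided weighted trace $\tr[(\1+N)^l|X|]$ is strictly larger than the symmetric Sobolev norm $\|X\|_{W^{2l, 1}}$ in general (as can be checked on simple off-diagonal rank-two examples), so one has to absorb the discrepancy either by enlarging the Sobolev index on $X$ or by re-expressing the estimate in terms of asymmetric weights; this is precisely what forces the exponent $2(l + d_H + 2)$ rather than the naive $2(l + d_H)$. A secondary but more routine issue is the rigorous justification of the Bochner-valued Duhamel identity, which requires strong continuity of $s \mapsto e^{(t-s)\cL_l}\,\cH[H]\,e^{s(\cL_l + \varepsilon\cH[H])}(\rho)$ in $\cT_{1, \operatorname{sa}}$; this follows from the $W^{k,1}$-semigroup properties established in Section \ref{sec:polynomial-generators} together with the boundedness of $\cH[H]$ between consecutive Sobolev spaces.
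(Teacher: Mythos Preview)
Your overall architecture — Duhamel formula, exponential decay \eqref{eq:exponential-convergence}, and the uniform Sobolev bound of Lemma~\ref{lem:l-diss-hamiltonian} — matches the paper's, and your identification of the central difficulty (handling the indefinite operator $X(s)=\cH[H]\bigl(e^{s(\cL_l+\varepsilon\cH[H])}(\rho)\bigr)$ under the weighted trace) is exactly right. However, the proposed resolution of that difficulty does not work.

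The specific gap is the step
\[
\tr\bigl[(N+\1)^l\,|X(s)|\bigr]\le C\,\|X(s)\|_{W^{2l+4,1}}\,.
\]
This inequality is false for $l\ge 3$, and no finite ``$+4$ margin'' repairs it. Take $H=N^{l-1}$ (degree $d_H=2(l-1)$) and the pure state $\rho=\tfrac12(|0\rangle+|n\rangle)(\langle 0|+\langle n|)$; then $X=-i[H,\rho]=\tfrac{-in^{l-1}}{2}(|n\rangle\langle 0|-|0\rangle\langle n|)$, so $|X|=\tfrac{n^{l-1}}{2}(|0\rangle\langle 0|+|n\rangle\langle n|)$, giving $\tr[(N+\1)^l|X|]\sim n^{2l-1}$, whereas $\|X\|_{W^{2l+4,1}}=\|(N+\1)^{(l+2)/2}X(N+\1)^{(l+2)/2}\|_1\sim n^{3l/2}$. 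For $l\ge 3$ the left side dominates. The underlying obstruction is that the Jordan decomposition $X=X_+-X_-$ does not interact with conjugation by $(N+\1)^{k/4}$: the positive and negative parts of $X$ are \emph{not} obtained from those of $(N+\1)^{k/4}X(N+\1)^{k/4}$ by unwinding the weight, so no finite shift in the Sobolev index recovers control of $\tr[(N+\1)^l|X|]$ from a symmetric Sobolev norm of $X$.

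The paper circumvents this entirely by never forming $|X(s)|$. Instead it inserts $\cW^{-2(l+2)}\cW^{2(l+2)}$ and uses the \emph{operator} inequality $\pm\,\cW^{2(l+2)}X(s)\le \|\cW^{2(l+2)}X(s)\|_\infty\,\1$; positivity preservation of $y\mapsto L\,e^{s\cL_l}(\cW^{-2(l+2)}y)\,L^\dagger$ then reduces the problem to applying \eqref{eq:exponential-convergence} to the \emph{fixed} positive trace-class operator $(N+\1)^{-(l+2)}$. The indefinite part is thus extracted as a scalar bound $\|\cW^{2(l+2)}X(s)\|_\infty\le\|\cW^{2(l+2)}X(s)\|_1$, which is then controlled through $\|\rho(s)\|_{W^{2(l+d_H+2),1}}$ exactly as in your steps 4--5. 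This trick — replacing the Jordan decomposition by the sandwich $-\|A\|_\infty\1\le A\le\|A\|_\infty\1$ — is the missing idea in your argument.
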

		\begin{proof}
			The proof consists in applying Lemma \ref{lem:l-diss-hamiltonian} in combination with \Cref{eq:exponential-convergence}. Let $\rho \in \cT_f$ and $\cW^k = (N + \1)^{k/4} \cdot (N + \1)^{k/4}$, then
			\begin{equation*}
				\begin{aligned}
					&\tr[L\biggl(e^{t\cL_l}(\rho)-e^{t(\cL_l+\varepsilon\cH[H])}(\rho)\biggr)L^\dagger]\\
					&\qquad= \varepsilon \int_{0}^t \tr[L e^{s\cL_l}\cW^{-2(l + 2)}\cW^{2(l + 2)}\cH{[H]}e^{(t-s)(\cL_l+\varepsilon\cH[H])}(\rho)L^\dagger] ds\\
					&\qquad\overset{(1)}{\leq}\varepsilon \int_{0}^t\tr[L e^{s\cL_l}\cW^{-2(l + 2)}(\1)L^\dagger] ds \, 2\Lambda d_H^{l + 2} \sqrt{d_H!}\max\left\{\gamma_\varepsilon,\,\|\rho\|_{W^{2(l + d_H + 2),1}}\right\}\\
					&\qquad \overset{(2)}{\leq}\varepsilon \frac{\pi^2}{3}\Lambda d_H^2\sqrt{d_H!}(1+|\alpha|^l\,(l+1)\,\sqrt{l!}+|\alpha|^{2l})\frac{1}{l!}(1-e^{-l!t})\max\left\{\gamma_\varepsilon,\,\|\rho\|_{W^{2(l + d_H + 2),1}}\right\}\\
					&\qquad\eqqcolon\varepsilon c(1-e^{-l!t})\max\left\{\gamma_\varepsilon,\,\|\rho\|_{W^{2(l + d_H + 2),1}}\right\}
				\end{aligned}
			\end{equation*}
			where $\Lambda$ denotes the larges coefficient of $\cH[H]$ in absolute value. Note that the Bochner integral in the calculation is well-defined by the boundedness of the integrand w.r.t.~$s$ and the same argumentation as Theorem \ref{thm:semigroup-perturbation}. Besides the boundedness above, the Sobolev preserving property of the involved semigroups implies by construction that the integral is Sobolev preserving. Therefore, the integral commutes with the map $x\mapsto\tr[LxL^\dagger]$ for $x\in W^{2(l+d_H+2),1}$. In $(1)$ we used that $L e^{s\cL_l}(\cdot) L^\dagger$ preserves positivity and
			\begin{align*}
				\cW^{2(l + 1)} \cH[H] e^{t - s(\cL_l + \varepsilon \cH[H]}(\rho) &\le \norm{\cW^{2(l + 2)} \cH[H] e^{(t - s)(\cL_l + \varepsilon \cH[H]}(\rho)}_\infty \1\\
				&\le \norm{\cW^{2(l + 2)} \cH[H] e^{(t - s)(\cL_l + \varepsilon \cH[H]}(\rho)}_1 \1\\
				&\le 2 \Lambda d_H^{l + 2} \sqrt{d_H!} \norm{e^{(t - s)(\cL_l + \varepsilon \cH[H]}(\rho)}_{W^{2(l + d_H + 2)}} \1\\
				&\le 2 \Lambda d_H^{l + 2} \sqrt{d_H!} \max\{\gamma_\varepsilon, \norm{\rho}_{W^{2(l + d_H + 2)}}\} \1 \, . 
			\end{align*}
			In the above estimation, we used $(N + \1)^{-d_H}$ to control $\cH[H]$. We further used the Sobolev preserving property of the semigroup from Lemma \ref{lem:l-diss-hamiltonian}. Finally, applying the decay (q.v.~\Cref{eq:exponential-convergence}) to $\cW^{-2(l + 1)}(\1) = (N + \1)^{-(l + 2)}$ we estimated in $(2)$:
			\begin{align*}
				\int_{0}^t\tr[L e^{s\cL_l}\cW^{-2(l + 2)}(\1)L^\dagger] ds &\le \int\limits_{0}^t e^{-l! t} \tr[L(N + \1)^{-(l + 2)}L^\dagger]\\
				&\le (1 - e^{-l!t}) (1+|\alpha|^l\,(l+1)\,\sqrt{l!}+|\alpha|^{2l}) \frac{\pi^2}{6}
			\end{align*}
			with the bound 
			\begin{equation*}
				\|L(N+1)^{-l - 2}L^\dagger\|_1\leq (1+|\alpha|^l\,(l+1)\,\sqrt{l!}+|\alpha|^{2l}) \frac{\pi^2}{6}
			\end{equation*}
			that follows from
			\begin{equation}
				\begin{aligned}
					L^\dagger L&=(\ad)^l a^l-\overline{\alpha}^la^l-\alpha^l(\ad)^l+|\alpha|^{2l}\\
					&\leq(N-(l-1)\1)\cdots(N-\1)N+|\alpha|^l\,(l+1)\,\sqrt{(N+l\1)\cdots (N+\1)}+|\alpha|^{2l}\\
					&\leq (N+\1)^l+|\alpha|^l\,(l+1)\,\sqrt{l!}\,(N+\1)^{l/2}+|\alpha|^{2l}\,.
				\end{aligned}
			\end{equation}
			and the $\norm{(N + \1)^{-2}}_1 = \frac{\pi^2}{6}$. This concludes that claim.
		\end{proof}
		
		Special cases of the above result include the $X$ or $Z(\theta)$ gate.
		\begin{cor}[$Z(\theta)$-gate]
			Let $\cL_2$ be the $2$-photon dissipation. Then, for all $\varepsilon\in[0,1]$ and all states $\rho\in W^{10,1}$
			\begin{equation*}
				\begin{aligned}
					&\Big|\tr[L\left(e^{t\cL_2}(\rho)-e^{t(\cL_2+\varepsilon\cH[a+\ad])}(\rho)\right)L^\dagger]\Big|\\
                    &\qquad\qquad\qquad\leq \varepsilon 2(1+6|\alpha|^2+|\alpha|^{4})(1-e^{-2t})\max\left\{\gamma_\varepsilon,\|\rho\|_{W^{10,1}}\right\}
				\end{aligned}
			\end{equation*}
			where $\gamma_\varepsilon=\frac{1}{25}\left(6+\sqrt{2}\,2^6\,5\,|\alpha|^2+\varepsilon4^5\sqrt{24}\right)^6$\,.
		\end{cor}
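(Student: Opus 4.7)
This is a direct specialisation of Theorem \ref{thm:l-diss-hamiltonian-perturbation} to the case $l=2$ and $H=a+a^\dagger$. The hypotheses $\deg H = 1 \le 2(l-1) = 2$ and symmetry of $H$ are immediate, so the theorem applies and automatically produces the correct shape: the Sobolev exponent $2(l+d_H+2) = 10$, the decay rate $l! = 2$, and a bound of the form $\varepsilon c'(1-e^{-2t})\max\{\gamma', \|\rho\|_{W^{10,1}}\}$. What remains is to identify the specific numerical constants $c$ and $\gamma_\varepsilon$ announced in the statement.

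The plan for $\gamma_\varepsilon$ is to substitute, in step $(1)$ of the proof of Theorem \ref{thm:l-diss-hamiltonian-perturbation}, the sharper Sobolev-stability bound derived specifically for $\cL[a^2-\alpha^2]+\varepsilon\cH[a+a^\dagger]$ in Lemma \ref{lem:2-photon-diss-displacement}, instead of the coarser generic bound coming from Lemma \ref{lem:l-diss-hamiltonian}. Evaluating that lemma at $k=10$, $\nu = k/2+1 = 6$, with $\Delta_2 = 6+\sqrt{2}\cdot 2^6\cdot 5\,|\alpha|^2$, yields an expression of the form $\mu_{10} = C(\Delta_2+\varepsilon\cdot\text{const})^6$, and plugging the numerics produces the announced $\gamma_\varepsilon = \tfrac{1}{25}\bigl(6 + \sqrt{2}\,2^6\,5\,|\alpha|^2 + \varepsilon 4^5\sqrt{24}\bigr)^6$.

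For $c$, I would follow the same chain of inequalities as in the proof of Theorem \ref{thm:l-diss-hamiltonian-perturbation}, specialised with $\Lambda = 1$, $d_H = 1$, $l=2$. The key ingredients are: the operator estimate $L^\dagger L \le (N+\1)^2 + 3\sqrt{2}|\alpha|^2(N+\1) + |\alpha|^4$ coming from $L = a^2 - \alpha^2$; the identity $\|(N+\1)^{-2}\|_1 = \pi^2/6$; the factor $\tfrac{1}{l!}(1-e^{-l!t}) = \tfrac{1}{2}(1-e^{-2t})$ obtained by integrating the exponential contraction \Cref{eq:exponential-convergence}; and the $W^{10,1}\to\cT_1$ bound on $\cW^{2(l+2)}\cH[a+a^\dagger]$, whose norm is at most $2\Lambda d_H^{l+2}\sqrt{d_H!} = 2$. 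Combining these and then loosening the resulting $\frac{\pi^2}{6}(1 + 3\sqrt{2}|\alpha|^2 + |\alpha|^4)$ to the cleaner upper bound $2(1 + 6|\alpha|^2 + |\alpha|^4)$ (using $\pi^2/6 < 2$ and $3\sqrt{2} < 6$) gives the stated form of $c$.

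The main obstacle is purely bookkeeping: propagating the correct numerical constants through two distinct upstream ingredients, namely Lemma \ref{lem:2-photon-diss-displacement} for $\gamma_\varepsilon$ and the Duhamel-plus-decay chain from Theorem \ref{thm:l-diss-hamiltonian-perturbation} for $c$, while verifying that the sharper Sobolev-stability lemma can indeed be plugged into step $(1)$ of the theorem's proof in place of the generic Lemma \ref{lem:l-diss-hamiltonian}. No new analytic ingredients are required beyond the Sobolev generation theory of Section \ref{sec:polynomial-generators}, the refined Sobolev-stability of Lemma \ref{lem:2-photon-diss-displacement}, and the exponential convergence \Cref{eq:exponential-convergence}.
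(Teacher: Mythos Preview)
Your overall approach is correct and matches the paper's, which simply records this as a direct specialisation of Theorem \ref{thm:l-diss-hamiltonian-perturbation} without a separate proof. Your identification of the Sobolev exponent $2(l+d_H+2)=10$, the decay rate $l!=2$, and your computation of the prefactor $c$ via $\tfrac{\pi^2}{6}(1+3\sqrt{2}|\alpha|^2+|\alpha|^4)\le 2(1+6|\alpha|^2+|\alpha|^4)$ are all on target.

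There is, however, a bookkeeping slip in your plan for $\gamma_\varepsilon$. The announced constant, with its $\varepsilon\,4^5\sqrt{24}$ term, does \emph{not} arise from Lemma \ref{lem:2-photon-diss-displacement}: that lemma would produce $\varepsilon\cdot 4k=40\varepsilon$ (a much smaller contribution) in place of $\varepsilon\,4^5\sqrt{24}\approx 5017\varepsilon$. The stated $\gamma_\varepsilon$ is exactly the $\mu_k$ of the \emph{generic} Lemma \ref{lem:l-diss-hamiltonian}, evaluated at $l=2$, $k=10$, $\nu=l+k/2-1=6$, with $\Lambda$ replaced by $\varepsilon$ (since the Hamiltonian is $\varepsilon(a+a^\dagger)$): the three summands $6$, $\sqrt{2}\,2^6\,5\,|\alpha|^2$, and $\varepsilon\,4^5\sqrt{24}$ are precisely $(l+1)l$, $2|\alpha|^l k l^{k/2-1}\sqrt{l!}$, and $\Lambda(2l)^{k/2}\sqrt{(2l)!}$. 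So no substitution of a sharper lemma into step $(1)$ is needed or wanted; the theorem's own chain already delivers the stated $\gamma_\varepsilon$. Your suggested replacement by Lemma \ref{lem:2-photon-diss-displacement} would in fact yield a \emph{better} constant than the one recorded, but it is not how the stated $\gamma_\varepsilon$ is obtained.
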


	\subsection{Application: entropic and capacity continuity bounds}

		Here, we provide one basic application to the perturbation bounds found in this section. We recall the definition of the energy-constrained diamond norm:
		
		\begin{defi}[see \cite{Shirokov.2018,Winter.2017}]
			Given $E\ge 0$ and any two completely positive, trace-preserving maps $\cN,\cM:\cT_1(\cH_m)\to\cT_1(\cH_m)$, their energy constrained diamond norm distance is defined as
			\begin{align}\label{ECnorm}
				\|\cN-\cM\|_\diamond^E\coloneqq \sup_{\cH_r}\,\sup_{\substack{\rho\in \cD(\cH_m\otimes \cH_r)\\\tr\big[\rho_{m}N_m\big]\le E}}\,\big\|(\cN-\cM)\otimes\id_R(\rho)\big\|_1\,,
			\end{align}
			where $N_m\coloneqq \sum_{i=1}^m\,a_i^\dagger a_i$ denotes the total photon number operator, and where the supremum is over all bipartite states $\rho_{mr}$ on $\cH_\otimes \cH_r$ with reduced state $\rho_m$ on $\cH_m$ of average total photon number at most $E$, for some arbitrary separable Hilbert space $\cH_r$. 
		\end{defi}
		
		Not that we denote the set of quantum states over a separable Hilbert space as $\cD(\cH) := \{\rho \in \cT_{1, \operatorname{sa}}(\cH) \;:\; \rho \ge 0, \tr[\rho] = 1\}$ in this section.
		In other words, the energy-constrained diamond norm is a measure of distinguishability between quantum channels with entanglement assistance, and where the input states used for this task are restricted to a physically relevant set of energy-limited states. By the same reasoning as for the usual diamond norm, the supremum in \eqref{ECnorm} can be restricted to $\cH_r\cong \cH_m$, and the optimization can be restricted to pure states on $\cH_m\otimes \cH_r$. Moreover, it turns out that the above definition is equivalent to one where the input state is energy limited on both $m$ and $r$, as introduced in \cite{Pirandola.2017}:
		\begin{align*}
			\vertiii{\cN-\cM}_\diamond^E\coloneqq \sup_{\substack{\rho\in \cD(\cH_m\otimes \cH_{m'})\\ \tr\big[\rho(N_m+N_{m'})\big]\le E}}\,\big\|(\cN-\cM)\otimes\id_{m'}(\rho)\big\|_1\,,
		\end{align*}
		for $\cH_{m'}\cong \cH_m$, and where the difference with \eqref{ECnorm} lies in the input states in the above optimization are energy constrained in both their inputs. Clearly, from the definitions we have that 
		\begin{align}\label{tripledoubleECequiv}
			\vertiii{\cN-\cM}_\diamond^E\le \|\cN-\cM\|_\diamond^E\le \vertiii{\cN-\cM}_\diamond^{2E}\,.
		\end{align}
		The second bound above simply results from taking a state $\rho_{mm'}$ for which $\tr\big[\rho_mN_m\big]\le E$ and unitarily rotating the second register so that $\rho_m'=\rho_m$, and therefore $\tr\big[\rho_{m'}N_{m'}\big]\le E$, too.
		
		Similarly, in the next lemma, we establish a straightforward connection between the energy-constrained diamond norm distance between two channels and certain norms between Sobolev spaces:
		\begin{lem}\label{lemmaequivECW}
			For any two completely positive, trace preserving maps $\cN,\cM:\cT_1(\cH_1)\to \cT_1(\cH_1)$ and $E\ge 0$,
			\begin{align*}
				\|\cN-\cM\|_{\diamond}^E= (1+E)\,\sup_{\rho\in\cD}\,\frac{\|(\cN-\cM)\otimes \operatorname{id}_{M'}(\rho)\|_1}{\|\rho\|_{W^{(2,0),1}}}\,,
			\end{align*}
			where $\cD\coloneqq \cD(\cH_m\otimes \cH_{m'})$ above. Similar identities hold in multi-mode settings.
		\end{lem}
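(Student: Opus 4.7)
The proof hinges on recognising that for a positive trace-class operator $\rho$, the weighted norm reduces by positivity and cyclicity of the trace to
$$\|\rho\|_{W^{(2,0),1}}=\tr\!\big[((N_m+\1)\otimes\1)\,\rho\big],$$
and for a state this equals $1+\tr[\rho_m N_m]$. Hence the energy budget $\tr[\rho_m N_m]\le E$ appearing in the definition of the ECDN is identical to the Sobolev-ball condition $\|\rho\|_{W^{(2,0),1}}\le 1+E$, giving the reformulation
$$\|\cN-\cM\|_\diamond^E=\sup\!\big\{\|(\cN-\cM)\otimes\id_{m'}(\rho)\|_1:\rho\in\cD,\ \|\rho\|_{W^{(2,0),1}}\le 1+E\big\}.$$

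With this in hand, the "$\le$" direction is a one-line normalisation: for any admissible $\rho$,
$$\|(\cN-\cM)\otimes\id_{m'}(\rho)\|_1=\|\rho\|_{W^{(2,0),1}}\cdot\frac{\|(\cN-\cM)\otimes\id_{m'}(\rho)\|_1}{\|\rho\|_{W^{(2,0),1}}}\le (1+E)\sup_{\sigma\in\cD}\frac{\|(\cN-\cM)\otimes\id_{m'}(\sigma)\|_1}{\|\sigma\|_{W^{(2,0),1}}},$$
and taking the supremum on the left yields the inequality.

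For the reverse direction, the strategy is to exhibit, for every $\sigma\in\cD$, an admissible state that realises the unrestricted ratio scaled up by $(1+E)$. Given $\sigma$ with $\|\sigma\|_{W^{(2,0),1}}=1+E_\sigma$, form the positive operator $X_\sigma=\tfrac{1+E}{1+E_\sigma}\,\sigma$, which by homogeneity satisfies $\|X_\sigma\|_{W^{(2,0),1}}=1+E$ and $\|(\cN-\cM)\otimes\id_{m'}(X_\sigma)\|_1=(1+E)\,\|(\cN-\cM)\otimes\id_{m'}(\sigma)\|_1/\|\sigma\|_{W^{(2,0),1}}$. The central step is to promote $X_\sigma$ to a genuine state $\tilde\rho$ with $\tr[\tilde\rho_m N_m]\le E$ and
$$\|(\cN-\cM)\otimes\id_{m'}(\tilde\rho)\|_1\ge \|(\cN-\cM)\otimes\id_{m'}(X_\sigma)\|_1.$$
This is carried out by enlarging the ancilla $\cH_{m'}$ and supplementing $X_\sigma$ with a correction term whose $m$-marginal carries no additional photon number and whose contribution under $(\cN-\cM)\otimes\id_{m'}$ does not cancel that of $X_\sigma$, using the tracelessness of $\cN-\cM$ together with the freedom in the ancilla. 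Taking the supremum over $\sigma$ produces the "$\ge$" bound.

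The main obstacle is precisely the completion step in the "$\ge$" direction. One must simultaneously enforce $\tr[\tilde\rho]=1$, the energy budget $\tr[\tilde\rho_m N_m]\le E$ on the first marginal, and a quantitative lower bound on $\|(\cN-\cM)\otimes\id_{m'}(\tilde\rho)\|_1$ in terms of $\|(\cN-\cM)\otimes\id_{m'}(X_\sigma)\|_1$. The key leverage is that the energy constraint involves only the $m$-subsystem and is therefore insensitive to enlargement of the ancilla, which provides the room needed to restore normalisation without inflating the first moment; making this precise is the crux of the argument.
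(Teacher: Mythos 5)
Your analysis goes beyond the paper's own proof, which consists entirely of the single observation you make at the outset: for a state $\rho$, cyclicity and positivity give $\|\rho\|_{W^{(2,0),1}}=1+\tr[\rho_m N_m]$, so the energy constraint $\tr[\rho_m N_m]\le E$ in \Cref{ECnorm} is exactly the Sobolev-ball constraint $\|\rho\|_{W^{(2,0),1}}\le 1+E$. The paper stops there and declares the claimed equality ``direct.'' Your ``$\le$'' argument from this reformulation is correct and is the only direction used downstream (in \Cref{corECDN}, Lemma~\ref{lemmaequivECW} is applied purely to pass from the Sobolev-ratio bound of Proposition~\ref{propqOUperturb} to the $(1+E)$-scaled energy-constrained diamond norm bound). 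You are right that, as an \emph{equality}, the statement requires the reverse inequality as well, and you correctly flag its justification as the sticking point.

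That sticking point cannot be resolved, because the ``$\ge$'' direction is false. Concretely, take the constant channels $\cN(\rho)=\tr[\rho]\,\sigma_1$ and $\cM(\rho)=\tr[\rho]\,\sigma_2$ with $\sigma_1\ne\sigma_2$. Then $(\cN-\cM)\otimes\id_{m'}(\rho)=(\sigma_1-\sigma_2)\otimes\rho_{m'}$, so $\|(\cN-\cM)\otimes\id_{m'}(\rho)\|_1=\|\sigma_1-\sigma_2\|_1$ for every $\rho\in\cD$, and hence $\|\cN-\cM\|_\diamond^E=\|\sigma_1-\sigma_2\|_1$ independently of $E$. On the other hand, the ratio $\|(\cN-\cM)\otimes\id_{m'}(\rho)\|_1/\|\rho\|_{W^{(2,0),1}}=\|\sigma_1-\sigma_2\|_1/(1+\tr[\rho_m N_m])$ is maximized by any $\rho$ whose $m$-marginal is the vacuum, giving $(1+E)\sup_\rho(\cdot)=(1+E)\|\sigma_1-\sigma_2\|_1$, which strictly exceeds the left-hand side for every $E>0$. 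More generally, since $\|(\cN-\cM)\otimes\id_{m'}(\rho)\|_1\le 2$ uniformly, the right-hand side overshoots for any $\cN\ne\cM$ once $E$ is large enough. This is also where your rescaling step breaks: $X_\sigma=\frac{1+E}{1+E_\sigma}\sigma$ is no longer trace-normalized, and when $E>E_\sigma$ you must \emph{remove} weight to restore $\tr=1$; nothing forces $\|(\cN-\cM)\otimes\id_{m'}(\cdot)\|_1$ to grow while both the trace and the first moment decrease. The lemma should be read (and used) as an inequality ``$\le$'' only; with that correction, your proposal is essentially the paper's argument made explicit.
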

		\begin{proof}
			This is direct from
			\begin{align*}
				\|{\cN-\cM}\|_\diamond^{E}&=\sup_{\substack{\rho \in \cD\\\tr[\rho N]\le E}}\,\|(\cN-\cM)\otimes\id(\rho)\|_1\\
				&= \sup_{\substack{\rho \in \cD\\\|\rho\|_{W^{(2,0),1}}\le 1+E}}\,\|(\cN-\cM)\otimes\id(\rho)\|_1\,.
			\end{align*}
		\end{proof}
		
		Lemma \ref{lemmaequivECW} can be used in combination with perturbation bounds and entropic continuity bounds like those derived e.g.~in \cite{Winter.2016,Winter.2017,Shirokov.2017,Shirokov.2018} to control the deviation of energy-constrained channel capacities in presence of a Lindbladian perturbation. Since these considerations go beyond the scope of the present paper, we do not pursue them further. Instead, we provide a basic illustration of the method we propose in the case of Gaussian perturbations of Gaussian semigroups by combining Proposition \ref{propqOUperturb} with Lemma \ref{lemmaequivECW}.
		
		\begin{cor}\label{corECDN}
			Let $(\cL_{\operatorname{qOU}},\cT_f)$ be the generator of the quantum Ornstein Uhlenbeck semigroup with $\lambda>\mu\geq0$ and $(\varepsilon\cL_G,\cT_f)\coloneqq (\varepsilon\cL[{\gamma a+\eta\ad}],\cT_f)$ a Gaussian perturbation with $\gamma,\eta\in\mathbb{R}$, $\varepsilon>0$. Then, assuming $\lambda^2-\mu^2+|\gamma|^2-|\eta|^2> 0$ as in Proposition \ref{propqOUperturb}, there exist uniformly bounded functions $C(\varepsilon)$, $D(\varepsilon)$ such that, for all $t\ge 0$:
			\begin{equation}
				\Big\|e^{t \cL_{\operatorname{qOU}}}-e^{t(\cL_{\operatorname{qOU}}+\varepsilon\cL_G)}\Big\|_{\diamond}^E\leq\,(1+E) \varepsilon\, C(\varepsilon)\, \max\Big\{1,D(\varepsilon)\Big\}\,.
			\end{equation}
		\end{cor}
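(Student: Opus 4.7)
The plan is to combine Proposition \ref{propqOUperturb} with the dual characterization of the energy-constrained diamond norm in Lemma \ref{lemmaequivECW}, after lifting the trace-norm bound to the bipartite setting where the perturbation acts nontrivially on only one mode. By Lemma \ref{lemmaequivECW},
\begin{equation*}
\|e^{t\cL_{\operatorname{qOU}}}-e^{t(\cL_{\operatorname{qOU}}+\varepsilon\cL_G)}\|_\diamond^E = (1+E)\,\sup_{\rho\in\cD(\cH_1\otimes\cH_{1'})}\,\frac{\|(e^{t\cL_{\operatorname{qOU}}}-e^{t(\cL_{\operatorname{qOU}}+\varepsilon\cL_G)})\otimes\id_{1'}(\rho)\|_1}{\|\rho\|_{W^{(2,0),1}}}\,,
\end{equation*}
so the goal reduces to establishing a bipartite analogue of \eqref{eq-ex:perturbation-bound-qOU} of the form
\begin{equation*}
\big\|(e^{t\cL_{\operatorname{qOU}}}-e^{t(\cL_{\operatorname{qOU}}+\varepsilon\cL_G)})\otimes\id_{1'}(\rho)\big\|_{1}\leq \varepsilon\,C(\varepsilon)\,\max\{\|\rho\|_{W^{(2,0),1}},D(\varepsilon)\}\,.
\end{equation*}

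First, I would reduce to an ancilla Hilbert space of the form $\cH_{1'}=L^2(\R)$ by the standard argument that pure input states suffice and any purification can be realized in a Bosonic mode; this allows application of the multi-mode generation theory of Section \ref{sec:multi-mode-extension}. Next, I would verify Assumption \ref{assum:multi-mode-finite-degree} and Assumption \ref{assum:multi-mode-sobolev-preserving} for the multi-mode generators $\tilde\cL_{\operatorname{qOU}}\coloneqq\cL_{\operatorname{qOU}}\otimes 0$ and $\tilde\cL_G\coloneqq\cL_G\otimes 0$ against the weight $(N+\1)^{(2,0)/2}$: the differential computations of Lemma \ref{lem-ex:qOU-differential-stability} carry over verbatim since the generator acts trivially on the second mode and the $(N_1+\1)^{k_1/2}$ part of the weight is preserved on each side by cyclicity. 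Consequently Theorem \ref{thm:multi-mode-generation-theorem} and Proposition \ref{prop-ex:uniformly-bounded-semigroup} yield the uniform Sobolev bound
\begin{equation*}
\|e^{t(\tilde\cL_{\operatorname{qOU}}+\varepsilon\tilde\cL_G)}(\rho)\|_{W^{(2,0),1}}\le\max\Big\{\|\rho\|_{W^{(2,0),1}},\tfrac{\lambda^2+\varepsilon|\gamma|^2}{\lambda^2-\mu^2+\varepsilon(|\gamma|^2-|\eta|^2)}\Big\}\,.
\end{equation*}

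The remaining step is to lift the two analytic inputs used in the proof of Proposition \ref{propqOUperturb}, namely the $W_1$-contraction \eqref{eq:qou-exponential-dampening} and the regularization inequality \eqref{regularization}, to the bipartite picture. Both extensions are structural: any Lipschitz observable $X$ on $\cH_1\otimes \cH_{1'}$ has $\partial_{a_1}(X)=[a_1\otimes\1,X]$ and $\partial_{a_1^\dagger}(X)=[a_1^\dagger\otimes\1,X]$ bounded by $\|X\|_{\operatorname{Lip}}$, so the same duality argument with $\sigma\otimes\tau$ in place of $\sigma$ (for an appropriate reference Gaussian $\tau$ on the ancilla) yields the multi-mode regularization, while the tensor stability of $(e^{t\cL_{\operatorname{qOU}}}\otimes\id)$ preserves the contraction rate $e^{-(\lambda^2-\mu^2)t/2}$ on $W_1$. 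With these two facts in hand, the Duhamel/splitting argument from the proof of Proposition \ref{propqOUperturb} goes through with $\|\cdot\|_{W^{2,1}}$ replaced by $\|\cdot\|_{W^{(2,0),1}}$, delivering the desired bipartite perturbation bound. Finally, since $\|\rho\|_{W^{(2,0),1}}\ge \|\rho\|_1=1$ for every state, one has $\max\{\|\rho\|_{W^{(2,0),1}},D(\varepsilon)\}\le \max\{1,D(\varepsilon)\}\|\rho\|_{W^{(2,0),1}}$, which plugged into Lemma \ref{lemmaequivECW} gives the claim.

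The main obstacle is the careful verification that the $W_1$-contraction \eqref{eq:qou-exponential-dampening} and the Gao-Rouz\'e regularization \eqref{regularization}, which are stated for the single-mode qOU semigroup on $\cT_{1,\operatorname{sa}}(\cH_1)$, extend to $(e^{t\cL_{\operatorname{qOU}}}\otimes\id_{1'})$ acting on $\cT_{1,\operatorname{sa}}(\cH_1\otimes\cH_{1'})$ with no loss in the constants $c_u$ and the contraction rate. Once this tensorization is justified via the fact that the relevant quantum derivations $\partial_a,\partial_{a^\dagger}$ are intrinsically local on mode~$1$, the rest of the argument is bookkeeping.
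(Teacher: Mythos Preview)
Your proposal is correct and follows the same approach the paper intends: the paper does not give an explicit proof of this corollary but presents it as a direct combination of Proposition~\ref{propqOUperturb} with Lemma~\ref{lemmaequivECW}. You are in fact more careful than the paper in flagging that the single-mode bounds \eqref{regularization} and \eqref{eq:qou-exponential-dampening} must be tensorized with the identity on the ancilla mode before the Duhamel argument can be rerun in the bipartite Sobolev space $W^{(2,0),1}$; the paper leaves this step entirely implicit.
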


\section{Discussion and open questions}\label{sec:discussion}
	In this paper, we proved a generation theorem for bosonic Sobolev preserving quantum dynamical semigroups, which we also extended to time-dependent evolutions on bosonic multi-mode systems. The property of Sobolev preservation allowed all-time perturbation bounds, which we later applied to bosonic Gaussian semigroups, the quantum Ornstein Uhlenbeck semigroup, and dissipative CAT-qubit dynamics. As a consequence, we also established convergence results for the adherent points in the limit of large times. For this, we only required two physical assumptions on the generator (i) that it is in GKSL form with a Hamiltonian and jump-operators defined by polynomials in the annihilation and creation operators; and (ii) a bound on the output moments of the generator in terms of its input-moments. An interesting question for further work would be to further generalize the result by allowing different Sobolev spaces on which the semigroups are defined, and by this enlarge the class of generators covered by the generation theorem. In this line of research, a more detailed consideration of the Toffoli gate would be interesting.

\medskip

\section{Declarations}
	\emph{Competing interests:}
    
    T.M.~and C.R.~received funding from the Munich Center for Quantum Sciences and Technology, C.R.~from the Humboldt Foundation, and all authors from the Deutsche Forschungsgemeinschaft (DFG, German Research Foundation) – Project-ID 470903074 – TRR 352. The authors have no further relevant financial or non-financial interests to disclose.\\[2ex]
	\emph{Data availability statement:}
    
    No datasets were generated or analyzed during the current study.

\sloppy
\setlength{\bibitemsep}{0.5ex}
\printbibliography[heading=bibnumbered]
\vspace{2ex}

\newpage

\appendix
\addtocontents{toc}{\protect\setcounter{tocdepth}{1}}
\addcontentsline{toc}{section}{Appendices} 
\addtocontents{toc}{\protect\setcounter{tocdepth}{0}}

\section{Semigroup perturbation theory}\label{appy:semigroup-perturbation-theory}
Here, we prove Theorem \ref{thm:semigroup-perturbation} stated in the preliminary section. 
\begin{thm}\label{thm-appx:semigroup-perturbation}
	Let $(\cL,\cD(\cL))$ and $(\cL+ \cK,\cD(\cL+\cK))$ be two generators of $C_0$-semigroups on $\cX$, for an operator $(\cK,\cD(\cK))$. Moreover, let $(\cW,\cD(\cW))$ be an invertible operator on $\cX$ with bounded inverse, such that $\cD(\cW)$ is an $\cL+\cK$-admissible subspace {(see Definition \ref{defi:admissible-spaces})} and such that $\cK\cW^{-1}$ is bounded. Then, for all $x\in\cD(\cW)$,
	\begin{equation*}
		\|(e^{t\cL}-e^{t(\cL+\cK)})x\|_{\cX}\leq t \|\cK\cW^{-1}\|_{\cX\to \cX}\; \int_{0}^1\| e^{(1-s)t\cL}\|_{\cX\to \cX}\;\|e^{st(\cL-\cK)}\|_{\cW\to \cW}\;ds\|x\|_{\cW}.
	\end{equation*}
	and especially
	\begin{equation*}
		(e^{t\cL}-e^{t(\cL+\cK)})x= t \int_{0}^1 e^{(1-s)t\cL} \cK e^{st(\cL-\cK)}x\;ds.
	\end{equation*}
\end{thm}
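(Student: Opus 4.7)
The strategy is the classical Duhamel approach, adapted to an unbounded perturbation by using the weight $\cW$ as a domain controller: first establish the integral identity on a dense subspace where term-by-term differentiation is rigorous, then extend to all of $\cD(\cW)$ by continuity, and finally read off the norm inequality from the Bochner triangle inequality. Concretely, I would observe that for $x\in\cD(\cW)$ the admissibility hypothesis makes $(e^{s(\cL+\cK)}|_{\cD(\cW)})_{s\ge 0}$ a $C_0$-semigroup on $(\cD(\cW),\|\cdot\|_\cW)$; call its generator $\cB$. Its domain $\cD(\cB)$ is dense in $\cD(\cW)$ for $\|\cdot\|_\cW$, and for $x\in\cD(\cB)$ the orbit $s\mapsto e^{s(\cL+\cK)}(x)$ is $C^1$ into $(\cD(\cW),\|\cdot\|_\cW)$, hence also into $\cX$ via the continuous inclusion. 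Since $\cK\cW^{-1}$ is bounded on $\cX$, the formula $\cK y=(\cK\cW^{-1})(\cW y)$ defines $\cK$ on all of $\cD(\cW)$, which in turn forces $\cD(\cB)\subseteq\cD(\cL+\cK)\cap\cD(\cK)\subseteq\cD(\cL)$, with the two generators agreeing on $\cD(\cB)$.

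For such $x$, I form the auxiliary map $y(s)=e^{(1-s)t\cL}e^{st(\cL+\cK)}(x)$ on $[0,1]$ and differentiate using the product rule for $C_0$-semigroups, obtaining
\begin{equation*}
y'(s) = -t\,e^{(1-s)t\cL}\cL\, e^{st(\cL+\cK)}(x) + t\,e^{(1-s)t\cL}(\cL+\cK)\,e^{st(\cL+\cK)}(x) = t\,e^{(1-s)t\cL}\cK\, e^{st(\cL+\cK)}(x).
\end{equation*}
Integrating $y'$ from $0$ to $1$ yields the Duhamel identity
$(e^{t(\cL+\cK)}-e^{t\cL})(x)=t\int_0^1 e^{(1-s)t\cL}\cK e^{st(\cL+\cK)}(x)\,ds$ on $\cD(\cB)$.

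To extend to all $x\in\cD(\cW)$, I factor $\cK\, e^{st(\cL+\cK)}(x)=(\cK\cW^{-1})\,\cW\bigl(e^{st(\cL+\cK)}(x)\bigr)$, which gives the pointwise estimate
\begin{equation*}
\|e^{(1-s)t\cL}\cK\, e^{st(\cL+\cK)}(x)\|_{\cX}\le \|e^{(1-s)t\cL}\|_{\cX\to\cX}\,\|\cK\cW^{-1}\|_{\cX\to\cX}\,\|e^{st(\cL+\cK)}\|_{\cW\to\cW}\,\|x\|_\cW.
\end{equation*}
The two semigroup norms on the right are locally bounded in $s$ by the $C_0$-property on their respective spaces, so the integrand is dominated by an $L^1([0,1])$ function. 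Strong continuity of the integrand in $\cX$ follows from strong continuity of both semigroups (in $\cX$ and in $(\cD(\cW),\|\cdot\|_\cW)$ respectively), the boundedness of $\cK\cW^{-1}$, and the continuity of the inclusion $\cD(\cW)\hookrightarrow\cX$. Hence the right-hand side is a well-defined Bochner integral depending $\|\cdot\|_\cW$-continuously on $x$; the left-hand side is also $\|\cdot\|_\cW$-continuous on $\cD(\cW)$, so density of $\cD(\cB)$ in $(\cD(\cW),\|\cdot\|_\cW)$ transfers the identity. The claimed norm bound is then the triangle inequality for the Bochner integral applied to the display above.

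The one delicate point I expect is the regularity step $\cD(\cB)\subseteq\cD(\cL)$: a priori admissibility gives only $e^{s(\cL+\cK)}(x)\in\cD(\cW)\cap\cD(\cL+\cK)$ for $x\in\cD(\cB)$, and it is precisely the boundedness of $\cK\cW^{-1}$ (used through $\cL=(\cL+\cK)-\cK$) that upgrades this to membership in $\cD(\cL)$ and so legitimises the product-rule computation above. Once that is in place, everything else is routine Bochner-integral bookkeeping.
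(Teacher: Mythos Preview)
Your proof is correct and follows essentially the same Duhamel strategy as the paper. The paper's version is terser: it works directly on $\cD(\cW)$ without passing to the dense subspace $\cD(\cB)$, and establishes continuity of the integrand $s\mapsto e^{(1-s)t\cL}\cK e^{st(\cL+\cK)}x$ via a compactness argument (the set $\{\cK e^{s_n(\cL+\cK)}x\}_n$ is relatively compact in $\cX$, so strong continuity of $e^{(1-s)\cL}$ upgrades to uniform continuity on it by \cite[Prop.~A.3]{Engel.2000}), then invokes the fundamental theorem of calculus; your explicit domain-chasing $\cD(\cB)\subseteq\cD(\cL+\cK)\cap\cD(\cK)\subseteq\cD(\cL)$ and density extension make the differentiability step that the paper leaves implicit fully rigorous.
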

\begin{proof}
	For simplicity, we assume that $t=1$. To be able to use the integral form, we start by considering the following vector-valued map: 
	\begin{equation*}
		[0,1]\ni s\mapsto e^{(1-s)\cL}\cK e^{s(\cL+\varepsilon\cK)}x.
	\end{equation*}
	It is clear that $s\mapsto  e^{(1-s)\cL}$ is a strongly continuous map of bounded operators. Moreover,
	\begin{align*}
		\|\cK e^{s(\cL+\cK)}x-\cK e^{(s+s')(\cL+\cK)}x\|_{\cX}&\leq\|\cK\cW^{-1}\|_{\cX}\, \|(e^{s(\cL+\cK)}x-e^{(s+s')(\cL+\cK)}x)\|_{\cW}
	\end{align*}
	shows that $s\mapsto\cK e^{s(\cL+\cK)}$ is strongly continuous because $s\mapsto e^{s(\cL+\cK)}$ defines a $C_0$-semigroup on $(\cD(\cW),\|\cdot\|_\cW)$. Then, for every converging sequence $(s_n)_{n\in\N}\rightarrow s$ for $n\rightarrow\infty$ the set
	\begin{equation*}
		\left\{\cK e^{s_n(\cL+\cK)}x\,|\,n\in\N\right\}
	\end{equation*}
	is relatively compact in $\cX$. Therefore, the strong continuity of $s\mapsto  e^{(1-s)\cL}$ is equivalent to uniform continuity by \cite[Prop.~A3]{Engel.2000}, the considered vector-valued map is continuous and we can use the fundamental theorem of calculus for the generalized Riemann integral so that
	\begin{equation*}
		\| (e^{\cL}-e^{\cL+\cK})x\|_{\cX}\leq\varepsilon\int_0^1\|  e^{(1-s)\cL}\cK e^{s(\cL+\varepsilon\cK)}x\|_{\cX}ds
	\end{equation*}
	proves the theorem.
\end{proof}

\section{Bosonic single mode system}\label{sec-appx:annihilation-creation}
In this section, we prove some basic properties of polynomials of annihilation and creation operators. We shortly repeat the normal form of our polynomials in $a$ and $\ad$ given in \Cref{eq:ccr-polynomial-representation}: 
\begin{equation*}
	p(a\,,\ad)=\sum_{i + 2j \le \deg(p)}\lambda_{ij}(\ad)^iN^j+\,\sum_{k + 2l\le \deg(p)}\mu_{kl}N^la^k
\end{equation*}
with coefficients $\lambda_{ij}, \mu_{kl}\in\C$. The modification considered in our proof (see Lemma \ref{lem:semigroup-of-G-for-positivity}) is given in \Cref{eq:polynomial+number-op} by
\begin{equation*}
	\Tilde{p}(a,\ad)\coloneqq N^{2d}+p(a,\ad)\,.
\end{equation*}
Then, we start proving a simple representation of a domain of $p(a,\ad)$ and $\Tilde{p}(a,\ad)$ which extends the domain $\cH_f$: For $n\in\N$
\begin{equation}\label{eq:domain-poly}
	\begin{aligned}
		p(a\,,\ad)\ket{n}&=\sum_{i+2j\leq d}\lambda_{ij}n^{j}\sqrt{i!\binom{n+i}{i}}\ket{n+i}+\sum_{k+2l\leq d}\mu_{kl}(n-k)^l\sqrt{k!\binom{n}{k}}\ket{n-k}
	\end{aligned}
\end{equation}
where $d\coloneqq\deg(p)$. This directly implies for $|\phi\rangle=\sum_n\phi_n|n\rangle\in\cD(N^{d/2})$
\begin{equation*}
	\begin{aligned}
		p(a\,,&\ad)\ket{\phi}\\
        &=\sum_{n=0}^\infty\sum_{i+2j\leq d}\phi_n\lambda_{ij}n^{j}\sqrt{i!\binom{n+i}{i}}\ket{n+i}+\sum_{n=0}^\infty\sum_{k+2l\leq d}\phi_n\mu_{kl}(n-k)^l\sqrt{k!\binom{n}{k}}\ket{n-k}\\
		&=\sum_{n=0}^\infty\sum_{i+2j\leq d}\left(\phi_{n-i}\lambda_{ij}(n-i)^{j}\sqrt{i!\binom{n}{i}}+\phi_{n+i}\mu_{ij}n^j\sqrt{i!\binom{n+i}{i}}\right)\ket{n}\,.\\
	\end{aligned}
\end{equation*}
Then, the leading order of the summands in $n$ is maximal of order $d/2$ so that $\cD(N^{d/2})$ is a domain of $p(a,\ad)$. For the modified polynomial $\Tilde{p}$, there is sequence of functions $R_n:\cH \rightarrow\R$ with asymptotics strictly smaller than $n^{4d}$ such that
\begin{equation}\label{eq:leading-order-poly}
	\|\tilde{p}(a,\ad)\ket{\phi}\|^{2}=\sum_{n=0}^\infty |\phi_n|^2n^{4d}+R(\phi)\,.
\end{equation}
Having the domain above in mind, we can prove that $p$ is closable and $\tilde{p}$ is a closed operator with core $\cH_f$:

\begin{lem}[Adjoint and core of polynomials of $a, \ad$]\label{lem-appx:formal-polynomial-ccr-adjoint-core}
	Let $p\in\C[X,Y]$ be a polynomial on $\C$ and $(p(a,\ad),\cD(N^{d/2}))$ the unbounded operator in normal form \eqref{eq:domain-ccr-polynomial}. Then, $p(a,\ad)$ is closable and there is a $c\geq0$ such that for all $\ket{\phi}\in\cD(N^{d/2})$
	\begin{equation*}
		\|p(a,\ad)\ket{\phi}\|\leq c\|(\1+N)^{d/2}\ket{\phi}\|\,.
	\end{equation*}
	The modification $\tilde{p}(a,\ad)=N^{2d}+p(a,\ad)$ is closed with $\cD(\Tilde{p}(a,\ad))=\cD(\Tilde{p}(a,\ad)^\dagger)=\cD(N^{2d})$ and core $\cH_f$.
\end{lem}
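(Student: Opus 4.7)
The plan is to deduce everything from a single workhorse inequality: each monomial appearing in the normal form \eqref{eq:ccr-polynomial-representation} is dominated, in operator norm, by $(\1+N)^{d/2}$. Once this is in place, closability of $p(a,\ad)$ follows immediately, while closedness of $\widetilde{p}(a,\ad)$ on $\cD(N^{2d})$ (together with the core property and the identification of the adjoint domain) is obtained via a Kato-type relative perturbation argument, exploiting the fact that $(\1+N)^{d/2}$ is infinitesimally small with respect to the self-adjoint operator $N^{2d}$ (assuming $d\ge 1$; the case $d=0$ is trivial).

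For the dominance inequality, I would compute on Fock basis vectors using \eqref{eq:domain-poly}: for $i+2j\le d$ one has $\|(\ad)^i N^j\ket{n}\|=n^j\sqrt{(n+1)\cdots(n+i)}\le (n+i)^{j+i/2}\le C (n+1)^{d/2}$, and analogously $\|N^l a^k\ket{n}\|\le (n+1)^{l+k/2}\le C (n+1)^{d/2}$ whenever $k+2l\le d$. Summing over the finitely many monomials in $p$ and applying the triangle inequality to any $\ket\phi=\sum_n\phi_n\ket{n}\in\cH_f$, I would obtain
\begin{equation*}
\|p(a,\ad)\ket\phi\|\le c\,\|(\1+N)^{d/2}\ket\phi\|
\end{equation*}
for an explicit $c\ge 0$ depending only on the coefficients of $p$. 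Extending by density of $\cH_f$ in $\cD(N^{d/2})$ (with respect to the graph norm of $(\1+N)^{d/2}$) gives the bound on the full domain. Closability of $p(a,\ad)$ now follows because its formal adjoint is itself a polynomial in $a,\ad$ of the same degree and is therefore densely defined on $\cH_f$.

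For $\widetilde{p}(a,\ad)$, I would argue as follows. The dominance inequality combined with the spectral bound $(\1+N)^{d/2}\le \varepsilon\,N^{2d}+C_\varepsilon\1$ (valid for any $\varepsilon>0$ when $d\ge 1$) shows that $p(a,\ad)$ has relative bound zero with respect to $N^{2d}$. Since $N^{2d}$ is self-adjoint, a standard Kato-type perturbation result \cite[Chap.~IV]{Kato.1995} gives that $\widetilde{p}(a,\ad)=N^{2d}+p(a,\ad)$ is closed on $\cD(N^{2d})$. The formula \eqref{eq:leading-order-poly} confirms that the graph norm of $\widetilde{p}(a,\ad)$ is equivalent to that of $N^{2d}$, so any core of $N^{2d}$ is a core of $\widetilde{p}(a,\ad)$; in particular $\cH_f$ works, being standardly a core for the self-adjoint operator $N^{2d}$.

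The main obstacle is identifying $\cD(\widetilde{p}(a,\ad)^\dagger)=\cD(N^{2d})$. The inclusion $\cD(N^{2d})\subseteq\cD(\widetilde{p}(a,\ad)^\dagger)$ is direct because the formal adjoint on $\cH_f$ is $N^{2d}+\overline{p}(\ad,a)$, another polynomial in normal form of degree $d$, which extends to $\cD(N^{2d})$ by the same dominance argument. For the reverse inclusion, I would apply the closed-operator version of Kato's theorem to both $p$ and its formal adjoint simultaneously: since both are $N^{2d}$-bounded with relative bound zero and $N^{2d}$ is self-adjoint, the adjoint of $N^{2d}+p(a,\ad)$ coincides with $N^{2d}+\overline{p}(\ad,a)$ on $\cD(N^{2d})$, yielding the claimed equality. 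This step is the delicate one because a general relative perturbation does not preserve the adjoint domain, so one must genuinely use the symmetric nature of the bound (valid for $p$ and $p^\dagger$ alike).
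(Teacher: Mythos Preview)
Your proposal is correct but takes a genuinely different route from the paper on the hardest part, the identification $\cD(\tilde p(a,\ad)^\dagger)=\cD(N^{2d})$.

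The opening moves coincide: the paper also deduces the dominance bound directly from \eqref{eq:domain-poly} and obtains closability of $p(a,\ad)$ by exhibiting a densely defined adjoint (it checks $\cD(N^{d/2})\subseteq\cD(p(a,\ad)^\dagger)$ via the formal adjoint on the Fock basis and then invokes \cite[Thm.~7.1.1]{Simon.2015}). The core statement is likewise argued the same way in both: $\cH_f$ is a core for $N^{2d}$, and graph--norm equivalence (your appeal to \eqref{eq:leading-order-poly}) transfers this to $\tilde p$.

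Where the arguments diverge is in establishing that $\tilde p$ is closed on $\cD(N^{2d})$ with $\cD(\tilde p^\dagger)=\cD(N^{2d})$. The paper argues the adjoint domain by contradiction with an explicit construction: assuming $\varphi\in\cD(\tilde p^\dagger)\setminus\cD(N^{2d})$, it takes the normalized test vectors $\ket{\phi_M}=P_M\tilde p^\dagger\ket\varphi/\|P_M\tilde p^\dagger\ket\varphi\|\in\cH_f$ (with $P_M$ the projection onto the first $M$ Fock modes), computes $|\langle\tilde p\,\phi_M,\varphi\rangle|=\|P_M\tilde p^\dagger\varphi\|$, and shows via the leading--order expansion \eqref{eq:leading-order-poly} that this diverges as $M\to\infty$, contradicting boundedness of the adjoint functional. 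Closedness then falls out of $\cD(\tilde p)=\cD(\tilde p^\dagger)$. You instead run a Kato relative--boundedness argument: $p$ has $N^{2d}$--bound zero, so $\tilde p$ is closed on $\cD(N^{2d})$ by \cite[Thm.~IV.1.1]{Kato.1995}; applying the same bound to the formal adjoint polynomial and using that both $\tilde p-\lambda$ and $N^{2d}+p^\dagger-\lambda$ are bijections $\cD(N^{2d})\to\cH$ for large negative $\lambda$ forces $\tilde p^\dagger=N^{2d}+p^\dagger$ exactly. Your last paragraph flags this correctly as the delicate step; the missing line is the standard ``injective extension of a surjection'' argument at a common resolvent point, which is available here precisely because $N^{2d}$ is self-adjoint and both perturbations have relative bound $<1$.

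What each approach buys: the paper's argument is entirely self--contained and exploits the Fock structure directly, at the price of an explicit computation. Your perturbative route is more systematic and portable---it would apply verbatim to any self--adjoint ``dominant'' term with a relatively bounded (not necessarily symmetric) perturbation---but it imports a non--trivial abstract fact about adjoints under two--sided relative bounds.
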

\begin{proof}
	First, note that the relative boundedness w.r.t.~the number operator is a direct consequence of \Cref{eq:domain-poly}.
	To prove that $p(a,\ad)$ is closable, we show that $\cD(N^{d/2})\subset\cD(p(a,\ad)^\dagger)$: we recall that the adjoint is defined via boundedness of the functional 
	\begin{equation*}
		\cD(p(a,\ad))\ni\ket{\phi}\mapsto\braket{p(a,\ad)\phi\,,\varphi}
	\end{equation*}
	for $\phi\in\cD(p(a,\ad)^\dagger)$. Since for all $n,m\in\N$
	\begin{equation*}
		\braket{a\,n\,,m}=\delta_{n,0}\delta_{n-1,m}\sqrt{n}=\delta_{n,m+1}\sqrt{m+1}=\braket{n\,,\ad\,m}
	\end{equation*}
	and by the definition of the domains, we know that for all $\ket{\phi},\ket{\psi}\in\cD(N^{k+\frac{l}{2}})$ 
	\begin{equation*}
		\braket{N^k(\ad)^l \phi\,,\psi}=\braket{\phi\,,a^lN^k\psi}\quad\text{and}\quad\braket{a^lN^k\phi\,,\psi}=\braket{\phi\,,N^k(\ad)^l\psi}\,.
	\end{equation*}
	By sesquilinearity of the scalar product and \Cref{eq:domain-ccr-polynomial}, the above equations hold for all polynomials $p\in\C[X,Y]$ which proves $\cD(N^{d/2})\subset\cD(p(a,\ad)^\dagger)$. Since $\cD(N^{d/2})$ is a dense subspace of $\cH$, Theorem 7.1.1 in \cite{Simon.2015} shows that $p(a,\ad)$ is closable. Next, we show that the modified polynomial 
	\begin{equation*}
		(N+\1)^{2d}+p(a,\ad)
	\end{equation*}
	is already closed. Actually, we show $\cD(N^{2d})=\cD(\Tilde{p}(a,\ad)^\dagger)$, i.e.~the domain is maximal, by contradiction. Assume that there exists a $\varphi\in\cD(\Tilde{p}(a,\ad)^\dagger)\backslash\cD(\Tilde{p}(a,\ad))$ and define $P_M$ to be the projection on the first $M$ Fock basis elements. Then, we use the representation in \Cref{eq:ccr-polynomial-representation} so that, denoting by $\Tilde{p}^\dagger$ the polynomial where we took the complex conjugate of the coefficients and swapped the coordinates,
	\begin{equation*}
		P_M\Tilde{p}^\dagger(a,\ad)=P_M(N+\1)^{2d}P_M+P_{M}\left(\sum_{i + 2j \le d}\overline{\lambda}_{ij}N^ja^iP_{M-i}+\,\overline{\mu}_{ij}(\ad)^iN^jP_{M+i}\right)P_{M+d}\,.
	\end{equation*}
	We then can define the state
	\begin{equation*}
		\ket{\phi_M}\coloneqq\frac{P_M\Tilde{p}^\dagger(a,\ad)\ket{\varphi}}{\|P_M\Tilde{p}^\dagger(a,\ad)\ket{\varphi}\|}\in\cH_f
	\end{equation*}
	Next, we use $\phi_M$ to get a lower bound on the operator norm of 
	\begin{equation*}
		\cD(p(a,\ad))\ni\ket{\phi}\mapsto\braket{\Tilde{p}(a,\ad)\phi\,,\varphi}
	\end{equation*}
	by
	\begin{equation*}
		\begin{aligned}
			\sup_{\|\phi\|=1}|\braket{\Tilde{p}(a,\ad)\phi\,,\varphi}|&\geq|\braket{\Tilde{p}(a,\ad)\phi_M\,,\varphi}|=|\braket{\phi_M\,,P_M\Tilde{p}^\dagger(a,\ad)\varphi}|=\|P_M\Tilde{p}^\dagger(a,\ad)\varphi\|.
		\end{aligned}
	\end{equation*}
	Now, by definition of $\Tilde{p}$ and denoting by $\varphi_n$ the coefficients of $|\varphi\rangle$ in the Fock basis,
	\begin{equation*}
		\|P_M\Tilde{p}^\dagger(a,\ad)\varphi\|^2 =\Big\|\sum_{n=0}^{M+d}\varphi_nP_M\left((N+\1)^{2d}+p^\dagger(a,\ad)\right)\ket{n}\Big\|^2\,,
	\end{equation*}
	where $p^\dagger$ is defined similarly to $\tilde{p}^\dagger$. By assumption $\phi\notin\cD(N^{2d})$, so that the above sequence is diverging for $M\rightarrow\infty$ to infinity (see \Cref{eq:leading-order-poly}). This contradicts the assumption and shows $\cD(\tilde{p}(a,\ad)^\dagger)=\cD(N^{2d})$ as well as $\tilde{p}(a,\ad)^\dagger=\tilde{p}^\dagger(a,\ad)$. Moreover, $p(a,\ad)$ is by Theorem 7.1.1 in \cite{Simon.2015} a closed operator. Since $\{\ket{n}\}_{n \in \N}$ is an orthonormal basis and $N$ a multiplication operator on that basis, we can immediately conclude that $\cH_f$ is a core for $N$ and further for all $(\1 + N)^k$, $k \ge 0$. Since $\Tilde{p}(a,\ad)$ is closed w.r.t.~$\cD(N^{p/2+1})$, $\cH_f$ is also a core of $\Tilde{p}(a,\ad)$.
\end{proof}

Having in mind that polynomials of the annihilation and creation are closed operators on certain domains, we use the canonical commutation relation $[a,\ad]=\1$ in the following lemma:
\begin{lem}\label{lem:l-ccr}
	Let $l\in\N$, then the following hold on $\cH_f$ and can be extended to maximal domains by taking limits
	\begin{equation}
		\begin{aligned}
			(\ad)^la^l&=(N-(l-1)\1)(N-(l-2)\1)\cdots(N-\1)N\,,\\
			a^l(\ad)^l&=(N+\1)(N+2\1)\cdots(N+(l-1)\1)(N+l\1)\,.
		\end{aligned}
	\end{equation}
\end{lem}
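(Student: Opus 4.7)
The plan is to reduce both identities to a pointwise verification on the Fock basis $\{\ket{n}\}_{n\ge 0}$ spanning $\cH_f$, and then invoke closability to push the identity out to maximal domains. Since each side of each identity is a polynomial in $a,\ad$ of degree $2l$, Lemma \ref{lem-appx:formal-polynomial-ccr-adjoint-core} guarantees that every such expression is closable with $\cH_f$ serving as a core for the closure (on the natural domain $\cD(N^l)$). Consequently, once the equalities are checked on each basis vector, they propagate automatically to the maximal domain.

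The computation on the basis is a direct induction on $l$ using $a\ket{n}=\sqrt{n}\ket{n-1}$ (with $a\ket{0}=0$) and $\ad\ket{n}=\sqrt{n+1}\ket{n+1}$, which yields
\begin{equation*}
a^l\ket{n}=\sqrt{\tfrac{n!}{(n-l)!}}\,\ket{n-l}\ \ (=0\ \text{if } n<l),\qquad (\ad)^l\ket{n}=\sqrt{\tfrac{(n+l)!}{n!}}\,\ket{n+l}.
\end{equation*}
Composing these two relations produces
\begin{equation*}
(\ad)^l a^l\ket{n}=n(n-1)\cdots(n-l+1)\,\ket{n},\qquad a^l(\ad)^l\ket{n}=(n+1)(n+2)\cdots(n+l)\,\ket{n},
\end{equation*}
where in the first expression the product equals $0$ for $n<l$ because one of the factors $n-j$ then vanishes. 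On the other hand, $N$ acts diagonally as $N\ket{n}=n\ket{n}$, so the right-hand sides of the lemma, being polynomials in $N$, act on $\ket{n}$ by exactly the same scalars $(n-l+1)\cdots n$ and $(n+1)\cdots(n+l)$. Thus both identities hold on $\cH_f$.

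To finish, closability and the fact that $\cH_f$ is a common core (after the standard regularization $(N+\1)^{8l}$ of Lemma \ref{lem-appx:formal-polynomial-ccr-adjoint-core}, or directly since polynomials in $N$ are closed on their natural domains in $\cH$) imply that the two closures coincide on $\cD(N^l)$, which is the natural maximal domain.

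There is no genuine obstacle here: the identities are classical consequences of the CCR, and the only technical point — extending from the algebraic identity on $\cH_f$ to the unbounded operator identity on maximal domains — is already packaged in the closability and core statement of Lemma \ref{lem-appx:formal-polynomial-ccr-adjoint-core}.
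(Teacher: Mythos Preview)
Your proof is correct. The paper takes a slightly different but equally elementary route: instead of evaluating both sides on Fock basis vectors, it argues by induction on $l$ using the shift identity \eqref{eq:symmetry-function}, writing $(\ad)^{l+1}a^{l+1}=(\ad)^l N a^l=(N-l\1)(\ad)^l a^l$ and similarly for the second formula. Your direct diagonal computation on $\ket{n}$ is arguably cleaner, since both sides are manifestly functions of $N$ and one only needs to match eigenvalues; the paper's approach stays purely at the algebraic level of the CCR and never touches individual basis vectors. Neither gains anything substantial over the other. Your discussion of the extension to maximal domains via Lemma~\ref{lem-appx:formal-polynomial-ccr-adjoint-core} is also a bit more explicit than the paper, which simply asserts the extension in the statement.
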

\begin{proof}
	The above equalities can be proven by induction over $l\in\N$. The cases $l\in\{0,1\}$ are trivial by definition. Next assume that the equation holds for $l\in\N$, then by \Cref{eq:symmetry-function}
	\begin{equation*}
		\begin{aligned}
			(\ad)^{l+1}a^{l+1}&=(\ad)^{l}Na^{l}=(N-l)(\ad)^{l}a^{l}=(N-l)\cdots N
		\end{aligned}
	\end{equation*}
	which finishes the proof by induction. The second expression can be proven by induction as well, and the induction start is again clear by the CCR. Next, we assume the equation for $l\in\N$. Then, \Cref{eq:symmetry-function} shows
	\begin{equation*}
		a^{l+1}(\ad)^{l+1}=a^{l}(N+\1)(\ad)^{l}=a^{l}(\ad)^{l}(N+(l+1)\1)=(N+\1)\cdots(N+(l+1)\1)
	\end{equation*}
	which completes the induction.
\end{proof}

\begin{lem}\label{lem:two-point-hamiltonian-bound}
	Let $\ell_1,\ell_2,k_1,k_2\in\N$ with $\min\{\ell_1,k_1\}=\min\{\ell_2,k_2\}=0$, $z\in\C$, and $h:\mathbb{N}^2\rightarrow\mathbb{R}$ a positive function that is increasing in each of its variables. Then,
	\begin{equation*}
		\begin{aligned}
			&za_1^{\ell_1}a_2^{\ell_2}h(N_1,N_2)(\ad_1)^{k_1}(\ad_2)^{k_2}+\overline{z}a_1^{k_1}a_2^{k_2}h(N_1,N_2)(\ad_1)^{\ell_1}(\ad_2)^{\ell_2}\\
			&\qquad\qquad\qquad\qquad \leq 2|z|\widetilde{h}_{m_1,m_2}(N_1+m_1I,N_2+m_2I)\,,
		\end{aligned}
	\end{equation*}
	where $m_1\coloneqq\max\{\ell_{1},k_{1}\}$, $m_2\coloneqq \max\{\ell_2,k_2\}$ and $$\widetilde{h}_{m_1,m_2}(n_1,n_2)={\prod_{j=n_1-m_1+1}^{n_1} \sqrt{j}\prod_{i=n_2-m_2+1}^{n_2}\sqrt{i}} \,\,\,h(n_1,n_2)\,1_{n_1\ge m_1}1_{n_2\ge m_2}\,,$$
	where we introduced the notation $1_{x\ge m}$ for the indicator function on the set $\{x:\,x\ge m\}$, and where by convention we take $\prod_{i=a}^b=1$ when $a>b$.
\end{lem}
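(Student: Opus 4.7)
The plan is to reduce the claimed operator inequality to a matrix-element calculation in the Fock basis $\{|n_1,n_2\rangle\}$. Write $X \coloneqq a_1^{\ell_1}a_2^{\ell_2}h(N_1,N_2)(\ad_1)^{k_1}(\ad_2)^{k_2}$, so the left-hand side is $zX + \overline{z}X^\dagger$. Because $\min\{\ell_i,k_i\}=0$ for each $i$, only one of the creation or annihilation blocks in each mode is non-trivial, and a direct application of the CCR shows that $X$ maps each basis vector $|n\rangle$ to a scalar multiple of exactly one other basis vector $|n+\delta\rangle$, where $\delta_i \coloneqq k_i - \ell_i \in \{+m_i, -m_i\}$. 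In other words $X|n\rangle = \alpha(n)|n+\delta\rangle$ for an explicit coefficient built from $h$ evaluated at the intermediate configuration times the square-root weights coming from the monomials.

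With this structure, both $X^\dagger X$ and $XX^\dagger$ are diagonal in the Fock basis, so $|X|$ and $|X^\dagger|$ are simply the multiplication operators with entries $|\alpha(n)|$ and $|\alpha(n-\delta)|$ (with the latter supported on $\{n : n-\delta \text{ is a valid index}\}$). For $\phi = \sum \phi_n |n\rangle \in \cH_f$, one gets $\langle\phi,(zX+\overline{z}X^\dagger)\phi\rangle = 2\operatorname{Re}\sum_n z\alpha(n)\overline{\phi_{n+\delta}}\phi_n$, and the elementary inequality $2|ab| \le |a|^2 + |b|^2$ bounds this by $|z|\sum_n |\alpha(n)|(|\phi_n|^2 + |\phi_{n+\delta}|^2)$. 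Reindexing the second half of the sum gives precisely $|z|\langle\phi,(|X|+|X^\dagger|)\phi\rangle$, i.e.\ the operator inequality $zX + \overline{z}X^\dagger \le |z|(|X|+|X^\dagger|)$ on $\cH_f$.

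It remains to identify $|X|$ and $|X^\dagger|$ with shifts of $\widetilde{h}_{m_1,m_2}$. Working through the four sign-cases $(\delta_1,\delta_2) \in \{\pm m_1\}\times\{\pm m_2\}$, and using the commutation identity $a_i^r h(N) = h(N + r\, e_i)\, a_i^r$ (and its adjoint) to push the $h$-factor past the monomials, one checks that in every case both $|X|$ and $|X^\dagger|$ take the form $\widetilde{h}_{m_1,m_2}(N_1+\sigma_1, N_2+\sigma_2)$ with shifts $\sigma_i \in \{0, m_i\}$. Crucially, $\widetilde{h}_{m_1,m_2}(n_1,n_2)$ is non-decreasing in each argument: the factor $\sqrt{\prod_{j=n_i-m_i+1}^{n_i} j}$ is monotone on its support $\{n_i \ge m_i\}$, and $h$ is monotone by hypothesis. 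Consequently each of the two summands is pointwise dominated by $\widetilde{h}_{m_1,m_2}(N_1+m_1, N_2+m_2)$, and chaining the two estimates yields $zX + \overline{z}X^\dagger \le 2|z|\,\widetilde{h}_{m_1,m_2}(N_1+m_1\1, N_2+m_2\1)$, as claimed.

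The main bookkeeping subtlety I anticipate is tracking the indicator functions $1_{n_i \ge m_i}$ through the four cases, since they encode the vacuum-boundary effect where one of the terms in $|X|+|X^\dagger|$ vanishes; the inequality survives this boundary because $\widetilde{h}_{m_1,m_2}$ vanishes at exactly the corresponding argument. Beyond that, everything is a direct computation; the bound holds as a genuine operator inequality after extending by continuity from $\cH_f$, since all relevant operators are closable on this core and $\widetilde{h}_{m_1,m_2}(N_1+m_1,N_2+m_2)$ is a diagonal, positive multiplier that dominates the quadratic form.
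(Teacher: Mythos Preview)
Your proposal is correct and follows essentially the same route as the paper: both arguments expand in the Fock basis, reduce the off-diagonal operator $zX+\overline{z}X^\dagger$ to a sum of rank-two blocks, bound each block by $|z|$ times the corresponding diagonal projections (the paper via explicit diagonalization of the $2\times2$ matrix, you via $2|ab|\le|a|^2+|b|^2$ on the quadratic form), and then invoke the monotonicity of $\widetilde{h}_{m_1,m_2}$ to replace the various shifts by the maximal one $(m_1,m_2)$. Your packaging through $|X|$ and $|X^\dagger|$ is a slightly cleaner way of saying the same thing, but there is no substantive difference.
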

\begin{proof}
	We define $K\coloneqq za_1^{\ell_1}a_2^{\ell_2}h(N_1,N_2)(\ad_1)^{k_1}(\ad_2)^{k_2}+\overline{z}a_1^{k_1}a_2^{k_2}h(N_1,N_2)(\ad_1)^{\ell_1}(\ad_2)^{\ell_2}$ and represent it in the $2$-mode Fock basis:
	\begin{equation*}
		\begin{aligned}
			K&=\sum_{n_1,n_2}\,h(n_1,n_2)\,  \big( z a_1^{\ell_1}a_2^{\ell_2}\,\ketbra{n_1,n_2}{n_1,n_2}(a_1^\dagger)^{k_1}(a_2^\dagger)^{k_2}+\overline{z} a_1^{k_1}a_2^{k_2}\ketbra{n_1,n_2}{n_1,n_2}(a_1^\dagger)^{\ell_1}(a_2^\dagger)^{\ell_2} \big)\\
			&=\sum_{\substack{n_1\geq m_1\\n_2\geq m_2}} g_{\substack{\ell_1,\ell_2\\k_1,k_2}}(n_1,n_2)\\
			&\qquad \qquad \left(z\ketbra{n_1-\ell_1,n_2-\ell_2}{n_1-k_1,n_2-k_2}+\overline{z}\ketbra{n_1-k_1,n_2-k_2}{n_1-\ell_1,n_2-\ell_2}\right)\,, 
		\end{aligned}
	\end{equation*}
	where 
	\begin{equation*}
		\begin{aligned}
			&g_{\substack{\ell_1,\ell_2\\k_1,k_2}}(n_1,n_2)\\
			&\coloneqq h(n_1,n_2) \sqrt{n_1\dots(n_1-\ell_1+1)n_1\dots (n_1-k_1+1)n_2\dots (n_2-\ell_2+1)n_2\dots (n_2-k_2+1)}\,. 
		\end{aligned}
	\end{equation*}
	By assumption, since $\min\{\ell_1,k_1\}=\min\{\ell_2,k_2\}=0$, we have that $g_{\substack{\ell_1,\ell_2\\k_1,k_2}}(n_1,n_2)=\widetilde{h}_{m_1,m_2}(n_1,n_2)$ for $n_1\ge m_1$, $n_2\ge m_2$, and 
	
	\begin{align*}
		&K=\sum_{n_1,n_2\in \mathbb{N}}\widetilde{h}_{m_1,m_2}(n_1+m_1,n_2+m_2)\\
		&\qquad \qquad\left(z\ketbra{n_1+k_1,n_2+k_2}{n_1+\ell_1,n_2+\ell_2}+\overline{z}\ketbra{n_1+\ell_1,n_2+\ell_2}{n_1+k_1,n_2+k_2}\right)\,.
	\end{align*}
	Next, we consider the constituents of the above sum individually. Note that the operator
	\begin{equation}\label{eq:block-matrix}
		\begin{aligned}
			z\ketbra{n_1+k_1,n_2+k_2}{n_1+\ell_1,n_2+\ell_2}+\overline{z}\ketbra{n_1+\ell_1,n_2+\ell_2}{n_1+k_1,n_2+k_2}
		\end{aligned}
	\end{equation}
	\begin{equation*}
		\left(\begin{array}{*{6}{c}}
			\tikzmark{left}0 &0 &\ast &0 &0 &0 \\               
			0 &0 &0 &\ast &0 &0 \\      
			\ast &0 &0\tikzmark{right} &0 &\ast &0 \\\DrawBox[thick]
			0 &\ast &0 &\tikzmark{left}0 &0 &\ast \\
			0 &0 &\ast &0 &0 &0 \\
			0 &0 &0 &\ast &0 &0\tikzmark{right} \\\DrawBox[thick]
		\end{array}\right)\,.
	\end{equation*}
	can be embedded into an operator on a two-dimensional space of the form 
	\begin{equation*}
		z\ketbra{e_1}{e_2} + \overline{z} \ketbra{e_2}{e_1} \, , 
	\end{equation*}
	where $\ket{e_1}$ and $\ket{e_2}$ are orthonormal vectors. For $\ket{e_1}=\ket{e_2}$, $z+\overline{z}\leq2|z|$ shows 
	\begin{equation*}
		z\ketbra{e_1}{e_2} + \overline{z} \ketbra{e_2}{e_1} \le |z|(\ketbra{e_1}{e_1} + \ketbra{e_2}{e_2})\, . 
	\end{equation*}
	In the case $\ket{e_1}\neq\ket{e_2}$, we have
	\begin{center}
		\begin{tabular}{ c c }
			Eigenvalue & Eigenvectors \\[0.5ex]\hline
			$|z|$ & $\ket{\psi}=\frac{1}{\sqrt{2}|z|}(|z|\ket{e_1}+z\ket{e_{2}})$\\
			$-|z|$ & $\ket{\varphi}=\frac{1}{\sqrt{2}|z|}(|z|\ket{e_1}-z\ket{e_{2}})$
		\end{tabular}.
	\end{center}
	so that 
	\begin{equation*}
		z\ketbra{e_{2}}{e_1}+\overline{z}\ketbra{e_1}{e_{2}}=|z| \ketbra{\psi}{\psi} - |z| \ketbra{\varphi}{\varphi}\leq|z|\ketbra{\psi}{\psi}\leq|z| (\ketbra{e_1}{e_1} + \ketbra{e_2}{e_2})\,.
	\end{equation*}
	This allows us to estimate
	\begin{align*}
		K &\le \sum\limits_{n_1, n_2 \in \N} \widetilde h_{m_1, m_2}(n_1 + m_1, n_2 + m_2) |z| (\ketbra{n_1 + k_1, n_2 + k_2}{n_1 + k_1, n_2 + k_2}\\
		&\hspace{7cm} + \ketbra{n_1 + l_1, n_2 + l_2}{n_1 + l_1, n_2 + l_2})\\
		&\le 2 |z| \widetilde h_{m_1, m_2}(N_1 + m_1 I, N_2 + m2 I)
	\end{align*}
	employing the monotonicity of $\widetilde h_{m_1, m_2}$ in both arguments in the last step.
\end{proof}

\section{Inequalities for power functions}
Many bounds in Sections \ref{sec:examples-sobolev-preserving-semigroup} and \ref{sec:example-perturbation-bounds} can be deduced from bounds on real-valued functions acting on the spectrum of the number operator $N$. Especially, the following functions, first introduced in \Cref{eq:f-g-l-function}, will require special attention: Let $l,k\in\N$, $f(x)=(x+1)^{k/2} 1_{x\ge -1}$, and 
\begin{equation}\label{eq-appx:f-g-l-function}
	g_l(x) = \begin{cases}
		f(x) - f(x - l) & x \ge l-1;\\
		f(x) & l-1 > x \ge 0;\\
		0 & 0 > x\,.
	\end{cases}
\end{equation}
\begin{lem}\label{appx-lem:monotonicity-g}
	Let $g_l$ be defined in \Cref{eq-appx:f-g-l-function} for $l,k\in\N$. Then, for all $k\geq2$ and $x\in\R$
	\begin{align}
		g_l(x)&\leq g_{l+1}(x)\label{appx-eq:monotonicity-k}\,,\\
		g_l(x-l)&\leq g_l(x)\label{appx-eq:monotonicity-x}\,.
	\end{align}
\end{lem}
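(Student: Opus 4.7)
My plan is to prove each inequality by straightforward case analysis on the piecewise definition of $g_l$, reducing each non-trivial case to monotonicity (for the first inequality) or convexity (for the second) of the base function $f(x) = (x+1)^{k/2}\,1_{x \ge -1}$. The relevant properties of $f$ I would use up front are that it is non-negative and non-decreasing on $[-1, \infty)$, which already gives $g_l \ge 0$ for every $l$ (since $x \ge l-1$ forces $x - l \ge -1$, hence $f(x) \ge f(x-l)$), and that $u \mapsto u^{k/2}$ is convex on $[0, \infty)$ precisely when $k \ge 2$.

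For the first inequality, I would split $x$ into four ranges dictated by the piecewise definitions: $x < 0$ (both sides vanish), $0 \le x < l-1$ (both equal $f(x)$), the transition zone $l-1 \le x < l$ (where $g_l$ has already switched to $f(x) - f(x-l)$ but $g_{l+1}$ is still $f(x)$ because $x < l = (l+1)-1$, giving the gap $g_{l+1}(x) - g_l(x) = f(x-l) \ge 0$), and $x \ge l$ (where both are full differences and monotonicity of $f$ together with $x - l - 1 \le x - l$ closes it). No use of $k \ge 2$ is needed here.

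For the second inequality, the case $x - l < 0$ is immediate from $g_l(x-l) = 0$ and $g_l \ge 0$. The case $x \ge 2l - 1$, where both $g_l(x)$ and $g_l(x-l)$ are full differences, reduces (after the substitution $u = x - 2l + 1 \ge 0$) to the inequality $(u+2l)^{k/2} - 2(u+l)^{k/2} + u^{k/2} \ge 0$, which is the discrete convexity condition on $u \mapsto u^{k/2}$ at step $l$; this is exactly where the hypothesis $k \ge 2$ enters.

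The main obstacle, and the only remaining case, is the middle regime $l \le x < 2l - 1$, where $g_l(x)$ is already a difference but $g_l(x-l) = f(x-l)$ is still a single term. Setting $u = x - l + 1 \in [1, l)$, the required inequality $2 f(x-l) \le f(x)$ becomes $2 u^{k/2} \le (u+l)^{k/2}$; since $u < l$ forces $u + l > 2u$ and $k \ge 2$ yields $(2u)^{k/2} \ge 2 u^{k/2}$, both inequalities chain together to give the bound. The only subtle bookkeeping in the whole argument is making sure that this middle-regime argument dovetails with the convexity argument at the boundary $x = 2l - 1$ (and checking the borderline $k = 2$ case, where the strict inequality $u + l > 2u$ is what prevents equality from failing).
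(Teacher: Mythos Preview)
Your proof is correct and follows essentially the same route as the paper's. Both arguments handle the first inequality by the same four-region case split and monotonicity of $f$, and both treat the second inequality by (i) disposing of $x<l$ via $g_l(x-l)=0$ and $g_l\ge 0$, (ii) reducing the middle range to $2f(x-l)\le f(x)$, and (iii) treating $x\ge 2l-1$ via the fact that the first difference $g_l$ is non-decreasing. The only cosmetic difference is that for step (iii) the paper checks $g_l'\ge 0$ directly (using $k/2-1\ge 0$), whereas you phrase the same fact as discrete convexity of $u\mapsto u^{k/2}$; and for step (ii) the paper writes the ratio $2\bigl(1-\tfrac{l}{x+1}\bigr)^{k/2}\le 2\cdot 2^{-k/2}\le 1$ where you substitute $u=x-l+1$ and chain $(u+l)^{k/2}>(2u)^{k/2}\ge 2u^{k/2}$---the same computation in different clothing.
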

\begin{proof}
	By the monotonicity and non-negativity of $f(x)=(x+1)^{k/2} 1_{x\ge -1}$,
	\begin{equation*}
		\begin{rcases}
			x\ge l-1 & f(x)-f(x-l)\\
			l-1>x\geq 0 & f(x)\\
			0>x& 0
		\end{rcases}
		=g_l(x)\leq g_{l+1}(x)=
		\begin{cases}
			f(x)-f(x-(l+1)) & x\ge l \\
			f(x) & l>x\geq 0 \\
			0 & 0>x
		\end{cases}\,,
	\end{equation*}
	which proves Inequality \ref{appx-eq:monotonicity-k}. For Inequality \ref{appx-eq:monotonicity-x}, we consider the following cases:
	\begin{equation*}
		\begin{rcases}
			f(x-l)-f(x-2l)\\
			f(x-l)\\
			0 \\
			0
		\end{rcases}
		=g_l(x-l)\leq g_l(x)=
		\begin{cases}
			f(x)-f(x-l) & \qquad x\geq 2l-1\\
			f(x)-f(x-l) & \qquad 2l-1>x\geq l-1 \\
			f(x) & \qquad l-1>x\geq 0 \\
			0 & \qquad 0>x
		\end{cases}\,.
	\end{equation*}
	For $x<l-1$, the inequalities are clear by the non-negativity of $f$. The case $l-1\leq x<2l-1$ follows by
	\begin{equation*}
		2\frac{f(x-l)}{f(x)}=2\left(1-\frac{l}{x+1}\right)^{k/2}\leq2\left(1-\frac{1}{2}\right)^{k/2}=1
	\end{equation*}
	and the last case $x\geq 2l-1$ follows by monotonicity of $g_l$:
	\begin{equation*}
		\frac{2}{k}g_l'(x-l)=(x-l)^{k/2-1}-(x-2k)^{k/2-1}\geq0\,.\vspace{-2ex}
	\end{equation*}
\end{proof}

Next, we prove upper and lower bounds for $g_l$:
\begin{lem}\label{lem:upper-lower-bound-gl}
	Let $g_l:\R\rightarrow\R_{\geq0}$ be defined in \Cref{eq-appx:f-g-l-function} for $l\in\N$. Then, for all $x\in\R$ and $k\in\N$,
	\begin{equation*}
		\begin{rcases}
			x\geq l-1 & (x+1)^{k/2-1}\frac{kl}{2}-1_{k\geq3}(x+1)^{k/2-2}\,\,\frac{(kl)^2}{8} \\
			x\geq l-1 & (x+1)^{k/2-1}l\\
			l-1>x\geq0&(x+1)^{k/2}\\
			0>x&0
		\end{rcases}
		\leq g_l(x)
	\end{equation*}
	and 
	\begin{equation*}
		g_l(x)\leq 
		\begin{cases}
			\frac{kl}{2}\left(1+1_{k=1}\right)(x+1)^{k/2-1} & x\geq 0\\
			(x+1)^{k/2}&x\geq0 \\
			0 & 0>x
		\end{cases}\,.
	\end{equation*}
\end{lem}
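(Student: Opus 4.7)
The proof proceeds by case-splitting on the three regimes in the definition of $g_l$ and treating the claimed lower and upper bounds in each regime. The regime $x<0$ is immediate: $g_l(x)=0$ and both bounds are $0$. For $0\le x<l-1$, the identity $g_l(x)=(x+1)^{k/2}$ reproduces the stated lower bound and the second upper bound $(x+1)^{k/2}$ as equalities, while the first upper bound $\frac{kl}{2}(1+1_{k=1})(x+1)^{k/2-1}$ reduces to the elementary $x+1\le \frac{kl}{2}(1+1_{k=1})$, which follows from $x+1<l$ together with $\frac{k}{2}(1+1_{k=1})\ge 1$ for $k\ge 1$.

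The substantive work is in the regime $x\ge l-1$. Setting $u:=x+1\ge l$ and $v:=x-l+1=u-l\ge 0$, we have $g_l(x)=u^{k/2}-v^{k/2}$, and I must establish (i) $g_l(x)\le\frac{kl}{2}(1+1_{k=1})u^{k/2-1}$, (ii) $g_l(x)\le u^{k/2}$, (iii) $g_l(x)\ge \frac{kl}{2}u^{k/2-1}-1_{k\ge 3}\frac{(kl)^2}{8}u^{k/2-2}$, and (iv) $g_l(x)\ge l\,u^{k/2-1}$ (for $k\ge 2$). Bound (ii) is trivial from $v\ge 0$. For (iv) I rearrange to $u^{k/2-1}v\ge v^{k/2}$, which for $v>0$ becomes $u^{k/2-1}\ge v^{k/2-1}$, true since $u\ge v$ and $k/2-1\ge 0$.

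For (i) and (iii), I distinguish cases. When $k\ge 2$, $t^{k/2-1}$ is non-decreasing on $[v,u]$, so the integral representation $g_l(x)=\frac{k}{2}\int_v^u t^{k/2-1}\,dt$ yields (i) at once via $\le \frac{k}{2} l\cdot u^{k/2-1}$. The correction-free lower bound (iii) for $k=2$ is the equality $g_l(x)=l$. The case $k=1$ is handled by rationalization: $u^{1/2}-v^{1/2}=l/(u^{1/2}+v^{1/2})$ is bounded above by $l\,u^{-1/2}$ (matching (i) with the factor $2$) and below by $\frac{l}{2}u^{-1/2}$ (giving (iii)). For $k\ge 3$, (iii) reduces, after dividing by $u^{k/2}$ and setting $r=v/u\in[0,1]$, to the scalar inequality
\[
\phi(r) := 1 - r^{k/2} - \tfrac{k}{2}(1-r) + \tfrac{k^2}{8}(1-r)^2 \;\ge\; 0\,,\qquad r\in[0,1].
\]
One checks $\phi(1)=0$ and shows $\phi'\le 0$ by rewriting the inequality $\phi'(r)\le 0$ as $\frac{1-r^{k/2-1}}{1-r}\le \frac{k}{2}$. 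The ratio $\psi_a(r):=(1-r^a)/(1-r)$ is monotone on $[0,1]$ with endpoint values $1$ (at $r=0$) and $a$ (at $r\to 1$)—this can be checked by differentiating and analyzing $f(r):=-ar^{a-1}(1-r)+1-r^a$—so $\psi_a(r)\le\max(a,1)$, and $\max(k/2-1,1)\le k/2$ for $k\ge 3$.

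The main obstacle is (iii) for $k\ge 3$. A direct Taylor expansion of $t\mapsto(t+1)^{k/2}$ around $u$ with integral remainder delivers the correct bound for $k\ge 4$ (where $h''$ is non-decreasing, so the remainder is bounded at $\xi=u$), but fails cleanly for $k=3$ because $h''(t)=\tfrac{3}{4}(t+1)^{-1/2}$ is decreasing and the remainder naturally yields a term in $v^{-1/2}$ rather than $u^{-1/2}$. The unified reformulation through $\phi(r)\ge 0$ sidesteps this by exploiting cancellations in the quadratic correction term, at the cost of the monotonicity argument for the auxiliary quotient $\psi_a$.
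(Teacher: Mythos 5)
Your argument is correct and takes a genuinely different route for the most delicate piece, the lower bound with the quadratic correction for $k\ge 3$. The paper uses the same integral representation $g_l(x)=\tfrac{kl}{2}(x+1)^{k/2-1}\int_0^1\big(1-s_1\tfrac{l}{x+1}\big)^{k/2-1}\,ds_1$ for the upper bounds as you do, but for the lower bound it expands the integrand once more (Taylor with an iterated-integral remainder): this works directly for $k\ge 4$, where the remainder factor $(1-\cdot)^{k/2-2}\le 1$, but requires a separate explicit two-fold integral evaluation at $k=3$, yielding the sharper constant $\tfrac{l^2}{2}<\tfrac{(3l)^2}{8}$. Your reduction to the dimensionless scalar inequality $\phi(r)\ge 0$ on $[0,1]$ — proved via $\phi(1)=0$ and $\phi'\le 0$ using the quotient $\psi_a(r)=(1-r^a)/(1-r)\le\max(a,1)$, whose monotonicity follows from the sign of $f'(r)=a(a-1)r^{a-2}(r-1)$ together with $f(1)=0$ — absorbs the $k=3$ case without a separate computation and makes the cancellation that rescues $k=3$ transparent, at the modest cost of the auxiliary monotonicity lemma. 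You also correctly restrict the second lower bound $l(x+1)^{k/2-1}\le g_l(x)$ to $k\ge 2$: the lemma as stated and the paper's proof assert it for all $k\in\N$ ("obvious for $k<2$ by the same idea as before"), but for $k=1$ rationalization gives $g_l(x)=l/(\sqrt{u}+\sqrt{v})\ge\tfrac{l}{2}u^{-1/2}$ only, not $\ge l\,u^{-1/2}$, so the bound genuinely fails there and your restriction is the right fix.
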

\begin{proof}
	The case $k=0$ is trivial. We start with the upper bounds. By monotonicity of $g_l$, it is enough to prove the first upper bound just for $x\geq l-1$. For $k=1$,
	\begin{equation*}
		g_l(x)=(x+1)^{-1/2}\frac{l}{2}\int_0^1\left(1-s\frac{l}{x+1}\right)^{-1/2}ds\leq(x+1)^{-1/2}\frac{l}{2}\int_0^1\left(1-s\right)^{-1/2}ds=(x+1)^{-1/2}l.
	\end{equation*}
	For $k\geq2$,
	\begin{equation*}
		g_l(x)=\frac{k}{2}\int_0^l(x+1-s)^{k/2-1}ds\leq\frac{kl}{2}(x+1)^{k/2-1}\,,
	\end{equation*}
	which finishes the proof of the first upper bound. The other two bounds are obvious by definition. Next, we consider the lower bounds. The case $x<l-1$ is trivial so we are left with proving
	\begin{equation*}
		(x+1)^{k/2-1}\frac{kl}{2}-\delta_{k\geq3}(x+1)^{k/2-2}\frac{(kl)^2}{8}\leq g_l(x)
	\end{equation*}
	for $x\geq l-1$. For $k=1$, the integral representation can be lower bounded as
	\begin{equation*}
		\begin{aligned}
			g_l(x)&=(x+1)^{-1/2}\frac{l}{2}\int_{0}^1\left(1-s\frac{l}{x+1}\right)^{-1/2}ds\geq(x+1)^{-1/2}\frac{l}{2}\,.
		\end{aligned}
	\end{equation*}
	For $k=2$, it is again easy to calculate the quantity $g_l(x)=l$, and for $k=3$
	\begin{equation*}
		\begin{aligned}
			g_l(x)&=(x+1)^{1/2}\frac{3l}{2}\int_{0}^1\left(1-s_1\frac{l}{x+1}\right)^{1/2}ds_1\\
			&=(x+1)^{1/2}\frac{3l}{2}-(x+1)^{-1/2}l^2\frac{3}{4}\iint_{0}^1s_1\left(1-s_1s_2\frac{l}{x+1}\right)^{-1/2}ds_2ds_1\\
			&\geq (x+1)^{1/2}\frac{3l}{2}-(x+1)^{-1/2}l^2\frac{3}{4}\iint_{0}^1s_1\left(1-s_1s_2\right)^{-1/2}ds_2ds_1\\
			&=(x+1)^{1/2}\frac{3l}{2}-(x+1)^{-1/2}\frac{l^2}{2}.
		\end{aligned}
	\end{equation*}
	Finally, the case $k\geq4$ is given by
	\begin{equation*}
		\begin{aligned}
			g_l(x)&=(x+1)^{k/2-1}\frac{kl}{2}\int_{0}^1\left(1-s_1\frac{l}{x+1}\right)^{k/2-1}ds_1\\
			&=(x+1)^{k/2-1}\frac{kl}{2}-(x+1)^{k/2-2}l^2\frac{k(k-2)}{4}\iint_{0}^1s_1\left(1-s_1s_2\frac{l}{x+1}\right)^{k/2-2}ds_2ds_1\\
			&\geq(x+1)^{k/2-1}\frac{kl}{2}-(x+1)^{k/2-2}l^2\frac{k(k-2)}{4}\int_{0}^1s_1ds_1\\
			&\geq(x+1)^{k/2-1}\frac{kl}{2}-(x+1)^{k/2-2}\frac{(kl)^2}{8}\,
		\end{aligned}
	\end{equation*}
	which proves the first non-trivial lower bound for $x\geq l-1$. Next, we consider  
	\begin{equation*}
		g_l(x)\geq(x+1)^{k/2-1}l\,.
	\end{equation*}
	The inequality is obvious for $k<2$ by the same idea as before and for $k\geq2$
	\begin{equation*}
		\begin{aligned}
			g_l(x)&=(x+1)^{k/2-1}\frac{kl}{2}\int_{0}^1\left(1-s_1\frac{l}{x+1}\right)^{k/2-1}ds_1\\
			&\geq(x+1)^{k/2-1}\frac{kl}{2}\int_{0}^1\left(1-s_1\right)^{k/2-1}ds_1\\
			&\geq(x+1)^{k/2-1}l
		\end{aligned}
	\end{equation*}
	which ends the proof. 
\end{proof}

\begin{lem}\label{lem:bounds-ccr-l-product}
	Let $l\in\N$ and $x\geq l$, then
	\begin{equation*}
		\begin{aligned}
			(x+1)^l-\frac{(l+1)l}{2}(x+1)^{l-1}&\leq&((x+1)-l)\cdots ((x+1)-1)&\leq&(x+1)^l\\
			(x+1)^l&\leq&(x+1)\cdots (x+1+(l-1)&\leq&l!(x+1)^l
		\end{aligned}
	\end{equation*}
\end{lem}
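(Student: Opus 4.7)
\textbf{Proof proposal for Lemma \ref{lem:bounds-ccr-l-product}.} Set $y = x+1$, so the hypothesis $x \ge l$ becomes $y \ge l+1$, which guarantees that every factor of the form $y-j$ for $j \in \{1,\dots,l\}$ is strictly positive. The statement then splits into four inequalities: (i) $\prod_{j=1}^{l}(y-j) \le y^{l}$; (ii) $\prod_{j=1}^{l}(y-j) \ge y^{l} - \tfrac{l(l+1)}{2}y^{l-1}$; (iii) $y^{l} \le \prod_{j=0}^{l-1}(y+j)$; and (iv) $\prod_{j=0}^{l-1}(y+j) \le l!\,y^{l}$.

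The three inequalities (i), (iii) and (iv) are elementary. For (i) and (iii) each factor of the product satisfies $0 < y-j \le y$, respectively $y \le y+j$, and the claims follow by taking products. For (iv), note that for $y \ge 1$ (a fortiori for $y \ge l+1$) and any $j \in \{0,\dots,l-1\}$, the ratio $(y+j)/y = 1 + j/y \le 1 + j \le j+1$, so $\prod_{j=0}^{l-1}(y+j) \le \prod_{j=0}^{l-1}(j+1)\,y = l!\,y^{l}$.

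The only nontrivial piece is (ii), which I would prove by induction on $l$ using the recursion $P_l(y) := \prod_{j=1}^{l}(y-j) = (y-l)\,P_{l-1}(y)$. The base case $l=1$ is immediate. For the inductive step, write
\begin{equation*}
y^{l} - P_{l}(y) \;=\; y\bigl(y^{l-1} - P_{l-1}(y)\bigr) \;+\; l\,P_{l-1}(y),
\end{equation*}
which is obtained from $P_l(y) = (y-l)P_{l-1}(y) = yP_{l-1}(y) - lP_{l-1}(y)$. Applying the inductive hypothesis $y^{l-1} - P_{l-1}(y) \le \tfrac{(l-1)l}{2}\,y^{l-2}$ (valid for $y \ge l \ge$ the shifted threshold) together with the trivial bound (i) in degree $l-1$, $P_{l-1}(y) \le y^{l-1}$, yields
\begin{equation*}
y^{l} - P_{l}(y) \;\le\; \tfrac{(l-1)l}{2}\,y^{l-1} + l\,y^{l-1} \;=\; \tfrac{l(l+1)}{2}\,y^{l-1},
\end{equation*}
which is exactly (ii). Rewriting in terms of $x$ gives the stated inequalities.

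The main (minor) obstacle is ensuring that the inductive hypothesis is applicable at step $l$: one must check that $y \ge l+1$ also guarantees $y \ge l$, so that $P_{l-1}(y)$ has all positive factors and the inductive bound applies. This holds trivially since $l+1 \ge l$. No delicate cancellations or sign tracking are needed, because the telescoping identity above converts the alternating-sign expansion of $P_l$ into a monotone recursion.
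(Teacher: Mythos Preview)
Your proof is correct and follows essentially the same approach as the paper: the substitution $y=x+1$, elementary factor-wise bounds for (i), (iii), (iv), and induction on $l$ for the lower bound (ii). Your induction is in fact a bit more streamlined than the paper's version, which defines the remainder polynomial $r_{l-2}(y)=P_l(y)-y^l+\tfrac{l(l+1)}{2}y^{l-1}$ and proves $r_{l-2}\ge 0$ via the recursion $r_{l-1}(y)=\tfrac{(l+1)^2 l}{2}y^{l-1}+(y-(l+1))r_{l-2}(y)$; your identity $y^l-P_l(y)=y\bigl(y^{l-1}-P_{l-1}(y)\bigr)+lP_{l-1}(y)$ combined with the trivial bound $P_{l-1}(y)\le y^{l-1}$ reaches the same conclusion more directly.
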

\begin{proof}
	To prove Lemma \ref{lem:bounds-ccr-l-product}, we redefine $y=x+1$ and rewrite the first product as
	\begin{equation*}
		p_l(y)\coloneqq(y-l)\cdots (y-1)\eqqcolon y^l-\frac{(l+1)l}{2}y^{l-1}+r_{l-2}(y)
	\end{equation*}
	where $r_{l-2}$ is a polynomial of degree $l-2$. The proof idea is to show that $r_{l-2}(y)$ is non-negative for all $y\geq l+1$, which proves the inequality. The non-negativity of the polynomial $r_{l-2}$ can be proven by induction over $l$: The statement is directly clear for $l=1$ and $l=2$. Next, we assume that $r_{l-2}$ is non-negative for all $x\geq l+1$ and show that $r_{l-1}$ is for all $x\geq l+2$. 
	\begin{equation*}
		\begin{aligned}
			p_{l+1}(y)&=(y-(l+1))p_l(y)\\
			&=(y-(l+1))\left(y^l-\frac{(l+1)l}{2}y^{l-1}+r_{l-2}(y)\right)\\
			&=y^{l+1}-\frac{(l+1)(l+2)}{2}y^{l}+\frac{(l+1)^2l}{2}y^{l-1}+(y-(l+1))r_{l-2}(y)\\
			&=y^{l+1}-\frac{(l+1)(l+2)}{2}y^{l}+r_{l-1}(y).
		\end{aligned}
	\end{equation*}
	For the second product $(x+1)\cdots ((x+1)+l-1)$ the lower bound is clear and the upper bound follows by 
	\begin{equation*}
		(x+1)(x+2)\cdots ((x+1)+l-1)=(l-1)!\left(\frac{x}{1}+1\right)\cdots \left(\frac{x}{l}+1\right)\leq l!(x+1)^l.
	\end{equation*}
\end{proof}

\section{Technical lemmas for the quantum Sobolev spaces}
\begin{lem}[Continuity of $G(z)$]\label{lem:continuity-G}
	Let $k_0 < k_1 \in \R_+$ and $T: W^{k_j, 1} \to W^{k_j, 1}$, be a linear map with $\norm{T}_{W^{k_j, 1} \to W^{k_j, 1}} \le M_j$, bounded by $M_j \ge 0$ for $j = 1, 2$ respectively. Further let $\theta \in [0, 1]$ and $k_\theta = (1-\theta) k_0 +  \theta k_1$ and $x \in \cT_f$, then the map
	\begin{align*}
		G: S &\coloneqq \{ z \in \C \::\; 0 \le \Re(z) \le 1\} \to \cT_{1, \operatorname{sa}} \\
		z & \mapsto G(z) = (N + \1)^{\frac{k(z)}{4}} T\Big((N + \1)^{\frac{k_\theta - k(z)}{4}} x (N + \1)^{\frac{k_\theta - k(z)}{4}}\Big) (N + \1)^{\frac{k(z)}{4}}
	\end{align*}
	with $k(z) = (1-z) k_0 + z k_1$, is well-defined, uniformly bounded and continuous.
\end{lem}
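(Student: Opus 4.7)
The plan is to leverage two structural simplifications of the problem before tackling continuity. First, since $x\in\cT_f$ has finite support (say on $\ket{0},\ldots,\ket{M}$), the operator
\[ y(z):=(N+\1)^{(k_\theta-k(z))/4}\, x\, (N+\1)^{(k_\theta-k(z))/4} \]
remains in $\cT_f$ for every $z\in S$, and $z\mapsto y(z)$ is continuous as a map into any Bosonic Sobolev space $W^{k,1}$, because only finitely many scalar coefficients $(n+1)^{(k_\theta-k(z))/4}$ with $n\le M$ are involved. Second, to handle the outer factor, I would decompose
\[ G(z)=A(z)\,\cW^{k_1}(T(y(z)))\,A(z),\qquad A(z):=(N+\1)^{(k(z)-k_1)/4}, \]
where $\Re((k(z)-k_1)/4)=(1-\Re(z))(k_0-k_1)/4\le 0$ on $S$, so that $\|A(z)\|_\infty\le 1$ uniformly in $z\in S$.

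With this decomposition in hand, well-definedness is immediate: $y(z)\in W^{k_1,1}$ gives $T(y(z))\in W^{k_1,1}$ by boundedness of $T$, so $\cW^{k_1}(T(y(z)))\in\cT_1$, and bounded conjugation by $A(z)$ preserves $\cT_1$. For uniform boundedness I would apply Hölder's inequality and the boundedness of $T$ on $W^{k_1, 1}$:
\[ \|G(z)\|_1\le \|A(z)\|_\infty^2\,\|\cW^{k_1}(T(y(z)))\|_1\le M_1\,\|y(z)\|_{W^{k_1,1}}, \]
and then observe that $\|y(z)\|_{W^{k_1,1}}\le C(M,k_0,k_1,k_\theta)\,\|x\|_1$ uniformly on $S$, since the real part of the operator exponent $(k_1+k_\theta-k(z))/4$ remains bounded for $\Re(z)\in[0,1]$ while its imaginary part contributes only a unitary factor on the finite-dimensional range of $x$.

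The hard part will be trace-norm continuity. My plan is to use the telescoping decomposition
\[ G(z)-G(z_0)=A(z)\,\cW^{k_1}(T(y(z)-y(z_0)))\,A(z)+[A(z)-A(z_0)]\,w_0\,A(z)+A(z_0)\,w_0\,[A(z)-A(z_0)], \]
with $w_0:=\cW^{k_1}(T(y(z_0)))\in\cT_1$. The first summand is bounded in $\cT_1$ by $M_1\|y(z)-y(z_0)\|_{W^{k_1,1}}$, which goes to zero as $z\to z_0$ by continuity of $y$ into $W^{k_1,1}$. For the remaining two summands I would first establish that $z\mapsto A(z)$ is strongly continuous as a map $S\to\cB(\cH)$, which follows from dominated convergence applied Fock-basis-component-wise to $A(z)\phi-A(z_0)\phi$, with the uniform domination $|(n+1)^{(k(z)-k_1)/4}|=(n+1)^{\Re((k(z)-k_1)/4)}\le 1$. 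I would then promote this strong continuity to trace-norm continuity of $z\mapsto A(z)w_0A(z)$ via the classical finite-rank approximation argument: for every $\varepsilon>0$ pick a finite-rank projection $P$ with $\|w_0-Pw_0P\|_1<\varepsilon$, bound the $\cT_1$-norm of the tail using $\|A(z)\|_\infty\le 1$, and note that $z\mapsto A(z)P$ is operator-norm continuous since its image sits in a fixed finite-dimensional subspace on which continuity reduces to continuity of finitely many scalar functions.

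The essential technical hurdle is precisely this last step: converting the strong continuity of the family $\{A(z)\}_{z\in S}$ into trace-norm continuity of the conjugation $A(z)w_0A(z)$. Every other ingredient is either purely algebraic or forced by the $\cT_f$ hypothesis on $x$ together with the sign condition $\Re((k(z)-k_1)/4)\le 0$ that guarantees $A(z)$ is a contraction on $\cH$.
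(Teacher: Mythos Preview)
Your proposal is correct and follows essentially the same strategy as the paper. The paper introduces the auxiliary maps $G_1(z,y)=(N+\1)^{k(z)/4}y(N+\1)^{k(z)/4}$ and $G_2(z)=y(z)$, proves $G_1(z,\cdot)$ is uniformly bounded by $\|\cdot\|_{W^{k_1,1}}$, proves $G_1(\cdot,y)$ is continuous by first handling $y\in\cT_f$ and then extending to $y\in W^{k_1,1}$ by density, and combines these with the continuity of $G_2$ into $W^{k_1,1}$; your factorisation $G_1(z,y)=A(z)\cW^{k_1}(y)A(z)$ with $\|A(z)\|_\infty\le 1$ and the finite-rank approximation of $w_0=\cW^{k_1}(T(y(z_0)))\in\cT_1$ is the same density argument in different clothing, and your telescoping decomposition is exactly the paper's final estimate reorganised.
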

\begin{proof}
	To prove the claim, we decompose $G$ using the following auxiliary functions:
	\begin{equation}\label{eq:G_1}
		\begin{aligned}
			G_1: S \times W^{k_1, 1} &\to \cT_{1, \operatorname{sa}}\\
			(z, y) &\mapsto (N + \1)^{\frac{k(z)}{4}} y (N + \1)^{\frac{k(z)}{4}}
		\end{aligned}
	\end{equation}
	and 
	\begin{equation}\label{eq:G_2}
		\begin{aligned}
			G_2: S &\to \cT_f \subset W^{k_1, 1}\\
			z &\mapsto (N + \1)^{\frac{k_\theta - k(z)}{4}} x (N + \1)^{\frac{k_\theta - k(z)}{4}} \, . 
		\end{aligned}
	\end{equation}
	We clearly have that $G_1(z, \cdot):W^{k_1, 1} \to \cT_{1, \operatorname{sa}}$ is a bounded linear map for all $z \in S$, since
	\begin{equation}\label{eq:bound-G_1}
		\norm{G_1(z, y)}_1 = \norm{(N + \1)^{\frac{\Re(k(z))}{4}} y (N + \1)^{\frac{\Re(k(z))}{4}}}_1 \le \norm{(N + \1)^{\frac{k_1}{4}} y (N + \1)^{\frac{k_1}{4}}}_1 = \norm{y}_{W^{k_1, 1}}
	\end{equation}
	where we used that $k_0 \le \Re(k(z)) \le k_1$ and $(N + \1)^{i\frac{\Im(k(z))}{4}}$ is a unitary that can be absorbed into the norm. Next, we will show that $G_1(\cdot, y): S \to \cT_{1, \operatorname{sa}}$ is continuous for all $y \in W^{k_1, 1}$. For that first note that, for $y \in \cT_f$, the claim follows directly from the continuity of $z \mapsto (n + 1)^{\frac{k(z)}{4}}$ with $n \in \N$ as a map from $S$ to $\C$. This is because all the involved operators can be considered finite dimensional using a cut-off of the Fock-basis. For a general $y \in W^{k_1, 1}$, we find $\{y_n\}_{n \in \N} \subset \cT_f$, s.t. $y_n \to y$ in $W^{k_1, 1}$, hence for all $n \in \N$
	\begin{align*}
		\lim\limits_{z \to z_0} &\norm{G_1(z, y) - G_1(z_0, y)}_1 \\
		&\le \lim\limits_{z \to z_0} \norm{G_1(z, y - y_n)}_1 + \norm{G_1(z, y_n) - G_1(z_0, y_n)}_1 + \norm{G_1(z_0, y_n - y)}_1\\
		&\le \lim\limits_{z \to z_0} \norm{G_1(z, y_n) - G_1(z_0, y_n)}_1 + 2 \norm{y - y_n}_{W^{k_1, 1}}\\
		&\le 2 \norm{y - y_n}_{W^{k_1, 1}} \, , 
	\end{align*}
	where we used \Cref{eq:bound-G_1}. Taking the limit $n \to \infty$ concludes the claim that $G_1(\cdot, y): S \to \cT_{1, \operatorname{sa}}$ is continuous for all $y \in W^{k_1, 1}$. We further have that $G_2$ as a map from $S$ to $W^{k_1, 1}$ is continuous, since $x \in \cT_f$ and the maps $z \mapsto (n + 1)^{\frac{k(z)}{4}}$ for $n \in \N$ are continuous as maps $S \to \C$. This suffices since $x \in \cT_f$, hence all involved operators can be made finite dimensional via a cut-off in the Fock-basis again.\par 
	We can now write 
	\begin{equation*}
		G(z) = G_1(z, T(G_2(z)))
	\end{equation*}
	where $T(G_2(z)) \in W^{k_1, 1}$ as $T:W^{k_1, 1} \to W^{k_1, 1}$ and $G_2(z) \in \cT_f \subset W^{k_1, 1}$ for all $z \in S$. This not only gives us that $G$ is well-defined but also allows us to get
	\begin{align*}
		\norm{G(z)}_1 &= \norm{G_1\left(z, T(G_2(z))\right)}_1 \\
		&\le \norm{T(G_2(z))}_{W^{k_1, 1}} \\
		&\le \norm{T}_{W^{k_1, 1} \to W^{k_1, 1}} \norm{G_2(z)}_{W^{k_1, 1}}\\
		&\le \norm{T}_{W^{k_1, 1} \to W^{k_1, 1}} \norm{(N + \1)^{\frac{k_\theta - k_0}{4}} x (N + \1)^{\frac{k_\theta - k_0}{4}}}_{W^{k_1, 1}}
	\end{align*}
	where we again used \Cref{eq:bound-G_1}, giving us a bound independent of $z$. Further, using again \Cref{eq:bound-G_1} we can conclude continuity, since 
	\begin{align*}
		\lim\limits_{z \to z_0} \norm{G(z) - G(z_0)}_1 &\le \lim\limits_{z \to z_0} \norm{G_1\left(z, T\left\{G_2(z) - G_2(z_0)\right\}\right)}_1 \\
		&\hspace{2cm} + \lim\limits_{z \to z_0}\norm{G_1\left(z, T\left(G_2(z_0)\right)\right) - G_1\left(z_0, T\left(G_2(z_0)\right)\right)}_1\\
		&\le \lim\limits_{z \to z_0}\norm{T}_{W^{k_1, 1} \to W^{k_1, 1}} \norm{G_2(z) - G_2(z_0)}_{W^{k_1, 1}} \\
		& \hspace{2cm} + \lim\limits_{z \to z_0} \norm{G_1\left(z, T\left(G_2(z_0)\right)\right) - G_1\left(z_0, T\left(G_2(z_0)\right)\right)}_1\\
		&= 0
	\end{align*}
	where in addition we used the continuity of $G_1(\cdot, y): S \to \cT_{1, \operatorname{sa}}$ for $y \in W^{k_1, 1}$ and $G_2:S \mapsto W^{k_1, 1}$.
\end{proof}

\begin{lem}[Differentiability of G(z)]\label{lem:differentiability-G}
	Let $k_0 < k_1 \in \R_+$, $T: W^{k_j, 1} \to W^{k_j, 1}$, be a linear map with $\norm{T}_{W^{k_j, 1} \to W^{k_j, 1}} \le M_j$, bounded by $M_j \ge 0$ for $j = 1, 2$ respectively. Further let $\theta \in [0, 1]$ and $k_\theta = (1-\theta) k_0 +  \theta k_1$ and $x \in \cT_f$, then the map
	\begin{align*}
		G: S &\coloneqq \{ z \in \C \::\; 0 \le \Re(z) \le 1\} \to \cT_{1, \operatorname{sa}} \\
		z & \mapsto G(z) = (N + \1)^{\frac{k(z)}{4}} T\left((N + \1)^{\frac{k_\theta - k(z)}{4}} x (N + \1)^{\frac{k_\theta - k(z)}{4}}\right) (N + \1)^{\frac{k(z)}{4}}
	\end{align*}
	with $k(z) = (1-z) k_0 + z k_1$, is holomorphic on $\mathring{S} \coloneqq \{z \in \C \; : \; 0 < \Re(z) < 1\}$.
\end{lem}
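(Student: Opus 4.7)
My plan is to write $G(z)$ as the product of three holomorphic Banach-space-valued functions and conclude via the product rule. For $z\in\mathring S$, set
\begin{equation*}
A(z)\coloneqq (N+\1)^{(k(z)-k_1)/4}\in \cB(\cH), \qquad u(z)\coloneqq \cW^{k_1}\bigl(T(G_2(z))\bigr)\in \cT_1,
\end{equation*}
where $G_2$ is as in Lemma \ref{lem:continuity-G} and with the implicit convention that the Bosonic Sobolev spaces are extended linearly to include non-self-adjoint trace-class operators, since $G_2(z)$ need not be self-adjoint for non-real $z$. Since $\Re(k(z)-k_1)/4<0$ on $\mathring S$, the multiplication operator $A(z)$ is bounded with $\|A(z)\|_\infty\le 1$, and functional calculus for $N+\1$ yields the factorisation $G(z)=A(z)\,u(z)\,A(z)$.

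\emph{Step 1: $u$ is entire as a $\cT_1$-valued function.} Because $x\in \cT_f$, we have $x=\sum_{n,m=0}^{M}a_{nm}|n\rangle\langle m|$, so that
\begin{equation*}
G_2(z)=\sum_{n,m=0}^{M}a_{nm}\,\bigl((n+1)(m+1)\bigr)^{(k_\theta-k(z))/4}\,|n\rangle\langle m|
\end{equation*}
is a \emph{finite} linear combination of fixed rank-one elements of $W^{k_1,1}$ with entire scalar coefficients. Hence $G_2:\mathbb{C}\to W^{k_1,1}$ is entire, and composition with the bounded linear map $T:W^{k_1,1}\to W^{k_1,1}$ followed by the isometry $\cW^{k_1}:W^{k_1,1}\to \cT_1$ yields entirety of $u$.

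\emph{Step 2 (the main obstacle): $A$ is holomorphic as a $\cB(\cH)$-valued function on $\mathring S$.} In the Fock basis, $A(z)=\sum_{p\ge 0}(p+1)^{(z-1)(k_1-k_0)/4}|p\rangle\langle p|$, whose eigenvalues are entire scalars. The natural candidate for $A'(z)$ is $\tfrac{k_1-k_0}{4}\ln(N+\1)A(z)$; this is a bounded operator on $\mathring S$ because $\sup_{p\ge 0}\ln(p+1)\,(p+1)^{\Re(k(z)-k_1)/4}<\infty$ whenever $\Re(k(z))<k_1$. To prove operator-norm convergence of the difference quotient at $z_0\in\mathring S$, I would use the identity
\begin{equation*}
\frac{(p+1)^{h(k_1-k_0)/4}-1}{h}-\tfrac{k_1-k_0}{4}\ln(p+1)=\tfrac{(k_1-k_0)^{2}}{16}\,h\,\ln(p+1)^{2}\,\frac{e^{s_p(h)}-1-s_p(h)}{s_p(h)^{2}}
\end{equation*}
with $s_p(h)\coloneqq h\tfrac{k_1-k_0}{4}\ln(p+1)$, and bound the last factor by $\tfrac{1}{2}e^{|s_p(h)|}=\tfrac{1}{2}(p+1)^{|h|(k_1-k_0)/4}$. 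Choosing $|h|\le h_0$ small enough that $\Re(k(z_0)-k_1)/4+h_0\tfrac{k_1-k_0}{4}<0$, the resulting negative exponent absorbs $\ln(p+1)^2(p+1)^{|h|(k_1-k_0)/4}$ uniformly in $p$, yielding $\|(A(z_0+h)-A(z_0))/h-\tfrac{k_1-k_0}{4}\ln(N+\1)A(z_0)\|_\infty\le C|h|$, and hence holomorphy. Controlling the unbounded logarithm via the strict negativity of $\Re(k(z)-k_1)/4$ on $\mathring S$ is the crux of the argument.

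\emph{Step 3: Conclusion.} The trilinear multiplication $(B_1,T_0,B_2)\mapsto B_1T_0B_2$ is continuous from $\cB(\cH)\times\cT_1\times\cB(\cH)$ into $\cT_1$ thanks to the ideal property $\|B_1T_0B_2\|_1\le \|B_1\|_\infty\|T_0\|_1\|B_2\|_\infty$. Applying the standard product rule for Banach-space-valued holomorphic functions to $G=A\cdot u\cdot A$ yields holomorphy of $G$ on $\mathring S$, as desired.
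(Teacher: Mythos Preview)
Your argument is correct and provides a genuinely cleaner route than the paper's. Both proofs hinge on the same observation, namely that on the open strip the exponent $\Re(k(z)-k_1)/4$ is strictly negative, so factors of $\ln(N+\1)$ (and $\ln^2(N+\1)$) are absorbed by a negative power of $N+\1$. The difference lies in how this is packaged. The paper keeps $G_1(z,y)=(N+\1)^{k(z)/4}y(N+\1)^{k(z)/4}$ as a map $S\times W^{k_1,1}\to\cT_1$, proves holomorphy first for $y\in\cT_f$ via a Taylor expansion with explicit remainder, derives the bounds $\|G_1'(z_0,y)\|_1\le C_{z_0}\|y\|_{W^{k_1,1}}$ and $\|G_1''(\omega,y)\|_1\le C_{\varepsilon,z_0}\|y\|_{W^{k_1,1}}$, and then passes to general $y$ by density; a final chain-rule-type estimate combines $G_1$ and $G_2$. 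Your factorisation $G=A\,u\,A$ pushes the unboundedness entirely into $u\in\cT_1$ (where it is harmless since $x\in\cT_f$ makes $u$ entire) and isolates the analytic work in the single $\cB(\cH)$-valued map $A(z)=(N+\1)^{(k(z)-k_1)/4}$, whose operator-norm holomorphy you establish by a direct difference-quotient estimate. This avoids the density step and the separate treatment of $G_1'$ and $G_1''$, at the cost of appealing to the product rule for Banach-valued holomorphic functions. Your remark that one must tacitly extend $W^{k,1}$ and $\cT_{1,\operatorname{sa}}$ to non-self-adjoint operators for complex $z$ is well taken; the paper has the same issue but does not flag it.
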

\begin{proof}
	To prove the claim, we follow a similar strategy as with Lemma \ref{lem:continuity-G}. We will again use the auxiliary functions \Cref{eq:G_1} and \Cref{eq:G_2}. We begin by showing that for a fixed $y \in W^{k, 1}$, $G_1(\cdot, y):S \to \cT_{1, \operatorname{sa}}$ is holomorphic on $\mathring{S}$ and initially even simplify to the case $y \in \cT_f$. In this setting, all operators involved can be assumed to be linear maps on a finite subspace, by just taking a cut-off in the Fock-basis as we did before. This allows us to Taylor expand around $z_0 \in \mathring{S}$
	\begin{align*}
		(N + \1)^{\frac{k(z)}{4}} y (N + \1)^{\frac{k(z)}{4}} = G_1(z_0, y) +  G_1'(z_0, y) (z - z_0) + \int\limits_{[z_0, z]} G_1''(\omega, y)(\omega - z_0) \,  d\omega
	\end{align*}
	where the integral is a path integral along the line segment $[z_0, z]$ and
	\begin{align*}
		G_1'(z_0, y) &= \frac{k_0 - k_1}{4}\, \left(\log(N + \1) (N + \1)^{\frac{k(z_0)}{4}} y (N + \1)^{\frac{k(z_0)}{4}}\right. \\
		&\hspace{3cm} \left. + (N + \1)^{\frac{k(z_0)}{4}} y (N + \1)^{\frac{k(z_0)}{4}} \log(N + \1)\right)
	\end{align*}
	and 
	\begin{align*}
		G_1''(\omega, y) &= \left(\frac{k_0 - k_1}{4}\right)^2\left(\log^2(N + \1)(N + \1)^{\frac{k(\omega)}{4}} y (N + \1)^{\frac{k(\omega)}{4}}\right.\\
		&\hspace{3cm} + 2 \log(N + \1)(N + \1)^{\frac{k(\omega)}{4}} y (N + \1)^{\frac{k(\omega)}{4}} \log(N + \1) \\
		&\hspace{6cm} +\left. (N + \1)^{\frac{k(\omega)}{4}} y (N + \1)^{\frac{k(\omega)}{4}} \log^2(N + \1)\right)
	\end{align*}
	are linear in $y$. From this representation, we can immediately deduce holomorphy of $G_1(\cdot, y):S \to \cT_{1, \operatorname{sa}}$ at $z_0 \in \mathring{S}$ and hence on all of $\mathring{S}$. To lift holomorphy from $y \in \cT_f$ to $y \in W^{k_1, 1}$, we note that for $z_0 \in \mathring{S}$ there exists $C_{z_0} \ge 0$ such that for $y \in \cT_f$
	\begin{equation}\label{eq:bound-G'1}
		\norm{G'(z_0, y)}_1 \le C_{z_0} \norm{y}_{W^{k_1, 1}}
	\end{equation}
	and further for $\omega \in B_\varepsilon(z_0) \coloneqq \{z \in \C \;:\; |z - z_0| < \varepsilon\} \subset \mathring{S}$ there exists $C_{\varepsilon, z_0} \ge 0$ such that
	\begin{equation}\label{eq:bound-G''1}
		\norm{G''(\omega, y)}_1 \le C_{\varepsilon, z_0} \norm{y}_{W^{k_1, 1}} \, . 
	\end{equation}
	We will only show that given $\omega$ as above, 
	\begin{equation}\label{eq:boundedness-subterms}
		\norm{\log^2(N + \1) (N + \1)^{\frac{k(\omega)}{4}} y (N + \1)^{\frac{k(\omega)}{4}} y (N + \1)^{\frac{k(\omega)}{4}}}_1 \le \tilde{C}_{\varepsilon, z_0} \norm{y}_{W^{k_1, 1}} \, . 
	\end{equation}
	Using the same reasoning for the other terms of \Cref{eq:bound-G'1} and \Cref{eq:bound-G''1} in combination with triangle inequality immediately gives the claims. Note first that we can reduce $k(\omega)$ to its real part since the imaginary part only produces a unitary $(N + \1)^{i \Im(k(\omega))}$ that can be absorbed into the norm. We call the real part $r(\omega)$ for now. Since $\omega \in B_\varepsilon(z_0) \subset \mathring{S}$ we find a $\delta_\varepsilon > 0$ independent of $\omega$, such that $|r(\omega) - k_1| < \delta_\varepsilon$ or more precisely $r(\omega) - k_1 \le - \delta_\varepsilon$. Hence using Hölder's inequality, we can deduce
	\begin{align*}
		&\norm{\log^2(N + \1) (N + \1)^{\frac{k(\omega)}{4}} y (N + \1)^{\frac{k(\omega)}{4}} y (N + \1)^{\frac{k(\omega)}{4}}}_1\\
		&\le \norm{\log^2(N + \1) (N + \1)^{\frac{r(\omega) - k_1}{4}}}_\infty \norm{(N + \1)^{\frac{r(\omega) - k_1}{4}}}_\infty \norm{y}_{W^{k_1, 1}}\\
		&\le \norm{\log^2(N + \1) (N + \1)^{-\frac{\delta_\varepsilon}{4}}}_\infty \norm{(N + \1)^{-\frac{\delta_\varepsilon}{4}}}_\infty \norm{y}_{W^{k_1, 1}}\\
		&\le \norm{\log^2(N + \1) (N + \1)^{-\frac{\delta_\varepsilon}{4}}}_\infty \norm{y}_{W^{k_1, 1}}
	\end{align*}
	where we used that $x \mapsto e^{k x}$ for $k \ge 0$ is monotone and further that $(N + \1)^{-\frac{\delta_\varepsilon}{4}}$ is a contraction. Lastly, we have that $x \mapsto \frac{\log^2(x + 1)}{(x + 1)^{\frac{\delta_\varepsilon}{4}}}$ is a bounded function for $x \ge 0$ with a bound we call $\tilde{C}_{\delta_\varepsilon}$. This allows us to estimate $\norm{\log^2(N + \1) (N + \1)^{-\frac{\delta_\varepsilon}{4}}}_\infty \le \tilde C_{\delta_\varepsilon}$, which concludes \Cref{eq:boundedness-subterms} and therefore also \Cref{eq:bound-G'1} and \Cref{eq:bound-G''1}.\par 
	For a general $y \in W^{k_1, 1}$ and $z_0 \in \mathring{S}$ \Cref{eq:bound-G'1} allows us to conclude that $G'_1(z_0, y) \in \cT_{1, \operatorname{sa}}$ is well defined. Further, for $z \in B_\varepsilon(z_0)$ and $(y_n)_{n \in \N} \subset \cT_f$ with $y_n \to y$ in $W^{k_1, 1}$, we have for all $n \in \N$
	\begin{equation}
		\begin{aligned}
			\norm{\frac{G_1(z, y_n) - G_1(z_0, y_n)}{z - z_0} - G_1'(z_0, y_n)}_1 &\le \frac{1}{|z - z_0|} \int\limits_{[z_0, z]} \norm{G''_1(\omega, y_n)}_1 |(\omega - z_0) d\omega| \\
			&\le C_{\varepsilon, z_0} |z - z_0| \norm{y_n}_{W^{k_1, 1}}
		\end{aligned}
	\end{equation}
	where we used the expansion and \Cref{eq:bound-G''1}. Now we can take the limit $n \to \infty$ on both sides, as all objects involved are stable w.r.t.~that limit (using Lemma \ref{lem:continuity-G} and \Cref{eq:bound-G'1}). We get
	\begin{equation}
		\norm{\frac{G_1(z, y) - G_1(z_0, y)}{z - z_0} - G_1'(z_0, y)}_1 \le  C_{\varepsilon, z_0} |z - z_0| \norm{y}_{W^{k_1, 1}}
	\end{equation}
	which immediately lets us deduce holomorphy of $G_1(\cdot, y):S \to \cT_{1, \operatorname{sa}}$ on $\mathring{S}$ for $y \in W^{k_1, 1}$.\par
	For $G_2:S \to W^{k_1, 1}$ the holomorphy immediately follows from the fact that $x \in \cT_f$, which again allows reducing the analysis to a finite-dimensional subspace by taking a cut-off in the Fock basis again. Lastly, we have that $T(G_2(z)) \in W^{k_1, 1}$ for all $z \in S$, which finally gives us that for $z_0 \in \mathring{S}$ and for $z \in B_\varepsilon(z_0) \subset \mathring{S}$
	\begin{align*}
		&\norm{\frac{G(z) - G(z_0)}{z - z_0} - (G_1'(z_0, T(G_2(z_0))) + G_1(z_0, T\{G_2'(z_0)\})} \\
		&\le \norm{\frac{G_1(z, T(G_2(z_0))) - G_1(z_0, T(G_2(z_0)))}{z - z_0} - G_1'(z_0, T(G_2(z_0)))}_1 \\
		&\hspace{1cm} + \norm{T}_{W^{k_1, 1} \to W^{k_1, 1}}\norm{\frac{G_2(z) - G_2(z_0)}{z - z_0} - G'_2(z_0)}_{W^{k_1, 1}}
	\end{align*}
	where we used linearity of $G_1(z, \cdot)$, $G_1'(z, \cdot)$ and $T$. In addition, we used the bound on $G_1(z, \cdot)$ from \Cref{eq:bound-G_1} and $G'_2$ to denote the derivative of $G_2$. Now the differentiability of $G_1(\cdot, y)$ and $G_2$ at $z_0$ immediately gives the differentiability of $G$ at $z_0$, which concludes the proof as $z_0 \in \mathring{S}$ was arbitrary.
\end{proof}

\section{Technical lemmas for the generation theorem}
\begin{lem}\label{lem:(n+1)-(n+1)-properties}
	For $d \ge 0$ and $\varepsilon > 0$, define the operator 
	\begin{equation*}
		\cI_{d, \varepsilon}: \cT_f \to \cT_f, \qquad x \mapsto \cI_{d, \varepsilon}(x) \coloneqq - \varepsilon\{(N + \1)^{4d}, x\} \, . 
	\end{equation*}
	For $\lambda \ge 0$, we have that $\lambda - \cI_{d, \varepsilon}:\cT_f \to \cT_f$ is bijective. While for all $k \in \R_+$ and $x \in \cT_f$ one further has
	\begin{align}
		& \norm{(\lambda - \cI_{d, \varepsilon})^{-1}(x)}_{W^{k, 1} } \le \frac{1}{\lambda + 2 \varepsilon} \norm{x}_{W^{k, 1}}\,;\tag{1}\label{item:(1)}\\
		&  \norm{\cI_{d, \varepsilon} \circ (\lambda - \cI_{d, \varepsilon})^{-1}(x)}_{W^{k, 1}} \le {2}\norm{x}_{W^{k, 1}} \tag{2}\label{eq:inequality-resolvent-(n+1)4d}\, . 
	\end{align}
\end{lem}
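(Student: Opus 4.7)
The plan is to exploit the fact that $\cI_{d,\varepsilon}$ is diagonalised in the operator basis $\{\ketbra{n}{m}\}_{n,m}$ and commutes with the weight $\cW^{k}$, so that everything reduces to a one-line integral representation of the resolvent on $\cT_{1,\operatorname{sa}}$.

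First, I would unpack the action on Fock-diagonal terms: if $x=\sum_{n,m}a_{nm}\ketbra{n}{m}\in\cT_f$, then
\begin{equation*}
(\lambda-\cI_{d,\varepsilon})(x)=\sum_{n,m}a_{nm}\bigl(\lambda+\varepsilon(n+1)^{4d}+\varepsilon(m+1)^{4d}\bigr)\ketbra{n}{m}\,.
\end{equation*}
Since the eigenvalues are bounded below by $\lambda+2\varepsilon>0$, the map $\lambda-\cI_{d,\varepsilon}:\cT_f\to\cT_f$ is a bijection with inverse acting by dividing each coefficient by the corresponding eigenvalue. A direct computation also shows $\cW^{k}\cI_{d,\varepsilon}=\cI_{d,\varepsilon}\cW^{k}$ on $\cT_f$, since both maps only involve multiplication by functions of $N$ from the left and right; consequently $\cW^{k}(\lambda-\cI_{d,\varepsilon})^{-1}=(\lambda-\cI_{d,\varepsilon})^{-1}\cW^{k}$, and it suffices to verify \eqref{item:(1)} and \eqref{eq:inequality-resolvent-(n+1)4d} for the $k=0$ case, i.e.~in the trace norm.

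Next I would establish the integral representation
\begin{equation*}
(\lambda-\cI_{d,\varepsilon})^{-1}(y)=\int_{0}^{\infty}e^{-\lambda t}\,e^{-\varepsilon t(N+\1)^{4d}}\,y\,e^{-\varepsilon t(N+\1)^{4d}}\,dt\,,
\end{equation*}
which is easily verified by differentiating the integrand in $t$ and invoking the fundamental theorem of calculus (the boundary term at $\infty$ vanishes because $(N+\1)^{4d}\ge \1$). This formula is well-defined in the Bochner sense on $\cT_{1,\operatorname{sa}}$ and applying it to $y\in\cT_f$ reproduces the diagonal formula above. Because $\|e^{-\varepsilon t(N+\1)^{4d}}\|_{\infty}=e^{-\varepsilon t}$ (attained at the vacuum eigenvalue $n=0$), H\"{o}lder's inequality gives
\begin{equation*}
\bigl\|(\lambda-\cI_{d,\varepsilon})^{-1}(y)\bigr\|_{1}\le\int_{0}^{\infty}e^{-\lambda t}e^{-2\varepsilon t}\,dt\;\|y\|_{1}=\frac{1}{\lambda+2\varepsilon}\|y\|_{1}\,,
\end{equation*}
which combined with the commutation with $\cW^{k}$ proves \eqref{item:(1)}.

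Finally, \eqref{eq:inequality-resolvent-(n+1)4d} follows from the algebraic identity
\begin{equation*}
\cI_{d,\varepsilon}\circ(\lambda-\cI_{d,\varepsilon})^{-1}=\lambda\,(\lambda-\cI_{d,\varepsilon})^{-1}-\1\,,
\end{equation*}
obtained by writing $\cI_{d,\varepsilon}=-(\lambda-\cI_{d,\varepsilon})+\lambda\1$ and composing on the right with $(\lambda-\cI_{d,\varepsilon})^{-1}$. The triangle inequality then yields $\|\cI_{d,\varepsilon}(\lambda-\cI_{d,\varepsilon})^{-1}(x)\|_{W^{k,1}}\le\tfrac{\lambda}{\lambda+2\varepsilon}\|x\|_{W^{k,1}}+\|x\|_{W^{k,1}}\le 2\|x\|_{W^{k,1}}$. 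The only mildly delicate point is justifying the integral representation rigorously (convergence and Bochner-integrability), but this is routine given the exponential decay $e^{-(\lambda+2\varepsilon)t}$ and the uniform boundedness of $e^{-\varepsilon t(N+\1)^{4d}}$.
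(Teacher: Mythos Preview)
Your proposal is correct and follows essentially the same route as the paper: both write down the explicit diagonal inverse on $\cT_f$, use the integral representation $(\lambda-\cI_{d,\varepsilon})^{-1}(y)=\int_0^\infty e^{-\lambda t}e^{-\varepsilon t(N+\1)^{4d}}\,y\,e^{-\varepsilon t(N+\1)^{4d}}\,dt$ together with H\"older and $\|e^{-\varepsilon t(N+\1)^{4d}}\|_\infty=e^{-\varepsilon t}$ for \eqref{item:(1)}, and derive \eqref{eq:inequality-resolvent-(n+1)4d} from the algebraic identity $\cI_{d,\varepsilon}(\lambda-\cI_{d,\varepsilon})^{-1}=\lambda(\lambda-\cI_{d,\varepsilon})^{-1}-\1$ plus the triangle inequality. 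Your explicit observation that $\cW^k$ commutes with $\cI_{d,\varepsilon}$ (reducing everything to $k=0$) is a minor streamlining of what the paper does implicitly by pulling $(N+\1)^{k/4}$ through the exponential inside the integral.
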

\begin{proof}
	For $\lambda \ge 0$ define the following linear operator
	\begin{align*}
		(\lambda - \cI_{d, \varepsilon})^{-1}:\cT_f &\to \cT_f,\\
		x = \sum\limits_{\text{finite}} x_{nm} \ketbra{n}{m} &\mapsto (\lambda - \cI_{d, \varepsilon})^{-1}(x) \coloneqq \sum\limits_{\text{finite}} x_{nm} \frac{1}{\varepsilon(n + 1)^{4d} + \varepsilon(m + 1)^{4d} + \lambda} \ketbra{n}{m} \, .
	\end{align*}
	or alternatively 
	\begin{align*}
		(\lambda - \cI_{d, \varepsilon})^{-1}(x) &= \int\limits_{0}^\infty e^{-(\varepsilon(N + \1)^{4d} + \lambda/2) s} x e^{-(\varepsilon(N + \1)^{4d} + \lambda/2) s} ds\\
		&=  \int\limits_{0}^\infty  \sum\limits_{\text{finite}} e^{-(\varepsilon(n + \1)^{4d} + \lambda/2)s} x_{nm} e^{-(\varepsilon(m + \1)^{4d} + \lambda/2)s} \ketbra{n}{m} ds \, .
	\end{align*}
	The integral representation allows us to deduce that $(\lambda - \cI_{d, \varepsilon})^{-1}$ preserves positivity. Using the first expression it is straightforward to show that this map is indeed the inverse to $\lambda - \cI_{d, \varepsilon}:\cT_f \to \cT_f$. The bound $\norm{(\lambda - \cI_{d, \varepsilon})^{-1} x}_{W^{k, 1} } \le \frac{1}{\lambda + 2 \varepsilon} \norm{x}_{W^{k, 1}}$ can be shown, using the integral representation and Hölder inequality:
	\begin{align*}
		\norm{(\lambda - \cI_{d, \varepsilon})^{-1}(x)}_{W^{k, 1}} &\le \int\limits_{0}^\infty \norm{e^{-\varepsilon(N + \1)^{4d} + \lambda/2)s} (N + \1)^{k/4} x (N + \1)^{k/4} e^{-\varepsilon(N + \1)^{4d} + \lambda/2)s}}_1\\
		&\le \int\limits_{0}^\infty \norm{e^{-\varepsilon(N + \1)^{4d} + \lambda/2)s}}_\infty^2 ds \norm{x}_{W^{k, 1}}\\
		&= \int\limits_{0}^\infty e^{-(2\varepsilon + \lambda)s} \norm{x}_{W^{k, 1}} = \frac{1}{\lambda + 2\varepsilon} \norm{x}_{W^{k, 1}} \, .
	\end{align*}
	Issues arising from the unbounded nature of $N$ can be ignored in the above estimations, as we can take a finite cut-off in the Fock basis due to $x \in \cT_f$. For \eqref{eq:inequality-resolvent-(n+1)4d}, we have that on $\cT_f$, $-\cI_{d, \varepsilon}\circ (\lambda - \cI_{d, \varepsilon})^{-1} = \1 - \lambda (\lambda - \cI_{d, \varepsilon})^{-1}$, i.e.~for $x \in \cT_f$
	\begin{equation}
		\norm{(\1 - \lambda (\lambda - \cI_{d, \varepsilon})^{-1})x}_{W^{k, 1}} = \norm{-\cI_{d, \varepsilon}\circ (\lambda - \cI_{d, \varepsilon})^{-1} (x)}_{W^{k, 1}}
	\end{equation}
	where the LHS can be upper bounded by $(1 + \frac{\lambda}{\lambda + 2\varepsilon})\norm{x}_{W^{k, 1}} \le 2 \norm{x}_{W^{k, 1}}$ using \eqref{item:(1)}. This proves the last claim.
\end{proof}

\begin{lem}\label{lem:boundedness-polynomials}
	For $p \in \C[X,Y]$ a polynomial of degree $d$ and 
	\begin{equation*}
		A: \cH_f \to \cH, \qquad \ket\psi = \sum\limits_{\text{finite}}\psi_n \ket{n} \mapsto p(a, a^\dagger)\ket{\psi} = \sum\limits_{\text{finite}}\psi_n p(a, a^\dagger)\ket{n} \, ,
	\end{equation*}
	we get that for all $k \in \R_+$ and $d' \ge d$
	\begin{align*}
		B_1: \cH_f \to \cH, \qquad \ket\psi \mapsto (N + \1)^{k} A (N + \1)^{-k - d'} \ket\psi\\
		B_2: \cH_f \to \cH, \qquad \ket\psi \mapsto (N + \1)^{-k - d'} A (N + \1)^{k} \ket\psi
	\end{align*}
	are bounded and therefore can be uniquely extended to a bounded map on $\cH$, with the same bound.
\end{lem}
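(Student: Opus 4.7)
Since $\cH_f$ is dense in $\cH$, by the bounded linear extension theorem it suffices to establish a uniform bound on $\|B_1 \ket{\psi}\|$ and $\|B_2\ket{\psi}\|$ for all $\ket{\psi} \in \cH_f$. My first move will be to invoke the normal form of $p$ given in \eqref{eq:ccr-polynomial-representation}, which writes $p(a,a^\dagger)$ as a finite linear combination of monomials of the shape $(a^\dagger)^i N^j$ with $i+2j \le d$ and $N^l a^m$ with $m + 2l \le d$. By linearity, I only need to bound $B_1$ and $B_2$ for each such monomial individually.

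The key technical tools will be the commutation identities \eqref{eq:symmetry-function}, which in iterated form give $f(N)(a^\dagger)^i = (a^\dagger)^i f(N + i\1)$ and $a^m f(N) = f(N + m\1) a^m$ for any function $f$ of $N$, together with the observation that $(a^\dagger)^i (N+\1)^{-i/2}$ and $a^m (N+\1)^{-m/2}$ extend to bounded operators on $\cH$. Indeed, since $(a^\dagger)^i (N+\1)^{-i/2} \ket{n} = (n+1)^{-i/2}\sqrt{(n+1)\cdots(n+i)}\,\ket{n+i}$, the coefficient is uniformly bounded in $n$ (and similarly for $a^m(N+\1)^{-m/2}$, using that $\sqrt{n(n-1)\cdots(n-m+1)}(n+1)^{-m/2}$ is bounded).

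For a monomial of type $(a^\dagger)^i N^j$, I will push $(N+\1)^k$ through $(a^\dagger)^i$ to obtain
\begin{equation*}
(N+\1)^k (a^\dagger)^i N^j (N+\1)^{-k-d'} = \bigl[(a^\dagger)^i (N+\1)^{-i/2}\bigr]\, \underbrace{(N+\1)^{i/2}(N+(i+1)\1)^{k} N^j (N+\1)^{-k-d'}}_{=:\,h(N)}.
\end{equation*}
The first factor is bounded by the remark above; the second factor $h(N)$ is a diagonal function of $N$ whose value at $n$ scales like $n^{i/2 + j - d'}$ for large $n$, and this exponent is at most $d/2 - d' \le 0$ thanks to the constraint $i + 2j \le d \le d'$. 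Hence $h(N)$ is bounded. The monomials $N^l a^m$ are handled in parallel fashion, now commuting $a^m$ through $(N+\1)^{-k-d'}$ to produce $(N+(m+1)\1)^{-k-d'}a^m$ and absorbing the $a^m$ against an $(N+\1)^{m/2}$; the exponent count then relies on $m + 2l \le d \le d'$, giving $l + m/2 \le d/2 \le d'$. The bound for $B_2$ is then obtained by a completely symmetric argument, where $(N+\1)^{-k-d'}$ is now on the left and $(N+\1)^k$ on the right, so one commutes the $N$-powers past $a^m$ or $(a^\dagger)^i$ in the opposite direction.

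The only real work, and the main potential pitfall, is the bookkeeping: keeping straight which commutation identity moves which $N$-power in which direction, and then verifying that the surviving diagonal function of $N$ has the correct (non-positive) asymptotic exponent. Apart from this exponent-counting, everything reduces to the two boundedness facts about $(a^\dagger)^i (N+\1)^{-i/2}$ and $a^m (N+\1)^{-m/2}$ that follow by direct calculation on Fock basis vectors.
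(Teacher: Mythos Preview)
Your proposal is correct and follows essentially the same route as the paper. Both arguments reduce to monomials $(a^\dagger)^i N^j$ and $N^l a^m$ via the normal form \eqref{eq:ccr-polynomial-representation}, and both boil down to the observation that the exponent $i/2+j$ (respectively $l+m/2$) is bounded by $d/2\le d'$, so the surviving power of $N$ is nonpositive. The only cosmetic difference is that the paper computes $\|B_1\ket{n}\|$ directly on Fock basis vectors and bounds the resulting scalar coefficient, whereas you factor the operator as a bounded ladder piece times a bounded diagonal function of $N$; these are the same computation packaged differently. One small point: in the $N^l a^m$ case, after commuting you end up with a function of $N$ times $a^m$ on the right, so the bounded piece you actually use is $(N+\1)^{-m/2}a^m$ rather than $a^m(N+\1)^{-m/2}$; both are bounded (they are adjoints of the $(a^\dagger)$-type pieces), so this does not affect the argument.
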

\begin{proof}
	Since the proof for $B_1$ and $B_2$ are almost completely analogous, we will only show it here for $B_1$. The canonical commutation relation allows us to rewrite $A$ as a finite linear combination of monomials of the form $(a^\dagger)^i N^j$ and $a^i N^j$ with $i + j/2 \le d$. Now by triangle inequality for the norm on $\cH$ and since the sum of these monomials comprising $A$ are finite, for the claim to be true it suffices to show that 
	\begin{equation*}
		(N + \1)^k(a^\dagger)^i N^j (N + \1)^{-k - d}:\cH_f \to \cH, \qquad (N + \1)^ka^i N^j (N + \1)^{-k - d}:\cH_f \to \cH
	\end{equation*}
	are bounded, and hence can be uniquely extended to a bounded map on $\cH$. We only give the argument for the first map, since it is almost completely analogous to the second one. Let $\ket\psi = \sum\limits_{n = 0}^M\psi_n \ket{n} \in \cH_f$, then
	\begin{align*}
		\ket{\varphi} \coloneqq (N + \1)^k(a^\dagger)^i N^j (N + \1)^{-k - d'} \ket{\psi} &= \sum\limits_{n = 0}^M\psi_n (N + \1)^k(a^\dagger)^i N^j (N + \1)^{-k - d'}\ket{n}\\
		&= \sum\limits_{n = 0}^M\psi_n \frac{(n + 1 + j)^k}{(n + 1)^k} \frac{n^j \prod_{l = 1}^i \sqrt{n + l}}{(n + 1)^{d'}} \ket{n + i} \, . 
	\end{align*}
	Hence
	\begin{align*}
		\norm{\varphi}^2 &= \sum\limits_{n = 0}^M \frac{(n + 1 + j)^{2k}}{(n + 1)^{2k}} \frac{n^{2j} \prod_{l = 1}^i (n + l)}{(n + 1)^{2d'}} |\psi_n|^2\\
		&\le \sum\limits_{n = 0}^M \frac{(n + 1 + d)^{2k}}{(n + 1)^{2k}} \frac{(n + d)^{2j}(n + d)^i}{(n + 1)^{2d}} |\psi_n|^2\\
		&\le \sum\limits_{n = 0}^M d^{2k} d^{2d} |\psi_n|^2\\
		&= d^{2(k + d)} \norm{\psi}^2 \, ,
	\end{align*}
	where we used that $i + j/2 \le d$ and $d \le d'$. Hence $(N + \1)^k (a^\dagger)^i N^j(N + \1)^{-k - d'}:\cH_f \to \cH$ is bounded by $d^{k + d}$ and can be uniquely extended to a bounded linear map on $\cH$. This concludes the claim.
\end{proof}

\begin{lem}\label{lem:infinitesimal-boundedness-W-k-1}
	Let $K \in \N$. For $i = 1, \hdots, K$ let $p_{i, 1}, p_{i, 2} \in \C[X,Y]$ polynomials of degree $d_{i, 1}$, $d_{i, 2}$ such that 
	\begin{equation*}
		\cA: \cT_f \to \cT_f, \qquad x \mapsto \cA(x) = \sum\limits_{i = 1}^KA_{i, 1}x A_{i, 2} = \sum\limits_{i = 1}^K p_{i, 1}(a, a^\dagger) x \, p_{i, 2}(a, a^\dagger)
	\end{equation*}
	where the action of $p_{1/2}(a, a^\dagger)$ on $x$ is defined via the action of $a$ and $a^\dagger$ on $\ketbra{n}{m}$. We then have that for all $k \ge 0$, $d \ge \max\limits_{i = 1, \hdots, K}\max\{d_{i, 1}, d_{i, 2}\}$ there exists $C_k \ge 0$, s.t. for all $\varepsilon \ge 0$ and $\forall x \in \cT_f$
	\begin{equation*}
		\norm{\cA(x)}_{W^{k, 1}} \le \varepsilon\norm{\{(N + \1)^{4d}, x\}}_{W^{k, 1}} + \frac{C_k}{\varepsilon} \norm{x}_{W^{k, 1}} \, . 
	\end{equation*}
\end{lem}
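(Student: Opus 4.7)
The plan is to reduce $\|\cA(x)\|_{W^{k,1}}$ to a weighted trace $\tr[|y|(N+\1)^{2d}]$ with $y\coloneqq(N+\1)^{k/4}x(N+\1)^{k/4}\in\cT_f$, and then combine an AM-GM operator inequality with a duality pairing against $\operatorname{sign}(y)$ to bring in the anti-commutator. First, Lemma~\ref{lem:boundedness-polynomials} lets me factorize each summand as $(N+\1)^{k/4}A_{i,1}xA_{i,2}(N+\1)^{k/4}=B_{i,1}\,(N+\1)^{k/4+d}x(N+\1)^{k/4+d}\,B_{i,2}$, where $B_{i,1}\coloneqq(N+\1)^{k/4}A_{i,1}(N+\1)^{-k/4-d}$ and $B_{i,2}\coloneqq(N+\1)^{-k/4-d}A_{i,2}(N+\1)^{k/4}$ are bounded with norms depending only on $k$ and $d$. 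H\"{o}lder's inequality and the triangle inequality then yield $\|\cA(x)\|_{W^{k,1}}\le C_1\|(N+\1)^dy(N+\1)^d\|_1$ for a $k$-dependent constant $C_1$, where I have used the identity $(N+\1)^{k/4+d}x(N+\1)^{k/4+d}=(N+\1)^dy(N+\1)^d$.

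Second, I would invoke the Jordan decomposition $y=y_+-y_-$ of the self-adjoint element $y\in\cT_f$. Each $(N+\1)^dy_\pm(N+\1)^d$ is positive, so cyclicity of the trace yields $\|(N+\1)^dy_\pm(N+\1)^d\|_1=\tr[y_\pm(N+\1)^{2d}]$, and the triangle inequality together with $y_++y_-=|y|$ gives $\|(N+\1)^dy(N+\1)^d\|_1\le\tr[|y|(N+\1)^{2d}]$. The operator AM-GM inequality $2(N+\1)^{2d}\le\varepsilon'(N+\1)^{4d}+\varepsilon'^{-1}\1$ (from expanding $(\varepsilon'^{1/2}(N+\1)^{2d}-\varepsilon'^{-1/2}\1)^2\ge0$), sandwiched by $|y|^{1/2}$ on both sides and traced, then gives $\tr[|y|(N+\1)^{2d}]\le\tfrac{\varepsilon'}{2}\tr[|y|(N+\1)^{4d}]+\tfrac{1}{2\varepsilon'}\|y\|_1$.

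Finally, to bring in the anti-commutator, I pair $\{(N+\1)^{4d},y\}$ with the contraction $u\coloneqq\operatorname{sign}(y)\in\cB(\cH)$. Since $y$ is self-adjoint and finite-rank, $yu=uy=|y|$ and $\|u\|_\infty\le1$, so $\tr[\{(N+\1)^{4d},y\}u]=2\tr[(N+\1)^{4d}|y|]$, and trace-norm duality yields $\tr[(N+\1)^{4d}|y|]\le\tfrac{1}{2}\|\{(N+\1)^{4d},y\}\|_1$. Combining the chain with the elementary identities $\|y\|_1=\|x\|_{W^{k,1}}$ and $\|\{(N+\1)^{4d},y\}\|_1=\|\{(N+\1)^{4d},x\}\|_{W^{k,1}}$ (both immediate from the definition of $y$), and setting $\varepsilon\coloneqq C_1\varepsilon'/4$, completes the proof with $C_k=C_1^2/8$.

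The main subtlety lies in the sign-operator duality step: the operator $(N+\1)^{k/4+4d}x(N+\1)^{k/4}$ inside the anti-commutator is not self-adjoint, so merely pairing with the identity would recover only $\tr[(N+\1)^{4d}y]$ rather than the needed $\tr[(N+\1)^{4d}|y|]$. The bounded sign operator $u$, which is well-defined precisely because $y\in\cT_f$ is finite-rank, is what symmetrizes the positive and negative contributions to deliver the absolute value on the right-hand side.
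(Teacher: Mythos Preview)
Your argument is correct and reaches the same conclusion as the paper, but the second stage proceeds by a genuinely different and more elementary route. Both proofs share the first step: factorizing each summand via Lemma~\ref{lem:boundedness-polynomials} to obtain $\|\cA(x)\|_{W^{k,1}}\le C_1\|(N+\1)^d y(N+\1)^d\|_1$ with $y=(N+\1)^{k/4}x(N+\1)^{k/4}$. From there the paper invokes the resolvent $(\lambda-\cI_{d,1})^{-1}$ of the anti-commutator map and its integral representation, computes $\|(N+\1)^d(\lambda-\cI_{d,1})^{-1}(x)(N+\1)^d\|_{W^{k,1}}$ explicitly for positive $x$ via the closed form $\tr\bigl[\tfrac{(N+\1)^{2d}}{2(N+\1)^{4d}+\lambda}(N+\1)^{k/4}x(N+\1)^{k/4}\bigr]$, extends to general $x$ by Jordan decomposition, and finally inverts the substitution $x\mapsto(\lambda-\cI_{d,1})^{-1}(x)$ and optimises over $\lambda$. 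Your approach sidesteps the resolvent machinery entirely: after the Jordan decomposition you apply the scalar AM--GM inequality $2(N+\1)^{2d}\le\varepsilon'(N+\1)^{4d}+\varepsilon'^{-1}\1$ at the operator level, and then the trace-duality pairing with $u=\operatorname{sign}(y)$ converts $\tr[|y|(N+\1)^{4d}]$ into $\tfrac{1}{2}\|\{(N+\1)^{4d},y\}\|_1$ in one stroke. This is shorter and uses only first principles (no integral representations or bijectivity on $\cT_f$); the paper's route, by contrast, fits the surrounding resolvent-based generation theory more organically and makes the appearance of $\cI_{d,\varepsilon}$ in Lemma~\ref{lem:semigroup-of-G-for-sobolev-stability} feel less ad hoc.
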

\begin{proof}
	Let $k \in \R_+$. The first step is to show that there exists $c_k \ge 0$, s.t. for all $x \in \cT_f$
	\begin{equation}\label{eq:proof-first-stage-inequality}
		\norm{\cA(x)}_{W^{k, 1}} \le c_k \norm{(N + \1)^d x (N + \1)^d}_{W^{k, 1}} \, . 
	\end{equation}
	The argument reduces to showing that for $i = 1, \hdots, K$ and $x \in \cT_f$
	\begin{equation*}\label{eq:proof-first-stage-inequality-substage}
		\norm{(N + \1)^{k/4} A_{i, 1} x A_{i, 2} (N + \1)^{k/4}}_1 \le c_{i, k} \norm{(N + \1)^d x (N + \1)^d}_{W^{k, 1}}
	\end{equation*}
	since the sum comprising $\cA$ is finite. Note that the trace norm on the LHS is the one on the trace-class operators since the argument might not necessarily be self-adjoint. For $x \in \cT_f$, we have
	\begin{align*}
		&\norm{(N + \1)^{k/4} A_{i, 1} x A_{i, 2} (N + \1)^{k/4}}_1\\
		&\hspace{0.5cm} = \|(N + \1)^{k/4} A_{i, 1} (N + \1)^{-k/4 - d} (N + \1)^{k/4 + d}x\\
        &\qquad\qquad\qquad\qquad(N + \1)^{k/4 + d}  (N + \1)^{-k/4 - d} A_{i, 2} (N + \1)^{k/4}\|_1\\
		&\hspace{0.5cm}\le \norm{(N + \1)^{k/4} A_{i, 1} (N + \1)^{-k/4 - d}}_\infty \norm{x}_{W^{k+4d, 1}} \norm{(N + \1)^{-k/4 - d} A_{i, 2} (N + \1)^{k/4}}_\infty\\
		&\hspace{0.5cm}\le c_{i, k} \norm{(N + \1)^d x (N + \1)^d}_{W^{k, 1}} \, 
	\end{align*}
	where we used Lemma \ref{lem:boundedness-polynomials} to argue that the operators involved are bounded and we can employ Hölder's inequality to split them off. Subsequently, we replaced the operator norms with the constant $c_{i, k}$. Now in the second step we show that for all $\varepsilon > 0$ and $x \in \cT_f$
	\begin{equation*}
		\norm{(N + \1)^d x (N + \1)^d}_{W^{k, 1}} \le \varepsilon \norm{\{(N + \1)^{4d}, x\}}_{W^{k, 1}} + \frac{1}{4\varepsilon} \norm{x}_{W^{k, 1}} \, . 
	\end{equation*}
	Combining this with \Cref{eq:proof-first-stage-inequality} then immediately provides the claim.
	Therefore, let $\lambda > 0$ and $x \in \cT_f$ with $x \ge 0$. We find
	\begin{align*}
		&\norm{(N + 1)^d(\lambda - \cI_{d, 1})^{-1}(x) (N + \1)^d}_{W^{k, 1}}\\
		&\hspace{2cm} = \tr[(\lambda - \cI_{d, 1})^{-1}\{(N + \1)^{k/4 + d} x (N + \1)^{k/4 + d}\}]\\
		&\hspace{2cm} = \int\limits_{0}^\infty \tr[e^{-(2(N + \1)^{4d} + \lambda) s}(N + \1)^{2d} (N + \1)^{k/4} x (N + \1)^{k/4}] ds\\
		&\hspace{2cm} = \tr\Big[\int\limits_{0}^\infty e^{-(2(N + \1)^{4d} + \lambda) s}(N + \1)^{2d} ds\,  (N + \1)^{k/4} x (N + \1)^{k/4}\Big]\\
		&\hspace{2cm} = \tr[\frac{(N + \1)^{2d}}{(N + \1)^{4d} + \lambda}(N + \1)^{k/4} x (N + \1)^{k/4}] \, ,
	\end{align*}
	where we used the map $\cI_{d, 1}$ from Lemma \ref{lem:(n+1)-(n+1)-properties}, the integral representation of its resolvent $(\lambda - \cI_{d, 1})^{-1}$ and that this resolvent preserves positivity. Further, we applied the cyclicity of the trace and conveniently suppressed issues that might arise from the unbounded nature of $N$ by taking a cut-off in the Fock basis. This is justified by $x \in \cT_f$. Now we can use that $(N + \1)^{k/4} x (N + \1)^{k/4} \ge 0$ to bound the RHS of the above chain of inequalities to get
	\begin{align*}
		\norm{(N + 1)^d(\lambda - \cI_{d, 1})^{-1}(x) (N + \1)^d}_{W^{k, 1}} &\le \sup\limits_{s \ge 1} \frac{s}{s^2 + \lambda} \tr[(N + \1)^{k/4} x (N + \1)^{k/4}]\\
		&= \sup\limits_{s \ge 1} \frac{s}{s^2 + \lambda} \norm{x}_{W^{k, 1}} \, .
	\end{align*}
	For a general $x \in \cT_f$, we can set $y = (N + \1)^{k/4} x (N + \1)^{k/4}$ decompose into $y = y_+ - y_-$ the positive and negative part of $y$ respectively and then set $x_\pm = (N + \1)^{-k/4} y_\pm (N + \1)^{-k/4}$. We clearly have that $x_\pm \in \cT_f$, $x = x_+ - x_-$ and $x_\pm \ge 0$ as $(N + \1)^{-k/4} \cdot (N + \1)^{-k/4}$ preserves positivity. This allows us to apply what we have shown above to obtain
	\begin{align*}
		\|(N + 1)^d(\lambda - \cI_{d, 1})^{-1}&(x) (N + \1)^d\|_{W^{k, 1}} \\
        &\le \norm{(N + 1)^d(\lambda - \cI_{d, 1})^{-1}(x_+) (N + \1)^d}_{W^{k, 1}}\\
		&\hspace{2cm} + \norm{(N + 1)^d(\lambda - \cI_{d, 1})^{-1}(x_-) (N + \1)^d}_{W^{k, 1}}\\
		&\le \sup\limits_{s \ge 1} \frac{s}{s^2 +\lambda}(\norm{x_+}_{W^{k, 1}} + \norm{x_-}_{W^{k, 1}})\\
		&= \sup\limits_{s \ge 1} \frac{s}{s^2 +\lambda} (\norm{y_+}_1 + \norm{y_-}_1)\\
		&= \sup\limits_{s \ge 1} \frac{s}{s^2 +\lambda} \norm{y}_1\\
		&= \sup\limits_{s \ge 1} \frac{s}{s^2 +\lambda} \norm{x}_{W^{k, 1}}
	\end{align*}
	Lastly we can use the bijectivity of $(\lambda - \cI_{d, 1})$ on $\cT_f$ (q.v.~Lemma \ref{lem:(n+1)-(n+1)-properties}) and triangle inequality to conclude 
	\begin{equation}
		\norm{(N + \1)^d x (N + \1)^d}_{W^{k, 1}} \le \sup\limits_{s \ge 0} \frac{s}{s^2 + \lambda} \norm{\cI_{d, 1}(x)}_{W^{k, 1}} + \lambda \sup\limits_{s \ge 0} \frac{s}{s^2 + \lambda} \norm{x}_{W^{k, 1}}\,.
	\end{equation}
	Choosing $\lambda = \frac{1}{4\varepsilon^2}$, we find that $\sup\limits_{s \ge 1} \frac{s}{s^2 + \lambda} < \varepsilon$, hence
	\begin{equation}
		\norm{(N + \1)^d x (N + \1)^d}_{W^{k, 1}} \le \varepsilon \norm{\{(N +\1)^{4d}, x\}}_{W^{k, 1}} + \frac{1}{4\varepsilon} \norm{x}_{W^{k, 1}}
	\end{equation}
\end{proof}

\begin{lem}[Interpolation Lemma]\label{lem:interpolation-lemma}
	Let $k_0 < k_1 \in \R_+$ and $(\cL, \cD(\cL))$ an operator on $W^{k_j, 1}$, $j = 0, 1$ respectively. Further, assume that the closure of $(\cL, \cD(\cL))$ defines a strongly continuous semigroup $(\cP_t^j)_{t \ge 0}$ with 
	\begin{equation}
		\norm{\cP_t^{j}}_{W^{k_j, 1} \to W^{k_j, 1}} \le M_j e^{\omega_j t} \quad \forall t \ge 0 \, 
	\end{equation}
	in both spaces, respectively. Then for $\theta \in [0, 1]$ and $k_\theta = \theta k_1 + (1 - \theta) k_0$ the following are true
	\begin{enumerate}
		\item\label{item:int-pol-1} The closure of $(\cL, \cD(\cL))$ defines a strongly continuous semigroup on $W^{k_\theta, 1}$ with 
		\begin{equation}\label{eq:interpolation-semigroup-bound}
			\norm{\cP^\theta_t}_{W^{k_\theta, 1} \to W^{k_\theta, 1}} \le M_0^{1 - \theta} M_1^\theta e^{(\omega_{k_0}(1 - \theta) + \omega_{k_1} \theta) t} \quad \forall t \ge 0\, . 
		\end{equation}
		\item\label{item:int-pol-2} $(\cP^\theta_t)_{t \ge 0}$ agrees with $(\cP_t^j)_{t \ge 0}$ on $W^{k_j, 1} \cap W^{k_\theta, 1}$ for $j = 1, 2$.
	\end{enumerate}
\end{lem}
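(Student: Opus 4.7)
The strategy is to apply the Stein--Weiss interpolation theorem (Theorem~\ref{thm:stein-weiss}) pointwise in $t$ to the given semigroups, and then lift the standard semigroup axioms to the interpolated family. The endpoint cases $\theta\in\{0,1\}$ are trivial, so fix $\theta\in(0,1)$, set $k_\theta=(1-\theta)k_0+\theta k_1\in(k_0,k_1)$, and recall from Lemma~\ref{lem:sobolev-embedding} the continuous compact embeddings $W^{k_1,1}\Subset W^{k_\theta,1}\Subset W^{k_0,1}$.

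\textbf{Step 1 (agreement on the intersection).} I would first show that $\cP^0_t(x)=\cP^1_t(x)$ for every $x\in W^{k_1,1}$. It suffices to check this on the common dense subset $\cT_f\subseteq W^{k_1,1}$, which is a core for both closures. For $x\in\cT_f$, the curves $t\mapsto\cP^0_t(x)$ and $t\mapsto\cP^1_t(x)$ are both continuously differentiable in $W^{k_0,1}$ (for $\cP^1$ one uses differentiability in $W^{k_1,1}$ and the continuous embedding into $W^{k_0,1}$) and satisfy the abstract Cauchy problem $\dot u(t)=\overline{\cL}u(t)$, $u(0)=x$, in $W^{k_0,1}$. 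Uniqueness for this problem (Engel--Nagel \cite[Prop.~II.6.2]{Engel.2000}) then forces them to coincide, and density of $\cT_f$ in $W^{k_1,1}$ together with continuity of $\cP^0_t$ extends the identity to all of $W^{k_1,1}$.

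\textbf{Step 2 (pointwise Stein--Weiss).} Having step 1, for every $t\ge 0$ the bounded map $\cP^0_t:W^{k_0,1}\to W^{k_0,1}$ has norm at most $M_0 e^{\omega_0 t}$ and its restriction to $W^{k_1,1}$ equals $\cP^1_t$ with norm at most $M_1 e^{\omega_1 t}$. Theorem~\ref{thm:stein-weiss} then produces a bounded operator $\cP^\theta_t:W^{k_\theta,1}\to W^{k_\theta,1}$, defined as the restriction of $\cP^0_t$ to $W^{k_\theta,1}\cap W^{k_0,1}=W^{k_\theta,1}$, and satisfying the interpolated bound~\eqref{eq:interpolation-semigroup-bound}. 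By construction $\cP^\theta_t=\cP^0_t|_{W^{k_\theta,1}}$, and combined with step 1 this already proves claim~(\ref{item:int-pol-2}) for both $j=0,1$.

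\textbf{Step 3 (semigroup law and strong continuity).} The identities $\cP^\theta_0=\1$ and $\cP^\theta_{t+s}=\cP^\theta_t\,\cP^\theta_s$ follow from the corresponding identities for $\cP^0$, since both sides equal the restriction of $\cP^0_0$ and $\cP^0_{t+s}$ to $W^{k_\theta,1}$. For strong continuity at $t=0$ on $W^{k_\theta,1}$ I would use a three-$\varepsilon$ argument: given $x\in W^{k_\theta,1}$ and $\varepsilon>0$, approximate $x$ by $y\in\cT_f\subseteq W^{k_1,1}$ with $\|x-y\|_{W^{k_\theta,1}}<\varepsilon$ (density of $\cT_f$ in $W^{k_\theta,1}$ follows by approximating $(N+\1)^{k_\theta/4}x(N+\1)^{k_\theta/4}$ in trace norm by finite-rank elements and conjugating back), and then estimate
\begin{equation*}
\|\cP^\theta_t(x)-x\|_{W^{k_\theta,1}}\le \|\cP^\theta_t\|_{W^{k_\theta,1}\to W^{k_\theta,1}}\,\varepsilon+\|\cP^\theta_t(y)-y\|_{W^{k_\theta,1}}+\varepsilon,
\end{equation*}
where the first factor is uniformly bounded for $t\in[0,1]$ by step~2 and the middle term tends to $0$ because $\cP^\theta_t(y)=\cP^1_t(y)\to y$ in $W^{k_1,1}$ by strong continuity of $\cP^1$, hence in $W^{k_\theta,1}$ by continuity of the embedding. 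Finally, to identify the generator as the closure of $(\cL,\cD(\cL))$ on $W^{k_\theta,1}$, one checks that on $\cT_f$ the derivative at $t=0$ of $\cP^\theta_t(x)=\cP^1_t(x)$ taken in $W^{k_\theta,1}$ equals $\cL(x)$ (by the same embedding argument), and invokes closedness together with density of $\cT_f$.

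\textbf{Main obstacle.} The interpolation step and the semigroup law are essentially free once the setup is in place. The single point requiring genuine care is step~1: although intuitively obvious (the two semigroups share the same infinitesimal generator on the common core $\cT_f$), it must be established via a uniqueness argument for the Cauchy problem in the larger space $W^{k_0,1}$, using that strong differentiability in $W^{k_1,1}$ transports through the continuous embedding to yield a valid $W^{k_0,1}$-solution. All other steps then become straightforward consequences of Theorem~\ref{thm:stein-weiss} and standard density arguments.
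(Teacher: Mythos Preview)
Your overall architecture matches the paper's: agreement of the endpoint semigroups on $W^{k_1,1}$, pointwise Stein--Weiss for the norm bound, semigroup law by restriction, and strong continuity via approximation from $W^{k_1,1}$. Your Step~1 via uniqueness of the Cauchy problem in $W^{k_0,1}$ is a more explicit variant of the paper's one-line closure argument; both are fine.

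There is one genuine gap, in the last sentence of Step~3. Showing that the derivative of $\cP^\theta_t(x)$ at $t=0$ equals $\cL(x)$ for $x\in\cT_f$ only establishes that the generator $\hat\cL$ of $(\cP^\theta_t)_{t\ge 0}$ \emph{extends} $(\cL,\cT_f)$; ``closedness together with density of $\cT_f$'' then gives that $\hat\cL$ extends the closure $\overline{\cL}$, but not equality. For that you need $\cT_f$ to be a \emph{core} for $\hat\cL$, i.e.\ dense in the graph norm, and the usual invariance criterion (Engel--Nagel, Prop.~II.1.7) is unavailable since $\cT_f$ is not invariant under the semigroup. The paper closes this gap differently: it uses that $W^{k_1,1}$ is $\hat\cL$-admissible, so the part of $\hat\cL$ in $W^{k_1,1}$ generates $(\cP^1_t)_{t\ge 0}$, for which $\cT_f$ \emph{is} a core by hypothesis. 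Hence $\lambda-\cL$ has dense range in $W^{k_1,1}$ for $\lambda$ large, and by the embedding also dense range in $W^{k_\theta,1}$. Combined with quasi-dissipativity of $(\cL,\cT_f)$ on $W^{k_\theta,1}$ (inherited from $\hat\cL$), Lumer--Phillips then forces $\overline{\cL}=\hat\cL$, since a generator has no proper generator extension.
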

\begin{proof}
	We begin with the second claim and only cover $k_j = k_0$ as the other case only requires minor changes that are left to the reader. Let $\theta \in (0, 1)$ and the closure of $(\cL, \cD(\cL))$ the generator of $(\cP_t^0)_{t \ge 0}$ and $(\cP_t^\theta)_{t \ge 0}$ on the respective spaces. Since $k_0 < k_\theta$ and hence $W^{k_\theta, 1} \Subset W^{k_0, 1}$, we have that the closure of $(\cL, \cD(\cL))$ in $W^{k_\theta, 1}$ agrees with the restriction of the closure in $W^{k_0, 1}$. As the semigroup is completely determined by its generator, we find that the semigroups agree on $W^{k_\theta, 1}$.\newline
	For the first claim, note that the semigroups $(\cP_t^j)_{t \ge 0}$, $j = 1, 2$ agree on $W^{k_1, 1}$ by \Cref{item:int-pol-2}, which allows us to employ the Stein-Weiss theorem for Bosonic Sobolev spaces (Theorem \ref{thm:stein-weiss}) to conclude \Cref{eq:interpolation-semigroup-bound}. It remains to check that the families of bounded maps $(\cP_t^\theta)_{t \ge 0}$ are strongly continuous semigroups generated by the closure of $(\cL, \cD(\cL))$ on $W^{k_\theta, 1}$. We have that $\cP_0^\theta = \1$  and $\cP_t^\theta \cP_s^\theta = \cP_{t + 1}^\theta$ $\forall t, s \ge 0$ as a consequence of $\cP_t^0 |_{W^{k_\theta, 1}} = \cP_t^\theta$ $\forall t$ (this equality holds by Theorem \ref{thm:stein-weiss}). The strong continuity follows, due to $\cP^1_t = \cP_t^\theta|_{W^{k_1, 1}}$, as for $x \in W^{k_1, 1}$
	\begin{equation*}
		\lim\limits_{t \to 0} \norm{\cP_t^\theta(x)  - x}_{W^{k_\theta, 1}} = \lim\limits_{t \to 0} \norm{\cP^1_t(x) - x}_{W^{k_\theta, 1}} \le \lim\limits_{t \to 0} \norm{\cP^1_t(x) - x}_{W^{k_1, 1}} = 0
	\end{equation*}
	where we used $W^{k_1, 1} \Subset W^{k_\theta, 1}$ and that $\cP^1_t$ is a strongly continous semigroup on $W^{k_1, 1}$. For general $x \in W^{k_\theta, 1}$, we find $(x_n)_{n \in \N} \subset W^{k_\theta, 1}$ converging to $x$ in $W^{k_\theta, 1}$ and for all $n \in \N$
	\begin{align*}
		\lim\limits_{t \to 0} \norm{\cP_t^\theta(x)  - x}_{W^{k_\theta, 1}} &\le \lim\limits_{t \to 0}\left[ (1 + M_0^{1 - \theta} M_1^\theta e^{(\omega_{k_0}(1 - \theta) + \omega_{k_1} \theta) t}) \norm{x - x_n}_{W^{k_\theta, 1}} + \norm{\cP_t^\theta(x_n)  - x_n}\right]\\
		&\le (1 + M_0^{1 - \theta} M_1^\theta)\norm{x - x_n}_{W^{k_\theta, 1}}
	\end{align*}
	which concludes the strong continuity. It remains to argue that the closure of $(\cL, \cD(\cL))$ on $W^{k_\theta, 1}$ is indeed the generator of $(\cP_t^\theta)_{t \ge 0}$. By \cite[Sec. II.2.3]{Engel.2000}, we find that the restriction of the generator $(\hat{\cL}, \cD(\hat{\cL}))$ of $(\cP_t^\theta)_{t \ge 0}$ to $W^{k_1, 1}$ is the generator of $(\cP_t^1)_{t \ge 0}$ with $(\cL, \cD(\cL))$ being a core for this restricted generator on $W^{k_1, 1}$ by assumption. This in particular means that for $\lambda > \omega_k$, $\lambda - \cL:\cD(\cL) \to W^{k_1, 1}$ has a dense range in $W^{k_1, 1}$, which further allows us to conclude that $\lambda - \cL:\cD(\cL) \to W^{k_\theta, 1}$ has a dense range in $W^{k_\theta, 1}$. Now as it is $\omega_{k_\theta}$-quasi dissipative (being the restriction of the generator of the semigroup $(\cP_t^\theta)_{t \ge 0}$) we can conclude that indeed $(\cL, \cD(\cL))$ is closable in $W^{k_\theta, 1}$ with the closure $(\hat{\cL}, \cD(\hat{\cL}))$ (c.f. \cite[Proposition 3.14]{Engel.2000}). 
\end{proof}

\begin{lem}[Approximation Lemma]\label{lem:approximation-lemma}
	Let $K \in \N$. For $i = 1, \hdots, K$ let $p_{i, 1}, p_{i, 2} \in \C[X,Y]$ polynomials of degree $d_{i, 1}$, $d_{i, 2}$ and $\{a_{i, n}\}_{n \in \N} \subset \C$ convergent sequences with limits $a_i \in \C$ such that $\{(\cA_n, \cD(\cA_n) = \cT_f)\}_{n \in \N}$ is an operator sequence with
	\begin{equation}
		\cA_n :\cT_f \to \cT_f, \quad x \mapsto \cA_n(x) \coloneqq \sum\limits_{i = 1}^K a_{i, n} A_{i, 1} \,x\, A_{i, 2} \coloneqq \sum\limits_{i = 1}^K a_{i, n} p_{i, 1}(a, a^\dagger) \,x\, p_{i, 2}(a, a^\dagger) \, 
	\end{equation}
    with $A_{i,1}=p_{i,1}(a,\ad)$ and $A_{i,2}=p_{i,2}(a,\ad)$ (see Lemma \ref{lem:infinitesimal-boundedness-W-k-1}). If for all $k \in \R_+$ there exists $M_k, \omega_k$ such that for all $n \in \N$ the closure of $(\cA_n, \cD(\cA_n))$ generates a strongly continuous semigroup $(\cP_t^n)_{t \ge 0}$ on $W^{k, 1}$ with
	\begin{equation}\label{eq:uniform-bound-semigroups}
		\norm{\cP_t^n}_{W^{k, 1} \to W^{k, 1}} \le M_k e^{\omega_k t} \quad \forall t \in \R \, ,
	\end{equation}
	then the closure of $(\cA, \cD(\cA) = \cT_f)$, the pointwise limit of $(\cA_n, \cD(\cA_n))$, defines a strongly continuous semigroup on $W^{k, 1}$ for $k \ge 0$ as well. We further get that the semigroups generated by the closure of $(\cA_n, \cD(\cA_n))$ converge uniformly (in time) on compact intervals to the semigroup generated by the closure of $(\cA, \cD(\cA))$ and that \Cref{eq:uniform-bound-semigroups} also holds for the limiting semigroup.
\end{lem}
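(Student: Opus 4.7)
The plan is to carry out a Trotter--Kato style approximation, using the perturbation formula of Theorem~\ref{thm:semigroup-perturbation} to convert convergence of the generators into strong convergence of the semigroups, and then to identify the closure of $\cA$ as the generator of the limit. The crucial structural input is that $\cA_n$ is assumed to generate on every Sobolev level $W^{k',1}$ with uniform constants; in particular the space $W^{k+4d,1}$, with $d=\max_{i,j}d_{i,j}$, plays the role of a reservoir on which the differences $\cA_n-\cA_m$ are uniformly small and admissible under every $\cP_t^n$.

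First, the polynomial form of the $\cA_n$'s combined with the estimates behind Lemma~\ref{lem:infinitesimal-boundedness-W-k-1} yields a constant $C_k$ depending only on the fixed polynomials $p_{i,1},p_{i,2}$ such that for every $z\in\cT_f$,
\begin{equation*}
    \|(\cA_n-\cA_m)(z)\|_{W^{k,1}} \le C_k\,\max_i|a_{i,n}-a_{i,m}|\;\|z\|_{W^{k+4d,1}}.
\end{equation*}
Hence $(\cA_n-\cA_m)\cW^{-(k+4d)}$ extends to a bounded operator on $W^{k,1}$ whose norm $\varepsilon_{n,m}$ tends to zero as $n,m\to\infty$. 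Applying Theorem~\ref{thm:semigroup-perturbation} with $\cL=\overline{\cA_m}$, $\cL+\cK=\overline{\cA_n}$ and weight $\cW=\cW^{k+4d}$ (whose domain $W^{k+4d,1}$ is admissible under $\overline{\cA_n}$ by hypothesis and whose inverse is bounded on $W^{k,1}$), I obtain, for every $T>0$,
\begin{equation*}
    \sup_{0\le t\le T}\|\cP_t^n-\cP_t^m\|_{W^{k+4d,1}\to W^{k,1}} \le T\,\varepsilon_{n,m}\,M_k\,M_{k+4d}\,e^{(\omega_k+\omega_{k+4d})T},
\end{equation*}
so that $\cP_t^n$ is Cauchy in operator norm from $W^{k+4d,1}$ to $W^{k,1}$, uniformly on compact time intervals. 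Density of $W^{k+4d,1}$ in $W^{k,1}$ combined with the uniform bound $\|\cP_t^n\|_{W^{k,1}\to W^{k,1}}\le M_k e^{\omega_k t}$ then upgrades this to strong convergence $\cP_t^n\to \cP_t$ on all of $W^{k,1}$, uniformly on compacts; the semigroup property, strong continuity at $t=0$, and the exponential bound all transfer to $\cP_t$.

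It remains to identify the generator of $\cP_t$ with $\overline{\cA}$. For $x\in\cT_f$, passing the identity $\cP_t^n(x)-x=\int_0^t \cP_s^n(\cA_n x)\,ds$ to the limit---using strong convergence of $\cP_s^n$ together with $\cA_n x\to \cA x$ in $W^{k,1}$---gives $\cP_t(x)-x=\int_0^t \cP_s(\cA x)\,ds$, so $\cT_f\subset\cD(\hat\cA)$ with $\hat\cA|_{\cT_f}=\cA$, where $\hat\cA$ denotes the generator of $\cP_t$. Hence $\cA$ is closable and $\overline{\cA}\subseteq\hat\cA$. The main obstacle---and the place where the higher-level bounds become essential---is the reverse inclusion, i.e.~that $\cT_f$ is a \emph{core} for $\hat\cA$, or equivalently that $(\lambda-\cA)(\cT_f)$ is dense in $W^{k,1}$ for some $\lambda>\omega_k$. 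For $y\in W^{k+4d,1}$ and $\lambda>\omega_{k+4d}$, Laplace transforming the semigroup convergence gives $x_n\coloneqq R(\lambda,\overline{\cA_n})y \to R(\lambda,\hat\cA)y$ in $W^{k,1}$, while the resolvent bound $\|R(\lambda,\overline{\cA_n})\|_{W^{k+4d,1}\to W^{k+4d,1}}\le M_{k+4d}/(\lambda-\omega_{k+4d})$ keeps $(x_n)$ uniformly bounded in $W^{k+4d,1}$. Since $\cT_f$ is by construction a core for $\overline{\cA_n}$ on $W^{k+4d,1}$, I can replace $x_n$ by $\tilde x_n\in\cT_f$ in the graph norm at level $k+4d$ without losing the $W^{k+4d,1}$ bound; the first-step estimate then yields $\|(\cA-\cA_n)\tilde x_n\|_{W^{k,1}}\to 0$, so that $(\lambda-\cA)\tilde x_n=(\lambda-\cA_n)\tilde x_n+(\cA_n-\cA)\tilde x_n \to y$ in $W^{k,1}$. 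A final density argument, based on $W^{k+4d,1}\subset W^{k,1}$, finishes the density of $(\lambda-\cA)(\cT_f)$ in $W^{k,1}$, whence $\overline{\cA}=\hat\cA$ and the limiting semigroup $\cP_t$ is generated by $\overline{\cA}$ as claimed.
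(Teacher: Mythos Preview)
Your proof is correct and reaches the same conclusion, but the route differs from the paper's. The paper invokes the second Trotter--Kato approximation theorem \cite[Thm.~III.4.9]{Engel.2000} as a black box: once it verifies that $(\lambda-\cA)(\cT_f)$ is dense in $W^{k,1}$ for some $\lambda>\max\{\omega_k,\omega_{k+4d}\}$, the existence of the limit semigroup, its generation by $\overline{\cA}$, and the strong convergence all follow immediately from that theorem together with the uniform bound \eqref{eq:uniform-bound-semigroups} and pointwise convergence $\cA_n\to\cA$ on $\cT_f$. You instead construct the limit semigroup by hand, using the perturbation identity of Theorem~\ref{thm:semigroup-perturbation} to show that $(\cP_t^n)$ is Cauchy from $W^{k+4d,1}$ to $W^{k,1}$ and then upgrading via density and the uniform bound; only afterwards do you identify the generator. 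The dense-range argument that sits at the core of both proofs is essentially identical: bound $\|(\cA-\cA_n)\tilde x_n\|_{W^{k,1}}$ by $\max_i|a_i-a_{i,n}|$ times a uniform $W^{k+4d,1}$ bound on $\tilde x_n$ coming from the resolvent estimate at level $k+4d$. Your approach is more self-contained---it does not require knowing Trotter--Kato and shows explicitly how the paper's perturbation machinery drives the approximation---while the paper's is shorter because it outsources the semigroup convergence to a standard theorem. Both exploit the same structural input, namely that admissibility of $W^{k+4d,1}$ for every $\cA_n$ lets one trade one Sobolev level for smallness of the perturbation.
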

\begin{proof}
	Let $k \in \R_+$. To prove the lemma, we first note that $(\cA, \cD(\cA))$ is densely defined and the pointwise limit of $\{(\cA_n, \cD(\cA_n))\}_{n \in \N}$. To employ the second Trotter-Kato approximation theorem, which implies the claim (see the version in \cite[Thm.~III.4.9]{Engel.2000}), we need to show that there exists $\lambda > 0$ such that $(\lambda - \cA, \cD(\cA))$ has dense range in $W^{k, 1}$. We will do so by showing that the closure of the range contains $\cT_f$ which is a dense subset of $W^{k, 1}$. Therefore let $\lambda > \max\{\omega_k, \omega_{k + 4d}\}$ (with $\omega_\cdot$ from \Cref{eq:uniform-bound-semigroups} and $d$ the maximal degree of the polynomials but at least one, i.e.~$d = \max\limits_{i = 1, \hdots, K} \max\{d_{i, 1}, d_{i, 2}, 1\}$). By assumption, we have that for all $n \in \N$ the operator $(\lambda - \cA_n, \cD(\cA_n))$ has dense range in $W^{k + 4d, 1}$, meaning in particular that 
    for any $\xi \in \cT_f$ we find a sequence $\{x_{n, m}\}_{m \in \N}$ which is convergent in $W^{k + 4d, 1}$ and further
	\begin{equation}
		\lim\limits_{m \to \infty} \norm{(\lambda - \cA_n)(x_{n, m}) - \xi}_{W^{k + 4d, 1}}  = 0 \, . 
	\end{equation}
	In addition, we can choose the sequence such that for all $m \in \N$, $\norm{(\lambda - \cA_n)(x_{n, m})}_{W^{k + 4d, 1}} \le \norm{\xi}_{W^{k + 4d,1}} + 1$. Due to the $\omega_{k + 4d}$-quasi dissipativity of $\cA_n$ (it is a generator of a strongly continuous semigroup with a bound given in \Cref{eq:uniform-bound-semigroups}) this immediately implies $\norm{x_{n, m}}_{W^{k + 4d, 1}} \le M_{k + 4d}\frac{\norm{\xi}_{W^{k + 4d, 1}} + 1}{\lambda - \omega_{k + 4d}} =: c_{\xi}$, i.e.~the set $\{x_{n, m}\}_{n, m \in \N}$ is bounded in $W^{k + 4d, 1}$. We now have that for $n, m \in \N$
	\begin{equation}\label{eq:dense-range-argument-inequality}
		\begin{aligned}
			\norm{(\lambda - \cA)(x_{n, m}) - \xi}_{W^{k, 1}} &\le \norm{(\lambda - \cA_n)(x_{n, m}) - \xi}_{W^{k, 1}} + \norm{(\cA - \cA_n)(x_{n, m})}_{W^{k, 1}}\\
			&\le \norm{(\lambda - \cA_n)(x_{n, m}) - \xi}_{W^{k, 1}} + \sum\limits_{i = 1}^K c_{i, k} |a_i - a_{i, n}|  \norm{x_{n, m}}_{W^{k + 4d, 1}}\\
			&\le \norm{(\lambda - \cA_n)(x_{n, m}) - \xi}_{W^{k, 1}} + \sum\limits_{i = 1}^K |a_i - a_{i, n}| c_{i, k} c_\xi\\
			&\le  \norm{(\lambda - \cA_n)(x_{n, m}) - \xi}_{W^{k, 1}} + C\sum\limits_{i = 1}^K |a_i - a_{i, n}| \, .
		\end{aligned}
	\end{equation}
	In the first line we used triangle inequality, in the second one the explicit form of $\cA$ and $\cA_n$ and then that there exists $c_{k, i} \ge 0$ such that 
	\begin{align*}
		\norm{(N + \1)^{k/4}A_{i, 1} x_{n, m} A_{i, 2} (N + \1)^{k/4}}_1\le c_{k, i} \norm{(N + \1)^d x_{n, m} (N + \1)^d}_{W^{k, 1}}
	\end{align*}
	as in the proof of Lemma \ref{lem:infinitesimal-boundedness-W-k-1}. Lastly we used the uniform bound $c_\xi$ and set $C = \max_{i = 1, \hdots, K} c_{i,k} c_\xi$. By a proper choice of a subsequence of $\{x_{n, m}\}_{n, m \in \N}$, we get that the RHS of \Cref{eq:dense-range-argument-inequality} vanishes. Since $\{x_{n, m}\}_{n, m \in \N}$ is bounded in $W^{k + 4d, 1}$ it is in particular precompact in $W^{k, 1}$ (as of the compact embedding of the Sobolev spaces), meaning we can further choose the aforementioned sequence to be convergent in $W^{k, 1}$. Let us call it $\{y_n\}_{n \in \N} \subset \cT_f$. To summarise, for the chosen $\lambda$ and $\xi \in \cT_f$ arbitrary we have constructed a sequence $\{y_n\}_{n \in \N} \subset \cD(\cA)$ which is convergent in $W^{k, 1}$ and further $\{(\lambda - \cA)(y_n)\}_{n \in \N}$ converges to $\xi$ in $W^{k, 1}$. Hence the closure of the range of $(\lambda - \cA, \cD(\cA))$ contains $\cT_f$ a dense subset of $W^{k, 1}$, which concludes the proof. 
\end{proof}

\begin{rmk*}
	In the above lemma, it suffices to assume that the semigroups are Sobolev preserving, as one can interpolate between the sequence elements to obtain semigroups for $k \in \R_+$.
\end{rmk*}

\addtocontents{toc}{\protect\setcounter{tocdepth}{2}}

\end{document}